\newtheorem{theorem}{Theorem}
\newtheorem{lemma}[theorem]{Lemma}
\newcounter{claim}[theorem]
\newtheorem{claimm}[claim]{Claim} 
\DeclareMathOperator{\cw}{cw}
\DeclareMathOperator{\tw}{tw}
\newtheorem{oproblem}{Open Problem}
\DeclareMathOperator{\diamondgraph}{diamond}
\DeclareMathOperator{\gem}{gem}
\DeclareMathOperator{\paw}{paw}
\newcommand{\ssi}{\subseteq_i}
\newcommand{\si}{\supseteq_i}
\newcommand{\NP}{{\sf NP}}
\title{Clique-Width: Harnessing the Power of Atoms\thanks{The research in this paper received support from the Leverhulme Trust (RPG-2016-258).
Masa\v r\' ik and Novotn\' a were supported by Charles University student grants (SVV-2017-260452 and GAUK 1277018) and GA\v CR project (17-09142S).
The last author was supported by Polish National Science Centre grant no. 2018/31/D/ST6/00062.
An extended abstract of this paper appeared in the proceedings of WG 2020~\cite{DMNPR20}.}
}
\author[1]{Konrad K. Dabrowski}
\author[2,3,4]{Tom\'{a}\v{s} Masa\v{r}\'ik}
\author[2,3]{Jana Novotn\'a}
\author[1]{Dani\"el Paulusma}
\author[3,5]{Pawe{\l}~Rz{\k{a}}\.{z}ewski}
\affil[1]{Department of Computer Science, Durham University, UK}
\affil[ ]{\texttt{\{konrad.dabrowski,daniel.paulusma\}@durham.ac.uk}}
\affil[2]{Faculty of Mathematics and Physics, Charles University, Prague, Czech Republic}
\affil[ ]{\texttt{\{masarik,janca\}@kam.mff.cuni.cz}}
\affil[3]{Institute of Informatics, University of Warsaw, Poland}
\affil[4]{Department of Mathematics, Simon Fraser University, Canada}
\affil[5]{Faculty of Mathematics and Information Science, Warsaw University of Technology, Warsaw, Poland}
\affil[ ]{\texttt{p.rzazewski@mini.pw.edu.pl}}
\date{}
\newenvironment{inproof}{\noindent {\emph{Proof of Claim.}}}{\hfill$\diamond$\smallskip}
\begin{document}

\maketitle

\begin{abstract}
\noindent
Many \NP-complete graph problems are polynomial-time solvable on graph classes of bounded clique-width.  Several of these problems are polynomial-time solvable on a hereditary graph class~${\cal G}$ if they are so on the atoms (graphs with no clique cut-set) of~${\cal G}$. Hence, we initiate a systematic study into boundedness of clique-width of atoms of hereditary graph classes.  A graph $G$ is {\it $H$-free} if $H$ is not an induced subgraph of $G$, and it is {\it $(H_1,H_2)$-free} if it is both $H_1$-free and $H_2$-free. A class of $H$-free graphs has bounded clique-width if and only if its atoms have this property. This is no longer true for $(H_1,H_2)$-free graphs, as evidenced by one known example. We prove  the existence of another such pair $(H_1,H_2)$ and classify the boundedness of clique-width on $(H_1,H_2)$-free atoms for all but 18 cases.
\end{abstract}

\section{Introduction}\label{s-intro}
Many hard graph problems become tractable when restricting the input to some graph class.
The two central questions are ``for which graph classes does a graph problem become tractable'' and ``for which graph classes does it stay computationally hard?''
Ideally, we wish to answer these questions for a large set of problems simultaneously instead of considering individual problems one by one.

Graph width parameters~\cite{DJP19,Gu17,HOSG08,KLM09,Va12} help to make such results possible.
A graph class has {\em bounded} width if there is a constant~$c$ such that the width of all its members is at most~$c$.
As we discuss below, there are several meta-theorems that provide sufficient conditions for a problem to be tractable on a graph class of bounded width.

Two popular width parameters are treewidth ($\tw$) and clique-width ($\cw$).
For every graph~$G$ the inequality $\cw(G)\leq 3\cdot 2^{\tw(G)-1}$ holds~\cite{CR05}.
Hence, every problem that is polynomial-time solvable on graphs of bounded clique-width is also polynomial-time solvable on graphs of bounded treewidth.
However, the converse statement does not hold: there exist graph problems, such as {\sc List Colouring}, which are polynomial-time solvable on graphs of bounded treewidth~\cite{JS97}, but \NP-complete on graphs of bounded clique-width~\cite{CO00}.
Thus, the trade-off between treewidth and clique-width is that the former can be used to solve more problems, but the latter is {\em more powerful} in the sense that it can be used to solve problems for larger graph classes.

Courcelle~\cite{Co90} proved that every graph problem definable in MSO$_2$ is linear-time solvable on graphs of bounded treewidth.
Courcelle, Makowsky, and Rotics~\cite{CMR00} showed that every graph problem definable in the more restricted logic MSO$_1$ is polynomial-time solvable even on graphs of bounded clique-width (see~\cite{CE12} for details on MSO$_1$ and MSO$_2$).
Since then, several clique-width meta-theorems for graph problems not definable in MSO$_1$ have been developed~\cite{EGW01,GK03,KR03,Ra07}.

All of the above meta-theorems require a constant-width decomposition of the graph.
We can compute such a decomposition in polynomial time for treewidth~\cite{Bo96} and clique-width~\cite{OS06}, but for other width parameters, such as mim-width, which is even more powerful than clique-width~\cite{Va12},
it is not known whether this is possible and this problem may turn out to be harder.
For instance, unless $\mathsf{NP} = \mathsf{ZPP}$, there is no constant-factor approximation algorithm for mim-width that runs in polynomial time~\cite{SV16}.
Meta-theorems for mim-width~\cite{BV13,BTV13} currently require an appropriate constant-width decomposition as part of the input (which may still be found in polynomial time for some graph classes).
 
\medskip
\noindent
{\bf Our Focus.} In our paper we concentrate on {\em clique-width}\footnote{See Section~\ref{s-pre} for a definition of clique-width and other terminology used in Section~\ref{s-intro}.} in an attempt to find {\em larger} graph classes for which certain \NP-complete graph problems become tractable without the requirement of an appropriate decomposition as part of the input.
The type of graph classes we consider all have the natural property that they are closed under vertex deletion.
Such graph classes are said to be {\em hereditary} and there is a long-standing study on boundedness of clique-width for hereditary graph classes (see, for example, \cite{BDJLPZ17,BL02,BDHP16,BDHP17,BKM06,BLM04b,BLM04,BM02,DDP17,DHP0,DLP17,DLRR12,DP14,DP16,Gu17,KLM09,MR99}).

Besides capturing many well-known classes, the framework of hereditary graph classes also enables us to perform a {\em systematic} study of a width parameter or graph problem.
This is because every hereditary graph class~${\cal G}$ is readily seen to be uniquely characterized by a minimal (but not necessarily finite) set~${\cal F}_{\cal G}$ of forbidden induced subgraphs.
If $|{\cal F}_{\cal G}|=1$ or $|{\cal F}_{\cal G}|=2$, then~${\cal G}$ is said to be {\em monogenic} or {\em bigenic}, respectively.
Monogenic and bigenic graph classes already have a rich structure, and studying their properties has led to deep insights into the complexity of bounding graph parameters and solving graph problems.
This is evidenced, for example, by extensive studies on the classes of bull-free graphs~\cite{Ch12} or claw-free graphs~\cite{CS05,HMLW19}, and surveys for graph problems or parameters specifically restricted to bigenic graph classes~\cite{DJP19,GJPS17}.

It is well known (see e.g.~\cite{DP16}) that a monogenic class of graphs has bounded clique-width if and only if it is a subclass of the class~${\cal G}$ with ${\cal F}_{\cal G}=\{P_4\}$.
The survey~\cite{DJP19} gives a state-of-the-art theorem on the boundedness and unboundedness of clique-width of bigenic graph classes.
Unlike treewidth, for which a complete dichotomy is known~\cite{BBJPPV20}, and mim-width, for which there is an infinite number of open cases~\cite{BHMPP20}, this state-of-the-art theorem shows that there are still five open cases (up to an equivalence relation); see also Section~\ref{s-soa}.
From the same theorem we observe that many graph classes are of unbounded clique-width.
However, if a graph class has unbounded clique-width, then this does not mean that a graph problem must be \NP-hard on this class.
For example, {\sc Colouring} is polynomial-time solvable on the (bigenic) class of $(C_4,P_6)$-free graphs~\cite{GHP18}, which contains the class of split graphs and thus has unbounded clique-width~\cite{MR99}.
In this case it turns out that the {\em atoms} (graphs with no clique cut-set) in the class of $(C_4,P_6)$-free graphs {\em do} have bounded clique-width.
This immediately gives us an algorithm for the whole class of $(C_4,P_6)$-free graphs due to Tarjan's decomposition theorem~\cite{Tarjan85}.

In fact, Tarjan's result holds not only for {\sc Colouring}, but also for many other graph problems.
For instance, several other classical graph problems, such as {\sc Minimum Fill-In}, {\sc Maximum Clique}, {\sc Maximum Weighted Independent Set}~\cite{Tarjan85} (see~\cite{Al03} for the unweighted variant) and {\sc Maximum Induced Matching}~\cite{BM11} are polynomial-time solvable on a hereditary graph class~${\cal G}$ if and only if this is the case on the atoms of~${\cal G}$.
Hence, we aim to investigate, in a systematic way, the following natural research question:

\medskip
\noindent
\emph{Which hereditary graph classes of \emph{unbounded} clique-width have the property that their atoms have \emph{bounded} clique-width?} 

\medskip
\noindent
{\bf Known Results.}
For monogenic graph classes, the restriction to atoms does not yield any algorithmic advantages, as shown by Gaspers et al.~\cite{GHP18}.
\begin{theorem}[\cite{GHP18}]\label{t-atoms}
Let~$H$ be a graph.
The class of $H$-free atoms has bounded clique-width if and only if the class of $H$-free graphs has bounded clique-width (so, if and only if~$H$ is an induced subgraph of~$P_4$).
\end{theorem}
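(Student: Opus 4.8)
\noindent
The plan is to prove the contrapositive of the nontrivial implication: if the class of $H$-free graphs has \emph{unbounded} clique-width, then so does the class of $H$-free atoms. The converse implication is immediate, as the class of $H$-free atoms is a subclass of the class of $H$-free graphs and boundedness of clique-width is inherited by subclasses; together with the known fact that $H$-free graphs have bounded clique-width precisely when $H\ssi P_4$, this also gives the parenthetical statement. So assume $H\not\ssi P_4$. A short Ramsey-type argument shows that this is equivalent to $H$ containing at least one of the five graphs $K_3$, $3P_1$, $C_4$, $2P_2$, $C_5$ as an induced subgraph: a graph inducing neither $K_3$ nor $3P_1$ is triangle-free with a triangle-free complement, hence has at most five vertices by Ramsey's theorem ($R(3,3)=6$), and the graphs on at most five vertices that in addition induce none of $C_4$, $2P_2$, $C_5$ are exactly the induced subgraphs of $P_4$. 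Fix $F\in\{K_3,3P_1,C_4,2P_2,C_5\}$ with $F\ssi H$. Since $F\ssi H$ forces every $F$-free graph to be $H$-free, it suffices to exhibit, for each such $F$, an infinite family of $F$-free atoms of unbounded clique-width.

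For $F\in\{K_3,C_4,C_5\}$ I would use a family of (hexagonal) walls $W_n$, taken to be $2$-connected. Since $W_n$ has maximum degree $3$ and girth $6$, it is triangle-free and induces no $C_4$ or $C_5$, and walls have unbounded clique-width (for instance because they have unbounded treewidth, while bounded maximum degree together with bounded clique-width is known to imply bounded treewidth). To see that $W_n$ is an atom, note that a clique in a triangle-free graph has at most two vertices, so a clique cut-set would be a cut-vertex or a pair of adjacent vertices whose deletion disconnects the graph; both are excluded for large $n$ using $2$-connectedness and triangle-freeness (so that, in particular, the two neighbours of any degree-$2$ vertex are non-adjacent). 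For $F\in\{3P_1,2P_2\}$ I would instead use the complements $\overline{W_n}$: since $\cw(\overline{G})\le 2\cw(G)$, the family $\{\overline{W_n}\}$ still has unbounded clique-width, and $\overline{W_n}$ is $3P_1$-free and $2P_2$-free (because $\overline{K_3}=3P_1$ and $\overline{C_4}=2P_2$, while $W_n$ is triangle-free and $C_4$-free). Here a clique cut-set of $\overline{W_n}$ corresponds to an independent set $S$ of $W_n$ such that $W_n-S$ is the join of two non-empty graphs; since $W_n-S$ is triangle-free, $C_4$-free and has maximum degree at most $3$, such a join must be a star on at most four vertices, whereas $V(W_n)\setminus S$ is a vertex cover of $W_n$ and hence has at least $|E(W_n)|/3\ge |V(W_n)|/3$ vertices, a contradiction for large $n$. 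So each $\overline{W_n}$ is an atom too.

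Putting the cases together: for an arbitrary $H\not\ssi P_4$, choosing $F\in\{K_3,3P_1,C_4,2P_2,C_5\}$ with $F\ssi H$, the corresponding family $\{W_n\}$ or $\{\overline{W_n}\}$ is a family of $H$-free atoms of unbounded clique-width, which completes the proof. The step I expect to be the main obstacle is the verification that the walls and their complements have no clique cut-set: this requires translating ``atom'' into structural statements (short cliques in the triangle-free case, nontrivial joins in the complementary case) and then invoking the degree and sparsity bounds of walls; one could also sidestep part of this by replacing walls with $3$-connected subcubic graphs of girth $6$ and unbounded clique-width, for which the triangle-free case becomes immediate. The remaining ingredients — the reduction to the five minimal graphs, the behaviour of clique-width under complementation, and the unboundedness of $\cw(W_n)$ — are folklore or already available.
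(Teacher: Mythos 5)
Your proposal is correct in substance, but note that this paper does not actually prove Theorem~\ref{t-atoms}: it imports it from~\cite{GHP18}, so the fair comparison is with the paper's own machinery for the bigenic analogue, namely Lemma~\ref{lem:need-S-co-S}. Your route — reduce $H\not\ssi P_4$ via Ramsey to the five minimal obstructions $K_3,3P_1,C_4,2P_2,C_5$, then use walls for $K_3,C_4,C_5$ and complements of walls for $3P_1,2P_2$ — is sound: the obstruction set is right, girth~$6$ gives the required freeness, unbounded treewidth plus bounded degree gives unbounded clique-width, and your join/vertex-cover argument for why $\overline{W_n}$ has no clique cut-set is complete as stated. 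The one gap you flag yourself is genuine but routine: for the triangle-free case you still owe the verification that no \emph{edge} of a wall is a $2$-cut (2-connectedness only kills cut-vertices, and triangle-freeness only kills the case of a degree-2 vertex whose two neighbours are the deleted pair, not an arbitrary adjacent $2$-cut); either a short structural check of walls or your fallback to $3$-connected subcubic girth-$6$ graphs closes it. By contrast, the paper's Lemma~\ref{lem:need-S-co-S} sidesteps both your case analysis and this verification: it uses $k$-subdivided walls with $k=\max(|V(H_1)|,|V(H_2)|)$, so that $H$-freeness holds at once for every $H\notin{\cal S}$ (no appeal to the five obstructions), and the atom property is checked once for the subdivided wall and once for its complement. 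In fact, specializing that lemma to $H_1=H_2=H$ already yields the hard direction of Theorem~\ref{t-atoms}: if $H\in{\cal S}$ and $\overline{H}\in{\cal S}$ then both $H$ and $\overline{H}$ are forests, so $\binom{|V(H)|}{2}\le 2(|V(H)|-1)$, forcing $|V(H)|\le 4$ and then $H\ssi P_4$; hence $H\not\ssi P_4$ implies $H\notin{\cal S}$ or $\overline{H}\notin{\cal S}$, and the lemma applies. So your proof buys a more elementary, self-contained construction (plain walls, explicit obstructions), while the subdivision-based argument buys uniformity and a cleaner atom verification.
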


\noindent
The result for $(C_4,P_6)$-free graphs~\cite{GHP18} shows that the situation is different for bigenic classes.
We are aware of two more hereditary graph classes~${\cal G}$ with this property, but in both cases $|{\cal F}_{\cal G}|> 2$.
Split graphs, or equivalently, $(C_4,C_5,2P_2)$-free graphs have unbounded clique-width~\cite{MR99}, but split atoms are complete graphs and have clique-width at most~$2$.
Cameron et al.~\cite{CSHK17} proved that $(\mbox{cap},C_4)$-free odd-signable atoms have clique-width at most~$48$, whereas the class of all $(\mbox{cap},C_4)$-free odd-signable graphs contains the class of split graphs and thus has unbounded clique-width.
We refer to~\cite{FFHHL20,FHHHL17} for some examples of algorithms for {\sc Colouring} on hereditary graph classes that rely on boundedness of clique-width of atoms of subclasses.

\medskip
\noindent
{\bf Our Results.} 
Due to Theorem~\ref{t-atoms}, and motivated by the aforementioned algorithmic applications, we focus on the atoms of bigenic graph classes.
Recall that the class of $(C_4,P_6)$-free graphs has unbounded clique-width but its atoms have bounded clique-width~\cite{GHP18}.
This also holds, for instance, for its subclass of $(C_4,2P_2)$-free graphs and thus for $(C_4,P_5)$-free graphs and $(C_4,P_2+\nobreak P_3)$-free graphs.
We determine a new, incomparable case where we forbid~$2P_2$ and~$\overline{P_2+P_3}$ (also known as the {\em paraglider}~\cite{HK20}); see \figurename~\ref{fig:fib-graphs} for illustrations of these forbidden induced subgraphs.

\begin{figure}[h]
\begin{center}
\begin{tabular}{cc}
\scalebox{0.6}{
\begin{minipage}{0.3\textwidth}
\begin{center}
\begin{tikzpicture}[every node/.style={circle,fill, minimum size=0.07cm}]
\node at (0,0) {};
\node at (0,2) {};
\node at (1,0) {};
\node at (1,2) {};
\draw (0,0) -- (0,2);
\draw (1,0) -- (1,2);
\end{tikzpicture}
\end{center}
\end{minipage}}
&
\scalebox{0.6}{
\begin{minipage}{0.3\textwidth}
\begin{center}
\begin{tikzpicture}[every node/.style={circle,fill, minimum size=0.07cm}]
\node at (0,0) {};
\node at (0,2) {};
\node at (2,0) {};
\node at (2,2) {};
\node at (1,1) {};
\draw (0,0) -- (0,2) -- (2,2) -- (2,0) -- cycle;
\draw (2,2) -- (1,1)--(0,0);
\draw (0,2) -- (1,1);
\end{tikzpicture}
\end{center}
\end{minipage}}
\\
\\
$2P_2$ & $\overline{P_2+P_3}$ 
\end{tabular}
\end{center}
\caption{\label{fig:fib-graphs}The two forbidden induced subgraphs from Theorem~\ref{thm:triplet}.}
\end{figure}
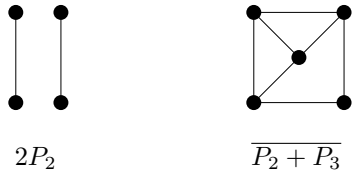

\begin{theorem}\label{thm:triplet}
The class of $(2P_2,\overline{P_2+P_3})$-free atoms has bounded clique-width (whereas the class of $(2P_2,\allowbreak \overline{P_2+P_3})$-free graphs has unbounded clique-width).
\end{theorem}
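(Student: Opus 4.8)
The parenthetical claim is immediate. By the F\"oldes--Hammer theorem every split graph is $(2P_2,C_4,C_5)$-free, and the paraglider $\overline{P_2+P_3}$ contains an induced $C_4$ (its four outer vertices in \figurename~\ref{fig:fib-graphs}), so every split graph is also $(2P_2,\overline{P_2+P_3})$-free; as split graphs have unbounded clique-width~\cite{MR99}, so does the class of $(2P_2,\overline{P_2+P_3})$-free graphs. All the content is therefore in the first statement. The plan is to show that every $(2P_2,\overline{P_2+P_3})$-free atom $G$ is, after deleting a bounded number of vertices and applying boundedly many (bipartite) complementations, a graph from a class already known to have clique-width at most a small constant --- complete multipartite graphs, cographs, disjoint unions of cliques, and complete or empty joins of these all qualify. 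Since all these operations preserve boundedness of clique-width, this is enough. We may assume $G$ is connected and has more than some absolute number of vertices.

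First I would use $2P_2$-freeness to reduce $G$ to a bounded skeleton. A connected $2P_2$-free graph has diameter at most $3$, and for every edge $uv$ the set $M=V(G)\setminus N[\{u,v\}]$ is independent with all its neighbours in $N(u)\cup N(v)$; moreover, as $2P_2$ is an induced subgraph of $P_5$, the Bacs\'o--Tuza dominating-subgraph theorem gives a dominating clique or a dominating induced $P_3$. Fix a minimal such dominating set $D$. If $|D|\le 3$, partition $V(G)\setminus D$ into the at most seven classes $Y_S$ ($\emptyset\neq S\subseteq D$) by neighbourhood in $D$; if the dominating clique $K$ is large, instead analyse the bipartite graph between $K$ and $V(G)\setminus K$ directly, using $2P_2$-freeness on $G[V(G)\setminus K]$. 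In each branch I would establish structural claims of two kinds: (i) each piece $G[Y_S]$ (resp.\ $G[V(G)\setminus K]$) is, after deleting boundedly many vertices, a complete multipartite graph or a cograph; and (ii) between any two pieces the bipartite pattern is complete, empty, a half-graph, or has one side of bounded size. Proving these is where $2P_2$- and paraglider-freeness must be combined: the paraglider forbids neither $C_4$ nor $C_5$, so one cannot reduce to split graphs, but $2P_2$-freeness sharply restricts how a vertex of a class can see a diamond or a $4$-cycle in $D\cup Y_{S'}$, and paraglider-freeness then removes the mixed adjacency patterns that would otherwise survive.

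The atom hypothesis then collapses everything non-trivial. The prototype is the folklore fact that split atoms are complete graphs: if a vertex $x$ has a clique neighbourhood $Q$ and $V(G)\setminus N[x]\neq\emptyset$, then $Q$ is a clique cut-set, so in an atom every vertex with a clique neighbourhood is universal. More generally, a large half-graph between two classes, or a large set of pairwise non-adjacent vertices that share a common clique-shaped external neighbourhood, always exposes a clique cut-set obtained from a common or prefix neighbourhood. Applying this to the configurations of step~(ii) forces each piece down to its bounded-clique-width core and forces each inter-piece pattern to be complete or empty up to boundedly many vertices. What remains is that $G$ is obtained from a bounded-size graph by complete and empty joins of boundedly many complete multipartite graphs and cographs, together with boundedly many extra vertices, and such a graph has clique-width at most an absolute constant.

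I expect the main obstacle to be the case analysis behind the structural claims~(i)--(ii) together with making the atom argument bite uniformly: one must check that every configuration that is not already of bounded clique-width genuinely contains a clique cut-set, and since $C_4$ and $C_5$ are not excluded this cannot be reduced to the split-graph situation but needs a careful simultaneous use of $2P_2$- and paraglider-freeness. A secondary technical point is the large-dominating-clique branch, where the skeleton is itself unbounded and the whole argument must run through the bipartite structure between the clique and the rest.
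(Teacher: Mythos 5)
Your handling of the parenthetical claim is fine, and your opening reduction (bounded number of vertex deletions and (bipartite) complementations suffices, by Facts~\ref{fact:del-vert}--\ref{fact:bip}) matches the paper's strategy. But for the main statement what you have is a plan, not a proof: the entire content of the theorem sits inside your structural claims (i) and (ii) and the subsequent ``atom collapse'', and none of these is established. You yourself flag the case analysis as the main obstacle; that case analysis \emph{is} the theorem. Moreover, at least one of the collapse principles you invoke is unjustified and, as stated, wrong: a large half-graph between two pieces of an atom does not ``always expose a clique cut-set obtained from a common or prefix neighbourhood'' --- the candidate separator is typically not a clique (the other side may be independent), and even when it is, it need not separate $G$, because the vertices involved have further neighbours in the skeleton and in other pieces. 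Consequently your claimed final structure (bounded-size modifications of cographs/complete multipartite graphs glued by complete and empty joins) is too narrow a target: in this class, $2P_2$-free bipartite (i.e.\ bipartite chain) patterns between pieces genuinely survive and cannot be argued away via atomness; they must instead be kept and handled by their own clique-width bound (Lemmas~\ref{l-e1} and~\ref{l-e2} give clique-width at most~$3$). An argument that tries to force everything down to cograph joins will get stuck exactly there.

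For comparison, the paper does not use a dominating clique/$P_3$ skeleton at all. It branches on whether the atom contains an induced $C_5$ (Lemma~\ref{l-tripletwithc5}) or an induced $C_4$ but no $C_5$ (Lemma~\ref{l-c5free}); if neither, the graph is $(C_4,C_5,2P_2)$-free, i.e.\ split, and split atoms are complete (Lemma~\ref{l-e3}). In the two cycle cases one partitions $V(G)$ by neighbourhoods on the fixed cycle and proves a short list of concrete claims (certain classes are empty, independent, of size at most one or two, or complete to each other), using the atom hypothesis only at pinpointed spots of the form ``$N(x)$ cannot be a clique, so $x$ has two non-adjacent neighbours''. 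A bounded number of vertex deletions, one subgraph complementation and a few bipartite complementations then reduce the graph to an edgeless graph (the $C_5$ case) or to the disjoint union of a $P_4$-free graph and a bipartite chain graph (the $C_4$ case). If you want to salvage your approach, you would need to (a) actually prove versions of (i) and (ii) for your dominating-set decomposition, and (b) replace the false half-graph collapse by explicitly allowing bipartite chain pieces in the final structure, as the paper does.
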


\noindent
We prove Theorem~\ref{thm:triplet} in Section~\ref{s-triplet} after first giving an outline.
Our approach shares some similarities with the approach Malyshev and Lobanova~\cite{ML17} used to show that {\sc (Weighted) Colouring} is polynomial-time solvable on $(P_5,\overline{P_2+P_3})$-free graphs.
We explain the differences between both approaches and the new ingredients of our proof in detail in Section~\ref{s-triplet}.
Here, we only discuss a complication that makes proving boundedness of clique-width of atoms more difficult in general.
Namely, when working with atoms, we need to be careful with performing complementation operations.
In particular, a class of $(H_1,H_2)$-free graphs has bounded clique-width if only if the class of $(\overline{H_1},\overline{H_2})$-free graphs has bounded clique-width.
However, this equivalence relation no longer holds for classes of $(H_1,H_2)$-free atoms.
For example, $(C_4,P_5)$-free (and even $(C_4,P_6)$-free) atoms have bounded clique-width~\cite{GHP18}, but we will prove that $(\overline{C_4},\overline{P_5})$-free atoms have unbounded clique-width.

We also identify a number of new bigenic graph classes whose atoms already have unbounded clique-width.
We prove this by giving two general techniques for constructing atoms of unbounded clique-width (see Lemmas~\ref{lem:no-false-twin} and~\ref{lem:no-comp-P1or2P1}) and by modifying existing graph constructions for proving unbounded clique-width of the whole class; see Section~\ref{s-unbounded}.
Combining the constructions from Section~\ref{s-unbounded} with Theorem~\ref{thm:triplet} and the state-of-art theorem on clique-width from~\cite{DJP19} yields the following summary; see Section~\ref{s-pre} for definitions of the notation used.

\begin{theorem}\label{thm:classification2-atoms}
For graphs $H_1$ and $H_2$, let~${\cal G}$ be the class of $(H_1,H_2)$-free graphs.
\begin{enumerate}
\item The class of atoms in~${\cal G}$ has bounded clique-width if 
\begin{enumerate}[(i)]
\item \label{thm:classification2-atoms:bdd:P4} $H_1$ or $H_2 \ssi P_4$
\item \label{thm:classification2-atoms:bdd:ramsey} $H_1 = \paw$ or~$K_s$ and $H_2=P_1+\nobreak P_3$ or~$tP_1$ for some $s,t\geq 1$
\item \label{thm:classification2-atoms:bdd:P_1+P_3} $H_1 \ssi \paw$ and $H_2 \ssi K_{1,3}+\nobreak 3P_1,\; K_{1,3}+\nobreak P_2,\;\allowbreak P_1+\nobreak P_2+\nobreak P_3,\;\allowbreak P_1+\nobreak P_5,\;\allowbreak P_1+\nobreak S_{1,1,2},\;\allowbreak P_2+\nobreak P_4,\;\allowbreak P_6,\; \allowbreak S_{1,1,3}$ or $S_{1,2,2}$
\item \label{thm:classification2-atoms:bdd:P_1+P_3b} $H_1 \ssi P_1+P_3$ and $H_2 \ssi \overline{K_{1,3}+\nobreak 3P_1},\; \overline{K_{1,3}+\nobreak P_2},\;\allowbreak \overline{P_1+\nobreak P_2+\nobreak P_3},\;\allowbreak \overline{P_1+\nobreak P_5},\;\allowbreak \overline{P_1+\nobreak S_{1,1,2}},\;\allowbreak \overline{P_2+\nobreak P_4},\;\allowbreak \overline{P_6},\; \allowbreak \overline{S_{1,1,3}}$ or~$\overline{S_{1,2,2}}$
\item \label{thm:classification2-atoms:bdd:2P_1+P_2} $H_1 \ssi \diamondgraph$ and $H_2\ssi P_1+\nobreak 2P_2,\; 3P_1+\nobreak P_2$ or~$P_2+\nobreak P_3$
\item \label{thm:classification2-atoms:bdd:2P_1+P_2b} $H_1 \ssi 2P_1+P_2$ and $H_2\ssi \overline{P_1+\nobreak 2P_2},\; \overline{3P_1+\nobreak P_2}$ or~$\overline{P_2+\nobreak P_3}$
\item \label{thm:classification2-atoms:bdd:P_1+P_4} $H_1 \ssi \gem$ and $H_2 \ssi P_1+\nobreak P_4$ or~$P_5$
\item \label{thm:classification2-atoms:bdd:P_1+P_4b} $H_1 \ssi P_1+P_4$ and $H_2 \ssi \overline{P_5}$
\item \label{thm:classification2-atoms:bdd:K_13} $H_1\ssi K_3+\nobreak P_1$ and $H_2 \ssi K_{1,3}$, 
\item \label{thm:classification2-atoms:bdd:2P1_P3} $H_1\ssi \overline{2P_1+\nobreak P_3}$ and $H_2\ssi 2P_1+\nobreak P_3$
\item $H_1\ssi P_6$ and $H_2\ssi C_4$, or
\item $H_1 \ssi 2P_2$ and $H_2\ssi \overline{P_2+P_3}$.\\[-5pt]
\end{enumerate}
\item The class of atoms in~${\cal G}$ has unbounded clique-width if 
\begin{enumerate}[(i)]
\item \label{thm:classification2-atoms:unbdd:not-in-S} $H_1\not\in {\cal S}$ and $H_2 \not \in {\cal S}$
\item \label{thm:classification2-atoms:unbdd:not-in-co-S} $H_1\notin \overline{{\cal S}}$ and $H_2 \not \in \overline{{\cal S}}$
\item \label{thm:classification2-atoms:unbdd:K_13} $H_1 \si K_3+\nobreak P_1$ and $H_2 \si 4P_1$ or~$2P_2$
\item \label{thm:classification2-atoms:unbdd:K_13b} $H_1 \si K_{1,3}$ and $H_2 \si K_4$ or~$C_4$

\item \label{thm:classification2-atoms:unbdd:2P_1+P_2} $H_1 \si \diamondgraph$ and $H_2 \si K_{1,3},\; 5P_1,\; P_2+\nobreak P_4$ or $P_1+\nobreak P_6$
\item \label{thm:classification2-atoms:unbdd:2P_1+P_2b} $H_1 \si 2P_1+\nobreak P_2$ and $H_2 \si K_3+P_1,\; K_5,\; \overline{P_2+\nobreak P_4}$ or~$\overline{P_6}$
\item \label{thm:classification2-atoms:unbdd:3P_1} $H_1 \si K_3$ and $H_2 \si 2P_1+\nobreak 2P_2,\; 2P_1+\nobreak P_4,\; 4P_1+\nobreak P_2,\; 3P_2$ or~$2P_3$
\item \label{thm:classification2-atoms:unbdd:3P_1b} $H_1 \si 3P_1$ and $H_2 \si \overline{2P_1+\nobreak 2P_2},\; \overline{2P_1+\nobreak P_4},\; \overline{4P_1+\nobreak P_2},\; \overline{3P_2}$ or~$\overline{2P_3}$
\item \label{thm:classification2-atoms:unbdd:4P_1} $H_1 \si K_4$ and $H_2 \si P_1 +\nobreak P_4,\; 3P_1+\nobreak P_2$ or~$2P_2$ 
\item \label{thm:classification2-atoms:unbdd:4P_1b} $H_1 \si 4P_1$ and $H_2 \si \gem,\; \overline{3P_1+\nobreak P_2}$ or~$C_4$
\item \label{thm:classification2-atoms:unbdd:gem} $H_1 \si \gem,\; \overline{P_1+2P_2}$ or~$\overline{P_2+P_3}$ and $H_2 \si P_1+\nobreak 2P_2$ or~$P_6$
\item \label{thm:classification2-atoms:unbdd:gemb} $H_1 \si P_1+\nobreak P_4$ and $H_2 \si \overline{P_1+\nobreak 2P_2}$, or
\item $H_1 \si 2P_2$ and $H_2 \si \overline{P_2+P_4},\; \overline{3P_2}$ or~$\overline{P_5}$.
\end{enumerate}
\end{enumerate}
\end{theorem}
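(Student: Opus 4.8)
The statement is a classification obtained by \emph{assembling} results, so the plan is modular: prove Part~1 from boundedness statements and Part~2 from unboundedness constructions, then check that the two lists together leave exactly the announced open cases.

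\textbf{Part 1 (bounded atoms).} The guiding observation is that the atoms of a class~${\cal G}$ of $(H_1,H_2)$-free graphs form a hereditary subclass of~${\cal G}$, so bounded clique-width of the whole class immediately gives bounded clique-width of its atoms. I would therefore walk down the positive side of the bigenic clique-width classification of~\cite{DJP19}: item~(i) is Theorem~\ref{t-atoms} (the $P_4$-free case); items~(ii)--(x) each name a pair $(H_1,H_2)$ for which the \emph{whole} class of $(H_1,H_2)$-free graphs is already known to have bounded clique-width by~\cite{DJP19}, where (iii)/(iv), (v)/(vi), (vii)/(viii) are complementary pairs (via $\overline{\paw}=P_1+P_3$, $\overline{\diamondgraph}=2P_1+P_2$, $\overline{\gem}=P_1+P_4$) and (ii), (ix), (x) are complementation-closed, so boundedness propagates along the list using that a \emph{whole} $(H_1,H_2)$-free class is bounded iff the $(\overline{H_1},\overline{H_2})$-free class is. The remaining two items are the genuinely new atoms-only entries: item~(xi), $H_1\ssi P_6$ and $H_2\ssi C_4$, is~\cite{GHP18} (the $(C_4,P_6)$-free atoms have bounded clique-width, whereas the class itself is unbounded as it contains split graphs), and item~(xii), $H_1\ssi 2P_2$ and $H_2\ssi\overline{P_2+P_3}$, is Theorem~\ref{thm:triplet}.

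\textbf{Part 2 (unbounded atoms).} Here the analogous shortcut fails: atoms form only a subclass, so unboundedness of the whole $(H_1,H_2)$-free class says nothing about its atoms, and for every listed pair one must exhibit a family of $(H_1,H_2)$-free \emph{atoms} of unbounded clique-width. I would organise this in three layers. (a)~The general tools of Section~\ref{s-unbounded}: Lemmas~\ref{lem:no-false-twin} and~\ref{lem:no-comp-P1or2P1} convert a family of $(H_1,H_2)$-free graphs of unbounded clique-width into a family of $(H_1,H_2)$-free \emph{atoms} of unbounded clique-width, under structural hypotheses on $H_1,H_2$ (roughly, absence of the relevant twin-type substructures, so that blowing up each vertex into a non-edge --- respectively the complement-side modification --- neither creates an induced~$H_1$ or~$H_2$ nor leaves a clique cut-set, while an induced-subgraph argument keeps clique-width unbounded). (b)~For items~2(i) and~2(ii) ($H_1,H_2\notin{\cal S}$, resp.\ $H_1,H_2\notin\overline{{\cal S}}$) I would apply~(a) or directly reuse the standard unboundedness witnesses --- balanced complete bipartite graphs when both~$H_i$ contain a triangle, wall-like subcubic graphs when degrees are bounded, and suitable complements for~2(ii) --- verifying in each case that the witnesses are atoms (in $K_{n,n}$ and in the subcubic wall-like graphs the only cliques are vertices and edges, whose deletion leaves the graph connected; for the complements one argues that no independent set of the base graph disconnects the complement). (c)~For the finite list of specific pairs~2(iii)--2(xiii) I would take, case by case, the construction from~\cite{DJP19} (or its source) witnessing unboundedness of the whole class and modify it locally --- adding or blowing up vertices --- to destroy every clique cut-set while preserving $(H_1,H_2)$-freeness.

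\textbf{Reconciliation and the main obstacle.} One then matches Part~1 against Part~2 modulo the closures~$\ssi$/$\si$ to isolate the 18 remaining open cases, being careful about the asymmetry already noted for atoms: complementation need not preserve the atom property, so, e.g., since $P_5\ssi P_6$ item~(xi) puts the $(C_4,P_5)$-free atoms among the bounded ones, while the complement pair $(\overline{C_4},\overline{P_5})=(2P_2,\overline{P_5})$ lands in item~2(xiii) among the unbounded ones. The hard part is Part~2: for each pair one needs a \emph{single} family that is simultaneously clique-cut-set-free, $(H_1,H_2)$-free, and of unbounded clique-width, and the complement cases are the most delicate because the obvious witnesses degenerate (e.g.\ $\overline{K_{n,n}}=2K_n$ is disconnected), forcing genuinely new atom constructions rather than mechanical complementation. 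Making Lemmas~\ref{lem:no-false-twin} and~\ref{lem:no-comp-P1or2P1} strong enough to subsume as many of these pairs as possible --- leaving only an explicit handful to handle by hand --- is where most of the effort will go.
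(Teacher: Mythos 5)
Your proposal follows the paper's proof of this theorem essentially verbatim: Part~1 is obtained, as in the paper, by inheriting boundedness from the whole-class classification (Theorem~\ref{thm:classification2}, applied through the equivalence relation/complementation) together with the two atom-specific results ($(C_4,P_6)$-free atoms from~\cite{GHP18} and Theorem~\ref{thm:triplet}), and Part~2 is assembled exactly as in Section~\ref{s-unbounded}, from the subdivided-wall/co-wall argument for the ${\cal S}$ and $\overline{{\cal S}}$ cases, the transfer Lemmas~\ref{lem:no-false-twin} and~\ref{lem:no-comp-P1or2P1}, and case-by-case atom modifications of the known unboundedness constructions, which is precisely how the paper derives the remaining items before reconciling the lists into the 18 open cases. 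One minor slip: balanced complete bipartite graphs have clique-width~$2$ and therefore cannot serve as unboundedness witnesses, but this is harmless since the $k$-subdivided walls and their complements (Lemma~\ref{lem:need-S-co-S}) already cover all of Part~2(i)--(ii).
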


\noindent
We prove Theorem~\ref{thm:classification2-atoms} in Section~\ref{s-main}.
Due to this theorem, we are left with 18 open cases, which we list in Section~\ref{s-main} (see Open Problem~\ref{o-atoms}).
In Section~\ref{s-con} we discuss directions for future work.

\section{Preliminaries}\label{s-pre}
We first give some general graph terminology and notation, followed by some terminology related to clique-width.

\subsection{General Graph Terminology}

Let~$G$ be a graph.
For a subset $S\subseteq V(G)$, the subgraph of~$G$ {\em induced by}~$S$ is the graph~$G[S]$, which has vertex set~$S$ and edge set $\{uv\; |\; uv\in E(G), u,v\in S\}$.
If $S=\{s_1,\ldots,s_r\}$, we may write $G[s_1,\ldots,s_r]$ instead of $G[\{s_1,\ldots,s_r\}]$.
We write $F\ssi G$ to indicate that~$F$ is an induced subgraph of~$G$.
For a subset $S\subseteq V(G)$, we let $G\setminus S = G[V(G)\setminus S]$.
A {\em (connected) component} of~$G$ is a maximal connected subgraph of~$G$.
The {\em neighbourhood} of a vertex $u\in V(G)$ is the set $N(u)=\{v\in V(G)\; |\; uv\in E(G)\}$.
Two vertices in $G$ are {\em false twins} if they have the same neighbourhood; note that such vertices must be non-adjacent.

An \emph{independent set} in a graph~$G$ is a subset of~$V(G)$ that consists of pairwise non-adjacent vertices; a \emph{clique} in~$G$ is a subset of pairwise adjacent vertices.
A clique $K\subseteq V(G)$ is a \emph{clique cut-set} of~$G$ if $G\setminus K$ is disconnected.
A graph with no clique cut-sets is an {\em atom}.
Note that atoms are connected by definition.

Let~$H$ be a graph.
A graph~$G$ is {\em $H$-free} if~$G$ does not contain~$H$ as an induced subgraph.
Let $\{H_1,\ldots,H_p\}$ be a set of graphs.
Then~$G$ is {\em $(H_1,\ldots,H_p)$-free} if it is $H_i$-free for all $i\in\{1,\ldots,p\}$.

Let~$X$ and~$Y$ be two disjoint vertex subsets of a graph~$G$.
The edges between~$X$ and~$Y$ form a {\em matching} if every vertex in~$X$ is adjacent to at most one vertex in~$Y$ and vice versa.
A vertex $x \in V(G) \setminus Y$ is {\em (anti-)complete} to~$Y$ if it is (non-)adjacent to every vertex in~$Y$.
Similarly, $X$ is {\em complete} to~$Y$ if every vertex of~$X$ is complete to~$Y$ and {\em anti-complete} to~$Y$ if every vertex of~$X$ is anti-complete to~$Y$.
A vertex $u\in V(G)$ is {\em dominating} if it is complete to $V(G)\setminus \{u\}$.
For $k \geq 1$, a \emph{$k$-subdivision} of~$G$ is the operation of replacing each edge~$uv$ of~$G$ with a $(k+1)$-edge path, whose end-vertices are identified with~$u$ and~$v$, respectively.
The {\em complement}~$\overline{G}$ of~$G$ has vertex set $V(\overline{G})=V(G)$ and edge set $E(\overline{G})=\{uv\; |\; u,v\in V(G), u \neq v, uv\notin E(G)\}$.
The {\em line graph} of~$G$ is the graph with vertex set~$E(G)$ and an edge between two vertices~$e_1$ and~$e_2$ if and only if~$e_1$ and~$e_2$ share a common end-vertex in~$G$.

A graph is {\em bipartite} if its vertex set can be partitioned into two (possibly empty) independent sets.
A graph is {\em complete multi-partite} if its vertex can be partitioned into~$r$ independent sets $V_1,\ldots,V_r$ for some integer $r\geq 1$ such that~$V_i$ is complete to~$V_j$ for every pair~$i,j$ with $1\leq i<j\leq r$; if $r=2$, we say that the graph is {\em complete bipartite}.

The graph~$G_1+\nobreak G_2$ is the {\em disjoint union} of two vertex-disjoint graphs~$G_1$ and~$G_2$ and has vertex set $V(G_1)\cup V(G_2)$ and edge set $E(G_1)\cup E(G_2)$.
The graph~$rG$ is the disjoint union of~$r$ copies of a graph~$G$.

The graphs $C_t$, $K_t$, and~$P_t$ denote the cycle, complete graph, and path on~$t$ vertices, respectively.
The graph~$K_{s,t}$ denotes the complete bipartite graph whose two partition classes contain~$s$ and~$t$ vertices, respectively.
The $\paw$ is the graph~$\overline{P_1+P_3}$, the $\diamondgraph$ is the graph~$\overline{2P_1+P_2}$, and the $\gem$ is the graph~$\overline{P_1+P_4}$.
The claw is the graph with vertices $x$, $y_1$, $y_2$, $y_3$ and edges~$xy_i$ for $i \in \{1,2,3\}$.
The {\em subdivided claw}~$S_{h,i,j}$, for $1\leq h\leq i\leq j$ is the tree with one vertex~$x$ of degree~$3$ and exactly three leaves, which are of distance~$h$,~$i$ and~$j$ from~$x$, respectively.
We let~${\cal S}$ denote the class of graphs every connected component of which is either a subdivided claw or a path on at least one vertex.
Note that $S_{1,1,1}=K_{1,3}$ and that this graph is isomorphic to the claw.

A graph is {\em chordal} if it has no induced cycles on more than four vertices, that is, if it is $(C_4,C_5,\ldots)$-free.
A graph is {\em co-chordal} if its complement is chordal.
A graph is {\em split} if its vertex set can be partitioned into a clique~$K$ and an independent set~$I$.
A graph is {\em bipartite chain} if it is bipartite, say with bipartition classes~$X$ and~$Y$, such that the vertices of~$X$ can be ordered $x_1,\ldots,x_p$ with the property that $N(x_1)\subseteq N(x_2) \subseteq \ldots \subseteq N(x_p)$.
The following observation is well known and easy to see.

\begin{lemma}\label{l-e1}
A graph is bipartite chain if and only if it is bipartite and $2P_2$-free.
\end{lemma}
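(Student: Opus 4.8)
The plan is to prove the two implications separately, in each case exploiting the fact that in a bipartite graph every edge of an induced $2P_2$ must cross the bipartition, so that its two $P_2$'s have exactly one endpoint each in the class $X$.

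For the forward direction I would assume $G$ is bipartite chain with bipartition $X=\{x_1,\dots,x_p\}$ and $Y$, ordered so that $N(x_1)\subseteq\cdots\subseteq N(x_p)$, and suppose towards a contradiction that $G$ has an induced $2P_2$ on $\{a,b,c,d\}$ with edges $ab$ and $cd$. Since $G$ is bipartite, each of these edges has exactly one endpoint in $X$, so after relabelling I may take $a=x_i$, $c=x_j$ with $i\le j$ and $b,d\in Y$. Then $b\in N(x_i)\subseteq N(x_j)=N(c)$, so $bc\in E(G)$, contradicting that $\{a,b,c,d\}$ induces only the edges $ab$ and $cd$. Hence $G$ is $2P_2$-free, and it is bipartite by definition.

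For the reverse direction I would start from a bipartite, $2P_2$-free graph $G$ with a fixed bipartition $X,Y$. The key step is to show that $\{N(x):x\in X\}$ is a chain under inclusion: if some $N(x)$ and $N(x')$ were incomparable, pick $y\in N(x)\setminus N(x')$ and $y'\in N(x')\setminus N(x)$; then $y,y'\in Y$ are non-adjacent, $x,x'\in X$ are non-adjacent, $xy,x'y'\in E(G)$, and $xy',x'y\notin E(G)$, so $G[x,y,x',y']$ is an induced $2P_2$ — a contradiction. Consequently the neighbourhoods occurring in $X$ are totally ordered by inclusion, and I can list $X$ as $x_1,\dots,x_p$ in non-decreasing order of neighbourhood, breaking ties between vertices with equal neighbourhoods arbitrarily, so that $N(x_1)\subseteq\cdots\subseteq N(x_p)$; thus $G$ is bipartite chain.

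The proof is short, and there is no substantial obstacle: the only points requiring care are bookkeeping ones — namely, using bipartiteness to ensure each edge of the induced $2P_2$ crosses the bipartition so that the ``$X$-endpoints'' can be chosen consistently, and observing that the required ordering of $X$ is just a linear extension pulled back from a chain in the subset lattice with ties broken arbitrarily. (The degenerate cases where $X$ or $Y$ is empty are handled trivially by the same argument.)
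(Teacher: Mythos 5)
Your proof is correct, and both directions are carried out without gaps: bipartiteness indeed forces each edge of a putative induced $2P_2$ to cross the bipartition, and the incomparability argument in the reverse direction correctly produces an induced $2P_2$ from $x,x',y,y'$ (all four vertices are distinct and the only edges are $xy$ and $x'y'$). The paper itself gives no proof of Lemma~\ref{l-e1}, stating it as a well-known easy observation, and your argument is exactly the standard one that justifies it.
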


\noindent

\subsection{Clique-width}
The {\em clique-width} of a graph~$G$, denoted by~$\cw(G)$, is the minimum number of labels needed to construct~$G$ using the following four operations:
\begin{enumerate}
\item create a new graph consisting of a single vertex~$v$ with label~$i$;
\item take the disjoint union of two labelled graphs~$G_1$ and~$G_2$;
\item add an edge between every vertex with label~$i$ and every vertex with label~$j$ ($i\neq j$);
\item relabel every vertex with label~$i$ to have label~$j$.
\end{enumerate}
\noindent
A class of graphs~${\cal G}$ has {\em bounded} clique-width if there is a constant~$c$ such that $\cw(G)\leq c$ for every $G\in {\cal G}$; otherwise the clique-width of~${\cal G}$ is {\em unbounded}.

For an induced subgraph~$G'$ of a graph~$G$, the {\em subgraph complementation} acting on~$G$ with respect to~$G'$ replaces every edge of~$G'$ by a non-edge, and vice versa.
Hence, the resulting graph has vertex set~$V(G)$ and edge set $(E(G) \setminus E(G')) \cup E(\overline{G'})$.
For two disjoint vertex subsets~$S$ and~$T$ in~$G$, the {\em bipartite complementation} acting on~$G$ with respect to~$S$ and~$T$ replaces every edge with one end-vertex in~$S$ and the other in~$T$ by a non-edge and vice versa.

For a constant $k\geq 0$ and a graph operation~$\gamma$, a graph class~${\cal G'}$ is {\em $(k,\gamma)$-obtained} from a graph class~${\cal G}$ if
\begin{enumerate}[(i)]
\item every graph in~${\cal G'}$ is obtained from a graph in~${\cal G}$ by performing~$\gamma$ at most~$k$ times, and\\[-18pt]
\item for every $G\in {\cal G}$, there exists at least one graph in~${\cal G'}$ obtained from~$G$ by performing~$\gamma$ at most~$k$ times.
\end{enumerate}
Then~$\gamma$ {\em preserves} boundedness of clique-width if for every constant~$k$ and every graph class~${\cal G}$, every graph class~${\cal G}'$ that is $(k,\gamma)$-obtained from~${\cal G}$ has bounded clique-width if and only if~${\cal G}$ has bounded clique-width.

\begin{enumerate}[\bf F{a}ct 1.]
\item \label{fact:del-vert}Vertex deletion preserves boundedness of clique-width~\cite{LR04}.
\item \label{fact:comp}Subgraph complementation preserves boundedness of clique-width~\cite{KLM09}.
\item \label{fact:bip}Bipartite complementation preserves boundedness of clique-width~\cite{KLM09}.
\end{enumerate}

We finish this section with making two further observations that we will need later on.
First, we make the following well-known observation on bipartite chain graphs, which is readily seen.

\begin{lemma}\label{l-e2}
Bipartite chain graphs have clique-width at most~$3$.
\end{lemma}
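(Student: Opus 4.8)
The plan is to give an explicit $3$-label clique-width expression for an arbitrary bipartite chain graph $G$. Let $X$ and $Y$ be the bipartition classes, and order the vertices of $X$ as $x_1,\ldots,x_p$ so that $N(x_1)\subseteq N(x_2)\subseteq\cdots\subseteq N(x_p)$. The key structural observation is that this nesting of neighbourhoods means that, for each $i$, the vertex $x_i$ must be made adjacent to exactly those vertices of $Y$ that also lie in $N(x_{i+1}),\ldots,N(x_p)$; so if we build $G$ by introducing $x_p, x_{p-1}, \ldots, x_1$ in that order, then at the moment we introduce $x_i$ we want it joined to precisely the set of $Y$-vertices that have already been introduced and have not yet been "cut off" from later (smaller-index) $X$-vertices.

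The construction I would carry out is the following. Process the indices $i = p, p-1, \ldots, 1$ in turn. Maintain the invariant that, before step $i$, the already-constructed graph consists of $x_p,\ldots,x_{i+1}$ together with all $Y$-vertices in $N(x_{i+1})$ (equivalently $\bigcup_{j>i}N(x_j)$), that all vertices of $X$ present carry label $1$, that the $Y$-vertices in $N(x_i)$ carry label $2$, and that the remaining $Y$-vertices present carry label $3$. At step $i$: (a) introduce $x_i$ with a fresh single-vertex operation using label $2$; (b) apply the "add all edges between label $2$ and label $3$" — wait, instead: relabel $x_i$ to a working label, join it to label $2$ (the vertices of $N(x_i)$), then relabel $x_i$ to $1$; (c) bring in via disjoint union the vertices of $N(x_i)\setminus N(x_{i+1})$, each created with label $2$, i.e.\ the new $Y$-vertices that $x_i$ needs; but these must also be adjacent to... actually they are adjacent only to $x_i$ among the $X$-vertices seen so far, which have all been relabelled to $1$ — so we join them before relabelling $x_i$. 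The bookkeeping needs care, but three labels suffice because at any moment we only need to distinguish: "all $X$-vertices so far" (one label), "$Y$-vertices still needing future $X$-neighbours" vs.\ "$Y$-vertices already complete" (two labels), and the single new vertex being added can temporarily reuse the slot freed up by relabelling.

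The main obstacle — and the place where the argument must be done carefully rather than waved at — is verifying that three labels genuinely suffice through the entire induction, i.e.\ that introducing the new vertex $x_i$ and the new batch of $Y$-vertices $N(x_i)\setminus N(x_{i+1})$ can be interleaved with relabelling operations so that no step ever requires a fourth label, and that the nesting property guarantees no spurious edges are ever created (a new $Y$-vertex in $N(x_i)$ is adjacent to no $x_j$ with $j>i$, which holds precisely because $N(x_j)\subseteq N(x_{j'})$ for $j<j'$ fails to put that $Y$-vertex in $N(x_j)$ only when... — this is exactly the chain condition). I would also remark, as a cleaner alternative, that bipartite chain graphs are a subclass of graphs of linear clique-width $3$, and that the bound also follows from the fact that they are obtained from a linear forest (an empty graph on the $Y$-side) by a single bipartite complementation-like construction; but the explicit expression above is the most self-contained route and is the one I would write out in full.
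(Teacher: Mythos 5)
The paper itself gives no proof of Lemma~\ref{l-e2}; it records the bound as a well-known observation. Your general strategy --- an explicit $3$-label expression exploiting the nested neighbourhoods --- is indeed the standard way to see it, but the scheme you describe breaks precisely at the point you flag as ``needing care'', and the problem is not bookkeeping that can be tightened. First, with your processing order $i=p,p-1,\ldots,1$ (so $N(x_i)\subseteq N(x_{i+1})$), the batch $N(x_i)\setminus N(x_{i+1})$ is empty for every $i<p$: all $Y$-vertices of $N(x_p)$ must already be present after the first step, and your claim that a newly introduced $Y$-vertex ``is adjacent only to $x_i$ among the $X$-vertices seen so far'' is false in this direction, since a vertex of $N(x_i)$ is adjacent to every $x_j$ with $j>i$, all of which have been seen. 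Second, and fatally, your invariant requires that before step $i$ exactly the vertices of $N(x_i)$ carry label $2$ while those of $N(x_{i+1})\setminus N(x_i)$ carry label $3$. Passing from step $i+1$ to step $i$ would require moving precisely the subset $N(x_{i+1})\setminus N(x_i)$ from label $2$ to label $3$ while leaving $N(x_i)$ at label $2$; the relabel operation moves \emph{all} vertices carrying a given label, so no sequence of relabellings can perform this selective retirement, and since (by the first point) these $Y$-vertices were all introduced together they cannot have been given distinct labels at creation either. The same difficulty afflicts steps (a)--(b): with labels $1,2,3$ all occupied by your invariant there is no free working label, and one cannot relabel the single vertex $x_i$ without relabelling everything sharing its label.

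The repair is a reorganisation, not extra care: arrange matters so that every vertex, at the moment it is joined, is joined to \emph{all} previously introduced vertices of the opposite side, so that no distinction between ``active'' and ``retired'' $Y$-vertices is ever needed. Concretely, introduce $x_p,\ldots,x_1$ in this order, each created directly with label $1$ and requiring no edges upon introduction; immediately after introducing $x_i$, introduce the batch $N(x_i)\setminus N(x_{i-1})$ (with $N(x_0)=\emptyset$) using label $3$, add all edges between labels $3$ and $1$ --- by the chain condition the neighbourhood of each such $y$ is exactly $\{x_j \mid j\ge i\}$, which is precisely the set of $X$-vertices present --- and then relabel $3\to 2$; vertices of $Y$ outside $N(x_p)$ are isolated and are added with label $2$ at the end. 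This uses three labels (and is in fact a linear expression). Your closing alternatives do not substitute for such an argument: deriving the class from an edgeless graph by a bipartite-complementation-style construction only yields \emph{boundedness} of clique-width via Fact~3, not the constant $3$, and ``bipartite chain graphs have linear clique-width $3$'' is a restatement of the claim rather than a proof.
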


Let $G=(K\cup I, E)$ be a split graph with clique~$K$ and independent set~$I$.
If there is a vertex $v \in I$ with $N(v) \subsetneq K$, then~$N(v)$ is a clique cut-set of~$G$.
Furthermore, if $|I| > 1$ then~$K$ is a clique cut-set.
It follows that split atoms are complete graphs.
It is readily seen that complete graphs have clique-width at most~$2$.
Hence, we can make the following observation.

\begin{lemma}\label{l-e3}
Split atoms are complete graphs and have clique-width at most~$2$.
\end{lemma}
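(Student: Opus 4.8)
The plan is to prove the two assertions in turn: first that every split atom is a complete graph, and then that every complete graph has clique-width at most~$2$; the statement follows immediately.

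For the structural part, fix a partition $V(G)=K\cup I$ with $K$ a clique and $I$ an independent set, and argue the contrapositive: if $G$ is not complete, I exhibit a clique cut-set. If some $v\in I$ is not complete to $K$, then, since $I$ is independent, $N(v)\subsetneq K$; thus $N(v)$ is a clique, and after deleting it the vertex $v$ has no remaining neighbour (all of its neighbours lay in $N(v)$), while $K\setminus N(v)\neq\emptyset$, so $G\setminus N(v)$ is disconnected and $N(v)$ is a clique cut-set. Otherwise every vertex of $I$ is complete to $K$; if moreover $|I|\ge 2$, then $K$ is a clique and $G\setminus K$ consists of the $|I|\ge 2$ pairwise non-adjacent vertices of $I$, so $K$ is a clique cut-set. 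The only remaining case is that every vertex of $I$ is complete to $K$ and $|I|\le 1$, and then $G=G[K\cup I]$ is a complete graph. (The degenerate subcases $K=\emptyset$ or $N(v)=\emptyset$ just make $G$ disconnected, which contradicts atoms being connected.)

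For the clique-width bound, I would use the standard two-label construction of $K_n$: start from a single vertex with label~$1$; then, repeatedly, introduce one new vertex with label~$2$, take the disjoint union with the current graph, add all edges between label-$1$ and label-$2$ vertices, and relabel~$2$ to~$1$. After $n-1$ rounds this yields $K_n$ using only the labels $1$ and~$2$, so $\cw(K_n)\le 2$, and hence split atoms have clique-width at most~$2$.

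I do not anticipate a genuine obstacle here: the argument is entirely elementary. The only points that need care are the degenerate subcases in the case analysis (an empty clique, or a vertex of $I$ with empty neighbourhood) and the observation that it is precisely the independence of $I$ that forces $v$ to become isolated once $N(v)$ is deleted.
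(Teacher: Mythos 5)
Your proposal is correct and follows essentially the same route as the paper: exhibit $N(v)$ as a clique cut-set when some $v\in I$ has $N(v)\subsetneq K$, exhibit $K$ as a clique cut-set when $|I|\geq 2$, conclude that split atoms are complete, and use the standard two-label construction for $K_n$. Your treatment is just a slightly more explicit write-up (contrapositive phrasing, degenerate subcases, and the clique-width construction spelled out) of what the paper states in compressed form.
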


\section{The Proof of Theorem~\ref{thm:triplet}}\label{s-triplet}
Here, we prove Theorem~\ref{thm:triplet}, namely that the class of $(2P_2,\overline{P_2+P_3})$-free atoms has bounded clique-width.
Our approach is based on the following three claims:
\begin{enumerate}[(i)]
\renewcommand{\theenumi}{(\roman{enumi})}
\renewcommand{\labelenumi}{(\roman{enumi})}
\item \label{cl-hasC5}$(2P_2,\overline{P_2+P_3})$-free atoms with an induced~$C_5$ have bounded clique-width.
\item \label{cl-hasC4}$(2P_2,\overline{P_2+P_3})$-free atoms with an induced~$C_4$ have bounded clique-width.
\item \label{cl-split}$(C_4,C_5,2P_2,\overline{P_2+P_3})$-free atoms have bounded clique-width.
\end{enumerate}
We prove Claims~\ref{cl-hasC5} and~\ref{cl-hasC4} in Lemmas~\ref{l-tripletwithc5} and~\ref{l-c5free}, respectively, whereas Claim~\ref{cl-split} follows from the fact that $(C_4,C_5,2P_2)$-free graphs are split graphs and so, by Lemma~\ref{l-e3}, the atoms in this class are complete graphs and therefore have clique-width at most~$2$.
We partition the vertex set of an arbitrary $(2P_2,\overline{P_2+P_3})$-free atom~$G$ into a number of different subsets according to their neighbourhoods in an induced~$C_5$ in Lemma~\ref{l-tripletwithc5} or an induced~$C_4$ in Lemma~\ref{l-c5free}.
We then analyse the properties of these different subsets of~$V(G)$ and how they are connected to each other, and use this knowledge to apply a number of appropriate vertex deletions, subgraph complementations and bipartite complementations.
These operations will modify~$G$ into a graph~$G'$ that is a disjoint union of a number of smaller ``easy'' graphs known to have ``small'' clique-width.
We then use Facts~\ref{fact:del-vert}--\ref{fact:bip} to conclude that~$G$ also has small clique-width.

This approach works, as we will:
\begin{itemize}
\item apply the vertex deletions, subgraph complementations, and bipartite complementations only a constant number of times; 
\item not use the properties of being an atom or being $(2P_2,\overline{P_2+P_3})$-free once we ``leave the graph class'' due to applying the above graph operations.
\end{itemize}

Our approach is similar to the approach used by Malyshev and Lobanova~\cite{ML17} for showing that {\sc Colouring} is polynomial-time solvable on the superclass of $(P_5,\overline{P_2+P_3})$-free graphs.
However, we note the following two techniques that can be used in the design of algorithms for {\sc Colouring} on hereditary graph classes, but cannot be used for proving boundedness of clique-width.
Both these techniques were used in~\cite{ML17}.

\paragraph{\bf 1. Prime atoms restriction.}
A set $X\subseteq V(G)$ is a \emph{module} if all vertices in~$X$ have the same set of neighbours in $V(G)\setminus X$.
A module~$X$ in a graph~$G$ is {\em trivial} if it contains either all or at most one vertex of~$G$.
A graph~$G$ is \emph{prime} if it has no non-trivial modules.
To solve {\sc Colouring} in polynomial time on some hereditary graph class~${\cal G}$, one may restrict to prime atoms in~${\cal G}$~\cite{HL}.
Malyshev and Lobanova proved that $(P_5,\overline{P_2+P_3})$-free prime atoms with an induced~$C_5$ are $3P_1$-free or have a bounded number of vertices.
In both cases, {\sc Colouring} can be solved in polynomial time.
We cannot make the pre-assumption that our atoms are prime.
To see this, let~$G$ be a split graph that is not complete.
Add two new non-adjacent vertices~$u$ and~$v$ to~$G$ and make them complete to the rest of~$V(G)$.
Let~${\cal G}$ be the (hereditary) graph class that consists of all these ``enhanced'' split graphs and their induced subgraphs.
These enhanced split graphs are atoms, which have unbounded clique-width due to Fact~\ref{fact:del-vert} and the fact that split graphs have unbounded clique-width~\cite{MR99}.
However, the prime atoms in~${\cal G}$ are~$P_1$ and~$P_2$,\footnote{Let~$D$ be a prime atom in~${\cal G}$.
As~$D$ is prime, $D$ cannot contain both~$u$ and~$v$.
This means that~$D$ is split graph.
By Lemma~\ref{l-e3}, as~$D$ is an atom, it must be a complete graph.
As~$D$ is prime, this implies that $|V(D)|\leq 2$.}
which have clique-width~$1$ and~$2$, respectively.
 
\paragraph{\bf 2. Perfect graphs restriction.}
Malyshev and Lobanova observed that $(P_5,\overline{P_2+P_3},C_5)$-free graphs are perfect.
Hence, {\sc Colouring} can be solved in polynomial time on such graphs~\cite{GLS84}.
However, being perfect does not imply boundedness of clique-width.
For instance, split graphs are perfect graphs with unbounded clique-width~\cite{MR99}.

\begin{lemma}\label{l-tripletwithc5}
The class of $(2P_2, \overline{P_2+P_3})$-free atoms that contain an induced~$C_5$ has bounded clique-width.
\end{lemma}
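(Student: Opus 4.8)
### Proof Proposal

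\textbf{Setup.} Let $G$ be a $(2P_2,\overline{P_2+P_3})$-free atom containing an induced $C_5$, say on vertices $v_1,v_2,v_3,v_4,v_5$ in cyclic order (indices mod $5$). The plan is to classify every vertex of $G$ according to its neighbourhood in this $C_5$, and then argue that, after a bounded number of vertex deletions, subgraph complementations and bipartite complementations, $G$ breaks into a disjoint union of graphs of small clique-width (bipartite chain graphs by Lemma~\ref{l-e2}, complete (multi-)partite graphs, split atoms by Lemma~\ref{l-e3}, or graphs on boundedly many vertices). By Facts~\ref{fact:del-vert}--\ref{fact:bip} this bounds $\cw(G)$.

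\textbf{Vertex partition and first structural reductions.} For $S\subseteq\{1,\dots,5\}$ let $V_S$ be the set of vertices $u\in V(G)\setminus\{v_1,\dots,v_5\}$ with $N(u)\cap\{v_1,\dots,v_5\}=\{v_i:i\in S\}$. First I would observe that $2P_2$-freeness is a very strong restriction: since $v_1v_2$ and $v_3v_4$ are disjoint edges of $G$, any vertex with no neighbour among $v_1,v_2,v_3,v_4$ would create a $2P_2$ with a suitable edge — more carefully, I would show that every vertex sees at least two (in fact, by a short case analysis using the $C_5$-structure, at least two \emph{consecutive} or two \emph{non-consecutive}) vertices of the $C_5$, so $V_\emptyset=V_{\{i\}}=\emptyset$ for all $i$. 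Thus only the classes $V_S$ with $|S|\ge 2$ survive. Next I would use $\overline{P_2+P_3}$-freeness: the complement of $P_2+P_3$ has an independent edge plus a triangle's worth of structure, and forbidding it forces vertices with small non-neighbourhoods, i.e. it behaves like a co-chordal/clique-heavy condition. In particular I expect to prove that each surviving $V_S$ (for fixed $S$) is either a clique or has otherwise very restricted internal structure, and that the adjacency between two classes $V_S$, $V_{S'}$ is governed by whether $S$ and $S'$ are "linked" along the $C_5$.

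\textbf{Using the atom condition.} The key use of "$G$ is an atom" is to eliminate clique separators: whenever the structural analysis produces a clique $K$ whose removal would disconnect two nonempty pieces, we get a contradiction, which collapses many cases. Concretely, I would expect that most of the $V_S$ with $|S|=2$ are forced to be empty (or to consist of a single vertex, or to be attached in a trivial way), because otherwise $N_{C_5}$-type sets, together with portions of the other classes, form clique cut-sets. The vertices in $V_{\{1,2,3,4,5\}}$ (complete to the $C_5$) form a module-like block that interacts cleanly with everything, and after a subgraph complementation localized to the $C_5$ together with a bounded collection of "boundary" vertices, one can split the remainder. After these reductions the graph should decompose into: a part living "around" the $C_5$ of bounded size, plus one or two bipartite-like or split-like pieces between the large classes, each of which is $2P_2$-free, hence a bipartite chain graph or a complete multipartite graph or (via Lemma~\ref{l-e1} and Lemma~\ref{l-e3}) something of clique-width at most $3$. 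Applying a bounded number of bipartite complementations to detach these pieces from each other finishes the argument.

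\textbf{Main obstacle.} The hard part will be the exhaustive but delicate case analysis of which pairs $(V_S,V_{S'})$ can be simultaneously nonempty and how they are joined, while keeping the total count of complementation/deletion operations bounded by an absolute constant (independent of $|V(G)|$). In particular, controlling the classes $V_S$ with $|S|=2$ or $|S|=3$ and showing that the non-edges inside and between the "large" classes are few enough to be eliminated by \emph{finitely many} bipartite/subgraph complementations (rather than one per vertex) is the crux; this is exactly where $\overline{P_2+P_3}$-freeness must be pushed hardest, since it is what prevents a surviving class from containing a large induced structure that would need unboundedly many operations to simplify. I would organize this as a sequence of claims, each killing or taming one type of class, and close with the decomposition-and-Facts argument sketched above.
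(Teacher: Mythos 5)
Your proposal is an outline rather than a proof: every structural statement that actually carries the argument is deferred (``I would show\dots'', ``I expect to prove\dots''), and the one concrete reduction you do assert is wrong. You claim that $2P_2$-freeness forces $V_\emptyset=\emptyset$, but a vertex with no neighbour on the $C_5$ together with an edge of the cycle only yields a $P_1+P_2$; to build a $2P_2$ you need an edge incident to that vertex, and such an edge need not exist. Indeed, the class admits atoms with nonempty $V_\emptyset$, and the paper's proof must keep $V_\emptyset$ throughout: it shows only that $V_i$, $V_{i,i+1}$ and $V_{i-1,i,i+1}$ are empty, that $V_\emptyset\cup V_{i,i+2}$ is independent, and — this is the sole place where the atom hypothesis enters — that every vertex of $V_{1,2,3,4,5}$ is dominating (if some $y\in V_\emptyset$ missed such a vertex $x$, then $N(y)$ cannot be a clique because $G$ has no clique cut-set, and two non-adjacent neighbours of $y$ together with $x$, $y$ and a common $C_5$-neighbour give a $\overline{P_2+P_3}$). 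Your intended use of the atom condition — killing classes wholesale by exhibiting clique cut-sets — is not how the hypothesis is exploited, and your guess that the surviving classes are ``cliques or very restricted'' and that the $|S|=2$ classes are ``mostly empty or a single vertex'' points in the wrong direction: the sets $V_{i,i+2}$ can be arbitrarily large and are \emph{independent}, while the classes that are actually bounded are $V_{i,i+1,i+3}$ and $V_{i,i+1,i+2,i+3}$ (at most one vertex in each union, by a $\overline{P_2+P_3}$ argument).

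Concretely, what is missing is the sequence of claims that makes the final bounded-operation clean-up possible: boundedness of the $|S|\in\{3,4\}$ classes, the fact that there is at most one edge between $V_{i,i+2}$ and $V_{i,i-2}$ (via a matching argument plus $2P_2$-freeness), completeness of $V_{i,i+2}$ to $V_{i-1,i+1}\cup V_{i+1,i+3}$, and the dominating property of $V_{1,2,3,4,5}$. With these in hand the paper deletes at most ten vertices, performs one subgraph complementation on $V_{1,2,3,4,5}$ and six bipartite complementations, and is left with an edgeless graph; without them, your closing appeal to bipartite chain graphs, split atoms and Facts~\ref{fact:del-vert}--\ref{fact:bip} has nothing to act on. So the proposal correctly identifies the general strategy (partition by $C_5$-neighbourhood, tame the classes, detach with boundedly many operations), but the crux you yourself flag as ``the hard part'' is exactly the content of the lemma, and the one shortcut you take to start it ($V_\emptyset=\emptyset$) fails.
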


\begin{proof}
Suppose~$G$ is a $(2P_2,\overline{P_2+P_3})$-free atom containing an induced cycle~$C$ on five vertices, say $v_1,\ldots,v_5$ in that order.
For $S \subseteq \{1,\ldots,5\}$, let~$V_S$ be the set of vertices $x \in V(G) \setminus V(C)$ such that $N(x)\cap V(C)=\{v_i \;|\; i \in S\}$.

To simplify notation, in the following claims, subscripts on vertices and vertex sets should be interpreted modulo~$5$ and whenever possible we will write~$V_i$ instead of~$V_{\{i\}}$, write~$V_{i,j}$ instead of~$V_{\{i,j\}}$, and so on.

\begin{claimm}\label{clm:C5-V1V12V123-empty}
For $i \in \{1,\ldots,5\}$, $V_i \cup V_{i,i+1} \cup V_{i-1,i,i+1}$ are empty.
\end{claimm}
\begin{inproof}
Suppose, for contradiction, that $x \in V_2 \cup V_{2,3} \cup V_{1,2,3}$.
Then $G[x,v_2,v_4,v_5]$ is a~$2P_2$, a contradiction.
The claim follows by symmetry.
\end{inproof}

\medskip
\noindent
By Claim~\ref{clm:C5-V1V12V123-empty}, the only non-empty sets~$V_S$ are those of the form $V_\emptyset$, $V_{i,i+2}$, $V_{i,i+1,i+3}$, $V_{i,i+1,i+2,i+3}$ and~$V_{1,2,3,4,5}$.
We now prove a sequence of claims.

\begin{claimm}\label{clm:C5-V13V0-indep}
For $i \in \{1,\ldots,5\}$, $V_\emptyset \cup V_{i,i+2}$ is independent.
\end{claimm}
\begin{inproof}
Suppose, for contradiction, that $x,y \in V_\emptyset \cup V_{1,3}$ are adjacent.
Then $G[v_4,v_5,x,y]$ is a~$2P_2$, a contradiction.
The claim follows by symmetry.
\end{inproof}

\begin{claimm}\label{clm:C5-V124V1234-small}
For $i \in \{1,\ldots,5\}$, $|V_{i,i+1,i+3} \cup V_{i,i+1,i+2,i+3}| \leq 1$.
\end{claimm}
\begin{inproof}
Suppose, for contradiction that there are distinct vertices $x,y \in V_{1,2,4} \cup V_{1,2,3,4}$.
Then $G[v_1,v_4,x,v_5,y]$ or $G[x,y,v_1,v_4,v_2]$ is a~$\overline{P_2+P_3}$ if~$x$ is adjacent or non-adjacent to~$y$, respectively, a contradiction.
The claim follows by symmetry.
\end{inproof}

\begin{claimm}\label{clm:C5-V13V35-almost-anti}
For $i \in \{1,\ldots,5\}$, there is at most one edge between~$V_{i,i+2}$ and~$V_{i,i-2}$.
\end{claimm}
\begin{inproof}
Suppose, for contradiction, that a vertex $x \in V_{1,3}$ has two neighbours $y,y' \in V_{1,4}$.
By Claim~\ref{clm:C5-V13V0-indep}, the sets~$V_{1,3}$ and~$V_{1,4}$ are independent.
In particular, this means that~$y$ is non-adjacent to~$y'$.
Therefore $G[y,y',x,v_4,v_1]$ is a~$\overline{P_2+P_3}$, a contradiction.
It follows that every vertex in~$V_{1,3}$ has at most one neighbour in~$V_{1,4}$.
By symmetry, every vertex in~$V_{1,4}$ has at most one neighbour in~$V_{1,3}$ and so the edges between~$V_{1,3}$ and~$V_{1,4}$ form a matching.
Since~$G$ is $2P_2$-free, it follows that there is at most one edge between~$V_{1,3}$ and~$V_{1,4}$.
The claim follows by symmetry.
\end{inproof}

\newpage
\begin{claimm}\label{clm:C5-V13V24-comp}
For $i \in \{1,\ldots,5\}$, $V_{i,i+2}$ is complete to $V_{i-1,i+1} \cup V_{i+1,i+3}$.
\end{claimm}
\begin{inproof}
Suppose, for contradiction, that $x \in V_{1,3}$ is non-adjacent to $y \in V_{2,4}$.
Then $G[x,v_1,y,v_4]$ is a~$2P_2$, a contradiction.
The claim follows by symmetry.
\end{inproof}

\begin{claimm}\label{clm:C5-V12345-domin}
If $x \in V_{1,2,3,4,5}$, then~$x$ is complete to $V(G) \setminus \{x\}$.
In particular, this implies that~$V_{1,2,3,4,5}$ is a clique.
\end{claimm}
\begin{inproof}
Let $x \in V_{1,2,3,4,5}$ and suppose, for contradiction, that $y \in V(G) \setminus \{x\}$ is non-adjacent to~$x$.
Clearly $y \notin V(C)$.
If $y \in V_{1,2,3,4,5}$, then $G[x,y,v_1,v_4,v_2]$ is a~$\overline{P_2+P_3}$, a contradiction.
If~$y$ is adjacent to~$v_i$ and~$v_{i+2}$, but not to~$v_{i+1}$ for some $i \in \{1,\ldots,5\}$, then $G[v_i,v_{i+2},v_{i+1},y,x]$ is a~$\overline{P_2+P_3}$, a contradiction.
By Claim~\ref{clm:C5-V1V12V123-empty}, it follows that $y \in V_\emptyset$.
Note that this implies that every vertex of~$V_{1,2,3,4,5}$ is adjacent to every other vertex in $V(G) \setminus V_\emptyset$.

Since~$G$ is an atom, $N(y)$ cannot be a clique, and so it must contain two non-adjacent vertices, say~$u$ and~$v$.
By Claim~\ref{clm:C5-V13V0-indep}, $u,v \notin V_\emptyset$ and for all $i \in \{1,\ldots,5\}$, $u,v \notin V_{i,i+2}$.
Since every vertex of~$V_{1,2,3,4,5}$ is adjacent to every other vertex in $V(G) \setminus V_\emptyset$, neither~$u$ nor~$v$ is equal to~$x$ and, furthermore, $x$ is adjacent to both~$u$ and~$v$.
By Claim~\ref{clm:C5-V1V12V123-empty}, it follows that~$u$ and~$v$ must each have at least three neighbours in~$C$.
Therefore~$u$ and~$v$ must have a common neighbour in~$C$; let~$v_i$ be such a common neighbour.
Now $G[u,v,x,y,v_i]$ is a~$\overline{P_2+P_3}$, a contradiction.
This completes the proof of the claim.
\end{inproof}

\begin{figure}[ht]
\centering
\includegraphics[scale=0.8, page=1]{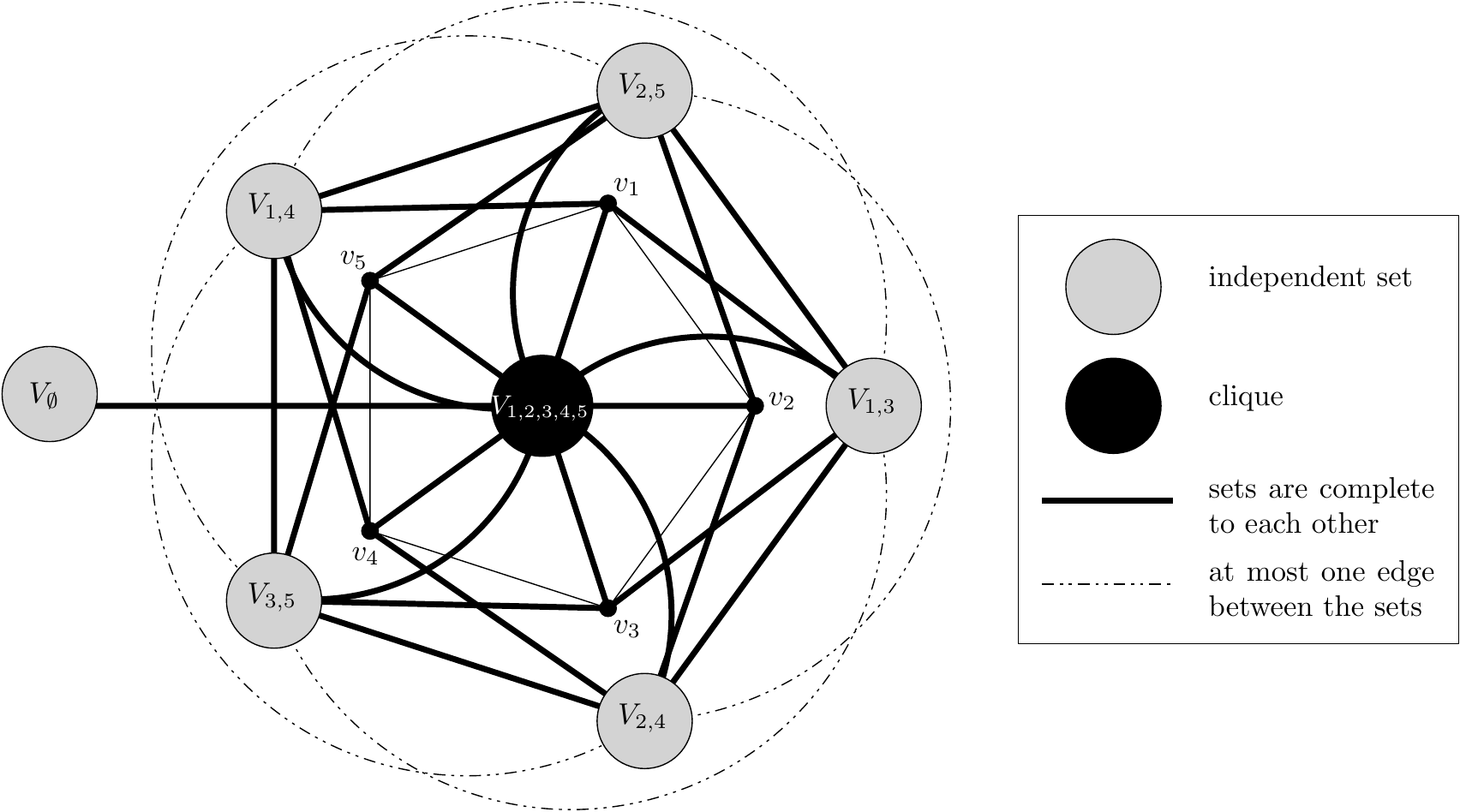}
\caption{The configuration of the sets in the proof of Lemma~\ref{l-tripletwithc5} obtained after deleting the vertex sets~$V_{i,i+1,i+3}$ and~$V_{i,i+1,i+2,i+3}$ for $i \in \{1,2,3,4,5\}$.
Note that vertices in~$V_\emptyset$ now have no neighbours outside~$V_{1,2,3,4,5}$.}\label{fig:lemmawithc5}
\end{figure}

\medskip
\noindent
We now show how to use a bounded number of vertex deletions, complementations and bipartite complementations to change~$G$ into an edgeless graph.
First, by Claim~\ref{clm:C5-V124V1234-small}, we can make~$V_{i,i+1,i+3}$ and~$V_{i,i+1,i+2,i+3}$ empty for all $i \in \{1,\ldots,5\}$ by deleting at most five vertices.
See \figurename~\ref{fig:lemmawithc5} for an illustration of the resulting graph.
Next, by Claim~\ref{clm:C5-V12345-domin} we can apply a bipartite complementation between~$V_{1,2,3,4,5}$ and the rest of the graph to disconnect $G[V_{1,2,3,4,5}]$ from it.
Next, by Claim~\ref{clm:C5-V12345-domin} we can apply a complementation to~$V_{1,2,3,4,5}$, which turns it into an independent set.
Now, by Claim~\ref{clm:C5-V1V12V123-empty}, the only other vertices remaining are those in~$C$, those in~$V_\emptyset$ and those in~$V_{i,i+2}$ for $i \in \{1,\ldots,5\}$.
Next, by Claim~\ref{clm:C5-V13V35-almost-anti}, we can make~$V_{i,i+2}$ anti-complete to~$V_{i,i-2}$ for all $i \in \{1,\ldots,5\}$ by deleting at most five vertices.
By Claim~\ref{clm:C5-V13V0-indep}, the only remaining edges are those between $V_{i-1,i+1} \cup \{v_i\}$ and $V_{i,i+2} \cup \{v_{i+1}\}$ for $i \in \{1,\ldots,5\}$.
By Claim~\ref{clm:C5-V13V24-comp} combined with the definition of~$V_{i,i+2}$, we can apply a bipartite complementation between each of these pairs to remove all remaining edges of the graph.
Thus, applying at most ten vertex deletions, six bipartite complementations and one complementation operation to~$G$, we obtain an edgeless graph, which has clique-width~$1$.
By Facts~\ref{fact:del-vert}, \ref{fact:comp} and~\ref{fact:bip}, it follows that~$G$ has bounded clique-width.
\end{proof}

\begin{lemma}\label{l-c5free}
The class of $(2P_2, \overline{P_2+P_3})$-free atoms that contain an induced~$C_4$ has bounded clique-width.
\end{lemma}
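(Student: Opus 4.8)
The plan is to follow the template of the proof of Lemma~\ref{l-tripletwithc5}. Let $G$ be a $(2P_2,\overline{P_2+P_3})$-free atom containing an induced $C_4$, say on vertices $v_1,v_2,v_3,v_4$ in that cyclic order, and for $S\subseteq\{1,2,3,4\}$ let $V_S$ be the set of vertices $x\in V(G)\setminus\{v_1,v_2,v_3,v_4\}$ whose neighbourhood in the cycle is exactly $\{v_i:i\in S\}$, with indices read modulo~$4$. If $G$ also has an induced~$C_5$ we are already done by Lemma~\ref{l-tripletwithc5}, so we may assume $G$ is $C_5$-free. The first step is to bound neighbourhoods in the cycle: if $x\in V_{i,i+1,i+2}$ for some~$i$, then $G[x,v_1,v_2,v_3,v_4]$ is a $\overline{P_2+P_3}$, so all $V_S$ with $|S|=3$ are empty, and hence only $V_\emptyset$, the four singleton sets $V_i$, the four ``edge'' sets $V_{i,i+1}$, the two ``diagonal'' sets $V_{1,3}$ and $V_{2,4}$, and $V_{1,2,3,4}$ can be non-empty.

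Next I would determine the structure of each surviving part. For $V_\emptyset$, each $V_i$ and each $V_{i,i+1}$, any two adjacent vertices of the part together with the edge of~$C$ disjoint from their common neighbourhood in~$C$ induce a~$2P_2$, so all of these sets are independent. For $V_{1,3}$, $V_{2,4}$ and $V_{1,2,3,4}$ the key point is that each of them is complete to some non-edge of~$C$ (namely $\{v_1,v_3\}$, $\{v_2,v_4\}$, and either of them, respectively): if such a set contained an induced $P_1+P_2$, then together with the two endpoints of this non-edge it would induce a $\overline{P_2+P_3}$. Hence $G[V_{1,3}]$, $G[V_{2,4}]$ and $G[V_{1,2,3,4}]$ are $(P_1+P_2)$-free, that is, complete multipartite, and in particular $P_4$-free and of clique-width at most~$2$.

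The third, and main, step is a case analysis of the adjacencies between the parts. Many pairs are settled by $2P_2$-freeness directly, which typically yields anti-completeness (for instance $V_\emptyset$ is anti-complete to every $V_i$ and every $V_{i,i+1}$, and $V_i$ is complete to $V_{i+2}$); the $C_5$-free assumption rules out further configurations (for example, since $V_i$ is complete to $V_{i+2}$, at most one of $V_i$ and $V_{i+2}$ can be non-empty, as otherwise an edge between them closes an induced~$C_5$ through $v_i,v_{i+1},v_{i+2}$); and for some pairs one gets only a matching of cross-edges that is bounded in size by $2P_2$-freeness, so its endpoints may be deleted. The hard part is the handful of pairs on which the two forbidden induced subgraphs alone do not force an easy relation --- essentially the interaction of $V_\emptyset$ with $V_{1,3}$, $V_{2,4}$ and $V_{1,2,3,4}$, and the interactions among these three ``big'' complete-multipartite sets. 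Here I would use the atom hypothesis exactly as in Claim~\ref{clm:C5-V12345-domin}: for $y\in V_\emptyset$ the set $N(y)$ cannot be a clique, so $y$ has two non-adjacent neighbours, which by the previous step lie in $V_{1,3}\cup V_{2,4}\cup V_{1,2,3,4}$, and chasing this configuration (together with analogous ones coming from other candidate clique cut-sets) through $2P_2$- and $\overline{P_2+P_3}$-freeness should pin the cross-adjacencies down tightly enough that a bounded number of vertex deletions and bipartite complementations disconnects $V_\emptyset$ from the rest and trims the big sets. I expect this to be the principal obstacle: compared with the $C_5$, a $C_4$ forces far fewer empty neighbourhood-types and has a more symmetric adjacency pattern, so the forbidden subgraphs say less and the atom condition must be invoked over many more cases, while still keeping the total number of operations bounded.

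Finally, once every pair of parts has been reduced to an easy relation, I would perform the resulting bounded number of vertex deletions (to remove the bounded matchings and bounded ``irregular'' remnants), subgraph complementations (if needed, to turn the complete-multipartite parts into disjoint unions of cliques), and bipartite complementations (to erase all ``complete'' cross-relations), just as in the proof of Lemma~\ref{l-tripletwithc5}. The graph then becomes a disjoint union of graphs each of clique-width at most~$2$ --- independent sets, complete graphs, and a bounded collection of bounded-size graphs --- and so has clique-width at most~$2$; by Facts~\ref{fact:del-vert}--\ref{fact:bip} it follows that $G$ has bounded clique-width. As in Lemma~\ref{l-tripletwithc5}, care is needed never to use $2P_2$-freeness, $\overline{P_2+P_3}$-freeness, or the atom property after the first operation that leaves the class.
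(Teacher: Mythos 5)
Your outline follows the paper's route exactly: partition $V(G)$ by neighbourhood type on the induced $C_4$, reduce to the $C_5$-free case via Lemma~\ref{l-tripletwithc5}, eliminate the $|S|=3$ types with $\overline{P_2+P_3}$, note the independence of the small types and the $(P_1+P_2)$-freeness of $G[V_{1,2,3,4}]$, and finish with Facts~\ref{fact:del-vert}--\ref{fact:bip} (in fact $V_{1,3}$ and $V_{2,4}$ are outright independent, as in Claim~\ref{clm:C4-V12-V13-indep}, not merely complete multipartite). The specific observations you do make are correct. However, the part you defer with ``chasing this configuration \ldots should pin the cross-adjacencies down tightly enough'' is precisely where the proof lives, and it is not supplied: the paper needs three atom-based structural claims that your sketch neither states nor proves, namely that every $x\in V_\emptyset$ is complete to $V_{1,2,3,4}$ and has (at least two) neighbours in exactly one of $V_{1,3},V_{2,4}$ and none in the other (Claim~\ref{clm:C4-V0-non-empty-nice-nbhd}), that $|V_{i,i+1}\cup V_{i+2,i+3}|\le 2$, so the ``edge'' types have bounded size rather than just bounded matchings (Claim~\ref{clm:C4-V12V34-small}), and that each $V_i$ is complete to $V_{1,2,3,4}$ while at most one vertex of $V_{i,i+2}$ has neighbours in $V_i$ (Claim~\ref{clm:C4-V_1-at-most-1-V13-nbr}). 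None of these follow from $2P_2$- and $\overline{P_2+P_3}$-freeness alone, and without them the final reduction cannot even be set up.

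More importantly, your endgame is not achievable as stated. You expect a bounded number of operations to disconnect $V_\emptyset$ from the rest and to leave a disjoint union of cliques, independent sets and bounded-size pieces of clique-width at most~$2$. But after all the deletions, the adjacency between $V_\emptyset$ and $V_{1,3}\cup V_{2,4}$ (and likewise between $V_1\cup V_2$ and the diagonal sets) is neither complete, nor empty, nor of bounded size: the forbidden subgraphs only force it to be a $2P_2$-free bipartite adjacency, i.e.\ an arbitrary chain graph, and a chain graph with many distinct neighbourhoods cannot be destroyed by a bounded number of bipartite complementations and vertex deletions. The paper does not attempt this; instead it keeps these edges and observes that the residual graph $G[V_\emptyset\cup V_1\cup V_2\cup V_{1,3}\cup V_{2,4}]$ is a $2P_2$-free bipartite graph, hence a bipartite chain graph of clique-width at most~$3$ by Lemmas~\ref{l-e1} and~\ref{l-e2}. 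Supplying the three atom-based claims above and replacing your ``disconnect everything'' step with this chain-graph argument is what is missing from the proposal.
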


\begin{proof}
Suppose~$G$ is a $(2P_2,\overline{P_2+P_3})$-free atom containing an induced cycle~$C$ on four vertices, say $v_1,\ldots,v_4$ in that order.
By Lemma~\ref{l-tripletwithc5}, we may assume that~$G$ is $C_5$-free.
For $S \subseteq \{1,\ldots,4\}$, let~$V_S$ be the set of vertices $x \in V(G) \setminus V(C)$ such that $N(x)\cap V(C)=\{v_i \;|\; i \in S\}$.

To simplify notation, in the following claims, subscripts on vertices and vertex sets should be interpreted modulo~$4$ and whenever possible we will write~$V_i$ instead of~$V_{\{i\}}$, write~$V_{i,j}$ instead of~$V_{\{i,j\}}$, and so on.

\begin{claimm}\label{clm:C4-V123-empty}
For $i \in \{1,\ldots,4\}$, $V_{i,i+1,i+2}$ is empty.
\end{claimm}
\begin{inproof}
Suppose, for contradiction, that $x \in V_{1,2,3}$.
Then $G[v_1,v_3,v_2,v_4,x]$ is a~$\overline{P_2+P_3}$, a contradiction.
The claim follows by symmetry.
\end{inproof}

\smallskip
\noindent
See \figurename~\ref{fig:c5free-init} for an illustration of the remaining sets~$V_S$ that can be non-empty.

\begin{figure}[ht]
\centering
\includegraphics[scale=0.8, page=2]{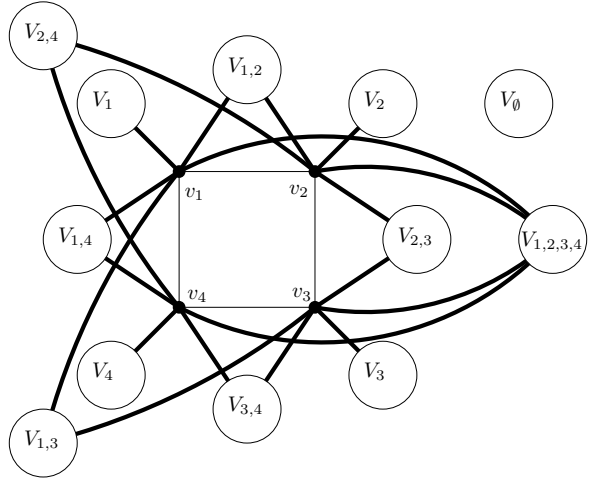}
\caption{The possible non-empty sets~$V_S$ in the initial situation in the proof of Lemma~\ref{l-c5free}.
Edges between the different sets~$V_S$ are not drawn.} \label{fig:c5free-init}
\end{figure}

\begin{claimm}\label{clm:C4-V0V1V2V12-indep}
For $i \in \{1,\ldots,4\}$, $V_\emptyset \cup V_i \cup V_{i+1} \cup V_{i,i+1}$ is an independent set.
\end{claimm}
\begin{inproof}
Suppose, for contradiction, that $x,y \in V_\emptyset \cup V_1 \cup V_2 \cup V_{1,2}$ are adjacent.
Then $G[x,y,v_3,v_4]$ is a~$2P_2$, a contradiction.
The claim follows by symmetry.
\end{inproof}

\begin{claimm}\label{clm:C4-V12-V13-indep}
For $i \in \{1,\ldots,4\}$, $V_{i,i+1} \cup V_{i,i+2}$ and $V_{i,i+1} \cup V_{i+1,i+3}$ are independent sets.
\end{claimm}
\begin{inproof}
Suppose, for contradiction, that $x,y \in V_{1,2} \cup V_{1,3}$ are adjacent.
By Claim~\ref{clm:C4-V0V1V2V12-indep}, $x$ and~$y$ cannot both be in~$V_{1,2}$, so assume without loss of generality that $x \in V_{1,3}$.
Now $G[x,v_2,v_1,v_3,y]$ is a~$\overline{P_2+P_3}$ (regardless of whether $y \in V_{1,2}$ or $y \in V_{1,3}$), a contradiction.
The claim follows by symmetry.
\end{inproof}

\begin{claimm}\label{clm:C4-V1234-bdd-cw}
$G[V_{1,2,3,4}]$ is $(P_1+\nobreak P_2)$-free and so it has bounded clique-width.
\end{claimm}
\begin{inproof}
Suppose, for contradiction, that $x,y,y' \in V_{1,2,3,4}$ induce a~$P_1+\nobreak P_2$ in~$G$.
Then $G[v_1,v_3,y,x,y']$ is a~$\overline{P_2+P_3}$, a contradiction.
Therefore~$G[V_{1,2,3,4}]$ is $(P_1+\nobreak P_2)$-free and so~$P_4$-free, so it has bounded clique-width by Theorem~\ref{t-atoms}.
\end{inproof}

\newpage
\begin{claimm}\label{clm:C4-V13-V1234-comp}
For $i \in \{1,2\}$, $V_{i,i+2}$ is complete to~$V_{1,2,3,4}$.
\end{claimm}
\begin{inproof}
Suppose, for contradiction, that $x \in V_{1,3}$ is non-adjacent to $y \in V_{1,2,3,4}$.
Then $G[v_1,v_3,v_2,x,y]$ is a~$\overline{P_2+P_3}$, a contradiction.
The claim follows by symmetry.
\end{inproof}

\begin{claimm}\label{clm:C4-V1V12-or-V3V23-empty}
For $i \in \{1,2,3,4\}$ either $V_{i-1} \cup V_{i-1,i}$ or $V_{i,i+1} \cup V_{i+1}$ is empty.
\end{claimm}
\begin{inproof}
Suppose, for contradiction, that $x \in V_1 \cup V_{1,2}$ and $y \in V_{2,3} \cup V_3$.
Then $G[v_1,x,y,v_3,v_4]$ is a~$C_5$ or $G[x,v_1,y,v_3]$ is a~$2P_2$ if~$x$ is adjacent or non-adjacent to~$y$, respectively, a contradiction.
The claim follows by symmetry.
\end{inproof}

\begin{claimm}\label{clm:C4-V0-non-empty-nice-nbhd}
If $x \in V_\emptyset$, then~$x$ has at least two neighbours in one of~$V_{1,3}$ and~$V_{2,4}$ and is anti-complete to the other.
Furthermore, in this case~$x$ is complete to~$V_{1,2,3,4}$.
\end{claimm}
\begin{inproof}
Suppose $x \in V_\emptyset$.
Since~$G$ is not an atom, $N(x)$ cannot be a clique, and so must contain two non-adjacent vertices $y,y'$.
By Claims~\ref{clm:C4-V123-empty} and~\ref{clm:C4-V0V1V2V12-indep}, and the definition of~$V_\emptyset$, it follows that $y,y' \in V_{1,3} \cup V_{2,4} \cup V_{1,2,3,4}$.
If $y,y' \in V_{1,2,3,4}$, then $G[y,y',v_1,x,v_2]$ is a~$\overline{P_2+P_3}$, a contradiction.
By Claim~\ref{clm:C4-V13-V1234-comp}, $V_{1,2,3,4}$ is complete to $V_{1,3} \cup V_{2,4}$, so it follows that $y,y' \in V_{1,3} \cup V_{2,4}$.
If $y \in V_{1,3}$ and $y' \in V_{2,4}$, then $G[v_1,v_2,y',x,y]$ is a~$C_5$, a contradiction.
It follows that $y,y' \in V_{1,3}$ or $y,y' \in V_{2,4}$.

Suppose $y,y' \in V_{1,3}$.
If $z \in V_{2,4}$ is a neighbour of~$x$, then~$z$ must be adjacent to~$y$ and~$y'$ (since, as shown above, $x$ cannot have a pair of non-adjacent neighbours one of which is in~$V_{1,3}$ and the other of which is in~$V_{2,4}$), in which case $G[y,y',x,v_1,z]$ is a~$\overline{P_2+P_3}$, a contradiction.
Therefore~$x$ cannot have a neighbour in~$V_{2,4}$.
If $z \in V_{1,2,3,4}$ is a non-neighbour of~$x$, then~$z$ must be adjacent to~$y$ and~$y'$ by Claim~\ref{clm:C4-V13-V1234-comp}, so $G[y,y',v_1,x,z]$ is a~$\overline{P_2+P_3}$, a contradiction.
Therefore~$x$ is complete to~$V_{1,2,3,4}$.
The claim follows by symmetry.
\end{inproof}

\begin{claimm}\label{clm:C4-V12V34-small}
For $i \in \{1,2\}$, $|V_{i,i+1} \cup V_{i+2,i+3}| \leq 2$.
\end{claimm}
\begin{inproof}
Suppose, for contradiction, that $|V_{1,2} \cup V_{3,4}| \geq 3$.
First note that if $x \in V_{1,2}$, $y \in V_{3,4}$ are non-adjacent, then $G[v_1,x,v_3,y]$ is a~$2P_2$, a contradiction.
Therefore~$V_{1,2}$ is complete to~$V_{3,4}$.
By Claim~\ref{clm:C4-V0V1V2V12-indep}, both~$V_{1,2}$ and~$V_{3,4}$ are independent sets.
If $x \in V_{1,2}$ and $y,y' \in V_{3,4}$, then $G[y,y',v_3,x,v_4]$ is a~$\overline{P_2+P_3}$, a contradiction.
By symmetry, we conclude that either~$V_{1,2}$ or~$V_{3,4}$ is empty.

Suppose~$V_{3,4}$ is empty, so~$V_{1,2}$ contains at least three vertices and let $x \in V_{1,2}$ be such a vertex.
Since~$G$ is an atom, $N(x)$ cannot be a clique, so~$x$ must have two neighbours $y,y'$ that are non-adjacent.
By Claims~\ref{clm:C4-V123-empty}, \ref{clm:C4-V0V1V2V12-indep}, \ref{clm:C4-V12-V13-indep} and~\ref{clm:C4-V1V12-or-V3V23-empty}, and the definition of~$V_{1,2}$, every neighbour of $x \in V_{1,2}$ lies in $\{v_1,v_2\} \cup V_{1,2,3,4}$.
Since~$v_1$ is complete to $\{v_2\} \cup V_{1,2,3,4}$ and~$v_2$ is complete to $\{v_1\} \cup V_{1,2,3,4}$, it follows that $y,y' \in V_{1,2,3,4}$.
Now $G[y,y',v_1,v_3,x]$ is a~$\overline{P_2+P_3}$, a contradiction.
The claim follows by symmetry.
\end{inproof}

\begin{claimm}\label{clm:C4-V_1-at-most-1-V13-nbr}
For $i \in \{1,2,3,4\}$, $V_i$ is complete to~$V_{1,2,3,4}$ and at most one vertex of~$V_{i,i+2}$ has neighbours in~$V_i$.
\end{claimm}
\begin{inproof}
Suppose $x \in V_1$.
Since~$G$ is an atom, $x$ must have two neighbours $y,y'$ that are non-adjacent.
By Claims~\ref{clm:C4-V123-empty}, \ref{clm:C4-V0V1V2V12-indep} and~\ref{clm:C4-V1V12-or-V3V23-empty}, and the definition of~$V_1$, every neighbour of~$x$ lies in $\{v_1\} \cup V_{1,3} \cup V_{2,4} \cup V_{1,2,3,4}$.
If $y,y' \in V_{1,3} \cup V_{1,2,3,4}$, then $G[y,y',v_1,v_3,x]$ is a~$\overline{P_2+P_3}$, a contradiction.
The vertex~$v_1$ is complete to $V_{1,3} \cup V_{1,2,3,4}$.
Therefore without loss of generality, we may assume $y \in V_{2,4}$.
Furthermore, note that~$V_{1,3}$ is an independent set by Claim~\ref{clm:C4-V12-V13-indep}, so~$x$ has at most one neighbour in~$V_{1,3}$.
Since~$V_1$ is an independent set by Claim~\ref{clm:C4-V0V1V2V12-indep}, it follows that $G[V_1 \cup V_{1,3}]$ is a bipartite graph with parts~$V_1$ and~$V_{1,3}$.
Since~$G$ is $2P_2$-free, it follows that no two vertices in~$V_1$ can have different neighbours in~$V_{1,3}$.
Therefore at most one vertex of~$V_{1,3}$ has a neighbour in~$V_1$.
Now if $z \in V_{1,2,3,4}$, then~$z$ is adjacent to~$y$ by Claim~\ref{clm:C4-V13-V1234-comp}.
If~$x$ is non-adjacent to~$z$, then $G[v_1,y,v_2,x,z]$ is a~$\overline{P_2+P_3}$, a contradiction.
We conclude that~$V_1$ is complete to~$V_{1,2,3,4}$.
The claim follows by symmetry.
\end{inproof}

\begin{figure}[ht]
\centering
\includegraphics[scale=0.8,page=4]{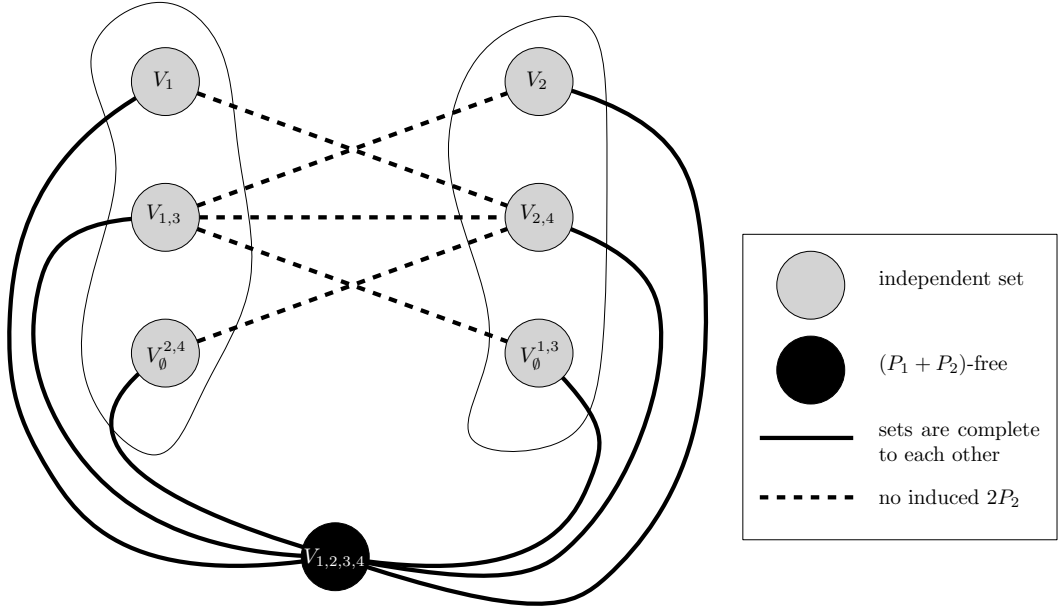}
\caption{The configuration in the proof of Lemma~\ref{l-c5free} after deleting the at most six vertices in $V_{1,2} \cup V_{2,3} \cup V_{3,4} \cup V_{1,4} \cup V(C)$ along with at most one vertex in each of~$V_{1,3}$ and~$V_{2,4}$.
The sets~$V_\emptyset^{2,4} \cup V_1 \cup V_{1,3}$ and~$V_\emptyset^{1,3} \cup V_2 \cup V_{2,4}$ are independent.}\label{fig:case2}
\end{figure}

\medskip
\noindent
We now proceed as follows.
By Claim~\ref{clm:C4-V123-empty}, the set $V_{1,2,3} \cup V_{2,3,4} \cup V_{1,3,4} \cup V_{1,2,4}$ is empty.
By Claims~\ref{clm:C4-V1V12-or-V3V23-empty} and~\ref{clm:C4-V12V34-small}, there are at most two vertices in $V_{1,2} \cup V_{2,3} \cup V_{3,4} \cup V_{1,4}$, so after doing at most two vertex deletions, we may assume these sets are empty (note that the resulting graph may no longer be an atom).
Applying four further vertex deletions, we can remove the cycle~$C$ from~$G$.
By Claim~\ref{clm:C4-V_1-at-most-1-V13-nbr}, at most one vertex of~$V_{1,3}$ (resp.~$V_{2,4}$) has a neighbour in~$V_1$ (resp.~$V_2$).
Therefore, applying at most two further vertex deletions, we may assume that~$V_{1,3}$ is anti-complete to~$V_1$ and~$V_{2,4}$ is anti-complete to~$V_2$.
By Claim~\ref{clm:C4-V1V12-or-V3V23-empty}, we may assume without loss of generality that~$V_3$ and~$V_4$ are empty (see \figurename~\ref{fig:case2} for an illustration of the resulting graph).

The remaining vertices of~$G$ all lie in $V_\emptyset \cup V_1 \cup V_2 \cup V_{1,3} \cup V_{2,4} \cup V_{1,2,3,4}$ and by Fact~\ref{fact:del-vert}, it suffices to show that this modified graph has bounded clique-width.
By Claims~\ref{clm:C4-V13-V1234-comp}, \ref{clm:C4-V0-non-empty-nice-nbhd} and~\ref{clm:C4-V_1-at-most-1-V13-nbr}, $V_{1,2,3,4}$ is complete to $V_\emptyset \cup V_1 \cup V_2 \cup V_{1,3} \cup V_{2,4}$, and so applying a bipartite complementation between these two sets disconnects $G[V_{1,2,3,4}]$ from the rest of the graph.
By Claim~\ref{clm:C4-V1234-bdd-cw}, $G[V_{1,2,3,4}]$ has bounded clique-width, so by Fact~\ref{fact:bip}, we may assume~$V_{1,2,3,4}$ is empty.

By Claim~\ref{clm:C4-V0-non-empty-nice-nbhd}, we can partition~$V_\emptyset$ into the set~$V_\emptyset^{1,3}$ of vertices that have neighbours in~$V_{1,3}$ and the set~$V_\emptyset^{2,4}$ of vertices that have neighbours in~$V_{2,4}$.
Now Claims~\ref{clm:C4-V0V1V2V12-indep} and~\ref{clm:C4-V12-V13-indep} imply that $V_\emptyset^{2,4} \cup V_1 \cup V_{1,3}$ and $V_\emptyset^{1,3} \cup V_2 \cup V_{2,4}$ are independent sets (recall that~$V_{1,3}$ is now anti-complete to~$V_1$ and~$V_{2,4}$ is now anti-complete to~$V_2$), and so $G[V_\emptyset \cup V_1 \cup V_2 \cup V_{1,3} \cup V_{2,4}]$ is a $2P_2$-free bipartite graph, so it is a bipartite chain graph by Lemma~\ref{l-e1} and thus has bounded clique-width by Lemma~\ref{l-e2}.
By Fact~\ref{fact:del-vert}, this completes the proof.
\end{proof}

\noindent
The class of split graphs is the class of $(C_4,C_5,2P_2)$-free graphs.
Since split graphs therefore form a subclass of the class of $(2P_2,\overline{P_2+P_3})$-free graphs, and split graphs have unbounded clique-width~\cite{MR99}, it follows that $(2P_2,\overline{P_2+P_3})$-free graphs also have unbounded clique-width.
Recall that by Lemma~\ref{l-e3}, split atoms are complete graphs and therefore have clique-width at most~$2$.
The $(2P_2,\overline{P_2+P_3})$-free atoms that are not split must therefore contain an induced~$C_4$ or~$C_5$.
Applying Lemmas~\ref{l-tripletwithc5} and \ref{l-c5free}, we obtain Theorem~\ref{thm:triplet}, which we restate below.

\medskip
\noindent
{\bfseries Theorem~\ref{thm:triplet} (restated).}
{\itshape The class of $(2P_2,\overline{P_2+P_3})$-free atoms has bounded clique-width (whereas the class of $(2P_2,\overline{P_2+P_3})$-free graphs has unbounded clique-width).}

\section{Clique-Width Summary for General Bigenic Classes}\label{s-soa}
In this section we present the state-of-art for boundedness of clique-width of general bigenic classes.
We will use these results in the next section, where we prove our results on unboundedness of clique-width of atoms in bigenic classes.

Let $H_1,H_2,H_3,H_4$ be four graphs.
Then the classes of $(H_1,H_2)$-free graphs and $(H_3,H_4)$-free graphs are said to be {\it equivalent} if the unordered pair $H_3,H_4$ can be obtained from the unordered pair $H_1,H_2$ by some combination of the operations: (i) complementing both graphs in the pair, and (ii) if one of the graphs in the pair is~$3P_1$, replacing it with $P_1+\nobreak P_3$ or vice versa.
If two classes are equivalent, then one of them has bounded clique-width if and only if the other one does~\cite{DP16}.

Recall that the subdivided claw~$S_{h,i,j}$, for $1\leq h\leq i\leq j$ is the tree with one vertex~$x$ of degree~$3$ and exactly three leaves, which are of distance~$h$,~$i$ and~$j$ from~$x$, respectively.
Also recall that~${\cal S}$ denotes the class of graphs every connected component of which is either a subdivided claw or a path.
Moreover, recall that the $\paw$ is the graph~$\overline{P_1+P_3}$, the $\diamondgraph$ is the graph~$\overline{2P_1+P_2}$ and the $\gem$ is the graph~$\overline{P_1+P_4}$.

\begin{theorem}[\cite{DJP19}]\label{thm:classification2}
Let~${\cal G}$ be a class of graphs defined by two forbidden induced subgraphs.
Then:
\begin{enumerate}
\item \label{thm:classification2:known-bdd} ${\cal G}$ has bounded clique-width if it is equivalent to a class of $(H_1,H_2)$-free graphs such that one of the following holds:
\begin{enumerate}[(i)]
\renewcommand{\theenumii}{(\roman{enumii})}
\renewcommand{\labelenumii}{(\roman{enumii})}
\item \label{thm:classification2:bdd:P4} $H_1$ or $H_2 \ssi P_4$
\item \label{thm:classification2:bdd:ramsey} $H_1=K_s$ and $H_2=tP_1$ for some $s,t\geq 1$
\item \label{thm:classification2:bdd:P_1+P_3} $H_1 \ssi \paw$ and $H_2 \ssi K_{1,3}+\nobreak 3P_1,\; K_{1,3}+\nobreak P_2,\;\allowbreak P_1+\nobreak P_2+\nobreak P_3,\;\allowbreak P_1+\nobreak P_5,\;\allowbreak P_1+\nobreak S_{1,1,2},\;\allowbreak P_2+\nobreak P_4,\;\allowbreak P_6,\; \allowbreak S_{1,1,3}$ or~$S_{1,2,2}$
\item \label{thm:classification2:bdd:2P_1+P_2} $H_1 \ssi \diamondgraph$ and $H_2\ssi P_1+\nobreak 2P_2,\; 3P_1+\nobreak P_2$ or~$P_2+\nobreak P_3$
\item \label{thm:classification2:bdd:P_1+P_4} $H_1 \ssi \gem$ and $H_2 \ssi P_1+\nobreak P_4$ or~$P_5$
\item \label{thm:classification2:bdd:K_13} $H_1\ssi K_3+\nobreak P_1$ and $H_2 \ssi K_{1,3}$, or
\item \label{thm:classification2:bdd:2P1_P3} $H_1\ssi \overline{2P_1+\nobreak P_3}$ and $H_2\ssi 2P_1+\nobreak P_3$.\\
\end{enumerate}
\item \label{thm:classification2:known-unbdd} ${\cal G}$ has unbounded clique-width if it is equivalent to a class of $(H_1,H_2)$-free graphs such that one of the following holds:
\begin{enumerate}[(i)]
\renewcommand{\theenumii}{(\roman{enumii})}
\renewcommand{\labelenumii}{(\roman{enumii})}
\item \label{thm:classification2:unbdd:not-in-S} $H_1\not\in {\cal S}$ and $H_2 \not \in {\cal S}$
\item \label{thm:classification2:unbdd:not-in-co-S} $H_1\notin \overline{{\cal S}}$ and $H_2 \not \in \overline{{\cal S}}$
\item \label{thm:classification2:unbdd:K_13or2P_2} $H_1 \si K_3+\nobreak P_1$ or~$C_4$ and $H_2 \si 4P_1$ or~$2P_2$
\item \label{thm:classification2:unbdd:2P_1+P_2} $H_1 \si \diamondgraph$ and $H_2 \si K_{1,3},\; 5P_1,\; P_2+\nobreak P_4$ or~$P_6$
\item \label{thm:classification2:unbdd:3P_1} $H_1 \si K_3$ and $H_2 \si 2P_1+\nobreak 2P_2,\; 2P_1+\nobreak P_4,\; 4P_1+\nobreak P_2,\; 3P_2$ or~$2P_3$
\item \label{thm:classification2:unbdd:4P_1} $H_1 \si K_4$ and $H_2 \si P_1 +\nobreak P_4$ or~$3P_1+\nobreak P_2$, or
\item \label{thm:classification2:unbdd:gem} $H_1 \si \gem$ and $H_2 \si P_1+\nobreak 2P_2$.
\end{enumerate}
\end{enumerate}
\end{theorem}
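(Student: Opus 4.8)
The plan is to treat Theorem~\ref{thm:classification2} as a compendium statement: since each pair $(H_1,H_2)$ is handled separately, the proof amounts to (a) reducing the pairs to a minimal list of representatives and (b) settling each representative on one side of the dichotomy. For step (a) one uses the equivalence relation defined just before the statement — complementing both graphs, and swapping $3P_1\leftrightarrow P_1+P_3$ — together with Fact~\ref{fact:comp} (subgraph complementation preserves boundedness), which is exactly what makes the relation respect clique-width; this is the bookkeeping backbone and it is routine, if tedious. After that, for every representative one must decide whether the class has bounded clique-width (Part~1) or unbounded clique-width (Part~2), and the work splits cleanly into an ``upper'' half and a ``lower'' half.

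For the bounded cases in Part~1, the uniform method is precisely the one already used in this paper for Lemmas~\ref{l-tripletwithc5} and~\ref{l-c5free}: given an arbitrary $G$ in the class, either it is already ``easy'' (a subgraph of $P_4$ gives clique-width at most~$3$; $(K_s,tP_1)$-free graphs are finite by Ramsey; split, complete, complete multipartite and bipartite chain graphs are covered by Lemmas~\ref{l-e1}--\ref{l-e3} and Theorem~\ref{t-atoms}), or $G$ contains a small witness subgraph (a $P_4$, $C_4$, $C_5$, triangle, diamond, or gem). In the latter case one partitions $V(G)$ according to neighbourhoods in the witness, proves a bounded number of structural claims forcing most parts to be small, independent, complete to one another, or to induce restricted graphs, and then applies a bounded number of vertex deletions, subgraph complementations and bipartite complementations to reduce $G$ to a disjoint union of graphs of small clique-width; Facts~\ref{fact:del-vert}--\ref{fact:bip} finish the case. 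The entry $H_1\ssi 2P_2$, $H_2\ssi \overline{P_2+P_3}$ — i.e.\ Theorem~\ref{thm:triplet} of this paper — does \emph{not} appear in Theorem~\ref{thm:classification2}, which is why it is new here; the others are established in the literature collected by~\cite{DJP19}.

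For the unbounded cases in Part~2, the strategy is to exhibit, for each representative pair, an explicit infinite family of graphs lying in the class with $\cw$ growing without bound. The anchor families are the $(k\times k)$-walls together with their subdivisions and line-graph variants — these lie in ${\cal S}$ and $\overline{{\cal S}}$ and drive cases~\ref{thm:classification2:unbdd:not-in-S} and~\ref{thm:classification2:unbdd:not-in-co-S} — and the split graphs (and their complements, and suitably restricted bipartite graphs), which have unbounded clique-width while being $(C_4,C_5,2P_2)$-free and hence avoid many small induced subgraphs. For each remaining pair one selects a member of such a family, possibly after one bipartite complementation (Fact~\ref{fact:bip}) or after passing to a subfamily that also avoids the second forbidden graph, and verifies that it is simultaneously $H_1$-free and $H_2$-free.

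The main obstacle is squarely on the unbounded side: whereas the bounded half is handled by one uniform ``decompose and complement'' procedure, each unbounded case requires a construction tailored so that a genuinely high-clique-width family embeds while both forbidden subgraphs are killed at once. The delicate cases are those hugging the dichotomy boundary — $\diamondgraph$ against $P_2+P_4$ or $P_1+P_6$, $K_3$ against $2P_3$, $\gem$ against $P_1+2P_2$ — where the gadgets must be designed carefully and the bulk of the effort is the (non-trivial) verification that the constructed graphs really are $H_i$-free. I expect that organising these constructions, rather than the clique-width lower bound itself (which can usually be inherited from walls or split graphs via Fact~\ref{fact:del-vert} and Fact~\ref{fact:bip}), is where the proof is hardest.
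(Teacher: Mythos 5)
You should first note what the paper actually does with this statement: Theorem~\ref{thm:classification2} is not proved in the paper at all. It is quoted, with the attribution \cite{DJP19}, as the known state-of-the-art summary, and its proof is distributed over a large body of earlier literature (one separate paper or more per line of the theorem). So there is no ``paper's own proof'' to match your sketch against; the honest comparison is between your outline and the aggregated proofs in the cited works.

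Measured against that, your proposal identifies the right toolkit (the equivalence relation via complementation and the $3P_1\leftrightarrow P_1+P_3$ swap, Facts~\ref{fact:del-vert}--\ref{fact:bip}, decomposition around a small witness for the bounded side, wall/split-graph-based constructions plus Lemma~\ref{lem:generalunbounded}-style lower bounds for the unbounded side), but it is a roadmap rather than a proof: every single case, on both sides, ends with ``one verifies'' and none of the verifications is carried out. This is a genuine gap, not a presentational one, because the substance of the theorem lives entirely in those case analyses --- e.g.\ the bounded cases $H_1\ssi\diamondgraph$ with $H_2\ssi P_2+P_3$ or $H_1\ssi\paw$ with $H_2\ssi S_{1,2,2}$ are long structural arguments in the literature and are not instances of one uniform ``decompose and complement'' procedure, and the unbounded cases such as $\diamondgraph$ versus $P_2+P_4$ or $K_4$ versus $3P_1+P_2$ each need a bespoke construction together with a non-trivial freeness check. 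One smaller inaccuracy: your remark that the pair $(2P_2,\overline{P_2+P_3})$ ``does not appear in Theorem~\ref{thm:classification2}, which is why it is new here'' misreads the situation --- that class is covered by item~2\ref{thm:classification2:unbdd:K_13or2P_2} of the theorem (since $C_4\ssi\overline{P_2+P_3}$), so the full class is known to have unbounded clique-width; what is new in this paper (Theorem~\ref{thm:triplet}) is the boundedness for its \emph{atoms}, which is a different statement and not part of Theorem~\ref{thm:classification2} at all.
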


\noindent
As mentioned in Section~\ref{s-intro}, Theorem~\ref{thm:classification2} does not cover five (non-equivalent) cases (see also Open Problem~\ref{o-atoms}, where these open cases are marked with a~$^*$).

\begin{oproblem}\label{oprob:twographs}
Does the class of $(H_1,H_2)$-free graphs have bounded or unbounded clique-width when:
\begin{enumerate}[(i)]
\renewcommand{\theenumi}{(\roman{enumi})}
\renewcommand{\labelenumi}{(\roman{enumi})}
\item\label{oprob:twographs:3P_1}$H_1=K_3$ and $H_2 \in \{P_1+\nobreak S_{1,1,3},\allowbreak S_{1,2,3}\}$
\item\label{oprob:twographs:2P_1+P_2}$H_1=\diamondgraph$ and $H_2 \in \{P_1+\nobreak P_2+\nobreak P_3,\allowbreak P_1+\nobreak P_5\}$
\item\label{oprob:twographs:P_1+P_4}$H_1=\gem$ and $H_2=P_2+\nobreak P_3$.
\end{enumerate}
\end{oproblem}

\section{Atoms of Unbounded Clique-Width}\label{s-unbounded}
In this section we show our results for pairs $(H_1,H_2)$, for which the class of $(H_1,H_2)$-free atoms has unbounded clique-width.
We start by giving a number of known and new lemmas, some of which have wider applicability. 

\begin{lemma}[\cite{DP16}]\label{lem:generalunbounded}
For $m\geq 0$ and $n >\nobreak m+\nobreak 1$ the clique-width of a graph~$G$ is at least $\lfloor\frac{n-1}{m+1}\rfloor+\nobreak 1$ if~$V(G)$ has a partition into sets $V_{i,j} \; (i,j \in \{0,\ldots,n\})$ with the following properties:
\begin{enumerate}
\item \label{prop:v_i0-small}$|V_{i,0}| \leq 1$ for all $i\in\{1,\ldots,n\}$
\item \label{prop:v_0j-small}$|V_{0,j}| \leq 1$ for~all~$j\in\{1,\ldots,n\}$
\item \label{prop:v_ij-nonempty}$|V_{i,j}|\geq 1$ for all $i,j\in\{1,\ldots,n\}$
\item \label{prop:row-connected}$G[\cup^n_{j=0}V_{i,j}]$ is connected for all $i\in\{1,\ldots,n\}$
\item \label{prop:column-connected}$G[\cup^n_{i=0}V_{i,j}]$ is connected for all $j\in\{1,\ldots,n\}$
\item \label{prop:v_k0-nbrs}for $i,j,k\in\{1,\ldots,n\}$, if a vertex of~$V_{k,0}$ is adjacent to a vertex of~$V_{i,j}$ then $i \leq k$
\item \label{prop:v_0k-nbrs}for $i,j,k\in\{1,\ldots,n\}$, if a vertex of~$V_{0,k}$ is adjacent to a vertex of~$V_{i,j}$ then $j \leq k$, and
\item \label{prop:v_ij-nbrs}for $i,j,k,\ell\in\{1,\ldots,n\}$, if a vertex of~$V_{i,j}$ is adjacent to a vertex of~$V_{k,\ell}$ then $|k-i|\leq m$ and $|\ell-j| \leq m$.
\end{enumerate}
\end{lemma}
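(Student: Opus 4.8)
The plan is to argue by contradiction. Suppose $\cw(G)=k$ with $k\le\frac{n-1}{m+1}$ (equivalently $k(m+1)\le n-1$), fix an optimal $k$-expression, and let $\mathcal T$ be its syntax tree: a rooted binary tree whose leaves are the vertices of $G$. For a node $x$ write $V_x$ for the set of vertices on leaves below $x$ and $\overline{V_x}=V(G)\setminus V_x$. The one structural fact about clique-width expressions I use is that at any node $x$ the set $V_x$ splits into at most $k$ label classes, and two vertices of $V_x$ in the same class have the same neighbourhood in $\overline{V_x}$ (all edges between $V_x$ and $\overline{V_x}$ are created above $x$ and act uniformly on label classes). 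So $V_x$ realises at most $k$ distinct ``traces'' $N(v)\cap\overline{V_x}$, and it suffices to produce a node $x$ with more than $k$ such traces.

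\emph{Locating the node.} First I would choose, for each core cell $V_{i,j}$ with $i,j\in\{1,\ldots,n\}$ (nonempty by Property~\ref{prop:v_ij-nonempty}), one representative vertex, obtaining a set $W$ of $n^2$ vertices. Since $\mathcal T$ is binary and each leaf carries a single vertex, a routine averaging argument down the tree gives a node $x$ whose $V_x$ contains a constant fraction of both $W$ and of $V(G)\setminus W$. A short double count over the $n$ rows and $n$ columns (a row or column all of whose core representatives lie on one side of the cut contributes its full $n$ representatives to that side, a ``mixed'' one fewer) then forces linearly many mixed rows or linearly many mixed columns; as transposing the grid $i\leftrightarrow j$ preserves all eight hypotheses, I may assume without loss of generality that there is a large index set $J$ of mixed columns.

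\emph{From split columns to distinct traces.} For each $j\in J$, Property~\ref{prop:column-connected} says $G[\bigcup_i V_{i,j}]$ is connected, so, being split by the cut $(V_x,\overline{V_x})$, it has an edge with one end $p_j\in V_x$ and the other $q_j\in\overline{V_x}$, both in column $j$; I take $p_j$ as the witness for column $j$. To separate traces I would thin $J$ to columns pairwise more than $m$ apart (keeping at least $\lceil |J|/(m+1)\rceil$ of them): Property~\ref{prop:v_ij-nbrs} prevents a core vertex in column $j$ from having any core neighbour outside columns $j-m,\ldots,j+m$, while Properties~\ref{prop:v_k0-nbrs}--\ref{prop:v_0k-nbrs}, together with the size bounds~\ref{prop:v_i0-small}--\ref{prop:v_0j-small}, pin down its neighbours among the tiny axis sets $V_{i,0},V_{0,j}$ in a monotone, staircase-like fashion. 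Combining these, $q_j$ is a neighbour of $p_j$ but of no other surviving witness, so the surviving $p_j$ have pairwise distinct traces in $\overline{V_x}$, giving more than $k$ traces at $x$ and the desired contradiction.

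\emph{Main obstacle.} The delicate point is calibrating all this so the final count is exactly $\lfloor\frac{n-1}{m+1}\rfloor+1$ rather than a constant-factor weaker bound: a crude $\tfrac13$-balanced separator wastes a factor of three, and a naive ``thin by $m+1$'' step interacts badly with the long-range edges into the axis sets $V_{i,0}$ and $V_{0,j}$. I expect the real work to lie in choosing the node $x$ and the witnesses with just enough slack---leaning on the monotone constraints~\ref{prop:v_k0-nbrs}--\ref{prop:v_0k-nbrs} and on the fact that the axis sets have size at most one (Properties~\ref{prop:v_i0-small}--\ref{prop:v_0j-small})---so as to convert ``many mixed columns (or rows)'' into precisely $\lfloor\frac{n-1}{m+1}\rfloor+1$ pairwise distinct traces.
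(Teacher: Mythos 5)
This lemma is not proved in the paper at all---it is imported verbatim from~\cite{DP16}---so the comparison below is with the Golumbic--Rotics-style argument that the cited source uses. Your overall strategy (label-counting across a cut $(V_x,\overline{V_x})$ in the syntax tree, with crossing edges supplied by the connectivity of columns and separated by the locality Property~\ref{prop:v_ij-nbrs}) is the right kind of argument, but as written it has two genuine gaps. First, the node-selection step is the wrong one for the stated bound: a balanced/averaging choice of $x$ only guarantees a constant fraction of mixed rows or columns, so after thinning by $m+1$ you get a bound of the form $c\,n/(m+1)$ with $c<1$, not $\lfloor\frac{n-1}{m+1}\rfloor+1$; you acknowledge this calibration problem but leave it unresolved, and it is not a matter of fine-tuning constants. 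The missing idea is to take $x$ to be a \emph{lowest} node of the syntax tree whose vertex set contains an entire row $\bigcup_{j}V_{a,j}$ (or, symmetrically, an entire column); then $x$ is a union node, by Property~\ref{prop:v_ij-nonempty} every one of the $n$ columns already meets $V_x$, and every column is split in a usable way (either across $(V_x,\overline{V_x})$ or across the two children of $x$, where the relevant edges are likewise not yet created), so taking every $(m+1)$-st column from $1$ to $n$ yields exactly $\lfloor\frac{n-1}{m+1}\rfloor+1$ vertices with pairwise distinct labels, with no constant-factor loss. (The weaker bound your plan would give does suffice for the unboundedness applications in this paper, but it does not prove the lemma as stated.)

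Second, the key distinctness claim ``$q_j$ is a neighbour of $p_j$ but of no other surviving witness'' is false in general. The crossing edge in column $j$ produced by Property~\ref{prop:column-connected} may have an endpoint equal to the axis vertex of $V_{0,j}$, and Property~\ref{prop:v_0k-nbrs} constrains that vertex only monotonically: it may be adjacent to core vertices in \emph{every} column $j'\le j$, no matter how far away, so $q_j$ can perfectly well be adjacent to other surviving witnesses $p_{j'}$; moreover, adjacencies between two axis vertices $V_{0,j}$ and $V_{0,j'}$ are not constrained by any of the eight properties, so the ``staircase'' reasoning alone cannot rescue the one-sided claim. What does work is the symmetric argument: if $p_j$ and $p_{j'}$ (with $|j-j'|>m$, say $j<j'$) had the same label, then $q_j\in N(p_{j'})$ \emph{and} $q_{j'}\in N(p_j)$, and a short case analysis over which of $p_j,q_j,p_{j'},q_{j'}$ are core vertices versus axis vertices---using Properties~\ref{prop:v_k0-nbrs}--\ref{prop:v_ij-nbrs} together with the size bounds $|V_{0,j}|\le 1$, $|V_{0,j'}|\le 1$ (so $p_j$ and $q_j$ cannot both be the axis vertex of the same column)---produces a contradiction in every case. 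Your proposal gestures at the axis sets but never carries out this two-sided analysis, and since it is exactly the point where the construction's long-range axis edges threaten the counting, it must be made explicit for the proof to stand.
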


The next lemma concerns walls.
We do not formally define the wall, but instead we refer to \figurename~\ref{fig:walls}, in which three examples of walls of different heights are depicted; see, for example,~\cite{Ch15} for a formal definition.

\begin{figure}[h]
\begin{center}
\begin{minipage}{0.2\textwidth}
\centering
\begin{tikzpicture}[scale=0.45, every node/.style={scale=0.4}]
\GraphInit[vstyle=Simple]
\SetVertexSimple[MinSize=6pt]
\Vertex[x=1,y=0]{v10}
\Vertex[x=2,y=0]{v20}
\Vertex[x=3,y=0]{v30}
\Vertex[x=4,y=0]{v40}
\Vertex[x=5,y=0]{v50}

\Vertex[x=0,y=1]{v01}
\Vertex[x=1,y=1]{v11}
\Vertex[x=2,y=1]{v21}
\Vertex[x=3,y=1]{v31}
\Vertex[x=4,y=1]{v41}
\Vertex[x=5,y=1]{v51}

\Vertex[x=0,y=2]{v02}
\Vertex[x=1,y=2]{v12}
\Vertex[x=2,y=2]{v22}
\Vertex[x=3,y=2]{v32}
\Vertex[x=4,y=2]{v42}

\Edges(    v10,v20,v30,v40,v50)
\Edges(v01,v11,v21,v31,v41,v51)
\Edges(v02,v12,v22,v32,v42)

\Edge(v01)(v02)

\Edge(v10)(v11)

\Edge(v21)(v22)

\Edge(v30)(v31)

\Edge(v41)(v42)

\Edge(v50)(v51)

\end{tikzpicture}
\end{minipage}
\begin{minipage}{0.3\textwidth}
\centering
\begin{tikzpicture}[scale=0.45, every node/.style={scale=0.4},rotate=180]
\GraphInit[vstyle=Simple]
\SetVertexSimple[MinSize=6pt]
\Vertex[x=1,y=0]{v10}
\Vertex[x=2,y=0]{v20}
\Vertex[x=3,y=0]{v30}
\Vertex[x=4,y=0]{v40}
\Vertex[x=5,y=0]{v50}
\Vertex[x=6,y=0]{v60}
\Vertex[x=7,y=0]{v70}

\Vertex[x=0,y=1]{v01}
\Vertex[x=1,y=1]{v11}
\Vertex[x=2,y=1]{v21}
\Vertex[x=3,y=1]{v31}
\Vertex[x=4,y=1]{v41}
\Vertex[x=5,y=1]{v51}
\Vertex[x=6,y=1]{v61}
\Vertex[x=7,y=1]{v71}

\Vertex[x=0,y=2]{v02}
\Vertex[x=1,y=2]{v12}
\Vertex[x=2,y=2]{v22}
\Vertex[x=3,y=2]{v32}
\Vertex[x=4,y=2]{v42}
\Vertex[x=5,y=2]{v52}
\Vertex[x=6,y=2]{v62}
\Vertex[x=7,y=2]{v72}

\Vertex[x=1,y=3]{v13}
\Vertex[x=2,y=3]{v23}
\Vertex[x=3,y=3]{v33}
\Vertex[x=4,y=3]{v43}
\Vertex[x=5,y=3]{v53}
\Vertex[x=6,y=3]{v63}
\Vertex[x=7,y=3]{v73}

\Edges(    v10,v20,v30,v40,v50,v60,v70)
\Edges(v01,v11,v21,v31,v41,v51,v61,v71)
\Edges(v02,v12,v22,v32,v42,v52,v62,v72)
\Edges(    v13,v23,v33,v43,v53,v63,v73)

\Edge(v01)(v02)

\Edge(v10)(v11)
\Edge(v12)(v13)

\Edge(v21)(v22)

\Edge(v30)(v31)
\Edge(v32)(v33)

\Edge(v41)(v42)

\Edge(v50)(v51)
\Edge(v52)(v53)

\Edge(v61)(v62)

\Edge(v70)(v71)
\Edge(v72)(v73)
\end{tikzpicture}
\end{minipage}
\begin{minipage}{0.35\textwidth}
\centering
\begin{tikzpicture}[scale=0.45, every node/.style={scale=0.4}]
\GraphInit[vstyle=Simple]
\SetVertexSimple[MinSize=6pt]
\Vertex[x=1,y=0]{v10}
\Vertex[x=2,y=0]{v20}
\Vertex[x=3,y=0]{v30}
\Vertex[x=4,y=0]{v40}
\Vertex[x=5,y=0]{v50}
\Vertex[x=6,y=0]{v60}
\Vertex[x=7,y=0]{v70}
\Vertex[x=8,y=0]{v80}
\Vertex[x=9,y=0]{v90}

\Vertex[x=0,y=1]{v01}
\Vertex[x=1,y=1]{v11}
\Vertex[x=2,y=1]{v21}
\Vertex[x=3,y=1]{v31}
\Vertex[x=4,y=1]{v41}
\Vertex[x=5,y=1]{v51}
\Vertex[x=6,y=1]{v61}
\Vertex[x=7,y=1]{v71}
\Vertex[x=8,y=1]{v81}
\Vertex[x=9,y=1]{v91}

\Vertex[x=0,y=2]{v02}
\Vertex[x=1,y=2]{v12}
\Vertex[x=2,y=2]{v22}
\Vertex[x=3,y=2]{v32}
\Vertex[x=4,y=2]{v42}
\Vertex[x=5,y=2]{v52}
\Vertex[x=6,y=2]{v62}
\Vertex[x=7,y=2]{v72}
\Vertex[x=8,y=2]{v82}
\Vertex[x=9,y=2]{v92}

\Vertex[x=0,y=3]{v03}
\Vertex[x=1,y=3]{v13}
\Vertex[x=2,y=3]{v23}
\Vertex[x=3,y=3]{v33}
\Vertex[x=4,y=3]{v43}
\Vertex[x=5,y=3]{v53}
\Vertex[x=6,y=3]{v63}
\Vertex[x=7,y=3]{v73}
\Vertex[x=8,y=3]{v83}
\Vertex[x=9,y=3]{v93}

\Vertex[x=0,y=4]{v04}
\Vertex[x=1,y=4]{v14}
\Vertex[x=2,y=4]{v24}
\Vertex[x=3,y=4]{v34}
\Vertex[x=4,y=4]{v44}
\Vertex[x=5,y=4]{v54}
\Vertex[x=6,y=4]{v64}
\Vertex[x=7,y=4]{v74}
\Vertex[x=8,y=4]{v84}

\Edges(    v10,v20,v30,v40,v50,v60,v70,v80,v90)
\Edges(v01,v11,v21,v31,v41,v51,v61,v71,v81,v91)
\Edges(v02,v12,v22,v32,v42,v52,v62,v72,v82,v92)
\Edges(v03,v13,v23,v33,v43,v53,v63,v73,v83,v93)
\Edges(v04,v14,v24,v34,v44,v54,v64,v74,v84)

\Edge(v01)(v02)
\Edge(v03)(v04)

\Edge(v10)(v11)
\Edge(v12)(v13)

\Edge(v21)(v22)
\Edge(v23)(v24)

\Edge(v30)(v31)
\Edge(v32)(v33)

\Edge(v41)(v42)
\Edge(v43)(v44)

\Edge(v50)(v51)
\Edge(v52)(v53)

\Edge(v61)(v62)
\Edge(v63)(v64)

\Edge(v70)(v71)
\Edge(v72)(v73)

\Edge(v81)(v82)
\Edge(v83)(v84)

\Edge(v90)(v91)
\Edge(v92)(v93)
\end{tikzpicture}
\end{minipage}
\caption{Walls of height $2$, $3$ and~$4$, respectively.}\label{fig:walls}
\end{center}
\end{figure}
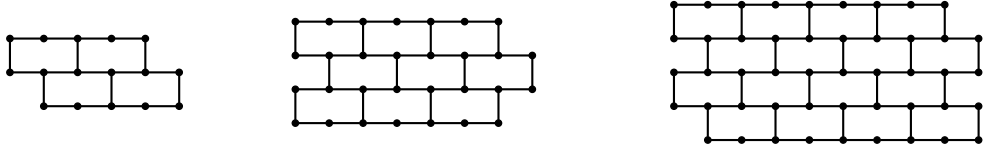

\begin{lemma}[\cite{LR06}]\label{lem:walls}
For any constant $k\geq 0$, the class of $k$-subdivided walls has unbounded clique-width.
\end{lemma}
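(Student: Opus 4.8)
The plan is to apply Lemma~\ref{lem:generalunbounded} directly to a $k$-subdivided wall, exhibiting the grid-like partition that lemma asks for with the ``grid size'' $n$ growing linearly in the height of the wall and the ``locality parameter'' $m$ an absolute constant (in fact $m=1$, independent of $k$). Fix $k\geq 0$ and let $W$ be the graph obtained from a wall of height $h$ by $k$-subdividing every edge. I would first describe the underlying wall in its natural rectangular ``brick'' layout: vertices $w_{a,b}$, horizontal edges $w_{a,b}w_{a,b+1}$ in every row, and vertical edges $w_{a,b}w_{a+1,b}$ exactly at the positions prescribed by the parity pattern of the wall, so that every vertex has degree at most~$3$. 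Then I would group the columns of the wall into consecutive pairs, column-group~$j$ being $\{2j-2,2j-1\}$; within such a pair each level is joined horizontally and each two consecutive levels are joined by whichever of the two vertical edges is present, so the part of the wall spanned by a column-group is connected. Now set, for the partition required by Lemma~\ref{lem:generalunbounded}: all boundary sets $V_{i,0}$ and $V_{0,j}$ empty; the original vertex $w_{i,b}$ goes into $V_{i,j}$ with $b\in\{2j-2,2j-1\}$; the $k$ subdivision vertices of a horizontal edge internal to a column-group go into that group's block; and the $k$ subdivision vertices of each ``crossing'' edge (a horizontal edge between two column-groups, or any vertical edge) are split into two halves, each half joining the block of the endpoint it lies closer to.

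\noindent The bulk of the work is then to verify the eight properties of Lemma~\ref{lem:generalunbounded} for this partition. Properties~\ref{prop:v_i0-small}, \ref{prop:v_0j-small}, \ref{prop:v_k0-nbrs} and~\ref{prop:v_0k-nbrs} are immediate since the boundary sets are empty. Property~\ref{prop:v_ij-nonempty} holds because each $V_{i,j}$ contains the original vertex $w_{i,2j-2}$ — for a wall drawn with trimmed corners one first restricts to a full interior sub-wall of side $\Theta(h)$, which only decreases clique-width since clique-width is monotone under taking induced subgraphs. Properties~\ref{prop:row-connected} and~\ref{prop:column-connected} hold because the union of a row (resp.\ of a column-group, which is connected in the wall) becomes, after subdivision, a connected subdivided subgraph with a few pendant paths attached, hence remains connected. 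Property~\ref{prop:v_ij-nbrs} holds with $m=1$: every edge of $W$ lies on a single edge of the underlying wall, and all subdivision vertices of that edge together with its two endpoints are placed either in one block or in two blocks differing by at most~$1$ in a single coordinate, because wall edges join grid-neighbours and pairing columns never pushes this distance past~$1$. With $n=\Theta(h)$ and $m=1$, Lemma~\ref{lem:generalunbounded} yields $\cw(W)\geq \lfloor (n-1)/2\rfloor+1=\Omega(h)$, which is unbounded as $h\to\infty$ for each fixed $k$; this proves the lemma.

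\noindent The one step that needs genuine care is the bookkeeping just sketched: one must choose how to distribute the subdivision vertices among the blocks so that connectivity of every row, connectivity of every column-group, and distance-$1$ locality of \emph{all} adjacencies hold simultaneously, and one must handle the irregular boundary of the standard wall (hence the passage to a full interior sub-wall and the appeal to induced-subgraph monotonicity of clique-width). Conceptually nothing here is hard; the real content is choosing a good partition, after which all eight checks are routine. For completeness I would also note a second, more black-box route: a $k$-subdivided wall is planar and therefore contains no $K_{3,3}$ as a subgraph; by a theorem of Gurski and Wanke, a class of bounded clique-width that excludes some fixed $K_{t,t}$ as a subgraph has bounded treewidth; but subdivided walls have unbounded treewidth, as subdivision preserves the large grid minor of a wall and treewidth is minor-monotone — a contradiction.
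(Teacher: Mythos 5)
Your proposal is correct, but it is not comparable to anything in the paper itself: the paper does not prove Lemma~\ref{lem:walls} at all, it simply imports it from~\cite{LR06}. What you give is a self-contained derivation, and both of your routes check out. The first route applies Lemma~\ref{lem:generalunbounded} (which the paper does state, from~\cite{DP16}) to a partition obtained by pairing up consecutive columns of the wall; the crucial points are exactly the ones you identify, and they are sound: pairing an even and an odd column guarantees that every two consecutive rows are joined by a vertical edge inside each column-group (giving property~\ref{prop:column-connected}), splitting each subdivision path of a vertical or crossing horizontal edge at its midpoint keeps every adjacency between blocks whose indices differ by at most one in each coordinate (so $m=1$ independently of $k$, giving property~\ref{prop:v_ij-nbrs}), and the halves attach as pendant paths to the row paths (giving property~\ref{prop:row-connected}); properties~\ref{prop:v_i0-small}, \ref{prop:v_0j-small}, \ref{prop:v_k0-nbrs} and~\ref{prop:v_0k-nbrs} are vacuous with empty boundary sets, and the ragged wall boundary is legitimately handled by passing to an interior sub-wall, since the $k$-subdivision of a sub-wall sits as an induced subgraph of the $k$-subdivided wall and clique-width is monotone under induced subgraphs. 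With $n=\Theta(h)$ and $m=1$ this gives $\cw=\Omega(h)$, as claimed. Your second, black-box route is also valid and arguably cleaner: subdivided walls are planar (hence $K_{3,3}$-subgraph-free) and have maximum degree~$3$, so by Gurski–Wanke bounded clique-width would force bounded treewidth, contradicting the unbounded grid minors (and hence unbounded treewidth) of subdivided walls. In short, there is no gap; the only remaining work in your first route is the routine but unavoidable bookkeeping you already flag (the exact indexing of rows/column-groups so that the block grid is square and all $V_{i,j}$ are nonempty, and the arbitrary but harmless choice of side for the middle subdivision vertex when $k$ is odd).
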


\begin{lemma}\label{lem:need-S-co-S}
Let $H_1,H_2$ be graphs.
If $H_1,H_2 \notin {\cal S}$ or $\overline{H_1},\overline{H_2} \notin {\cal S}$ then the class of $(H_1,H_2)$-free atoms has unbounded clique-width.
\end{lemma}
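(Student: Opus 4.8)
The statement is the atom-version of items~\ref{thm:classification2:unbdd:not-in-S} and~\ref{thm:classification2:unbdd:not-in-co-S} of Theorem~\ref{thm:classification2}, so the plan is to take the constructions witnessing unboundedness of clique-width of the \emph{whole} $(H_1,H_2)$-free class and upgrade them so that they output \emph{atoms}. The transfer mechanism is Fact~\ref{fact:del-vert}: if every member of an unbounded-clique-width family arises from a graph of a hereditary class $\mathcal{X}$ by adding at most a fixed number of vertices, then $\mathcal{X}$ has unbounded clique-width. Hence it is enough, for each relevant pair, to produce an infinite family of $(H_1,H_2)$-free \emph{atoms} of unbounded clique-width, and -- for the second alternative of the hypothesis -- the same with $\overline{H_1},\overline{H_2}$ in place of $H_1,H_2$.

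Next I would reduce to a bounded list of ``obstruction shapes''. If $H\notin\mathcal{S}$ then $H$ has an induced subgraph of one of three types: an induced cycle $C_\ell$ with $\ell\ge 3$; the star $K_{1,4}$ (when $H$ is a forest with a vertex of degree at least $4$, that vertex together with four of its neighbours induces $K_{1,4}$, since neighbourhoods in forests are independent); or the ``double broom'' $D_\ell$ -- two degree-$3$ vertices joined by an induced path of length $\ell\ge 1$, each carrying two extra pendant vertices -- obtained by taking, in a subcubic forest not lying in $\mathcal{S}$, two degree-$3$ vertices at minimum distance. As $X\ssi H$ and $X$-freeness together force $H$-freeness, it suffices to build, for every choice of an obstruction $X_1\ssi H_1$ and an obstruction $X_2\ssi H_2$, an infinite family of $(X_1,X_2)$-free atoms of unbounded clique-width, the cycle lengths now being parameters bounded by $|V(H_i)|$.

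For the base families I would use $k$-subdivided walls (Lemma~\ref{lem:walls}) -- or, better, $k$-subdivided toroidal hexagonal grids, which are bipartite, highly connected and therefore already atoms for $k\ge 2$, while still containing large induced $k$-subdivided walls and hence having unbounded clique-width -- whenever the two forbidden obstructions are cycles of length at least $5$ and/or copies of $K_{1,4}$: taking $k$ large pushes the girth above any prescribed bound, and subcubicity gives $K_{1,4}$-freeness for free. For the remaining pairs -- those involving a double broom or a small cycle -- I would instead start from the grid-partitioned families guaranteed by Lemma~\ref{lem:generalunbounded}, chosen so that their induced forests are ``thin'' (no induced $D_\ell$) and their short cycles are controlled, and, if a chosen family is not already an atom, make it one by attaching a bounded-size ``hub'' (routing the finitely many cut-set-enabling vertices, through private long subdivided paths, into one small gadget): long subdivisions keep the girth large and the maximum degree bounded, so no forbidden obstruction is revived, and since only boundedly many vertices are added, Fact~\ref{fact:del-vert} transfers unbounded clique-width to the family of atoms. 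For the alternative $\overline{H_1},\overline{H_2}\notin\mathcal{S}$ one runs the same programme inside the class of $(\overline{H_1},\overline{H_2})$-free graphs and then passes to complements, changing clique-width by at most a factor $2$.

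The main obstacle is the asymmetry emphasised in Section~\ref{s-intro}: complementation does \emph{not} preserve being an atom, so in the second alternative one cannot simply complement the atoms obtained for $(\overline{H_1},\overline{H_2})$ -- the atom-ification has to be redone directly inside the $(H_1,H_2)$-free class. The genuinely technical part is the simultaneous bookkeeping, carried out shape by shape: verifying that the enlargement kills every clique cut-set \emph{and} that neither $H_1$ nor $H_2$ has reappeared as an induced subgraph. The small-cycle shapes are the most delicate, because the cheapest atom-ifications -- adding one dominating vertex, or two non-adjacent dominating vertices to a triangle-free base -- immediately create induced $C_3$'s and $C_4$'s, so when $H_i$ already contains a short induced cycle one cannot use them and is forced onto the subdivided-path constructions, tracking carefully which short cycles they do and do not contain.
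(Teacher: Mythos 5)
There is a genuine gap, in fact two. First, the case analysis you set up is never completed where it matters. For pairs of obstructions consisting of long cycles and/or $K_{1,4}$ you point to $k$-subdivided walls, which is fine, but for the cases you flag as ``remaining'' (a double broom or a short cycle) your plan is to ``start from the grid-partitioned families guaranteed by Lemma~\ref{lem:generalunbounded}, chosen so that their induced forests are thin'' and then attach a ``hub''. Lemma~\ref{lem:generalunbounded} is only a lower-bound criterion for a graph you have already built, not a supply of families with prescribed induced-subgraph properties, and the hub gadget is never specified, let alone verified to destroy every clique cut-set without recreating $H_1$ or $H_2$; so those cases are simply not proved. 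Moreover, this split is unnecessary: since $H_1$ and $H_2$ have bounded size, taking $k\geq\max(|V(H_1)|,|V(H_2)|)$ makes the $k$-subdivided wall free of \emph{every} graph on at most $k$ vertices that is not in ${\cal S}$ -- its girth exceeds $k$, it is subcubic, and any two degree-$3$ vertices are at distance more than $k$, so every induced subgraph on at most $k$ vertices lies in ${\cal S}$. Hence short cycles and short double brooms are excluded by the same wall, which is itself an atom (it is bipartite, so a clique cut-set would have at most two vertices, and no such cut-set exists). This single construction is exactly what the paper uses, with no obstruction-shape bookkeeping at all.

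Second, and more seriously, the alternative $\overline{H_1},\overline{H_2}\notin{\cal S}$ is left unproved. You correctly note that complementation does not preserve being an atom and that ``the atom-ification has to be redone directly inside the $(H_1,H_2)$-free class'', but you never say how; and your proposed toolkit (long private subdivided paths into a small hub, keeping girth large and degree bounded) is tailored to sparse graphs, whereas the relevant $(H_1,H_2)$-free witnesses here are complements of subdivided walls, i.e.\ dense co-bipartite graphs, where that toolkit does not apply. The paper closes this half by proving directly that $\overline{H}$, the complement of the $k$-subdivided wall, has no clique cut-set: writing $\overline{H}$ as two cliques $A,B$, every vertex has at most three non-neighbours on the other side, and a counting argument (using $|A|,|B|>12$) shows any alleged clique cut-set would have to contain two non-adjacent vertices, a contradiction. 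Some argument of this kind (or another explicit family of $(H_1,H_2)$-free atoms of unbounded clique-width for this alternative) is indispensable, and its absence means the second half of the lemma does not follow from your proposal.
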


\begin{proof}
Let $k=\max(|V(H_1)|,|V(H_2)|)$.
Let~$H$ be a $k$-subdivided wall of height at least~$2$ (see \figurename~\ref{fig:walls}) and note that the clique-width of such graphs is unbounded by Lemma~\ref{lem:walls}.
By Fact~\ref{fact:comp}, it follows that the clique-width of graphs of the form~$\overline{H}$ is also unbounded.

We claim that if $H_1,H_2 \notin {\cal S}$ then~$H$ is $(H_1,H_2)$-free.
Let $i \in \{1,2\}$.
It is easy to verify that if~$H_i$ contains a cycle, then~$H$ is $H_i$-free (due to the choice of~$k$).
Similarly, if~$H_i$ contains an induced tree with two vertices of degree at least~$3$ or a vertex of degree at least~$4$, then~$H$ is $H_i$-free.
Therefore, if~$H_i$ is an induced subgraph of~$H$, then~$H_i$ is a forest and every component of~$H_i$ must be a tree in which at most one vertex has degree~$3$ and all other vertices have degree at most~$2$.
In other words, if~$H_i$ is an induced subgraph of~$H$, then $H_i \in {\cal S}$.
We conclude that if $H_1,H_2 \notin {\cal S}$, then~$H$ is $(H_1,H_2)$-free.
This also implies that if $\overline{H_1},\overline{H_2} \notin {\cal S}$, then~$\overline{H}$ is $(H_1,H_2)$-free.

It remains to show that~$H$ and~$\overline{H}$ are atoms.
Indeed, $H$ is a bipartite graph, so every clique cut-set consists of at most two vertices; it is easy to verify that there is no vertex whose removal disconnects~$H$ and no edge such that removing both of its end-vertices disconnects~$H$.
Therefore~$H$ is indeed an atom.

Now, $\overline{H}$ is a co-bipartite graph, so it can be partitioned into two cliques~$A$ and~$B$.
Note that $|A|,|B|>12$ by construction.
Suppose, for contradiction, that~$X$ is a clique cut-set in~$\overline{H}$.
Let $Y=V(\overline{H}) \setminus X$ and note that~$\overline{H}[Y]$ is disconnected, so it contains two vertices~$a,b$ that are non-adjacent.
Since~$A$ is a clique and~$B$ is a clique, we may assume $a \in A$ and $b \in B$.
Now~$Y$ cannot contain vertices $a' \in A$, $b' \in B$ that are adjacent in~$\overline{H}$, as in that case~$\{a',b'\}$ would dominate~$\overline{H}$, contradicting the assumption that~$\overline{H}[Y]$ is disconnected.
In~$H$ every vertex has either two or three neighbours, so in~$\overline{H}$ every vertex has either two or three non-neighbours.
Since $a \in A \cap Y$, there can be at most three vertices in $B \cap Y$ and similarly, there can be at most three vertices in $A \cap Y$.
Since every vertex in~$B \cap Y$ has at most three non-neighbours in~$A$, it follows that at most nine vertices of~$A$ have non-neighbours in~$B \cap Y$.
Since $|A|>12 \geq 9 + |A \cap Y|$, there must be a vertex in $a' \in A \setminus Y = A \cap X$ that has no non-neighbours in~$B \cap Y$ and therefore has a  non-neighbour $b' \in B \setminus Y = B \cup X$.
This contradicts the fact that~$X$ is a clique in~$\overline{H}$.
Therefore~$\overline{H}$ is indeed an atom.
\end{proof}

\begin{lemma}\label{lem:no-false-twin}
Let~${\cal H}$ be a set of graphs such that no graph in~${\cal H}$ contains a pair of vertices that are false twins.
Then the class of ${\cal H}$-free atoms has bounded clique-width if and only if the class of ${\cal H}$-free graphs does.
\end{lemma}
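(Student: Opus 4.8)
The plan is to prove the two implications separately. One direction is immediate: every ${\cal H}$-free atom is in particular an ${\cal H}$-free graph, so if the class of ${\cal H}$-free graphs has bounded clique-width then so does its subclass of ${\cal H}$-free atoms.

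For the converse, suppose the class of ${\cal H}$-free atoms has clique-width at most some constant~$c$, and let~$G$ be an arbitrary ${\cal H}$-free graph. The clique-width of a graph is the maximum of the clique-widths of its connected components, and every component of~$G$ is again ${\cal H}$-free, so I may assume that~$G$ is connected; since a graph on at most one vertex has clique-width at most~$1$, I may further assume $|V(G)|\ge 2$ (so~$G$ has no isolated vertex). I would then form the graph~$G'$ obtained from~$G$ by replacing each vertex~$v$ with an independent set~$I_v$ of size~$2$, where for distinct $v,w$ the sets~$I_v,I_w$ are joined by all four edges if $vw\in E(G)$ and by none otherwise; equivalently, each vertex of~$G'$ is blown up into a pair of false twins. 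I claim that (a)~$G'$ is ${\cal H}$-free, (b)~$G'$ is an atom, and (c)~$G$ is an induced subgraph of~$G'$. Granting these, $G'$ is an ${\cal H}$-free atom, so $\cw(G')\le c$; since clique-width is monotone under induced subgraphs, (c) gives $\cw(G)\le \cw(G')\le c$. As~$G$ was arbitrary, the class of ${\cal H}$-free graphs has bounded clique-width, as required.

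Claim~(c) is clear: choosing one vertex from each~$I_v$ yields an induced copy of~$G$ in~$G'$. Claim~(a) is where the hypothesis on~${\cal H}$ enters: any two vertices lying in the same set~$I_v$ are false twins of~$G'$, hence false twins in every induced subgraph of~$G'$ that contains both of them; since no graph in~${\cal H}$ has a pair of false twins, any induced copy of a graph $H\in{\cal H}$ in~$G'$ must meet each~$I_v$ in at most one vertex. Such a copy then lies within a transversal of the sets~$I_v$, on which adjacency in~$G'$ exactly mirrors adjacency in~$G$, so it would yield an induced copy of~$H$ in~$G$, contradicting the ${\cal H}$-freeness of~$G$.

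Claim~(b) is the heart of the argument and the step I expect to require the most care. First,~$G'$ is connected because~$G$ is. Suppose, for contradiction, that~$K$ is a clique cut-set of~$G'$. The two vertices of each~$I_v$ are non-adjacent, so~$K$ contains at most one of them, whence $I_v\setminus K\neq\emptyset$ for every~$v$. Fix a representative $r_v\in I_v\setminus K$ for each~$v$: the set $\{r_v : v\in V(G)\}$ induces a copy of the connected graph~$G$ inside $G'\setminus K$. Moreover, any other vertex $x\in I_v\setminus K$ is a false twin of~$r_v$, and since~$G$ is connected on at least two vertices,~$v$ has a neighbour~$w$, so $xr_w\in E(G')$ with $r_w\in G'\setminus K$; hence~$x$ lies in the same component of $G'\setminus K$ as the~$r_v$'s. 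Therefore $G'\setminus K$ is connected, contradicting that~$K$ is a cut-set, so~$G'$ has no clique cut-set and is an atom. Beyond making this connectivity argument precise, the only remaining work is the routine bookkeeping for the degenerate cases (disconnected~$G$ and $|V(G)|\le 1$) already handled above.
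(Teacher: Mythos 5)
Your proof is correct and takes essentially the same approach as the paper: both duplicate every vertex by a false twin to obtain an ${\cal H}$-free atom containing the original graph as an induced subgraph, with the same three verifications (${\cal H}$-freeness via the no-false-twins hypothesis, atomicity, and monotonicity of clique-width under induced subgraphs). The only cosmetic difference is that you establish connectivity of $G'\setminus K$ by fixing a representative in each twin pair outside~$K$, whereas the paper swaps cut-set vertices for their twins; the underlying idea is identical.
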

\begin{proof}
Clearly, if the class of ${\cal H}$-free graphs has bounded clique-width, then the class of ${\cal H}$-free atoms does.
Now suppose that the class of ${\cal H}$-free graphs has unbounded clique-width.
Let~${\cal F}$ be the class of connected ${\cal H}$-free graphs on at least two vertices.
Since the clique-width of a graph is equal to the maximum of the clique-widths of its components, it follows that~${\cal F}$ has unbounded clique-width.
For every graph $F \in {\cal F}$, we construct the graph~$F'$, which has vertex set $V(F')=\{v,v' \;|\; v \in V(F)\}$ and edge set $E(F') = \{uv,uv',u'v,u'v' \;|\; uv \in E(F)\}$.
So, for every $v\in V(F)$ we have introduced a new vertex~$v'$ such that~$v$ and~$v'$ are false twins in~$F'$.
Let~${\cal F'}$ be the set of such graphs~$F'$.
Since for every $F \in {\cal F}$, the graph~$F'$ contains~$F$ as an induced subgraph, it follows that~${\cal F'}$ has unbounded clique-width.

We claim that every graph in~${\cal F'}$ is an atom.
Indeed, suppose, for contradiction, that~$X$ is a clique cut-set of a graph $F' \in {\cal F'}$.
Since for every $v \in V(F)$, $v$ is non-adjacent to~$v'$ in~$F'$, it follows that at most one of~$v$ and~$v'$ is in~$X$.
Since~$v$ and~$v'$ are false twins in~$F'$ we may replace all vertices $v \in X \cap V(F)$ by their false twins~$v'$ and the resulting set~$X'$ will still be a clique cut-set.
By construction, the graph~$F$ is connected and every vertex in $V(F')\setminus X'$ has a neighbour in~$V(F)$ in the graph $F'\setminus X'$.
Therefore $F' \setminus X'$ is connected, a contradiction.
It follows that every graph in~${\cal F'}$ is indeed an atom.

It remains to show that the graphs in~${\cal F'}$ are ${\cal H}$-free.
Indeed, suppose, for contradiction, that $H \in {\cal H}$ is an induced subgraph of $F' \in {\cal F'}$.
Since for every $v \in V(F)$, the vertices~$v$ and~$v'$ are false twins in~$F'$, and~$H$ does not have a pair of false twins, it follows that at most one of~$v$ and~$v'$ is in the induced copy of~$H$ found in~$F'$.
Furthermore, if~$v'$ is in this induced copy, then we can replace it by~$v$.
Thus we find that there is an induced copy of~$H$ in~$F'$ all of whose vertices lie in~$V(F)$.
Therefore~$H$ is an induced subgraph of~$F$.
This is a contradiction as $F \in {\cal F}$ and the graphs in~${\cal F}$ are ${\cal H}$-free.
We have therefore shown that the graphs in~${\cal F'}$ are ${\cal H}$-free atoms and that~${\cal F'}$ has unbounded clique-width.
This completes the proof.
\end{proof}

Observe that the condition in the following lemma holds if and only if for every graph $H \in {\cal H}$, the graph~$\overline{H}$ does not have a component isomorphic to~$P_1$ or~$P_2$.

\begin{lemma}\label{lem:no-comp-P1or2P1}
Let~${\cal H}$ be a set of graphs such that no graph in~${\cal H}$ contains a dominating vertex and no graph in~${\cal H}$ contains a pair of non-adjacent vertices that are complete to the remainder of the graph.
Then the class of ${\cal H}$-free atoms has bounded clique-width if and only if the class of ${\cal H}$-free graphs does.
\end{lemma}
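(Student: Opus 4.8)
The plan is to mimic the proof of Lemma~\ref{lem:no-false-twin}, replacing the ``doubling into false twins'' gadget by a complementary one that adds two non-adjacent apex vertices. The forward implication is immediate, since the class of ${\cal H}$-free atoms is a subclass of the class of ${\cal H}$-free graphs. For the reverse implication, suppose the class of ${\cal H}$-free graphs has unbounded clique-width. As the clique-width of a graph equals the maximum clique-width over its components, and complete graphs (and the one-vertex graph) have clique-width at most~$2$, the class~${\cal F}$ of connected ${\cal H}$-free graphs that are not complete still has unbounded clique-width. It therefore suffices to associate with each $G\in{\cal F}$ an ${\cal H}$-free atom~$G^{*}$ that contains~$G$ as an induced subgraph: then $\cw(G)\le\cw(G^{*})$, so the family $\{G^{*}:G\in{\cal F}\}$ of ${\cal H}$-free atoms has unbounded clique-width.

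Given $G\in{\cal F}$, I would define~$G^{*}$ by adding two new vertices~$a$ and~$b$, leaving~$a$ and~$b$ non-adjacent, and making each of~$a$ and~$b$ complete to~$V(G)$; clearly $G\ssi G^{*}$. The first thing to verify is that~$G^{*}$ is an atom. Let~$X$ be a clique cut-set of~$G^{*}$ and put $Y=V(G^{*})\setminus X$. Since~$a$ and~$b$ are non-adjacent, $X$ contains at most one of them, so~$Y$ contains~$a$ or~$b$; say $a\in Y$ (the other case is symmetric, and the case $a,b\in Y$ is handled identically). Every vertex of $Y\setminus\{a,b\}$ lies in~$V(G)$ and is thus adjacent to~$a$, and~$b$ (if $b\in Y$) is adjacent to every vertex of $V(G)\cap Y$; hence if $Y\setminus\{a,b\}\neq\emptyset$ then $G^{*}[Y]$ is connected. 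The only remaining possibility is $Y\setminus\{a,b\}=\emptyset$, i.e.\ $V(G)\subseteq X$, but then~$X$ being a clique forces~$G$ to be complete, contradicting $G\in{\cal F}$. So~$G^{*}[Y]$ is always connected, and~$G^{*}$ has no clique cut-set.

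Next I would check that~$G^{*}$ is ${\cal H}$-free. Suppose some $H\in{\cal H}$ occurs as $G^{*}[S]$ for some $S\subseteq V(G^{*})$. If $S\cap\{a,b\}=\emptyset$, then $S\subseteq V(G)$ and $G^{*}[S]=G[S]$, so $H\ssi G$, contradicting that~$G$ is ${\cal H}$-free. If $|S\cap\{a,b\}|=1$, say $a\in S$, then~$a$ is adjacent in~$G^{*}[S]$ to every other vertex of~$S$, so~$H$ has a dominating vertex, contradicting the hypothesis on~${\cal H}$. Finally, if $\{a,b\}\subseteq S$, then~$a$ and~$b$ are non-adjacent in~$G^{*}[S]$ and each is complete to $S\setminus\{a,b\}$, so~$H$ has a pair of non-adjacent vertices complete to the remainder of the graph, again contradicting the hypothesis. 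Hence~$G^{*}$ is an ${\cal H}$-free atom, which finishes the argument.

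The construction and both verifications are short; the only point that requires a little care --- and the only genuine obstacle --- is the degenerate situation in the atom check where a clique cut-set~$X$ swallows all of~$V(G)$, which is precisely why~${\cal F}$ must exclude complete graphs (for example $G=K_{2}$ would make $X=V(G)$ a clique cut-set of~$G^{*}$). Everything else is the complementary analogue of Lemma~\ref{lem:no-false-twin}, matching a dominating vertex of~$H$ with an isolated vertex of~$\overline{H}$ and a non-adjacent dominating pair of~$H$ with a~$P_{2}$-component of~$\overline{H}$, as observed just before the statement.
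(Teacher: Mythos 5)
Your proposal is correct and follows essentially the same route as the paper's proof: the same gadget of two non-adjacent vertices made complete to an ${\cal H}$-free graph with a non-edge, the same clique cut-set analysis (including the key observation that a cut-set absorbing all of $V(G)$ would force $G$ to be complete), and the same case analysis for ${\cal H}$-freeness. The extra restriction to connected graphs in your class ${\cal F}$ is harmless but not needed.
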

\begin{proof}
Clearly, if the class of ${\cal H}$-free graphs has bounded clique-width, then the class of ${\cal H}$-free atoms does.
Now suppose that the class of ${\cal H}$-free graphs has unbounded clique-width.
Let~${\cal F}$ be the class of ${\cal H}$-free graphs that contain at least one non-edge.
Since complete graphs have clique-width at most~$2$ and the class of ${\cal H}$-free graphs has unbounded clique-width, it follows that~${\cal F}$ has unbounded clique-width.
For every graph $F \in {\cal F}$, we construct the graph~$F'$ by adding two new vertices $x,x'$ and adding edges to make~$\{x,x'\}$ complete to the remainder of the graph (note that~$x$ is non-adjacent to~$x'$ in~$F'$).
Let~${\cal F'}$ be the set of such graphs~$F'$.
Since for every $F \in {\cal F}$, the graph~$F'$ contains~$F$ as an induced subgraph, it follows that~${\cal F'}$ has unbounded clique-width.

We claim that every graph in~${\cal F'}$ is an atom.
Suppose, for contradiction, that $F' \in {\cal F'}$ has a clique cut-set~$X$.
Since~$x$ and~$x'$ are non-adjacent, it follows that either~$x$ or~$x'$ are not in~$X$; since~$x$ and~$x'$ are false twins, we may assume $x \notin X$.
Since~$F$ is not a complete graph, there must be a vertex $y \in V(F) \setminus X$.
Since~$x$ is complete to $V(F) \setminus X$ in~$F' \setminus X$, every vertex of~$V(F) \setminus X$ is in the same component of $F' \setminus X$ as~$x$.
Since~$y$ is complete to $\{x,x'\} \setminus X$ in~$F' \setminus X$, every vertex of $\{x,x'\} \setminus X$ is in the same component of $F' \setminus X$ as~$y$.
Therefore~$F' \setminus X$ is connected, a contradiction.
It follows that every graph in~${\cal F'}$ is indeed an atom.

It remains to show that the graphs in~${\cal F'}$ are ${\cal H}$-free.
Indeed, suppose, for contradiction, that $H \in {\cal H}$ is an induced subgraph of $F' \in {\cal F'}$.
Since~$H$ does not contain a pair of non-adjacent vertices that are complete to the rest of the graph, this induced copy of~$H$ in~$F'$ cannot contain both~$x$ and~$x'$.
Since~$H$ does not have a dominating vertex, the induced copy of~$H$ in~$F'$ cannot contain exactly one of~$x$ and~$x'$.
Therefore the induced copy of~$H$ in~$F'$ must consist of only vertices in~$V(F)$.
Therefore~$H$ is an induced subgraph of~$F$.
This is a contradiction as $F \in {\cal F}$ and the graphs in~${\cal F}$ are ${\cal H}$-free.
We have therefore shown that the graphs in~${\cal F'}$ are ${\cal H}$-free atoms and that~${\cal F'}$ has unbounded clique-width.
This completes the proof.
\end{proof}

\begin{figure}[h]
\begin{center}
\begin{tabular}{cccc}
\scalebox{0.6}{
\begin{minipage}{0.3\textwidth}
\begin{center}
\begin{tikzpicture}[every node/.style={circle,fill, minimum size=0.07cm}]
\node at (0,0) {};
\node at (0,2) {};
\node at (2,0) {};
\node at (2,2) {};
\draw (0,0) -- (0,2) -- (2,2) -- (2,0) -- (0,0);
\end{tikzpicture}
\end{center}
\end{minipage}}
&
\scalebox{0.6}{
\begin{minipage}{0.3\textwidth}
\begin{center}
\begin{tikzpicture}[every node/.style={circle,fill, minimum size=0.07cm}]
\node at (0,0) {};
\node at (1,0) {};
\node at (2,0) {};
\node at (1,2) {};
\draw (1,2) -- (0,0);
\draw (1,2) -- (1,0);
\draw (1,2) -- (2,0);
\end{tikzpicture}
\end{center}
\end{minipage}}
&
\scalebox{0.6}{
\begin{minipage}{0.3\textwidth}
\begin{center}
\begin{tikzpicture}[every node/.style={circle,fill, minimum size=0.07cm}]
\node at (0,0) {};
\node at (0,2) {};
\node at (2,0) {};
\node at (2,2) {};
\draw (0,0) -- (0,2) -- (2,2) -- (2,0) -- (0,0) -- (2,2);
\draw (0,2) -- (2,0);
\end{tikzpicture}
\end{center}
\end{minipage}}
&
\scalebox{0.6}{
\begin{minipage}{0.4\textwidth}
\begin{center}
\begin{tikzpicture}[every node/.style={circle,fill, minimum size=0.07cm}]
\node at (0,0) {};
\node at (0,2) {};
\node at (2,0) {};
\node at (2,2) {};
\draw (0,0) -- (0,2) -- (2,2) -- (2,0) -- (0,0) -- (2,2);
\end{tikzpicture}
\end{center}
\end{minipage}}
\\
\\
$C_4$ & $K_{1,3}$ & $K_4$ & $\overline{2P_1+P_2}$\\
\\
\scalebox{0.6}{
\begin{minipage}{0.3\textwidth}
\begin{center}
\begin{tikzpicture}[every node/.style={circle,fill, minimum size=0.07cm}]
\node at (0,0) {};
\node at (0,2) {};
\node at (2,0) {};
\node at (2,2) {};
\draw (0,0) -- (0,2);
\draw (2,0) -- (2,2);
\end{tikzpicture}
\end{center}
\end{minipage}}
&
\scalebox{0.6}{
\begin{minipage}{0.3\textwidth}
\begin{center}
\begin{tikzpicture}[every node/.style={circle,fill, minimum size=0.07cm}]
\node at (0,0) {};
\node at (1,0.57735026919) {};
\node at (1,1.73205080757) {};
\node at (2,0) {};
\draw (0,0) -- (1,1.73205080757) -- (2,0) -- (0,0);
\end{tikzpicture}
\end{center}
\end{minipage}}
&
\scalebox{0.6}{
\begin{minipage}{0.3\textwidth}
\begin{center}
\begin{tikzpicture}[every node/.style={circle,fill, minimum size=0.07cm}]
\node at (0,0) {};
\node at (0,2) {};
\node at (2,0) {};
\node at (2,2) {};
\end{tikzpicture}
\end{center}
\end{minipage}}
&
\scalebox{0.6}{
\begin{minipage}{0.4\textwidth}
\begin{center}
\begin{tikzpicture}[every node/.style={circle,fill, minimum size=0.07cm}]
\node at (0,0) {};
\node at (0,2) {};
\node at (2,0) {};
\node at (2,2) {};
\draw (2,0) -- (0,0);
\end{tikzpicture}
\end{center}
\end{minipage}}
\\
\\
$2P_2$ & $K_3+\nobreak P_1$ & $4P_1$ & $2P_1+\nobreak P_2$
\end{tabular}
\end{center}
\caption{\label{fig:C4-K13-K4-diamond}The forbidden induced subgraphs for the classes of $(C_4,K_{1,3},K_4,\overline{2P_1+P_2})$-free graphs and $(2P_2,K_3+\nobreak P_1,\allowbreak 4P_1,\allowbreak 2P_1+\nobreak P_2)$-free graphs mentioned in Lemma~\ref{lem:C4-K13-K4-diamond}.}
\end{figure}
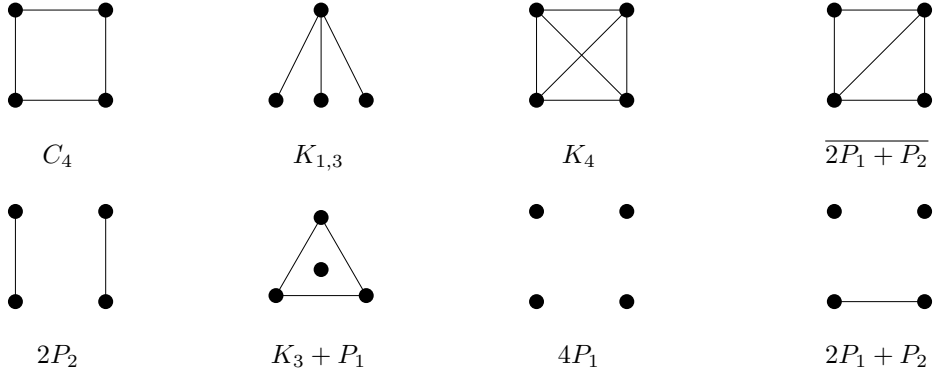

\begin{lemma}\label{lem:C4-K13-K4-diamond}
The class of $(C_4,K_{1,3},K_4,\overline{2P_1+P_2})$-free atoms and the class of $(2P_2,K_3+\nobreak P_1,\allowbreak 4P_1,\allowbreak 2P_1+\nobreak P_2)$-free atoms have unbounded clique-width (see \figurename~\ref{fig:C4-K13-K4-diamond} for illustrations of the forbidden induced subgraphs).
\end{lemma}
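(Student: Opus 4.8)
The key observation is that the two classes in the statement are \emph{complementary}: $\overline{C_4}=2P_2$, $\overline{K_{1,3}}=K_3+\nobreak P_1$, $\overline{K_4}=4P_1$ and $\overline{\diamondgraph}=2P_1+\nobreak P_2$. Since subgraph complementation does not preserve the atom property (as discussed in Section~\ref{s-intro}), I cannot simply reduce one class to the other; instead the plan is to produce a single family $\{G_n\}_{n\ge 2}$ such that both $G_n$ and $\overline{G_n}$ are atoms, with $G_n$ being $(C_4,K_{1,3},K_4,\diamondgraph)$-free and $\{G_n\}$ having unbounded clique-width. Then $\overline{G_n}$ is automatically $(2P_2,K_3+\nobreak P_1,4P_1,2P_1+\nobreak P_2)$-free and, by Fact~\ref{fact:comp}, $\{\overline{G_n}\}$ also has unbounded clique-width, so the only extra work on the complement side is to verify that $\overline{G_n}$ is an atom.

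For the construction I would take $W_n$ to be the wall of height~$n$ (see \figurename~\ref{fig:walls}) and set $G_n\df L(W_n)$, the line graph of~$W_n$. Since every line graph is $K_{1,3}$-free and $W_n$ is subcubic and triangle-free with girth~$6$, one checks that $L(W_n)$ is $K_4$-free (a $K_4$ in a line graph needs a vertex of degree at least~$4$), diamond-free (two triangles of $L(W_n)$ sharing an edge would require two edges of $W_n$ joining the same pair of vertices, using triangle-freeness), and $C_4$-free (an induced $C_4$ in a line graph forces a $4$-cycle in the base graph, impossible in a girth-$6$ graph). Hence $G_n$ lies in the class of $(C_4,K_{1,3},K_4,\diamondgraph)$-free graphs --- in fact this class essentially consists of line graphs of triangle-free subcubic graphs, which is what motivates the choice --- and $\overline{G_n}$ lies in the class of $(2P_2,K_3+\nobreak P_1,4P_1,2P_1+\nobreak P_2)$-free graphs.

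Unbounded clique-width of $\{G_n\}$ is the technical heart. The wall $W_n$ comes equipped with the $\Theta(n)\times\Theta(n)$ ``grid'' partition of its vertex set (rows and columns of the brick pattern) that underlies Lemma~\ref{lem:walls}: cells of bounded size, indexed monotonically, with adjacent cells having indices differing by $O(1)$. I would transfer this to a partition of $E(W_n)=V(G_n)$ by assigning each edge to, say, the lexicographically smaller cell of its endpoints, and then verify the hypotheses of Lemma~\ref{lem:generalunbounded}: the $V_{i,j}$ have bounded size; rows and columns of edges induce connected subgraphs of $G_n$; the monotone indexing yields conditions~\ref{prop:v_k0-nbrs} and~\ref{prop:v_0k-nbrs}; and condition~\ref{prop:v_ij-nbrs} holds with a fixed $m$ because two adjacent edges of $W_n$ lie in cells whose indices differ by $O(1)$. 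Lemma~\ref{lem:generalunbounded} then gives $\cw(G_n)\to\infty$. (Alternatively one may appeal to known relations between the clique-width of a bounded-degree graph and that of its line graph.) By Fact~\ref{fact:comp}, $\{\overline{G_n}\}$ then also has unbounded clique-width.

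Finally I would check that $G_n$ and $\overline{G_n}$ are atoms. Because $W_n$ is triangle-free, every clique of $G_n=L(W_n)$ is contained in a star $\mathrm{St}(v)=\{e\in E(W_n):v\in e\}$ for some vertex~$v$; if such a clique~$S$ were a cut-set, then $G_n-S$ would contain $L(W_n-v)$ together with the edges of $\mathrm{St}(v)\setminus S$, each of which is attached to $L(W_n-v)$ since $W_n$ has minimum degree~$2$, and $L(W_n-v)$ is connected since $W_n$ is $2$-connected --- a contradiction. For $\overline{G_n}$, a clique cut-set is an independent set~$T$ of $G_n$ (hence a matching of $W_n$, so $|T|\le|V(W_n)|/2$) for which $G_n-T$ is a join $A\vee B$ with $A,B\neq\emptyset$; but $G_n$ has maximum degree at most~$4$, forcing $|A|,|B|\le 4$ and hence $|E(W_n)|=|V(G_n)|\le|T|+8\le|V(W_n)|/2+8$, which fails for large~$n$. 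So both $G_n$ and $\overline{G_n}$ are atoms for all large~$n$, which completes the argument. The one genuinely laborious step is verifying the eight conditions of Lemma~\ref{lem:generalunbounded} for the edge-partition of~$W_n$ (or pinning down the line-graph citation); everything else reduces to elementary degree and connectivity counts.
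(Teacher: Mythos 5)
Your proposal is correct, and for the $(C_4,K_{1,3},K_4,\diamondgraph)$-free half it is essentially the paper's argument: both take the line graph of a wall and show it is an atom (the paper notes that all cliques have size at most three and that no clique cut-set of size at most three exists; your star-based argument is a cleaner way of saying the same thing). The difference there is only in how freeness and unboundedness are certified: the paper sandwiches the line graphs $J_k$ with the family $H_n$ of Brandst\"adt et al.~\cite{BELL06}, importing both properties from their Theorem~10(ii), whereas you verify $(C_4,K_{1,3},K_4,\diamondgraph)$-freeness directly from girth and subcubicity (correctly) and only sketch the clique-width lower bound via an edge-partition for Lemma~\ref{lem:generalunbounded} or a line-graph/treewidth citation --- acceptable, since the paper itself remarks that Lemma~\ref{lem:generalunbounded} applies, but this is the one step you would still have to write out or pin down with a reference. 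For the complementary class the two routes genuinely diverge: you prove that $\overline{L(W_n)}$ itself is an atom, using that a clique cut-set of the complement is an independent set of $L(W_n)$ (a matching of $W_n$, so of size at most $|V(W_n)|/2$) whose removal would leave a join $A\vee B$ in $L(W_n)$, forcing $|A|,|B|\le 4$ because $L(W_n)$ has maximum degree~$4$, contradicting $|E(W_n)|\ge |V(W_n)|$ for large $n$ --- this counting is sound. The paper never looks at complements of line graphs at all: it complements the whole class (Fact~\ref{fact:comp}) and then applies the general Lemma~\ref{lem:no-comp-P1or2P1}, i.e.\ it builds atoms by attaching two non-adjacent universal vertices to arbitrary graphs of the complemented class. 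Your route is more self-contained and economical (a single family witnesses both classes, no auxiliary atom-construction lemma), while the paper's route reuses its general machinery and the already-cited properties of $H_n$, keeping the case-specific verification to a minimum.
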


\begin{proof}
Brandst\"adt et al.~\cite[Theorem~10(ii)]{BELL06} constructed a family of graphs~$H_n$ that have unbounded clique-width and are $(C_4,K_{1,3},K_4,\overline{2P_1+P_2})$-free.
The graph~$H_n$ is constructed from the $1$-subdivided $n\times n$ grid by adding new edges incident to the vertices added by the subdivision as follows: in each cell of the subdivided grid, the left vertex added by the subdivision is made adjacent to the top one, and the bottom vertex added by the subdivision is made adjacent to the right one (see also \figurename~\ref{fig:def:C4-K13-K4-diamond} or see~\cite[Section~6.2]{BELL06} for a formal definition).
However, the graph~$H_n$ has clique cut-sets, so it is not an atom.
On the other hand, since the class of $(C_4,K_{1,3},K_4,\overline{2P_1+P_2})$-free graphs has unbounded clique-width, Fact~\ref{fact:comp} implies that the class of $(2P_2,K_3+\nobreak P_1,\allowbreak 4P_1,\allowbreak 2P_1+\nobreak P_2)$-free graphs has unbounded clique-width.
We observe that every graph in $\{2P_2,K_3+\nobreak P_1,\allowbreak 4P_1,\allowbreak 2P_1+\nobreak P_2\}$ has no dominating vertex and no two non-adjacent vertices that are complete to the remainder of the graph.
Therefore, by Lemma~\ref{lem:no-comp-P1or2P1}, the class of $(2P_2,K_3+\nobreak P_1,\allowbreak 4P_1,\allowbreak 2P_1+\nobreak P_2)$-free atoms has unbounded clique-width.

\begin{figure}[h]
\begin{center}
\includegraphics[scale=0.7, page=1]{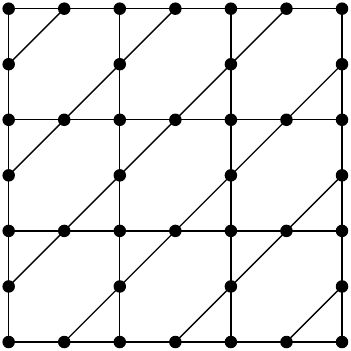}
\caption{The graph~$H_n$ from the proof of Lemma~\ref{lem:C4-K13-K4-diamond} ($n=4$ shown).}
\label{fig:def:C4-K13-K4-diamond}
\end{center}
\end{figure}

We now prove that the class of $(C_4,K_{1,3},K_4,\overline{2P_1+P_2})$-free atoms has unbounded clique-width.
Consider a wall of height~$k \geq 2$ and let~$J_k$ be its line graph.
It is easy to verify that for every~$k$, the graph~$H_n$ contains~$J_k$ as an induced subgraph if~$n$ is sufficiently large.
Similarly, for every~$n$, the graph~$J_k$ contains~$H_n$ as an induced subgraph if~$k$ is sufficiently large.
Therefore, by~\cite[Theorem~10(ii)]{BELL06}, the graph~$J_k$ is $(C_4,K_{1,3},K_4,\overline{2P_1+P_2})$-free and this family of graphs has unbounded clique-width (the former can also be seen by inspection and latter can also be seen by using Lemma~\ref{lem:generalunbounded}).
Every clique in~$J_k$ contains at most three vertices and it is easy to verify that~$J_k$ does not contain a clique cut-set on at most three vertices, so~$J_k$ is an atom.
This completes the proof.
\end{proof}

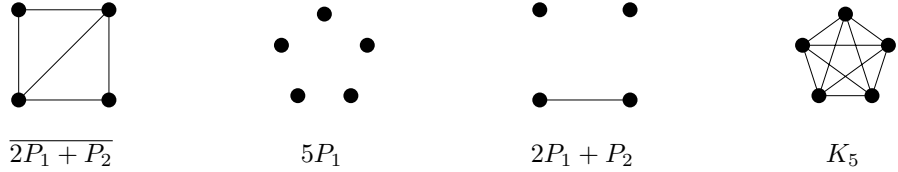
\begin{figure}[h]
\begin{center}
\begin{tabular}{cccc}
\scalebox{0.6}{
\begin{minipage}{0.3\textwidth}
\begin{center}
\begin{tikzpicture}[every node/.style={circle,fill, minimum size=0.07cm}]
\node at (0,0) {};
\node at (0,2) {};
\node at (2,0) {};
\node at (2,2) {};
\draw (0,0) -- (0,2) -- (2,2) -- (2,0) -- (0,0) -- (2,2);
\end{tikzpicture}
\end{center}
\end{minipage}}
&
\scalebox{0.6}{
\begin{minipage}{0.3\textwidth}
\begin{center}
\begin{tikzpicture}[every node/.style={circle,fill, minimum size=0.07cm}]
\node at (18:1) {};
\node at (90:1) {};
\node at (162:1) {};
\node at (234:1) {};
\node at (306:1) {};
\end{tikzpicture}
\end{center}
\end{minipage}}
&
\scalebox{0.6}{
\begin{minipage}{0.3\textwidth}
\begin{center}
\begin{tikzpicture}[every node/.style={circle,fill, minimum size=0.07cm}]
\node at (0,0) {};
\node at (0,2) {};
\node at (2,0) {};
\node at (2,2) {};
\draw (2,0) -- (0,0);
\end{tikzpicture}
\end{center}
\end{minipage}}
&
\scalebox{0.6}{
\begin{minipage}{0.3\textwidth}
\begin{center}
\begin{tikzpicture}[every node/.style={circle,fill, minimum size=0.07cm}]
\node at (18:1) {};
\node at (90:1) {};
\node at (162:1) {};
\node at (234:1) {};
\node at (306:1) {};
\draw (18:1) -- (90:1) -- (162:1) -- (234:1) -- (306:1) -- (18:1);
\draw (18:1) -- (162:1) -- (306:1) -- (90:1) -- (234:1) -- (18:1);
\end{tikzpicture}
\end{center}
\end{minipage}}
\\
\\
$\overline{2P_1+P_2}$ & $5P_1$ & $2P_1+\nobreak P_2$ & $K_5$\\
\end{tabular}
\end{center}
\caption{\label{fig:diamond-5P1}The forbidden induced subgraphs for the classes of $(\overline{2P_1+P_2},5P_1)$-free graphs and $(2P_1+\nobreak P_2,\allowbreak K_5)$-free graphs mentioned in Lemma~\ref{lem:diamond-5P1}.}
\end{figure}

\begin{lemma}\label{lem:diamond-5P1}
The class of $(\overline{2P_1+P_2},5P_1)$-free atoms and the class of $(2P_1+\nobreak P_2,\allowbreak K_5)$-free atoms has unbounded clique-width (see \figurename~\ref{fig:diamond-5P1} for illustrations of the forbidden induced subgraphs).
\end{lemma}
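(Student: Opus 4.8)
The plan is as follows. Since $\overline{2P_1+P_2}=\diamondgraph$ and $\overline{5P_1}=K_5$, the class of $(\diamondgraph,5P_1)$-free graphs and the class of $(2P_1+\nobreak P_2,K_5)$-free graphs are complements of one another, so by Fact~\ref{fact:comp} one has bounded clique-width precisely when the other does; by Theorem~\ref{thm:classification2} (the case $H_1\si\diamondgraph$, $H_2\si 5P_1$) this common status is \emph{unbounded}. Hence we may fix a family $\{G_n\}_{n\geq 1}$ of $(\diamondgraph,5P_1)$-free graphs with $\cw(G_n)\to\infty$, and then $\{\overline{G_n}\}_{n\geq 1}$ is a family of $(2P_1+\nobreak P_2,K_5)$-free graphs of unbounded clique-width. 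The catch is that the two \emph{atom} statements are no longer equivalent, and that neither reduction lemma applies here: $\diamondgraph$ and $5P_1$ both contain false twins, so Lemma~\ref{lem:no-false-twin} is unavailable, and $\diamondgraph$ and $K_5$ both contain a dominating vertex, so Lemma~\ref{lem:no-comp-P1or2P1} is unavailable. So both families of atoms have to be produced more or less directly.

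The natural first step is to take the explicit family behind the relevant case of Theorem~\ref{thm:classification2} (in practice such families have a grid-like structure as in Lemma~\ref{lem:generalunbounded}, with every ``row'' and every ``column'' inducing a connected subgraph) and to check that its members are already atoms. For a $(\diamondgraph,5P_1)$-free graph $G_n$: if $X$ were a clique cut-set, then $G_n\setminus X$ would split into components $C_1,\dots,C_r$ with $r\geq 2$; picking one vertex per component gives an independent set, so $r\leq 4$ and moreover $\alpha(C_i)\leq 5-r$ for each $i$. Feeding this tight bound into the grid structure of the construction --- where any two vertices are linked through several ``spread-out'' rows and columns, so that no single clique can separate the graph --- should yield a contradiction. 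Dually, for $\overline{G_n}$ a clique cut-set $X$ has $|X|\leq 4$ (since $\overline{G_n}$ is $K_5$-free); writing $\overline{G_n}\setminus X=C_1\sqcup\dots\sqcup C_r$, if some $C_i$ contains an edge $uv$, then $u$ and $v$ are anti-complete to every other component, so $\{u,v\}$ together with any non-edge inside some $C_j$ ($j\neq i$) induces a $2P_1+\nobreak P_2$; hence every component other than a single one is a clique of size at most~$4$, and this again contradicts the structure of the construction once $n$ is large.

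Should the off-the-shelf family not already consist of atoms, the fallback is to perturb it by a bounded-size gadget that destroys every clique cut-set while keeping it inside the class, and this has to be done carefully. On the $(\diamondgraph,5P_1)$-free side the usual ``add a dominating vertex'' trick is forbidden, since a dominating vertex together with any induced $P_3$ (which the construction certainly has) creates a $\diamondgraph$; instead one adds a constant number of vertices whose neighbourhoods are disjoint unions of cliques (no new $\diamondgraph$) and which each have a neighbour (no new $5P_1$), placed so that the result has no clique cut-set. On the $(2P_1+\nobreak P_2,K_5)$-free side a dominating vertex may in principle be added --- such a vertex never lies in an induced $2P_1+\nobreak P_2$ --- but only if the base graph is also $K_4$-free, and in any case adding a dominating vertex does not by itself remove clique cut-sets, so a genuine gadget (or a ``doubling and cross-linking'' of two copies along a large robust substructure) is still required. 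Throughout, Facts~\ref{fact:del-vert} and~\ref{fact:comp} guarantee that the perturbed families retain unbounded clique-width.

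The main obstacle is precisely this last issue: fabricating atoms when the complementation equivalence and both generic reduction lemmas all fail, so that the standard ``universal-vertex'' and ``false-twin'' paddings used to kill clique cut-sets are ruled out by the $\diamondgraph$-free, $5P_1$-free, and $K_5$-free constraints. One is therefore forced to understand a concrete unbounded-clique-width construction well enough to certify directly --- or after a carefully chosen bounded modification --- that deleting any clique from it leaves it connected. I expect the cleanest route is to verify that the standard family witnessing this case of Theorem~\ref{thm:classification2} already has this strong connectivity property, hence is a family of atoms, and then to complement it to get the second family.
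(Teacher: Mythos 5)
Your overall plan is the same as the paper's: it takes the known family from \cite{DGP14} witnessing unboundedness for $(\overline{2P_1+P_2},5P_1)$-free graphs, verifies that these graphs are already atoms, and verifies separately that their complements are atoms too, having correctly observed that Fact~\ref{fact:comp} does not transfer atomhood and that neither Lemma~\ref{lem:no-false-twin} nor Lemma~\ref{lem:no-comp-P1or2P1} applies to these pairs. However, as a proof it has a genuine gap: the entire mathematical content of the lemma is precisely the verification you leave at the level of ``feeding this bound into the grid structure \ldots should yield a contradiction'' and ``this again contradicts the structure of the construction once $n$ is large''. You never pin down the construction (a wall of height $2n+1$ whose vertices are $4$-coloured so that no vertex has two neighbours of the same colour, with each colour class then turned into a clique), and without it no atomhood argument can be checked or completed. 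Your sketched route for $G_n$ via independence numbers of the components ($r\leq 4$, $\alpha(C_i)\leq 5-r$) is not what makes the argument work; the paper's proof instead uses that each vertex has at most one neighbour in each other colour class, so a clique cut-set meeting two colour classes contains at most one vertex per class, and since the four classes are cliques with many cross edges the remainder stays connected (plus a separate easy case when the cut-set lies inside one class). For $\overline{G_n}$ the paper uses the dual facts: the classes are independent, so a clique cut-set has at most one vertex per class, and each vertex has at most one non-neighbour in each other class, so connectivity survives. Your alternative observation for the complement side (all but one component of $\overline{G_n}\setminus X$ must be a clique of size at most $4$, via an induced $2P_1+P_2$) is a sound start and could be made to work, but it is again not carried to a contradiction against any concrete structure.

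In short: the strategy is right and matches the paper, but the proposal is a plan rather than a proof — the specific $(\overline{2P_1+P_2},5P_1)$-free construction and the explicit no-clique-cut-set arguments for both it and its complement, which constitute essentially all of the paper's proof, are missing. The fallback ``gadget'' discussion does not repair this, since no concrete gadget or verification is given there either.
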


\begin{proof}
We use the construction from~\cite{DGP14}, which was used to show that $(\overline{2P_1+P_2},\allowbreak 5P_1)$-free graphs have unbounded clique-width.
Consider a wall of height~$2n+1$ for some $n \geq 2$.
Colour the vertices on the top row with colours $1,2,3,4,1,2,3,4,\ldots$ and on the next row with colours $3,4,1,2,3,4,1,2,\ldots$, then alternate these colourings on the following rows, so that no vertex has two neighbours that have the same colour (see also \figurename~\ref{fig:diamond-5P1-free}).
Add edges to make each colour class into a clique and let~$G_n$ be the resulting graph.
Now~$G_n$ is $(\overline{2P_1+P_2},5P_1)$-free and the family of such graphs had unbounded clique-width~\cite{DGP14} (the former can also be seen by inspection and the latter follows from combining Lemma~\ref{lem:generalunbounded} with Fact~\ref{fact:comp}).
By Fact~\ref{fact:comp}, the family of graphs~$\overline{G_n}$ also has unbounded clique-width.

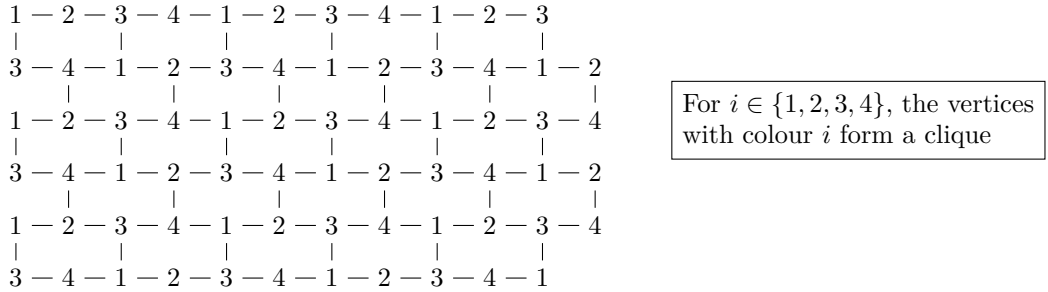
\begin{figure}[h]
\begin{center}
\begin{tikzpicture}[scale=0.7]
\path
      (0,-1) node[](x0ym1) {3}
      (1,-1) node[](x1ym1) {4}
      (2,-1) node[](x2ym1) {1}
      (3,-1) node[](x3ym1) {2}
      (4,-1) node[](x4ym1) {3}
      (5,-1) node[](x5ym1) {4}
      (6,-1) node[](x6ym1) {1}
      (7,-1) node[](x7ym1) {2}
      (8,-1) node[](x8ym1) {3}
      (9,-1) node[](x9ym1) {4}
      (10,-1) node[](x10ym1) {1}

      (0,0) node[](x0y0) {1}
      (1,0) node[](x1y0) {2}
      (2,0) node[](x2y0) {3}
      (3,0) node[](x3y0) {4}
      (4,0) node[](x4y0) {1}
      (5,0) node[](x5y0) {2}
      (6,0) node[](x6y0) {3}
      (7,0) node[](x7y0) {4}
      (8,0) node[](x8y0) {1}
      (9,0) node[](x9y0) {2}
      (10,0) node[](x10y0) {3}
      (11,0) node[](x11y0) {4}

      (0,1) node[](x0y1) {3}
      (1,1) node[](x1y1) {4}
      (2,1) node[](x2y1) {1}
      (3,1) node[](x3y1) {2}
      (4,1) node[](x4y1) {3}
      (5,1) node[](x5y1) {4}
      (6,1) node[](x6y1) {1}
      (7,1) node[](x7y1) {2}
      (8,1) node[](x8y1) {3}
      (9,1) node[](x9y1) {4}
      (10,1) node[](x10y1) {1}
      (11,1) node[](x11y1) {2}

      (0,2) node[](x0y2) {1}
      (1,2) node[](x1y2) {2}
      (2,2) node[](x2y2) {3}
      (3,2) node[](x3y2) {4}
      (4,2) node[](x4y2) {1}
      (5,2) node[](x5y2) {2}
      (6,2) node[](x6y2) {3}
      (7,2) node[](x7y2) {4}
      (8,2) node[](x8y2) {1}
      (9,2) node[](x9y2) {2}
      (10,2) node[](x10y2) {3}
      (11,2) node[](x11y2) {4}

      (0,3) node[](x0y3) {3}
      (1,3) node[](x1y3) {4}
      (2,3) node[](x2y3) {1}
      (3,3) node[](x3y3) {2}
      (4,3) node[](x4y3) {3}
      (5,3) node[](x5y3) {4}
      (6,3) node[](x6y3) {1}
      (7,3) node[](x7y3) {2}
      (8,3) node[](x8y3) {3}
      (9,3) node[](x9y3) {4}
      (10,3) node[](x10y3) {1}
      (11,3) node[](x11y3) {2}

      (0,4) node[](x0y4) {1}
      (1,4) node[](x1y4) {2}
      (2,4) node[](x2y4) {3}
      (3,4) node[](x3y4) {4}
      (4,4) node[](x4y4) {1}
      (5,4) node[](x5y4) {2}
      (6,4) node[](x6y4) {3}
      (7,4) node[](x7y4) {4}
      (8,4) node[](x8y4) {1}
      (9,4) node[](x9y4) {2}
      (10,4) node[](x10y4) {3}
;

\draw (x0ym1) -- (x1ym1) -- (x2ym1) -- (x3ym1) -- (x4ym1) -- (x5ym1) -- (x6ym1) -- (x7ym1) -- (x8ym1) -- (x9ym1) -- (x10ym1);
\draw (x0y0) -- (x1y0) -- (x2y0) -- (x3y0) -- (x4y0) -- (x5y0) -- (x6y0) -- (x7y0) -- (x8y0) -- (x9y0) -- (x10y0) -- (x11y0) ;
\draw (x0y1) -- (x1y1) -- (x2y1) -- (x3y1) -- (x4y1) -- (x5y1) -- (x6y1) -- (x7y1) -- (x8y1) -- (x9y1) -- (x10y1) -- (x11y1);
\draw (x0y2) -- (x1y2) -- (x2y2) -- (x3y2) -- (x4y2) -- (x5y2) -- (x6y2) -- (x7y2) -- (x8y2) -- (x9y2) -- (x10y2) -- (x11y2);
\draw (x0y3) -- (x1y3) -- (x2y3) -- (x3y3) -- (x4y3) -- (x5y3) -- (x6y3) -- (x7y3) -- (x8y3) -- (x9y3) -- (x10y3) -- (x11y3);
\draw (x0y4) -- (x1y4) -- (x2y4) -- (x3y4) -- (x4y4) -- (x5y4) -- (x6y4) -- (x7y4) -- (x8y4) -- (x9y4) -- (x10y4);

\draw (x1y0) -- (x1y1) ;
\draw (x1y2) -- (x1y3) ;
\draw (x3y0) -- (x3y1) ;
\draw (x3y2) -- (x3y3) ;
\draw (x5y0) -- (x5y1) ;
\draw (x5y2) -- (x5y3) ;
\draw (x7y0) -- (x7y1) ;
\draw (x7y2) -- (x7y3) ;
\draw (x9y0) -- (x9y1) ;
\draw (x9y2) -- (x9y3) ;
\draw (x11y0) -- (x11y1) ;
\draw (x11y2) -- (x11y3) ;

\draw (x0ym1) -- (x0y0) ;
\draw (x0y1) -- (x0y2) ;
\draw (x0y3) -- (x0y4) ;
\draw (x2ym1) -- (x2y0) ;
\draw (x2y1) -- (x2y2) ;
\draw (x2y3) -- (x2y4) ;
\draw (x4ym1) -- (x4y0) ;
\draw (x4y1) -- (x4y2) ;
\draw (x4y3) -- (x4y4) ;
\draw (x6ym1) -- (x6y0) ;
\draw (x6y1) -- (x6y2) ;
\draw (x6y3) -- (x6y4) ;
\draw (x8ym1) -- (x8y0) ;
\draw (x8y1) -- (x8y2) ;
\draw (x8y3) -- (x8y4) ;
\draw (x10ym1) -- (x10y0) ;
\draw (x10y1) -- (x10y2) ;
\draw (x10y3) -- (x10y4) ;

\node[draw=black, rectangle, align=left, inner sep=4pt] (label) at (16,2) {For $i\in\{1,2,3,4\}$, the vertices\\ with colour~$i$ form a clique};
\end{tikzpicture}
\caption{The graph~$G_n$ from the proof of Lemma~\ref{lem:diamond-5P1} ($n=2$ shown).}
\label{fig:diamond-5P1-free}
\end{center}
\end{figure}

It remains to show that~$G_n$ and~$\overline{G_n}$ are atoms.
Let~$V_i$ be the set of vertices with colour~$i$.
Suppose, for contradiction, that~$G_n$ has a clique cut-set~$X$.
If $X \subseteq V_i$ for some $i \in \{1,2,3,4\}$, then all vertices of $G_n \setminus V_i$ are in the same component of $G_n \setminus X$.
Since every vertex in~$V_i$ has at least one neighbour outside of~$V_i$, it follows that every vertex of $G_n \setminus X$ is in the same component of $G_n \setminus X$ in this case, a contradiction.
We may therefore assume that~$X$ contains vertices in at least two sets~$V_i$.
By construction, each vertex in a set~$V_i$ has at most one neighbour in each~$V_j$ for $j \in \{1,2,3,4\} \setminus \{i\}$.
Therefore~$X$ has at most one vertex in each~$V_i$.
Therefore, there must be a vertex in $V_1\setminus X$ that has a neighbour in each of $V_2 \setminus X$, $V_3 \setminus X$ and $V_4 \setminus X$.
Since each set~$V_i$ is a clique, it follows that $G_n \setminus X$ is connected.
This contradiction implies that~$G_n$ is indeed an atom.
Now suppose, for contradiction, that~$\overline{G_n}$ has a clique cut-set~$X$.
Since $V_1,\ldots,V_4$ are independent sets in~$\overline{G_n}$, $X$ contains at most one vertex of any~$V_i$.
Since in~$\overline{G_n}$ every vertex of~$V_i$ has at most one non-neighbour in each~$V_j$ for $j \in \{1,2,3,4\} \setminus \{i\}$, it follows that $\overline{G_n} \setminus X$ must be connected.
This contradiction implies that~$\overline{G_n}$ is indeed an atom.
\end{proof}

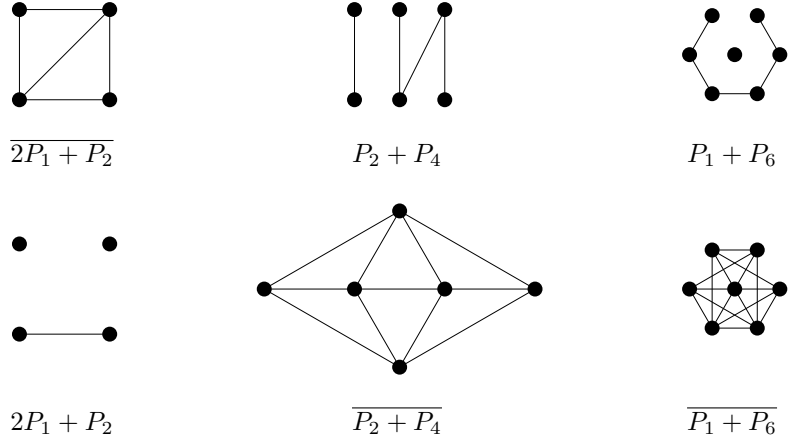
\begin{figure}[h]
\begin{center}
\begin{tabular}{ccc}
\scalebox{0.6}{
\begin{minipage}{0.4\textwidth}
\begin{center}
\begin{tikzpicture}[every node/.style={circle,fill, minimum size=0.07cm}]
\node at (0,0) {};
\node at (0,2) {};
\node at (2,0) {};
\node at (2,2) {};
\draw (0,0) -- (0,2) -- (2,2) -- (2,0) -- (0,0) -- (2,2);
\end{tikzpicture}
\end{center}
\end{minipage}}
&
\scalebox{0.6}{
\begin{minipage}{0.4\textwidth}
\begin{center}
\begin{tikzpicture}[every node/.style={circle,fill, minimum size=0.07cm}]
\node at (0,0) {};
\node at (0,2) {};
\node at (1,0) {};
\node at (1,2) {};
\node at (2,0) {};
\node at (2,2) {};
\draw (0,0) -- (0,2);
\draw (1,0) -- (1,2);
\draw (2,0) -- (2,2) -- (1,0);
\end{tikzpicture}
\end{center}
\end{minipage}}
&
\scalebox{0.6}{
\begin{minipage}{0.4\textwidth}
\begin{center}
\begin{tikzpicture}[every node/.style={circle,fill, minimum size=0.07cm}]
\node at (0,0) {};
\node at (0:1) {};
\node at (60:1) {};
\node at (120:1) {};
\node at (180:1) {};
\node at (240:1) {};
\node at (300:1) {};
\draw (120:1) -- (180:1) -- (240:1) -- (300:1) -- (0:1) -- (60:1);
\end{tikzpicture}
\end{center}
\end{minipage}}
\\
\\
$\overline{2P_1+P_2}$ & $P_2+\nobreak P_4$ & $P_1+\nobreak P_6$\\
\\
\scalebox{0.6}{
\begin{minipage}{0.4\textwidth}
\begin{center}
\begin{tikzpicture}[every node/.style={circle,fill, minimum size=0.07cm}]
\node at (0,0) {};
\node at (0,2) {};
\node at (2,0) {};
\node at (2,2) {};
\draw (2,0) -- (0,0);
\end{tikzpicture}
\end{center}
\end{minipage}}
&
\scalebox{0.6}{
\begin{minipage}{0.4\textwidth}
\begin{center}
\begin{tikzpicture}[every node/.style={circle,fill, minimum size=0.07cm}]
\node at (0,1.73205080757) {};
\node at (0,-1.73205080757) {};
\node at (1,0) {};
\node at (3,0) {};
\node at (-1,0) {};
\node at (-3,0) {};
\draw (-3,0) -- (-1,0) -- (1,0) -- (3,0);
\draw (0,1.73205080757) -- (1,0) -- (0,-1.73205080757);
\draw (0,1.73205080757) -- (3,0) -- (0,-1.73205080757);
\draw (0,1.73205080757) -- (-1,0) -- (0,-1.73205080757);
\draw (0,1.73205080757) -- (-3,0) -- (0,-1.73205080757);
\end{tikzpicture}
\end{center}
\end{minipage}}
&
\scalebox{0.6}{
\begin{minipage}{0.4\textwidth}
\begin{center}
\begin{tikzpicture}[every node/.style={circle,fill, minimum size=0.07cm}]
\node at (0,0) {};
\node at (0:1) {};
\node at (60:1) {};
\node at (120:1) {};
\node at (180:1) {};
\node at (240:1) {};
\node at (300:1) {};
\draw (0,0) -- (0:1) -- (120:1) -- (240:1) -- (0:1);
\draw (0,0) -- (60:1) -- (300:1) -- (180:1) -- (60:1);
\draw (0,0) -- (120:1) -- (60:1);
\draw (0,0) -- (180:1) -- (240:1) -- (300:1) -- (0:1);
\draw (0,0) -- (240:1);
\draw (0,0) -- (300:1);
\end{tikzpicture}
\end{center}
\end{minipage}}
\\
\\
$2P_1+\nobreak P_2$ & $\overline{P_2\nobreak+P_4}$ & $\overline{P_1+P_6}$
\end{tabular}
\end{center}
\caption{\label{fig:diamond-P2+P4-forb}The forbidden induced subgraphs for the classes of $(\overline{2P_1+P_2},P_2+\nobreak P_4,P_1+\nobreak P_6)$-free graphs and $(2P_1+\nobreak P_2,\allowbreak \overline{P_2\nobreak+P_4}, \overline{P_1+P_6})$-free graphs mentioned in Lemma~\ref{lem:diamond-P2+P4}.}
\end{figure}

\begin{lemma}\label{lem:diamond-P2+P4}
The class of $(\overline{2P_1+P_2},P_2+\nobreak P_4,P_1+\nobreak P_6)$-free and the class of $(2P_1+\nobreak P_2,\allowbreak \overline{P_2\nobreak+P_4}, \overline{P_1+P_6})$-free atoms have unbounded clique-width (see \figurename~\ref{fig:diamond-P2+P4-forb} for illustrations of the forbidden induced subgraphs).
\end{lemma}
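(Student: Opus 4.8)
The plan is to follow the template already used for Lemmas~\ref{lem:diamond-5P1} and~\ref{lem:C4-K13-K4-diamond}: exhibit an explicit family $\{G_n\}$ of atoms that are $(\overline{2P_1+P_2},P_2+\nobreak P_4,P_1+\nobreak P_6)$-free and of unbounded clique-width, and then transfer to the complementary class via Fact~\ref{fact:comp}. For the first step I would start from the construction(s) already in the literature that certify case~\ref{thm:classification2:unbdd:2P_1+P_2} of Theorem~\ref{thm:classification2}, that is, the families of $(\overline{2P_1+P_2},P_2+\nobreak P_4)$-free graphs and of $(\overline{2P_1+P_2},P_6)$-free graphs of unbounded clique-width. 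Since $P_6\ssi P_1+\nobreak P_6$, any $P_6$-free graph is also $(P_1+\nobreak P_6)$-free, so it suffices to produce a single family that is simultaneously $\overline{2P_1+P_2}$-free, $P_2+\nobreak P_4$-free and $P_6$-free. I would first check whether one of the two cited families already has all three properties; if not, I would assemble a new family from a bipartite grid- or wall-like skeleton (automatically $\overline{2P_1+P_2}$-free, since bipartite graphs are triangle-free) of unbounded clique-width, and ``thicken'' it by a bounded number of added cliques that short-circuit every long induced path, in the spirit of the $4$-colouring trick of Lemma~\ref{lem:diamond-5P1}, arranged so that the added edges create neither a diamond, nor an induced $P_2+\nobreak P_4$, nor an induced $P_1+\nobreak P_6$. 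Either way, unboundedness of clique-width is inherited from the source or re-derived from Lemma~\ref{lem:generalunbounded}, and the three forbidden-subgraph conditions are verified by a short case analysis on how a hypothetical copy of $\overline{2P_1+P_2}$, $P_2+\nobreak P_4$ or $P_6$ could intersect the cliques and the skeleton.

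The second step is to show that each $G_n$ is an atom, done directly as in Lemmas~\ref{lem:diamond-5P1} and~\ref{lem:C4-K13-K4-diamond}: given a putative clique cut-set $X$, observe that $X$ meets each dense part of the construction in only a bounded number of vertices (and each colour class or added clique in at most one vertex, since these are independent in the relevant auxiliary graph), and conclude that some vertex, or some clique's worth of vertices, remains adjacent to every would-be component of $G_n\setminus X$, so $G_n\setminus X$ is connected. If the construction taken from the literature turns out to have clique cut-sets (as $H_n$ did in Lemma~\ref{lem:C4-K13-K4-diamond}), I would replace it by a closely related family that is an atom while staying $(\overline{2P_1+P_2},P_2+\nobreak P_4,P_1+\nobreak P_6)$-free; note that the ``pass to a line graph'' fix used there is unavailable now, since line graphs contain long induced paths and hence are not $P_6$-free. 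Finally, Fact~\ref{fact:comp} yields that $\{\overline{G_n}\}$ has unbounded clique-width and is $(2P_1+\nobreak P_2,\overline{P_2+P_4},\overline{P_1+P_6})$-free; since complementation need not preserve the atom property, I would verify separately, by the analogous connectivity argument on $\overline{G_n}$ (using that in $\overline{G_n}$ every vertex has only boundedly many non-neighbours inside each dense part), that each $\overline{G_n}$ is also an atom.

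The main obstacle is pinning down the right family. The triple of forbidden subgraphs is already tight: $\overline{2P_1+P_2}$-freeness rules out the standard ``make one side of a bipartite graph into a clique'' split-graph trick, which creates diamonds, while $(P_2+\nobreak P_4)$- and $(P_1+\nobreak P_6)$-freeness force the long induced paths of the grid/wall skeleton to be genuinely destroyed rather than merely hidden, and this is not automatic, unlike in the $2P_2$-free split-graph setting; all of this must be accomplished while keeping clique-width unbounded and the graph an atom, both before and after complementation. The connectivity arguments for the atom claims and the case checks for the forbidden subgraphs are routine once the construction is fixed; it is the simultaneous satisfaction of all these constraints in one family, or the verification that an already-known family happens to meet them, that carries the real work.
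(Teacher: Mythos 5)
Your overall template (explicit family, freeness checks, atom check, complementation via Fact~\ref{fact:comp}, separate atom check for the complement) matches the paper's, and you are right that the atom property must be re-verified after complementation. But there is a genuine gap: the proposal never produces the family, and the route you single out as primary is aimed at a strictly stronger target than the lemma needs, one that is not known to be achievable. You propose to replace $(P_1+\nobreak P_6)$-freeness by $P_6$-freeness ``since $P_6\ssi P_1+\nobreak P_6$''; however, a family of $(\diamondgraph,P_6)$-free \emph{atoms} of unbounded clique-width would resolve Open Problem~\ref{o-atoms}\ref{openprob:diamond}, which the paper leaves open, so this reduction cannot be expected to go through and in particular neither cited family can simply be checked to ``already have all three properties'': the known $(\diamondgraph,P_2+\nobreak P_4)$-free construction of~\cite{DHP0} (subdivided wall with the two original bipartition classes made complete to each other) is indeed $P_6$-free, but it is riddled with clique cut-sets (each subdivision vertex is separated by the clique formed by its two neighbours), and every known way of repairing this destroys $P_6$-freeness.

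The paper's proof hinges exactly on this tension: it adds a single apex vertex $x$ complete to the set $B$ of subdivision vertices. This one move does double duty -- it makes $H_n$ (and, after a separate argument, $\overline{H_n}$) an atom, and it keeps the graph $\diamondgraph$-free (the neighbourhood of $x$ is independent) and $(P_2+\nobreak P_4)$-free -- but it necessarily creates induced copies of $P_6$ (e.g.\ $a_1,c,a_2,b_1,x,b_2$ with suitable subdivision vertices $b_1,b_2$); the construction only avoids $P_1+\nobreak P_6$, via an argument that every induced $P_6$ must pass through $x$ and that the extra isolated vertex then cannot be placed. Your proposal neither contains this idea nor any concrete substitute: the ``thicken a bipartite skeleton by a bounded number of cliques'' sketch is not worked out (and adding cliques immediately threatens $\diamondgraph$-freeness, which you claim comes for free from bipartiteness of the skeleton), and you yourself flag that ``pinning down the right family \ldots carries the real work.'' Since that is precisely the content of the lemma, the proof attempt is incomplete, and its guiding heuristic (seek $P_6$-freeness) points away from the construction that actually works.
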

\begin{proof}
We modify the construction of the graph~$G_n$, which was used in~\cite{DHP0} to prove that $(\overline{2P_1+P_2},P_2+\nobreak P_4)$-free graphs have unbounded clique-width.
Consider a wall of height $n\geq 2$.
A wall is a bipartite graph; let~$A$ and~$C$ be the two sets in its bipartition.
Consider a $1$-subdivision of the wall and let~$B$ be the set of vertices introduced by the subdivision.
Finally, we add edges to make~$A$ complete to~$C$.
Let~$G_n$ be the resulting graph.
Then~$G_n$ is $(\overline{2P_1+P_2},P_2+\nobreak P_4)$-free and the family of such graphs~$G_n$ has unbounded clique-width~\cite{DHP0} (the former also follows by inspection and the latter follows from combining Lemma~\ref{lem:walls} with Fact~\ref{fact:bip}).
Let~$H_n$ be the graph obtained from~$G_n$ by adding a vertex~$x$ complete to~$B$, see \figurename~\ref{fig:diamond-P2+P4}.
Since~$H_n$ contains~$G_n$ as an induced subgraph, the family of graphs~$H_n$ has unbounded clique-width. 

\begin{figure}[h]
\begin{center}
\begin{tikzpicture}[scale=0.7,  every node/.style={inner sep=1pt,minimum size=1mm,circle}  ,xscale=-1]

\node  (cO-10) at (-1*2,2)  {$C$};
\node  (aE-11) at (-1*2,4)  {$A$};
  
   \foreach \y in {0,1}{
     \foreach \x in {0,1,2}{
 
        \node (cE\x\y)   at (\x*4,\y*4)    {$C$};
        \node (aE\x\y)   at (\x*4+2,\y*4)  {$A$};
        \node (bcE\x\y)  at (\x*4+1,\y*4)  {$B$};
        \node (baE\x\y)  at (\x*4+3,\y*4)  {$B$};
        \draw (cE\x\y)--(bcE\x\y)--(aE\x\y)--(baE\x\y);
       
        \node (aO\x\y)   at (\x*4,\y*4+2)   {$A$};
        \node (cO\x\y)   at (\x*4+2,\y*4+2) {$C$};
        \node (baO\x\y)  at (\x*4+1,\y*4+2)   {$B$};
        \node (bcO\x\y)  at (\x*4+3,\y*4+2)   {$B$};
        \draw (aO\x\y)--(baO\x\y)--(cO\x\y)--(bcO\x\y);
         }
    \node (cE3\y) at (3*4,\y*4) {$C$};
    \node (aO3\y) at (3*4,\y*4+2) {$A$};
    
    \foreach \x in {0,1,2}{
            \pgfmathtruncatemacro{\j}{\x+1}
        \draw (baE\x\y)--(cE\j\y);
        \draw (bcO\x\y)--(aO\j\y);
    }
     
     \foreach \x in {0,1,2,3}{
        \node (bb\x\y) at (\x*4,\y*4+1) {$B$};
         \draw (cE\x\y)--(bb\x\y) -- (aO\x\y) ;
         \ifnum \y=1
            \node (bb\x2) at (\x*4-2,\y*2+1) {$B$};
            \pgfmathtruncatemacro{\j}{\x-1}
            \draw ({cO\j0})--(bb\x2)--(aE\j1);
         \fi
         }
  }
 
\node (bcO-10) at (-1*2+1,2) {$B$};
\node (baE-11) at (-1*2+1,4) {$B$};
\draw (cO-10)--(bcO-10)--(aO00);
\draw (aE-11)--(baE-11)--(cE01);
\node (x) at (5,-2) {$x$};

\foreach \x in {-1,1.6,3,3.5,3.8,4,4.2,4.5,5, ,6,7.5,9}{
    \draw (\x+1,-1)--(x);
    }
    \node[draw=black, rectangle, align=left, inner sep=4pt] (label) at (15,3) {\\ $A$ is complete to $C$\\ $x$ is complete to $B$};

;

\end{tikzpicture}
\caption{The graph~$H_n$ from the proof of Lemma~\ref{lem:diamond-P2+P4} ($n=3$ shown).
Vertices are denoted~$A$, $B$ or~$C$ if they are in the corresponding set.}
\label{fig:diamond-P2+P4}
\end{center}
\end{figure}
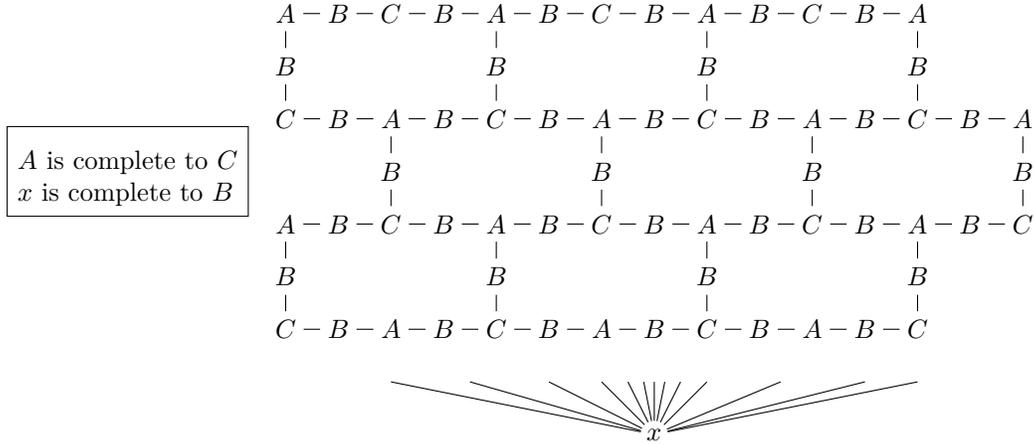

Now~$G_n$ is $(\overline{2P_1+P_2},P_2+\nobreak P_4)$-free, so if~$H_n$ contains an induced copy of~$\overline{2P_1+P_2}$ or~$P_2+\nobreak P_4$, then one of its vertices must be~$x$.
The neighbourhood of~$x$ in~$H_n$ is~$B$, which is an independent set.
Every vertex of~$\overline{2P_1+P_2}$ has two neighbours that are adjacent to each other, so~$H_n$ is $\overline{2P_1+P_2}$-free.
Suppose, for contradiction, that~$H_n$ contains an induced~$P_2+\nobreak P_4$, say with vertex set~$Y$.
As observed above, $x \in Y$.
Now~$x$ has either one or two neighbours in~$H_n[Y]$.
If~$x$ has one neighbour in~$H_n[Y]$, then this neighbour must be in~$B$, and then there can be no other vertices in $B \cap Y$, so $|(A \cup C) \cap Y| =4$, but $H_n[A \cup C]$ is a complete bipartite graph, so $H_n[(A \cup C) \cap Y]$ is isomorphic to $C_4,K_{1,3}$ or~$4P_1$, contradicting the fact that~$P_2+\nobreak P_4$ is $(C_4,K_{1,3},4P_1)$-free.
Therefore~$x$ has two neighbours in~$H_n[Y]$, so it is in the~$P_4$ component of~$H_n[Y]$.
These two neighbours of~$x$ must be in~$B$, so the~$P_4$ component containing~$x$ must contain a vertex of~$A$ or~$C$ and the remaining~$P_2$ component of~$H_n[Y]$ must lie in $A \cup C$.
Since $H_n[A \cup C]$ is a complete bipartite graph, it follows that there is an edge between the~$P_2$ component and the~$P_4$ component.
This contradiction implies that~$H_n$ is indeed $(P_2+\nobreak P_4)$-free.

Suppose, for contradiction, that~$H_n$ contains an induced~$P_1+\nobreak P_6$, say with vertex set~$\{v\}\cup Y$ where~$v$ is the vertex in the~$P_1$ component and~$Y$ is the vertex set of the~$P_6$ component.
If there are three vertices in~$B \cap Y$ and $x \in Y$, then~$H_n[Y]$ contains an induced~$K_{1,3}$, a contradiction.
Note that~$P_6$ has two vertices of degree~$1$ and four vertices of degree~$2$, but every vertex in~$B$ has only two neighbours apart from~$x$: one in each of~$A$ and~$C$.
Therefore, if there are three vertices in~$B \cap Y$, then one of these vertices~$b$ must have neighbours $a \in A \cap Y$ and $c \in C \cap Y$, in which case $H_n[a,b,c]$ is a~$K_3$, a contradiction.
We conclude that there are at most two vertices in~$B \cap Y$.
If $x\notin Y$ then there are at least four vertices in $(A\cup C)\cap Y$, contradicting the fact that~$P_6$ is $(C_4,K_{1,3},4P_1)$-free. 
Therefore, $x\in Y$, and so $v\in A\cup C$ (say~$A$) because~$x$ is complete to~$B$.
Since $\{x\} \cup A$ is independent and~$P_6$ is $4P_1$-free, it follows that $|A \cap Y| \leq 2$, and so there is at least one vertex $c \in C \cap Y$.
But~$c$ is complete to~$A$, so it is adjacent to~$v$, a contradiction.
This contradiction implies that $H_n$ is indeed $(P_1+\nobreak P_6)$-free.

It remains to show that~$H_n$ and~$\overline{H_n}$ are atoms.
Suppose, for contradiction, that~$H_n$ contains a clique cut-set~$X$.
If $x \in X$ then~$X$ contains at most one additional vertex, which must lie in~$B$; it is easy to verify that $H_n \setminus X$ is connected in this case.
We may therefore assume that $x \notin X$.
Since~$A$, $B$ and~$C$ are independent sets, $X$ contains at most one vertex in each of these sets.
Since $x\notin X$, and~$x$ is complete to~$B$, all vertices of~$B\setminus X$ are in the same component of $H_n \setminus X$.
Since every vertex of~$B$ has a neighbour in~$A$ and~$C$, there must be a vertex in $B \setminus X$ that has neighbours in both $A \setminus X$ and $C \setminus X$.
Since~$A$ is complete to~$C$, it follows that every vertex in $V(H_n) \setminus X$ is in the same component of $H_n \setminus X$.
This contradiction implies that~$H_n$ is indeed an atom.
Now suppose, for contradiction, that~$\overline{H_n}$ contains a clique cut-set~$X$.
Since~$A$ is anti-complete to~$C$ in~$\overline{H_n}$, $X$ cannot contain vertices in both~$A$ and~$C$; by symmetry we may assume that~$X$ does not contain any vertices of~$C$.
Now~$C$ is a clique and, since every vertex of~$B$ has a neighbour in~$C$, every vertex in $(B \cup C) \setminus X$ is in the same component of $\overline{H_n} \setminus X$.
If $x \notin X$, then every vertex in $A\setminus X$ is adjacent to~$x$, which is complete to~$C$, so every vertex in $V(\overline{H_n}) \setminus X$ is in the same component of $\overline{H_n} \setminus X$, a contradiction.
We may therefore assume that $x \in X$.
Then no vertex of~$B$ is in~$X$, so $X \subseteq A \cup \{x\}$.
Since every vertex of~$A$ has a neighbour in~$B$, it follows that every vertex of~$A$ has a neighbour in $B\setminus X=B$.
Therefore every vertex of $V(\overline{H_n}) \setminus X$ is in the same component of $\overline{H_n} \setminus X$.
This contradiction implies that~$\overline{H_n}$ is indeed an atom.
\end{proof}

\begin{figure}[h]
\begin{center}
\begin{tabular}{ccccc}
\scalebox{0.6}{
\begin{minipage}{0.25\textwidth}
\begin{center}
\begin{tikzpicture}[every node/.style={circle,fill, minimum size=0.07cm}]
\node at (0,0) {};
\node at (0,2) {};
\node at (1,0) {};
\node at (1,2) {};
\node at (2,0) {};
\node at (2,2) {};
\draw (1,0) -- (1,2);
\draw (2,0) -- (2,2);
\end{tikzpicture}
\end{center}
\end{minipage}}
&
\scalebox{0.6}{
\begin{minipage}{0.25\textwidth}
\begin{center}
\begin{tikzpicture}[every node/.style={circle,fill, minimum size=0.07cm}]
\node at (0,0) {};
\node at (0,2) {};
\node at (1,0) {};
\node at (1,2) {};
\node at (2,0) {};
\node at (2,2) {};
\draw (1,0) -- (1,2);
\draw (2,0) -- (2,2) -- (1,0);
\end{tikzpicture}
\end{center}
\end{minipage}}
&
\scalebox{0.6}{
\begin{minipage}{0.25\textwidth}
\begin{center}
\begin{tikzpicture}[every node/.style={circle,fill, minimum size=0.07cm}]
\node at (0,0) {};
\node at (0,2) {};
\node at (1,0) {};
\node at (1,2) {};
\node at (2,0) {};
\node at (2,2) {};
\draw (2,0) -- (2,2);
\end{tikzpicture}
\end{center}
\end{minipage}}
&
\scalebox{0.6}{
\begin{minipage}{0.25\textwidth}
\begin{center}
\begin{tikzpicture}[every node/.style={circle,fill, minimum size=0.07cm}]
\node at (0,0) {};
\node at (0,2) {};
\node at (1,0) {};
\node at (1,2) {};
\node at (2,0) {};
\node at (2,2) {};
\draw (0,0) -- (0,2);
\draw (1,0) -- (1,2);
\draw (2,0) -- (2,2);
\end{tikzpicture}
\end{center}
\end{minipage}}
&
\scalebox{0.6}{
\begin{minipage}{0.25\textwidth}
\begin{center}
\begin{tikzpicture}[every node/.style={circle,fill, minimum size=0.07cm}]
\node at (0,0) {};
\node at (1,1.73205080757) {};
\node at (2,0) {};
\draw (0,0) -- (1,1.73205080757) -- (2,0) -- (0,0);
\end{tikzpicture}
\end{center}
\end{minipage}}
\\
\\
$2P_1+\nobreak 2P_2$ & $2P_1+\nobreak P_4$ & $4P_1+\nobreak P_2$ & $3P_2$ & $C_{2k+1}$ for $k \geq 1$\\
&&&& ($k=1$ shown)
\\
\\
\scalebox{0.6}{
\begin{minipage}{0.25\textwidth}
\begin{center}
\begin{tikzpicture}[every node/.style={circle,fill, minimum size=0.07cm}]
\node at (0:1) {};
\node at (60:1) {};
\node at (120:1) {};
\node at (180:1) {};
\node at (240:1) {};
\node at (300:1) {};
\draw (120:1) -- (180:1) -- (240:1) -- (300:1) -- (0:1) -- (60:1) -- (120:1);
\draw (0:1) -- (120:1) -- (240:1) -- (0:1);
\draw (60:1) -- (180:1) -- (300:1) -- (60:1);
\draw (0:1) -- (180:1);
\end{tikzpicture}
\end{center}
\end{minipage}}
&
\scalebox{0.6}{
\begin{minipage}{0.25\textwidth}
\begin{center}
\begin{tikzpicture}[every node/.style={circle,fill, minimum size=0.07cm},rotate=180]
\node at (0:1) {};
\node at (60:1) {};
\node at (120:1) {};
\node at (180:1) {};
\node at (240:1) {};
\node at (300:1) {};
\draw (120:1) -- (180:1) -- (240:1) -- (300:1) -- (0:1) -- (60:1) -- (120:1);
\draw (120:1) -- (240:1) -- (0:1);
\draw (180:1) -- (300:1) -- (60:1);
\draw (60:1) -- (240:1);
\draw (120:1) -- (300:1);
\end{tikzpicture}
\end{center}
\end{minipage}}
&
\scalebox{0.6}{
\begin{minipage}{0.25\textwidth}
\begin{center}
\begin{tikzpicture}[every node/.style={circle,fill, minimum size=0.07cm}]
\node at (0:1) {};
\node at (60:1) {};
\node at (120:1) {};
\node at (180:1) {};
\node at (240:1) {};
\node at (300:1) {};
\draw (120:1) -- (180:1) -- (240:1) -- (300:1) -- (0:1) -- (60:1) -- (120:1);
\draw (0:1) -- (120:1) -- (240:1) -- (0:1);
\draw (60:1) -- (180:1) -- (300:1) -- (60:1);
\draw (60:1) -- (240:1);
\draw (120:1) -- (300:1);
\end{tikzpicture}
\end{center}
\end{minipage}}
&
\scalebox{0.6}{
\begin{minipage}{0.25\textwidth}
\begin{center}
\begin{tikzpicture}[every node/.style={circle,fill, minimum size=0.07cm}]
\node at (0:1) {};
\node at (60:1) {};
\node at (120:1) {};
\node at (180:1) {};
\node at (240:1) {};
\node at (300:1) {};
\draw (120:1) -- (180:1) -- (240:1) -- (300:1) -- (0:1) -- (60:1) -- (120:1);
\draw (0:1) -- (120:1) -- (240:1) -- (0:1);
\draw (60:1) -- (180:1) -- (300:1) -- (60:1);
\end{tikzpicture}
\end{center}
\end{minipage}}
&
\scalebox{0.6}{
\begin{minipage}{0.25\textwidth}
\begin{center}
\begin{tikzpicture}[every node/.style={circle,fill, minimum size=0.07cm}]
\node at (0,0) {};
\node at (1,1.73205080757) {};
\node at (2,0) {};
\end{tikzpicture}
\end{center}
\end{minipage}}
\\
\\
$\overline{2P_1+2P_2}$ & $\overline{2P_1+P_4}$ & $\overline{4P_1+P_2}$ & $\overline{3P_2}$ & $\overline{C_{2k+1}}$ for $k \geq 1$\\
&&&& ($k=1$ shown)
\end{tabular}
\end{center}
\caption{\label{fig:bip-2P12P2-2P1P4-4P1P2-3P2-forb}The forbidden induced subgraphs for the classes of $(2P_1+\nobreak 2P_2,2P_1+\nobreak P_4, 4P_1+\nobreak P_2, 3P_2)$-free bipartite graphs and $(\overline{2P_1+2P_2},\overline{2P_1+P_4},\overline{4P_1+P_2},\overline{3P_2})$-free co-bipartite graphs mentioned in Lemma~\ref{lem:bip-2P12P2-2P1P4-4P1P2-3P2}.}
\end{figure}
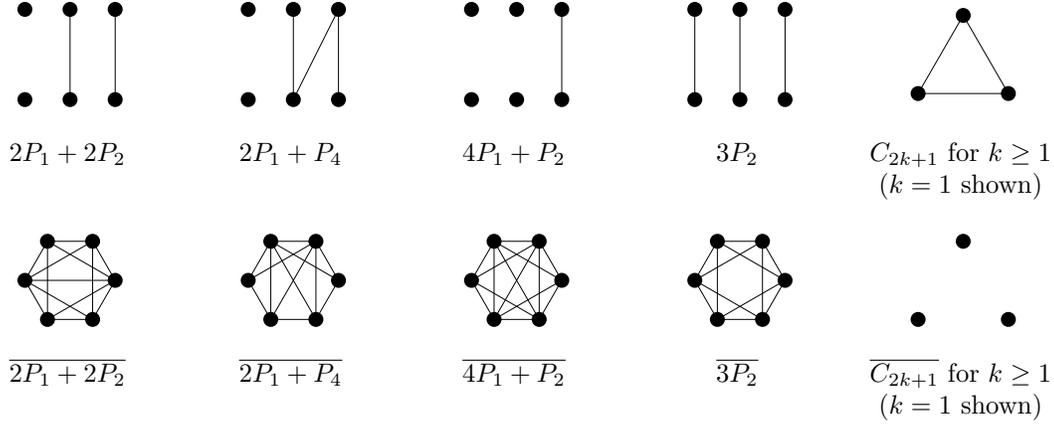

\begin{lemma}\label{lem:bip-2P12P2-2P1P4-4P1P2-3P2}
The class of $(2P_1+\nobreak 2P_2,2P_1+\nobreak P_4, 4P_1+\nobreak P_2, 3P_2)$-free bipartite atoms and the class of $(\overline{2P_1+2P_2},\overline{2P_1+P_4},\overline{4P_1+P_2},\overline{3P_2})$-free co-bipartite atoms have unbounded clique-width (see \figurename~\ref{fig:bip-2P12P2-2P1P4-4P1P2-3P2-forb} for illustrations of the forbidden induced subgraphs).
\end{lemma}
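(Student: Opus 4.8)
The plan is to follow the template of the proofs of Lemmas~\ref{lem:diamond-5P1} and~\ref{lem:diamond-P2+P4}: start from a known family of bipartite graphs of unbounded clique-width that are $(2P_1+\nobreak 2P_2,2P_1+\nobreak P_4, 4P_1+\nobreak P_2, 3P_2)$-free, make sure (after a bounded local modification if necessary) that the graphs in the family are atoms, and then obtain the co-bipartite statement by complementation via Fact~\ref{fact:comp}. The starting point is the family of bipartite graphs of unbounded clique-width used in the classification of $H$-free bipartite graphs (see, e.g.,~\cite{DP14}); write $G_n$ for the $n$-th such graph, with bipartition classes $A$ and $C$. The reason no induced $3P_2$ occurs is that $G_n$ is, essentially, an edge-union of boundedly many chain graphs, and two disjoint edges lying in the same chain graph always have an edge joining their endpoints, since chain graphs are $2P_2$-free (Lemma~\ref{l-e1}); the absence of the three remaining small forbidden subgraphs relies on additional features of the construction, and unboundedness of the clique-width of $\{G_n\}$ follows from prior work (or from Lemma~\ref{lem:generalunbounded}).

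Next I would deal with the atom property. If the graphs $G_n$ are already atoms we keep them; otherwise we modify them. Since $G_n$ is bipartite, every clique cut-set has at most two vertices, so the only obstructions to being an atom are a cut-vertex or a pair of vertices whose removal disconnects the graph. I would eliminate these obstructions by adding a bounded number of new vertices to $A$ and to $C$, each adjacent to a carefully chosen large subset of the opposite class. The delicate point --- and the reason the usual pendant-vertex or subdivided-gadget tricks cannot be used --- is that $3P_2$-freeness forbids induced matchings of size three, so a new vertex must be made adjacent to most of the opposite bipartition class; one then checks that the new vertices create no induced $2P_1+\nobreak 2P_2$, $2P_1+\nobreak P_4$, $4P_1+\nobreak P_2$ or $3P_2$. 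As $G_n$ is still an induced subgraph of the resulting graph $G_n'$, the family $\{G_n'\}$ has unbounded clique-width by Fact~\ref{fact:del-vert}, and each $G_n'$ is a bipartite $(2P_1+\nobreak 2P_2,2P_1+\nobreak P_4, 4P_1+\nobreak P_2, 3P_2)$-free atom.

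For the co-bipartite statement, consider $\overline{G_n'}$. It is co-bipartite and, since $G_n'$ is $(2P_1+\nobreak 2P_2,2P_1+\nobreak P_4, 4P_1+\nobreak P_2, 3P_2)$-free, the graph $\overline{G_n'}$ is $(\overline{2P_1+2P_2},\overline{2P_1+P_4},\overline{4P_1+P_2},\overline{3P_2})$-free; by Fact~\ref{fact:comp} the family $\{\overline{G_n'}\}$ has unbounded clique-width. To see that $\overline{G_n'}$ is an atom, write $A$ and $C$ for its two cliques. A clique cut-set $X$ of $\overline{G_n'}$ is either contained in $A$, contained in $C$, or equal to a pair $\{a,c\}$ with $a\in A$ and $c\in C$ non-adjacent in $G_n'$. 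Such a pair $\{a,c\}$ is complete to $V(\overline{G_n'})\setminus\{a,c\}$, and $\overline{G_n'}\setminus\{a,c\}$ fails to be connected only if $G_n'$ is, in effect, complete bipartite after the deletion of $a$ and $c$, which it is not. If $X\subseteq A$ and $A\setminus X\neq\emptyset$ (the case $X\subseteq C$ being symmetric), then $\overline{G_n'}\setminus X$ fails to be connected only if every vertex of $A\setminus X$ is complete to $C$ in $G_n'$, so in particular $G_n'$ would have a vertex of $A$ complete to $C$. Hence, provided $G_n'$ has no vertex of one bipartition class complete to the other and is not close to being complete bipartite --- properties that the construction has, and that the modification above can be arranged to preserve --- the graph $\overline{G_n'}$ has no clique cut-set. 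Thus $\overline{G_n'}$ is a $(\overline{2P_1+2P_2},\overline{2P_1+P_4},\overline{4P_1+P_2},\overline{3P_2})$-free co-bipartite atom of unbounded clique-width, which completes the proof.

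The main obstacle is the tension between the requirements on the base family: it must be dense enough to avoid an induced $3P_2$ (and the other three small forbidden subgraphs), yet rich enough to have unbounded clique-width, and moreover both it and its complement should consist of atoms. Because the atom-restoring modification cannot rely on sparse gadgets, the real work lies in choosing the extra vertices and then re-verifying that neither one of the four forbidden induced subgraphs nor a clique cut-set in $G_n'$ or $\overline{G_n'}$ has been introduced.
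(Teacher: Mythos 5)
There is a genuine gap: your write-up is a plan rather than a proof, and the part you defer is exactly the content of the lemma. You never fix a concrete family, and the central device you invoke --- adding ``a bounded number of new vertices \ldots adjacent to a carefully chosen large subset of the opposite class'' to destroy clique cut-sets --- is left unspecified, as is the verification that such additions introduce no induced $2P_1+2P_2$, $2P_1+P_4$, $4P_1+P_2$ or $3P_2$ and actually eliminate all cut-vertices and cutting edges. As your own closing paragraph concedes, ``the real work lies in choosing the extra vertices and then re-verifying''; without that choice and verification nothing has been proved. The paper's proof shows that no modification is needed at all: take $G_n$ to be the bipartite complementation (between the two colour classes $V_1,V_2$) of a $1$-subdivided wall; unboundedness of clique-width follows from Lemma~\ref{lem:walls} together with Facts~\ref{fact:comp} and~\ref{fact:bip}, the four freeness conditions hold by inspection (or~\cite{DP14,LV08}), and $G_n$ is \emph{automatically} an atom because any clique cut-set has at most one vertex in each independent set $V_1,V_2$, while every vertex has at most three non-neighbours on the other side, so deleting two vertices cannot disconnect the graph. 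The atom property for $\overline{G_n}$ is then obtained by a counting argument using that every vertex of $\overline{G_n}$ has at most three neighbours across the bipartition, so a clique cut-set meets each of $V_1,V_2$ in at most three vertices and some cross edge survives outside it.

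Your co-bipartite atom argument also contains a concrete error that would need repair even if a base family were fixed. A clique cut-set $X$ of a co-bipartite graph with cliques $A$ and $C$ need not be contained in one side or be a single cross pair $\{a,c\}$: it may contain several vertices of $A$ and several of $C$, provided all cross pairs in $X$ are non-adjacent in the bipartite graph (this is exactly why the paper bounds $|X\cap V_1|,|X\cap V_2|\leq 3$ using the degree-at-most-three structure and then counts). Moreover, the assertion that a cross pair $\{a,c\}$ in the cut-set ``is complete to $V(\overline{G_n'})\setminus\{a,c\}$'' is false whenever $a$ or $c$ has a neighbour on the other side in the bipartite graph, which is always the case here. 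So the case analysis on which your complement argument rests does not go through as stated.
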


\begin{proof}
Let~$H_n$ be a $1$-subdivided wall of height $n \geq 2$ and note that the class of such graphs has unbounded clique-width by Lemma~\ref{lem:walls}.
Note that~$H_n$ is connected and bipartite, say with parts~$V_1$ and~$V_2$.
Let~$G_n$ be the graph obtained from~$H_n$ by applying a bipartite complementation between~$V_1$ and~$V_2$.
By Fact~\ref{fact:bip}, the family of such graphs~$G_n$ also has unbounded clique-width and by Fact~\ref{fact:comp} the family of graphs~$\overline{G_n}$ also has unbounded clique-width (see also \figurename~\ref{fig:bip-2P12P2-2P1P4-4P1P2-3P2}).
Now~$G_n$ is a $(2P_1+\nobreak 2P_2,2P_1+\nobreak P_4, 4P_1+\nobreak P_2, 3P_2)$-free bipartite graph (by inspection, or see e.g.~\cite{DP14,LV08}).

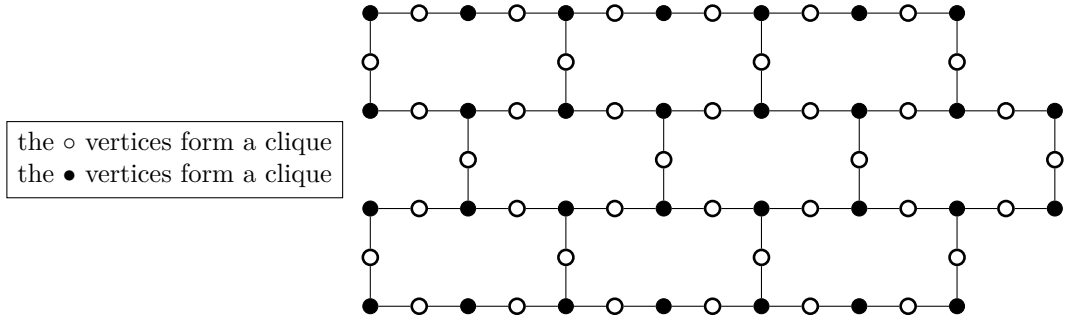
\begin{figure}[h]
\begin{center}
\begin{tikzpicture}[scale=0.65, type1/.style={circle, draw, fill, inner sep=2pt,minimum size=0.5mm}, type2/.style={circle,thick, draw=black, line width=1pt, inner sep=2pt,minimum size=0.5mm},rotate=180   ]

\node[type1]  (aa-11) at (-1*2,2)  {};
\node[type1]  (aa-12) at (-1*2,4)  {};
\node[type2] (bm-11) at (-1*2+1,2) {};
\node[type2]  (bm-12) at (-1*2+1,4) {};
  
   \foreach \y in {0,1,2,3}{
     \foreach \x in {0,1,2}{
        \node[type1] (a\x\y)   at (\x*4,\y*2)    {};
        \node[type1] (aa\x\y)   at (\x*4+2,\y*2)    {};
        \node[type2] (b\x\y)  at (\x*4+1,\y*2)  {};
        \node[type2]  (bb\x\y)  at (\x*4+3,\y*2)  {};
        \draw (a\x\y)--(b\x\y)--(aa\x\y)--(bb\x\y);
         }
    \node[type1] (a3\y) at (3*4,\y*2) {};
    
    \foreach \x in {0,1,2}{
            \pgfmathtruncatemacro{\j}{\x+1}
       \draw (bb\x\y)--(a\j\y);
    }
 }

 \foreach \x in {0,1,2,3}{
    \foreach \y in {0,2}{
        \node[type2]  (bm\x\y) at (\x*4,\y*2+1) {};
        \pgfmathtruncatemacro{\j}{\y+1}
        \draw (a\x\y)--(bm\x\y) -- (a\x\j) ;
        }
     
     \node[type2]  (bm\x2) at (\x*4-2,1*2+1) {};
     \pgfmathtruncatemacro{\j}{\x-1}
     \draw (aa\j1)--(bm\x2)--(aa\j2);
}
 
\draw (aa-11)--(bm-11)--(a01);
\draw (aa-12)--(bm-12)--(a02);

\node[draw=black, rectangle, align=left, inner sep=4pt] (label) at (16,3) {the $\circ$ vertices form a clique \\ the $\bullet$ vertices form a clique};

;

\end{tikzpicture}
 \caption{The graph~$\overline{G_n}$ from the proof of Lemma~\ref{lem:bip-2P12P2-2P1P4-4P1P2-3P2} ($n=3$ shown).}
\label{fig:bip-2P12P2-2P1P4-4P1P2-3P2}
\end{center}
\end{figure}

It remains to show that~$G_n$ and~$\overline{G_n}$ are atoms.
Suppose, for contradiction, that~$X$ is a clique cut-set of~$G_n$.
Since~$V_1$ and~$V_2$ are independent, $X$ contains at most one vertex from each of these sets.
Since every vertex of~$V_1$ has at most three non-neighbours in~$V_2$ and vice versa, it follows that $G_n \setminus X$ is connected.
This contradiction shows that~$G_n$ is indeed an atom.
Now suppose, for contradiction, that~$X$ is a clique cut-set of~$\overline{G_n}$.
If~$X$ is a subset of either~$V_1$ or~$V_2$, say $V_1$, then every vertex of~$V_2$ lies outside~$X$.
Since every vertex of~$V_1$ has a neighbour in~$V_2$, it follows that $\overline{G_n} \setminus X$ is connected in this case.
Therefore~$X$ must contain at least one vertex of~$V_1$ and at least one vertex of~$V_2$.
In~$\overline{G_n}$, every vertex in~$V_1$ has at most three neighbours in~$V_2$ and vice versa, so~$X$ contains at most three vertices from~$V_1$ and at most three vertices from~$V_2$.
In~$\overline{G_n}$, every vertex in~$V_1$ has at most three neighbours in~$V_2$, and every vertex in~$V_2$ has at least one neighbour in~$V_1$, and $|V_2|>12=9+3$.
Hence, there must be a vertex in $V_2 \setminus X$ with a neighbour in $V_1 \setminus X$.
Since~$V_1$ and~$V_2$ are cliques, it follows that $\overline{G_n} \setminus X$ is connected.
This completes the proof.
\end{proof}

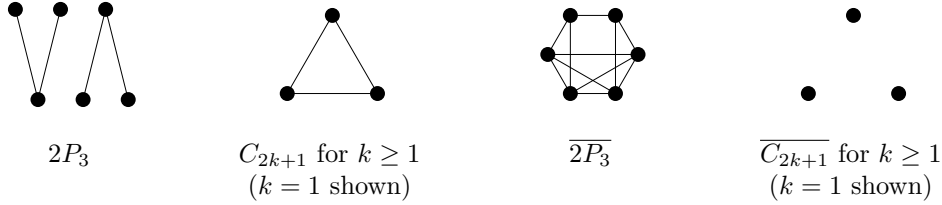
\begin{figure}[h]
\begin{center}
\begin{tabular}{cccc}
\scalebox{0.6}{
\begin{minipage}{0.3\textwidth}
\begin{center}
\begin{tikzpicture}[every node/.style={circle,fill, minimum size=0.07cm}]
\node at (0.5,0) {};
\node at (0,2) {};
\node at (1.5,0) {};
\node at (1,2) {};
\node at (2.5,0) {};
\node at (2,2) {};
\draw (1,2) -- (0.5,0) -- (0,2);
\draw (2.5,0) -- (2,2) -- (1.5,0);
\end{tikzpicture}
\end{center}
\end{minipage}}
&
\scalebox{0.6}{
\begin{minipage}{0.3\textwidth}
\begin{center}
\begin{tikzpicture}[every node/.style={circle,fill, minimum size=0.07cm}]
\node at (0,0) {};
\node at (1,1.73205080757) {};
\node at (2,0) {};
\draw (0,0) -- (1,1.73205080757) -- (2,0) -- (0,0);
\end{tikzpicture}
\end{center}
\end{minipage}}
&
\scalebox{0.6}{
\begin{minipage}{0.3\textwidth}
\begin{center}
\begin{tikzpicture}[every node/.style={circle,fill, minimum size=0.07cm}]
\node at (0:1) {};
\node at (60:1) {};
\node at (120:1) {};
\node at (180:1) {};
\node at (240:1) {};
\node at (300:1) {};
\draw (120:1) -- (180:1) -- (240:1) -- (300:1) -- (0:1) -- (60:1) -- (120:1);
\draw (240:1) -- (0:1) -- (180:1) -- (300:1);
\draw (60:1) -- (300:1) (120:1) -- (240:1);
\end{tikzpicture}
\end{center}
\end{minipage}}
&
\scalebox{0.6}{
\begin{minipage}{0.3\textwidth}
\begin{center}
\begin{tikzpicture}[every node/.style={circle,fill, minimum size=0.07cm}]
\node at (0,0) {};
\node at (1,1.73205080757) {};
\node at (2,0) {};
\end{tikzpicture}
\end{center}
\end{minipage}}
\\
\\
$2P_3$ & $C_{2k+1}$ for $k\geq 1$ & $\overline{2P_3}$ & $\overline{C_{2k+1}}$ for $k\geq 1$\\
& ($k=1$ shown) & & ($k=1$ shown)
\end{tabular}
\end{center}
\caption{\label{fig:bip-2P3}The forbidden induced subgraphs for the classes of $2P_3$-free bipartite graphs and $\overline{2P_3}$-free co-bipartite graphs mentioned in Lemma~\ref{lem:bip-2P3}.}
\end{figure}

\newpage
\begin{lemma}\label{lem:bip-2P3}
The class of $2P_3$-free bipartite atoms and the class of $\overline{2P_3}$-free co-bipartite atoms have unbounded clique-width (see \figurename~\ref{fig:bip-2P3} for illustrations of the forbidden induced subgraphs).
\end{lemma}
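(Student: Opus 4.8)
The plan is to follow the template of Lemma~\ref{lem:bip-2P12P2-2P1P4-4P1P2-3P2}. For the bipartite statement I would take a family $\{G_n\}_{n\ge 2}$ of connected bipartite graphs that is $2P_3$-free and has unbounded clique-width --- either from the literature (a construction witnessing that $2P_3$-free bipartite graphs have unbounded clique-width) or built by hand so as to satisfy the hypotheses of Lemma~\ref{lem:generalunbounded}. I would arrange that, writing $V_1^{(n)},V_2^{(n)}$ for the bipartition classes, both classes grow with $n$ and every vertex has only a bounded number of neighbours in one of the two classes; this bounded-degree feature makes the complement ``almost complete'' and is what the atom argument below needs. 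I would then check directly that each $G_n$ is an atom: since $G_n$ is bipartite, every clique of $G_n$ has at most two vertices, so a clique cut-set consists of at most two vertices, and it suffices to verify that $G_n$ has no cut-vertex and that there is no edge $uv$ with $G_n-\{u,v\}$ disconnected; this should be transparent from the structure of the chosen family (as in Lemmas~\ref{lem:diamond-5P1}--\ref{lem:bip-2P12P2-2P1P4-4P1P2-3P2}).

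For the co-bipartite statement I would avoid an explicit construction entirely. By Fact~\ref{fact:comp} (full complementation is a subgraph complementation with respect to the whole graph), the class of $\overline{2P_3}$-free co-bipartite graphs has unbounded clique-width if and only if the class of $2P_3$-free bipartite graphs does, so the previous step already shows $\overline{2P_3}$-free co-bipartite graphs have unbounded clique-width. The class of $\overline{2P_3}$-free co-bipartite graphs is the class of ${\cal H}$-free graphs for ${\cal H}=\{\overline{2P_3}\}\cup\{\overline{C_{2k+1}}:k\ge 1\}$, and I would check that no graph in ${\cal H}$ has a dominating vertex and none has a pair of non-adjacent vertices complete to the remainder: $\overline{2P_3}$ has no dominating vertex because $2P_3$ has no isolated vertex, and no such non-adjacent pair because no edge of $2P_3$ has both endpoints of degree~$1$; each $\overline{C_{2k+1}}$ is $(2k-2)$-regular on $2k+1$ vertices, so it has neither a dominating vertex nor a pair of non-adjacent vertices complete to the rest. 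Lemma~\ref{lem:no-comp-P1or2P1} then immediately gives that the class of $\overline{2P_3}$-free co-bipartite atoms has unbounded clique-width.

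The main obstacle is the very first step: producing a bipartite family that is \emph{simultaneously} $2P_3$-free and of unbounded clique-width (and well connected enough for the atom check). Note that, in contrast with the co-bipartite side, the generic transfer results do not help here: Lemma~\ref{lem:no-false-twin} fails because $2P_3$ contains a pair of false twins (the two leaves of a $P_3$), and Lemma~\ref{lem:no-comp-P1or2P1} fails because $K_3$ is one of the forbidden induced subgraphs of a bipartite graph and $K_3$ has a dominating vertex, so on the bipartite side atomicity must be argued for the concrete family. Moreover, the obvious candidate constructions do not work: subdivided walls and their bipartite complements contain long induced paths and hence an induced $2P_3$, while $K_{n,n}$ minus a perfect matching is $2P_3$-free but has bounded clique-width, since it is obtained from the bounded-clique-width graph $\overline{nP_2}$ by two subgraph complementations and Fact~\ref{fact:comp} applies. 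So one needs a construction that stays ``compact'' enough to avoid $2P_3$ yet is not a bounded-complementation perturbation of a bounded-clique-width graph, and then checks that it (or a suitable modification of it) is an atom.
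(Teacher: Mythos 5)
Your co-bipartite half is complete and is exactly the paper's argument: complement the bipartite result via Fact~\ref{fact:comp}, view $\overline{2P_3}$-free co-bipartite graphs as ${\cal H}$-free for ${\cal H}=\{\overline{2P_3},\overline{C_3},\overline{C_5},\ldots\}$, verify the hypotheses of Lemma~\ref{lem:no-comp-P1or2P1} (equivalently, that no graph in $\{2P_3,C_3,C_5,\ldots\}$ has a $P_1$- or $P_2$-component), and conclude. Your diagnosis of why the transfer lemmas cannot be used on the bipartite side ($2P_3$ has false twins; $K_3$ is among the forbidden graphs and has a dominating vertex) and why walls and $K_{n,n}$ minus a matching fail is also correct.

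However, the bipartite half has a genuine gap, and you concede it yourself: you never produce the family of $2P_3$-free bipartite graphs of unbounded clique-width, nor verify atomicity for any concrete family, so the statement ``the class of $2P_3$-free bipartite atoms has unbounded clique-width'' is not actually proved. The paper closes exactly this gap by adapting the Lozin--Volz construction: $G_n$ has vertices $v_{i,j}$ ($i\in\{0,\ldots,n\}$, $j\in\{1,\ldots,n\}$) and $w_{i,j}$ ($i\in\{1,\ldots,n\}$, $j\in\{0,\ldots,n\}$), with $v_{i,j}w_{k,0}\in E$ iff $k\geq i$, $w_{i,j}v_{0,k}\in E$ iff $k\geq j$, plus the edges $v_{i,j}w_{i,j}$ and $v_{0,j}w_{i,0}$; this graph is $2P_3$-free, bipartite, and has unbounded clique-width by Lemma~\ref{lem:generalunbounded}. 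Moreover, your hope that the literature graph is ``transparently'' an atom is false for this construction: $\{v_{0,n},w_{n,0}\}$ is a two-vertex clique cut-set of $G_n$ (it isolates the edge $v_{n,n}w_{n,n}$). The paper's ``suitable modification'' is concrete: delete $v_{n,n}$ and $w_{n,n}$ (unbounded clique-width is preserved by Fact~\ref{fact:del-vert}), and then show that in the resulting graph $H_n$ any clique cut-set would have to contain both $v_{0,n}$ and $w_{n,0}$ — since $v_{0,n}$ is complete to the $w$-side and every $v$-vertex has at least two $w$-neighbours, and symmetrically — while deleting those two vertices does not disconnect $H_n$. Both the explicit construction and this atomicity argument are missing from your proposal, so as written it establishes only the co-bipartite half of the lemma.
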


\begin{proof}
We adapt the construction of the graph~$G_n$, which was used by Lozin and Volz~\cite{LV08} to show that the class of $2P_3$-free bipartite graphs has unbounded clique-width.
For $n \geq 3$, construct the graph~$G_n$ as follows.
Let the vertex set of~$G_n$ be $\{v_{i,j} \; | \; i \in \{0,\ldots,n\}, j \in \{1,\ldots,n\}\} \cup \{w_{i,j} \; | \; i \in \{1,\ldots,n\}, j \in \{0,\ldots,n\}\}$.
For $i,j,k \in \{1,\ldots,n\}$, add an edge between~$v_{i,j}$ and~$w_{k,0}$ if $k \geq i$ and add an edge between $w_{i,j}$ and~$v_{0,k}$ if $k \geq j$.
For each $i,j \in \{1,\ldots,n\}$, add an edge between~$v_{i,j}$ and~$w_{i,j}$ and an edge between~$v_{0,j}$ and~$w_{i,0}$.
Let~$G_n$ be the resulting graph.
Then~$G_n$ is a $2P_3$-free bipartite graph and the family of such graphs has unbounded clique-width~\cite{LV08} (the former can also be seen by inspection and the latter follows from Lemma~\ref{lem:generalunbounded}).
Therefore the class of $2P_3$-free bipartite graphs has unbounded clique-width, and so Fact~\ref{fact:comp} implies that the class of $\overline{2P_3}$-free co-bipartite graphs has unbounded clique-width.

We observe that every graph in $\{\overline{2P_3},\overline{C_3},\overline{C_5},\overline{C_7},\ldots\}$ has no dominating vertex and no two non-adjacent vertices that are complete to the remainder of the graph.
Therefore, by Lemma~\ref{lem:no-comp-P1or2P1}, it follows that the class of $\overline{2P_3}$-free co-bipartite atoms has unbounded clique-width.
Now let~$H_n$ be the graph obtained from~$G_n$ by deleting~$v_{n,n}$ and~$w_{n,n}$ (see \figurename~\ref{fig:L15}) and note that~$H_n$ is a $2P_3$-free bipartite graph.
By Fact~\ref{fact:del-vert}, the family of graphs~$H_n$ has unbounded clique-width.

\begin{figure}[h]
\begin{center}
\includegraphics[scale=0.7,page=1]{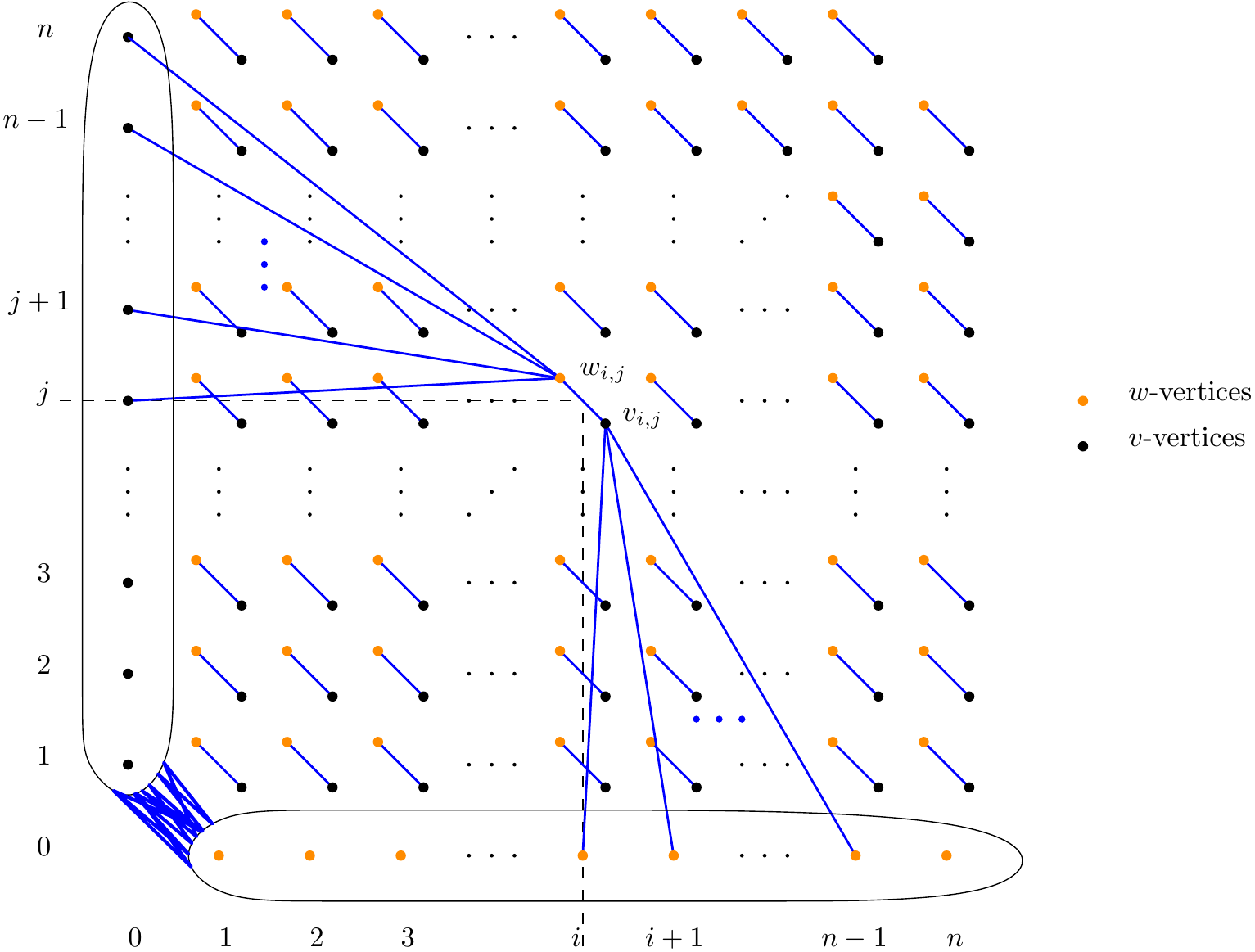}
\end{center}
\caption{The graph~$H_n$ from the proof of Lemma~\ref{lem:bip-2P3}.
For clarity, the edges incident to~$v_{i,j}$ and~$w_{i,j}$ when $i,j \in \{1,\ldots,n\}$ are depicted for only one such pair of vertices.}
\label{fig:L15}
\end{figure}

We now prove that the class of $2P_3$-free bipartite atoms has unbounded clique-width.
Let~$V$ be the set of vertices in the $2P_3$-free bipartite graph~$H_n$ of the form~$v_{i,j}$ and let~$W$ be the set of vertices in~$H_n$ of the form~$w_{i,j}$ and note that~$V$ and~$W$ are independent sets.
Suppose, for contradiction, that~$H_n$ has a clique cut-set~$X$.
Since~$H_n$ is bipartite, every clique cut-set in~$H_n$ contains at most one vertex from each part, so $|X| \leq 2$.
If~$X$ does not contain~$v_{0,n}$, then every vertex in $W\setminus X$ is in the same component of $H_n \setminus X$.
Since every vertex in~$V$ has at least two neighbours in~$W$, and at most one vertex of~$W$ is in~$X$, it follows that every vertex of $V \setminus X$ is in the same component of $H_n \setminus X$, and so~$H_n \setminus X$ is connected.
This contradiction implies that $v_{0,n} \in X$.
By symmetry, $w_{n,0} \in X$.
By construction, deleting~$v_{0,n}$ and~$w_{n,0}$ does not disconnect~$H_n$, so~$H_n$ is indeed an atom.
This completes the proof.
\end{proof}

\begin{figure}[h]
\begin{center}
\begin{tabular}{cccc}
\scalebox{0.6}{
\begin{minipage}{0.3\textwidth}
\begin{center}
\begin{tikzpicture}[every node/.style={circle,fill, minimum size=0.07cm}]
\node at (0,0) {};
\node at (0,2) {};
\node at (2,0) {};
\node at (2,2) {};
\draw (0,0) -- (0,2) -- (2,2) -- (2,0) -- (0,0) -- (2,2);
\draw (0,2) -- (2,0);
\end{tikzpicture}
\end{center}
\end{minipage}}
&
\scalebox{0.6}{
\begin{minipage}{0.3\textwidth}
\begin{center}
\begin{tikzpicture}[every node/.style={circle,fill, minimum size=0.07cm}]
\node at (0,0) {};
\node at (0,2) {};
\node at (2,0) {};
\node at (2,2) {};
\node at (1,1) {};
\draw (2,2) -- (2,0) -- (0,0) -- (0,2);
\end{tikzpicture}
\end{center}
\end{minipage}}
&
\scalebox{0.6}{
\begin{minipage}{0.3\textwidth}
\begin{center}
\begin{tikzpicture}[every node/.style={circle,fill, minimum size=0.07cm}]
\node at (0,0) {};
\node at (0,2) {};
\node at (2,0) {};
\node at (2,2) {};
\end{tikzpicture}
\end{center}
\end{minipage}}
&
\scalebox{0.6}{
\begin{minipage}{0.3\textwidth}
\begin{center}
\begin{tikzpicture}[every node/.style={circle,fill, minimum size=0.07cm}]
\node at (0,0) {};
\node at (0,2) {};
\node at (2,0) {};
\node at (2,2) {};
\node at (1,1) {};
\draw (2,2) -- (2,0) -- (0,0) -- (0,2);
\draw (0,0) -- (1,1);
\draw (0,2) -- (1,1);
\draw (2,0) -- (1,1);
\draw (2,2) -- (1,1);
\end{tikzpicture}
\end{center}
\end{minipage}}\\
\\
$K_4$ & $P_1+\nobreak P_4$ & $4P_1$ & $\overline{P_1+P_4}$\\
\end{tabular}
\end{center}
\caption{\label{fig:4P_1-gem}The forbidden induced subgraphs for the classes of $(K_4,P_1+\nobreak P_4)$-free graphs and $(4P_1,\overline{P_1+P_4})$-free graphs mentioned in Lemma~\ref{lem:4P_1-gem}.}
\end{figure}

\begin{lemma}\label{lem:4P_1-gem}
The class of $(K_4,P_1+\nobreak P_4)$-free atoms and the class of $(4P_1,\overline{P_1+P_4})$-free atoms have unbounded clique-width (see \figurename~\ref{fig:4P_1-gem} for illustrations of the forbidden induced subgraphs).
\end{lemma}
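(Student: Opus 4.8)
The plan is to combine a known unboundedness result for the whole class with the atom machinery of Section~\ref{s-unbounded}, treating the two classes separately because of the complementation subtlety pointed out in Section~\ref{s-triplet}.

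\emph{The class of $(K_4,P_1+\nobreak P_4)$-free atoms.} By Theorem~\ref{thm:classification2} the class of $(K_4,P_1+\nobreak P_4)$-free graphs has unbounded clique-width. Moreover, neither $K_4$ nor $P_1+\nobreak P_4$ contains a pair of false twins: $K_4$ has no non-adjacent pair of vertices at all, and in $P_1+\nobreak P_4$ the isolated vertex is the unique vertex with empty neighbourhood while the four vertices of the $P_4$ have pairwise distinct neighbourhoods. Hence Lemma~\ref{lem:no-false-twin}, applied with ${\cal H}=\{K_4,P_1+\nobreak P_4\}$, immediately yields that the class of $(K_4,P_1+\nobreak P_4)$-free atoms has unbounded clique-width.

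\emph{The class of $(4P_1,\overline{P_1+\nobreak P_4})$-free atoms.} Since $\overline{K_4}=4P_1$ and $\overline{P_1+\nobreak P_4}=\gem$, taking complements turns a $(K_4,P_1+\nobreak P_4)$-free family of unbounded clique-width into a $(4P_1,\gem)$-free one, which by Fact~\ref{fact:comp} still has unbounded clique-width. However, here the two shortcuts are unavailable: $4P_1$ has false twins (all of its vertices are pairwise false twins), so Lemma~\ref{lem:no-false-twin} does not apply, and $\gem$ has a dominating vertex, so Lemma~\ref{lem:no-comp-P1or2P1} does not apply either; the same two obstructions survive in any subclass that still forbids both $4P_1$ and $\gem$. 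So I would carry this out by hand: start from an explicit description of a suitable $(K_4,P_1+\nobreak P_4)$-free construction $G_n$ of unbounded clique-width (extracted from the proof of the unboundedness result above, performing a bounded amount of surgery — adding or deleting a constant number of vertices or edges, in the spirit of Lemmas~\ref{lem:no-false-twin}--\ref{lem:no-comp-P1or2P1} and of the earlier lemmas of this section — if the published version is not already suitable), check by inspection that $G_n$ stays $(K_4,P_1+\nobreak P_4)$-free with unbounded clique-width, and then verify that both $G_n$ and $\overline{G_n}$ are atoms. For $G_n$ this is short: being $K_4$-free it has no clique on more than three vertices, so every clique cut-set has at most three vertices, and one rules out separators of size at most three inducing a clique using the connectivity of the construction. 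For $\overline{G_n}$ one mimics the co-bipartite arguments already used in Lemmas~\ref{lem:diamond-5P1}, \ref{lem:diamond-P2+P4} and~\ref{lem:bip-2P12P2-2P1P4-4P1P2-3P2}: a hypothetical clique cut-set $X$ would leave $\overline{G_n}\setminus X$ disconnected with a non-adjacent pair $a,b$ (so $ab\in E(G_n)$), but since in $\overline{G_n}$ each vertex has only a bounded number of non-neighbours inside each structural block and these blocks are cliques of $\overline{G_n}$, the set $X$ cannot separate them, so $\overline{G_n}\setminus X$ is connected after all, a contradiction.

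The main obstacle is precisely this last step: arranging that $G_n$ \emph{and} $\overline{G_n}$ are free of clique cut-sets at the same time. The first requirement pushes $G_n$ towards being locally sparse yet globally well connected, while the second pushes $\overline{G_n}$ — hence the complement structure of $G_n$ — towards having large, densely interconnected cliques; reconciling these while keeping $(K_4,P_1+\nobreak P_4)$-freeness and unbounded clique-width intact is the delicate part. Everything else is routine bookkeeping with Facts~\ref{fact:del-vert}--\ref{fact:bip}.
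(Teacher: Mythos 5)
Your treatment of the $(K_4,P_1+P_4)$-free atoms is correct and is exactly the paper's argument: the class of $(K_4,P_1+P_4)$-free graphs has unbounded clique-width, neither $K_4$ nor $P_1+P_4$ has a pair of false twins, and Lemma~\ref{lem:no-false-twin} finishes it.

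For the $(4P_1,\overline{P_1+P_4})$-free atoms, however, there is a genuine gap: you never exhibit a construction, and the whole content of this half of the lemma is a concrete family of $(4P_1,\gem)$-free atoms of unbounded clique-width. Your sketch of the atom argument for $\overline{G_n}$ assumes unstated structural properties (that $\overline{G_n}$ decomposes into a few cliques with each vertex having boundedly many non-neighbours inside each block); this is not something you get from an arbitrary $(K_4,P_1+P_4)$-free witness of unbounded clique-width, it has to be engineered. The paper does exactly this engineering: it takes the construction implicit in Kratsch--Schweitzer's GI-completeness proof --- a $1$-subdivided wall with bipartition $(P,Q)$, further $3$-subdivided, with the new vertices split into $A$ (neighbours of $P$), $C$ (neighbours of $Q$) and $B$ (middle), and complementations applied so that $P\cup C$, $Q\cup A$ and $B$ become cliques --- and then proves directly that the resulting graph $H_n$ (which is the $(4P_1,\gem)$-free one) is an atom, using that $A$ and $C$ are anti-complete, that $P,Q,B$ are pairwise anti-complete, and that every vertex of $P\cup Q\cup B$ has neighbours in both cliques $A$ and $C$. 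Moreover, the ``main obstacle'' you identify --- making $G_n$ \emph{and} $\overline{G_n}$ atoms simultaneously --- is self-imposed and unnecessary: since the $(K_4,P_1+P_4)$-free case was already settled via Lemma~\ref{lem:no-false-twin}, only the graph on the $(4P_1,\gem)$-free side needs to be an atom, and indeed the paper never verifies (and does not need) that its complement is one. So the plan's hardest step both remains unexecuted and is aimed at a stronger goal than required.
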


\begin{proof}
For this proof we use a construction that is implicit in the proof in~\cite[Theorem~3]{KS12} that {\sc Graph Isomorphism} is {\sc GI}-complete on the class of $(K_4,P_1+\nobreak P_4)$-free graphs; we give an explicit construction.
Consider a $1$-subdivided wall of height $n \geq 2$.
This graph is bipartite; let~$P$ and~$Q$ be the two parts of its bipartition with the vertices in~$Q$ being the vertices added by the subdivision.
Consider a $3$-subdivision of this $1$-subdivided wall (so the resulting graph is a $7$-subdivided wall).
Partition the vertices introduced by this $3$-subdivision as follows: let~$A$ be the set of vertices that are adjacent to vertices of~$P$,
let~$C$ be the set of vertices that are adjacent to vertices of~$Q$, and let~$B$ be the set of remaining vertices introduced by the $3$-subdivision (which have a neighbour in both~$A$ and~$C$.
Apply complementations to $P\cup C$, $Q\cup A$, and~$B$ (these sets will become cliques).
Let~$H_n$ be the resulting graph (see also \figurename~\ref{fig:4P1-gem}) and note that~$\overline{H_n}$ is $(K_4,P_1+\nobreak P_4)$-free and that the family of such graphs has unbounded clique-width~\cite{Sc17} (the former statement can be seen by inspection and the latter can be seen by combining Lemma~\ref{lem:walls} and Fact~\ref{fact:comp}).
Therefore the class of $(K_4,P_1+\nobreak P_4)$-free graphs has unbounded clique-width.
We observe that neither~$K_4$ nor~$P_1+\nobreak P_4$ contains a pair of false twins. 
Therefore, by Lemma~\ref{lem:no-false-twin}, the class of $(K_4,P_1+\nobreak P_4)$-free atoms has unbounded clique-width.

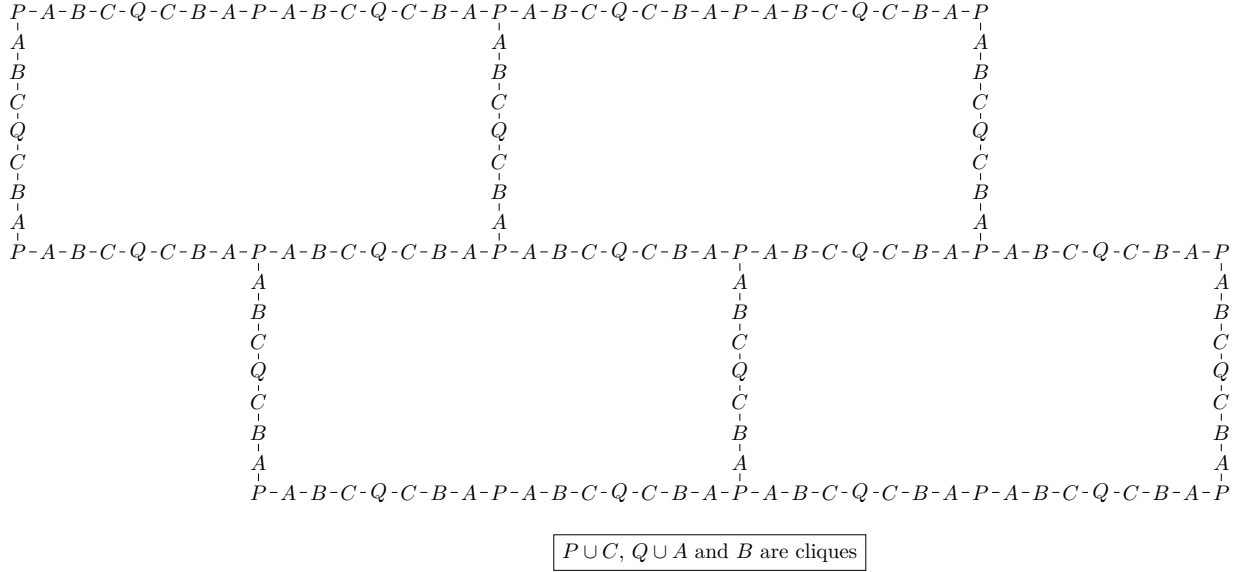
\begin{figure}[h]
\begin{center}
\scalebox{0.8}{
\begin{tikzpicture}[scale=0.5, every node/.style={inner sep=0pt,minimum size=0mm,circle}  ]

\node (p-08-00) at (08,00) {$P$};
\node (a-09-00) at (09,00) {$A$};
\node (b-10-00) at (10,00) {$B$};
\node (c-11-00) at (11,00) {$C$};
\node (q-12-00) at (12,00) {$Q$};
\node (c-13-00) at (13,00) {$C$};
\node (b-14-00) at (14,00) {$B$};
\node (a-15-00) at (15,00) {$A$};
\node (p-16-00) at (16,00) {$P$};
\node (a-17-00) at (17,00) {$A$};
\node (b-18-00) at (18,00) {$B$};
\node (c-19-00) at (19,00) {$C$};
\node (q-20-00) at (20,00) {$Q$};
\node (c-21-00) at (21,00) {$C$};
\node (b-22-00) at (22,00) {$B$};
\node (a-23-00) at (23,00) {$A$};
\node (p-24-00) at (24,00) {$P$};
\node (a-25-00) at (25,00) {$A$};
\node (b-26-00) at (26,00) {$B$};
\node (c-27-00) at (27,00) {$C$};
\node (q-28-00) at (28,00) {$Q$};
\node (c-29-00) at (29,00) {$C$};
\node (b-30-00) at (30,00) {$B$};
\node (a-31-00) at (31,00) {$A$};
\node (p-32-00) at (32,00) {$P$};
\node (a-33-00) at (33,00) {$A$};
\node (b-34-00) at (34,00) {$B$};
\node (c-35-00) at (35,00) {$C$};
\node (q-36-00) at (36,00) {$Q$};
\node (c-37-00) at (37,00) {$C$};
\node (b-38-00) at (38,00) {$B$};
\node (a-39-00) at (39,00) {$A$};
\node (p-40-00) at (40,00) {$P$};

\draw (p-08-00) -- (a-09-00) -- (b-10-00) -- (c-11-00) -- (q-12-00) -- (c-13-00) -- (b-14-00) -- (a-15-00) -- (p-16-00) -- (a-17-00) -- (b-18-00) -- (c-19-00) -- (q-20-00) -- (c-21-00) -- (b-22-00) -- (a-23-00) -- (p-24-00) -- (a-25-00) -- (b-26-00) -- (c-27-00) -- (q-28-00) -- (c-29-00) -- (b-30-00) -- (a-31-00) -- (p-32-00) -- (a-33-00) -- (b-34-00) -- (c-35-00) -- (q-36-00) -- (c-37-00) -- (b-38-00) -- (a-39-00) -- (p-40-00);

\node (p-00-08) at (00,08) {$P$};
\node (a-01-08) at (01,08) {$A$};
\node (b-02-08) at (02,08) {$B$};
\node (c-03-08) at (03,08) {$C$};
\node (q-04-08) at (04,08) {$Q$};
\node (c-05-08) at (05,08) {$C$};
\node (b-06-08) at (06,08) {$B$};
\node (a-07-08) at (07,08) {$A$};
\node (p-08-08) at (08,08) {$P$};
\node (a-09-08) at (09,08) {$A$};
\node (b-10-08) at (10,08) {$B$};
\node (c-11-08) at (11,08) {$C$};
\node (q-12-08) at (12,08) {$Q$};
\node (c-13-08) at (13,08) {$C$};
\node (b-14-08) at (14,08) {$B$};
\node (a-15-08) at (15,08) {$A$};
\node (p-16-08) at (16,08) {$P$};
\node (a-17-08) at (17,08) {$A$};
\node (b-18-08) at (18,08) {$B$};
\node (c-19-08) at (19,08) {$C$};
\node (q-20-08) at (20,08) {$Q$};
\node (c-21-08) at (21,08) {$C$};
\node (b-22-08) at (22,08) {$B$};
\node (a-23-08) at (23,08) {$A$};
\node (p-24-08) at (24,08) {$P$};
\node (a-25-08) at (25,08) {$A$};
\node (b-26-08) at (26,08) {$B$};
\node (c-27-08) at (27,08) {$C$};
\node (q-28-08) at (28,08) {$Q$};
\node (c-29-08) at (29,08) {$C$};
\node (b-30-08) at (30,08) {$B$};
\node (a-31-08) at (31,08) {$A$};
\node (p-32-08) at (32,08) {$P$};
\node (a-33-08) at (33,08) {$A$};
\node (b-34-08) at (34,08) {$B$};
\node (c-35-08) at (35,08) {$C$};
\node (q-36-08) at (36,08) {$Q$};
\node (c-37-08) at (37,08) {$C$};
\node (b-38-08) at (38,08) {$B$};
\node (a-39-08) at (39,08) {$A$};
\node (p-40-08) at (40,08) {$P$};

\draw (p-00-08) -- (a-01-08) -- (b-02-08) -- (c-03-08) -- (q-04-08) -- (c-05-08) -- (b-06-08) -- (a-07-08) -- (p-08-08) -- (a-09-08) -- (b-10-08) -- (c-11-08) -- (q-12-08) -- (c-13-08) -- (b-14-08) -- (a-15-08) -- (p-16-08) -- (a-17-08) -- (b-18-08) -- (c-19-08) -- (q-20-08) -- (c-21-08) -- (b-22-08) -- (a-23-08) -- (p-24-08) -- (a-25-08) -- (b-26-08) -- (c-27-08) -- (q-28-08) -- (c-29-08) -- (b-30-08) -- (a-31-08) -- (p-32-08) -- (a-33-08) -- (b-34-08) -- (c-35-08) -- (q-36-08) -- (c-37-08) -- (b-38-08) -- (a-39-08) -- (p-40-08);

\node (p-00-16) at (00,16) {$P$};
\node (a-01-16) at (01,16) {$A$};
\node (b-02-16) at (02,16) {$B$};
\node (c-03-16) at (03,16) {$C$};
\node (q-04-16) at (04,16) {$Q$};
\node (c-05-16) at (05,16) {$C$};
\node (b-06-16) at (06,16) {$B$};
\node (a-07-16) at (07,16) {$A$};
\node (p-08-16) at (08,16) {$P$};
\node (a-09-16) at (09,16) {$A$};
\node (b-10-16) at (10,16) {$B$};
\node (c-11-16) at (11,16) {$C$};
\node (q-12-16) at (12,16) {$Q$};
\node (c-13-16) at (13,16) {$C$};
\node (b-14-16) at (14,16) {$B$};
\node (a-15-16) at (15,16) {$A$};
\node (p-16-16) at (16,16) {$P$};
\node (a-17-16) at (17,16) {$A$};
\node (b-18-16) at (18,16) {$B$};
\node (c-19-16) at (19,16) {$C$};
\node (q-20-16) at (20,16) {$Q$};
\node (c-21-16) at (21,16) {$C$};
\node (b-22-16) at (22,16) {$B$};
\node (a-23-16) at (23,16) {$A$};
\node (p-24-16) at (24,16) {$P$};
\node (a-25-16) at (25,16) {$A$};
\node (b-26-16) at (26,16) {$B$};
\node (c-27-16) at (27,16) {$C$};
\node (q-28-16) at (28,16) {$Q$};
\node (c-29-16) at (29,16) {$C$};
\node (b-30-16) at (30,16) {$B$};
\node (a-31-16) at (31,16) {$A$};
\node (p-32-16) at (32,16) {$P$};

\draw (p-00-16) -- (a-01-16) -- (b-02-16) -- (c-03-16) -- (q-04-16) -- (c-05-16) -- (b-06-16) -- (a-07-16) -- (p-08-16) -- (a-09-16) -- (b-10-16) -- (c-11-16) -- (q-12-16) -- (c-13-16) -- (b-14-16) -- (a-15-16) -- (p-16-16) -- (a-17-16) -- (b-18-16) -- (c-19-16) -- (q-20-16) -- (c-21-16) -- (b-22-16) -- (a-23-16) -- (p-24-16) -- (a-25-16) -- (b-26-16) -- (c-27-16) -- (q-28-16) -- (c-29-16) -- (b-30-16) -- (a-31-16) -- (p-32-16);

\node (a-00-15) at (00,15) {$A$};
\node (b-00-14) at (00,14) {$B$};
\node (c-00-13) at (00,13) {$C$};
\node (q-00-12) at (00,12) {$Q$};
\node (c-00-11) at (00,11) {$C$};
\node (b-00-10) at (00,10) {$B$};
\node (a-00-09) at (00,09) {$A$};

\draw (p-00-16) -- (a-00-15) -- (b-00-14) -- (c-00-13) -- (q-00-12) -- (c-00-11) -- (b-00-10) -- (a-00-09) -- (p-00-08);

\node (a-16-15) at (16,15) {$A$};
\node (b-16-14) at (16,14) {$B$};
\node (c-16-13) at (16,13) {$C$};
\node (q-16-12) at (16,12) {$Q$};
\node (c-16-11) at (16,11) {$C$};
\node (b-16-10) at (16,10) {$B$};
\node (a-16-09) at (16,09) {$A$};

\draw (p-16-16) -- (a-16-15) -- (b-16-14) -- (c-16-13) -- (q-16-12) -- (c-16-11) -- (b-16-10) -- (a-16-09) -- (p-16-08);

\node (a-32-15) at (32,15) {$A$};
\node (b-32-14) at (32,14) {$B$};
\node (c-32-13) at (32,13) {$C$};
\node (q-32-12) at (32,12) {$Q$};
\node (c-32-11) at (32,11) {$C$};
\node (b-32-10) at (32,10) {$B$};
\node (a-32-09) at (32,09) {$A$};

\draw (p-32-16) -- (a-32-15) -- (b-32-14) -- (c-32-13) -- (q-32-12) -- (c-32-11) -- (b-32-10) -- (a-32-09) -- (p-32-08);

\node (a-08-07) at (08,07) {$A$};
\node (b-08-06) at (08,06) {$B$};
\node (c-08-05) at (08,05) {$C$};
\node (q-08-04) at (08,04) {$Q$};
\node (c-08-03) at (08,03) {$C$};
\node (b-08-02) at (08,02) {$B$};
\node (a-08-01) at (08,01) {$A$};

\draw (p-08-08) -- (a-08-07) -- (b-08-06) -- (c-08-05) -- (q-08-04) -- (c-08-03) -- (b-08-02) -- (a-08-01) -- (p-08-00);

\node (a-24-07) at (24,07) {$A$};
\node (b-24-06) at (24,06) {$B$};
\node (c-24-05) at (24,05) {$C$};
\node (q-24-04) at (24,04) {$Q$};
\node (c-24-03) at (24,03) {$C$};
\node (b-24-02) at (24,02) {$B$};
\node (a-24-01) at (24,01) {$A$};

\draw (p-24-08) -- (a-24-07) -- (b-24-06) -- (c-24-05) -- (q-24-04) -- (c-24-03) -- (b-24-02) -- (a-24-01) -- (p-24-00);

\node (a-40-07) at (40,07) {$A$};
\node (b-40-06) at (40,06) {$B$};
\node (c-40-05) at (40,05) {$C$};
\node (q-40-04) at (40,04) {$Q$};
\node (c-40-03) at (40,03) {$C$};
\node (b-40-02) at (40,02) {$B$};
\node (a-40-01) at (40,01) {$A$};

\draw (p-40-08) -- (a-40-07) -- (b-40-06) -- (c-40-05) -- (q-40-04) -- (c-40-03) -- (b-40-02) -- (a-40-01) -- (p-40-00);

    \node[draw=black, rectangle, align=right, inner sep=4pt] (label) at (23,-2) {$P\cup C$, $Q\cup A$ and $B$ are cliques};

\end{tikzpicture}}
\caption{The graph~$H_n$ from the proof of Lemma~\ref{lem:4P_1-gem} ($n=2$ shown).}
\label{fig:4P1-gem}

\end{center}
\end{figure}

We now prove that the class of $(4P_1,\overline{P_1+P_4})$-free atoms has unbounded clique-width.
As~$\overline{H_n}$ is $(K_4,P_1+\nobreak P_4)$-free, it follows that~$H_n$ is $(4P_1,\overline{P_1+P_4})$-free.
By Fact~\ref{fact:comp}, it follows that the family of graphs~$H_n$ also has unbounded clique-width.
It remains to show that~$H_n$ is an atom.
Suppose, for contradiction, that~$H_n$ has a clique cut-set~$X$.
Recall that $P\cup C$, $Q\cup A$, and~$B$ are cliques.
As~$A$ and $C$ are anti-complete to each other, it follows that~$X$ contains vertices from at most one of~$A$ or~$C$.
Similarly, since~$P$, $Q$ and~$B$ are pairwise anti-complete, it follows that~$X$ contains vertices from at most one of~$P$, $Q$ or~$B$.
Note that every vertex from $P\cup Q\cup B$ has a neighbour in~$A$ and in~$C$.
Since~$A$ and~$C$ are cliques and~$X$ contains vertices in at most one of these sets, it follows that the vertices in $(P\cup Q\cup B)\setminus X$ all lie in the same component of $H_n\setminus X$.
Similarly, every vertex of $A \cup C$ has a neighbour in both~$P$ and~$Q$.
Since~$X$ contains vertices from at most one of~$P$ or~$Q$, it follows that the vertices in $(A\cup C)\setminus X$ are in the same component of $H_n \setminus X$ as the vertices of $(P\cup Q\cup B)\setminus X$.
Therefore $H_n \setminus X$ is connected.
This contradiction implies that~$H_n$ is indeed an atom.
This completes the proof.
\end{proof}

\begin{figure}[h]
\begin{center}
\begin{tabular}{cccc}
\scalebox{0.6}{
\begin{minipage}{0.3\textwidth}
\begin{center}
\begin{tikzpicture}[every node/.style={circle,fill, minimum size=0.07cm}]
\node at (0,0) {};
\node at (0,2) {};
\node at (2,0) {};
\node at (2,2) {};
\end{tikzpicture}
\end{center}
\end{minipage}}
&
\scalebox{0.6}{
\begin{minipage}{0.3\textwidth}
\begin{center}
\begin{tikzpicture}[every node/.style={circle,fill, minimum size=0.07cm}]
\node at (18:1) {};
\node at (90:1) {};
\node at (162:1) {};
\node at (234:1) {};
\node at (306:1) {};
\draw (18:1) -- (90:1) -- (162:1) -- (234:1) (306:1) -- (18:1);
\draw (18:1) -- (162:1) -- (306:1) -- (90:1) -- (234:1) -- (18:1);
\end{tikzpicture}
\end{center}
\end{minipage}}
&
\scalebox{0.6}{
\begin{minipage}{0.3\textwidth}
\begin{center}
\begin{tikzpicture}[every node/.style={circle,fill, minimum size=0.07cm}]
\node at (0,0) {};
\node at (0,2) {};
\node at (2,0) {};
\node at (2,2) {};
\draw (0,0) -- (0,2) -- (2,2) -- (2,0) -- (0,0) -- (2,2);
\draw (0,2) -- (2,0);
\end{tikzpicture}
\end{center}
\end{minipage}}
&
\scalebox{0.6}{
\begin{minipage}{0.3\textwidth}
\begin{center}
\begin{tikzpicture}[every node/.style={circle,fill, minimum size=0.07cm}]
\node at (18:1) {};
\node at (90:1) {};
\node at (162:1) {};
\node at (234:1) {};
\node at (306:1) {};
\draw (234:1) -- (306:1);
\end{tikzpicture}
\end{center}
\end{minipage}}\\
\\
$4P_1$ & $\overline{3P_1+P_2}$ & $\overline{4P_1}$ & $3P_1+\nobreak P_2$\\
\end{tabular}
\end{center}
\caption{\label{fig:K4-3P1P2}The forbidden induced subgraphs for the classes of $(4P_1,\overline{3P_1+P_2})$-free graphs and $(\overline{4P_1},\allowbreak 3P_1+\nobreak P_2)$-free graphs mentioned in Lemma~\ref{lem:K4-3P1P2}.}
\end{figure}
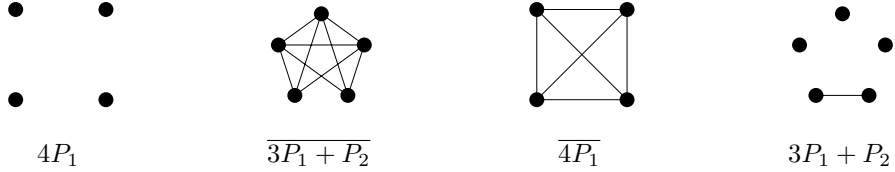

\begin{lemma}\label{lem:K4-3P1P2}
The class of $(4P_1,\overline{3P_1+P_2})$-free atoms and the class of $(\overline{4P_1},\allowbreak 3P_1+\nobreak P_2)$-free atoms have unbounded clique-width (see \figurename~\ref{fig:K4-3P1P2} for illustrations of the forbidden induced subgraphs).
\end{lemma}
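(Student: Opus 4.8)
It suffices to exhibit one family $(H_n)_{n\ge1}$ of graphs such that $H_n$ is $(4P_1,\overline{3P_1+P_2})$-free, both $H_n$ and $\overline{H_n}$ are atoms, and $(H_n)_n$ has unbounded clique-width. Indeed, $\overline{H_n}$ is then $(\overline{4P_1},3P_1+P_2)=(K_4,3P_1+P_2)$-free, and by Fact~\ref{fact:comp} the family $(\overline{H_n})_n$ also has unbounded clique-width, so $(H_n)_n$ witnesses the first claim and $(\overline{H_n})_n$ the second. Note that we genuinely have to check both $H_n$ and $\overline{H_n}$, since being an atom is not preserved under complementation in general.

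To build $(H_n)_n$ I would begin from a known family $(G_n)_n$ realising Theorem~\ref{thm:classification2}\ref{thm:classification2:unbdd:4P_1}, namely a family of $(K_4,3P_1+P_2)$-free graphs of unbounded clique-width, so that the complements $\overline{G_n}$ are $(4P_1,\overline{3P_1+P_2})$-free and of unbounded clique-width (Fact~\ref{fact:comp}). As in several of the earlier lemmas, $\overline{G_n}$ will typically contain a clique cut-set, and here we cannot avoid dealing with this by hand: Lemma~\ref{lem:no-false-twin} does not apply because $4P_1$ contains a pair of false twins, and Lemma~\ref{lem:no-comp-P1or2P1} does not apply because $\overline{3P_1+P_2}=K_5-e$ has a dominating vertex (the symmetric obstructions hold for the pair $\{K_4,3P_1+P_2\}$). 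So, following the pattern of Lemmas~\ref{lem:diamond-P2+P4} and~\ref{lem:bip-2P3}, I would modify $\overline{G_n}$ by adding and/or deleting a bounded number of vertices --- for instance, attaching a small gadget that links the ``layers'' of the construction --- to produce a graph $H_n$ that still contains $\overline{G_n}$ as an induced subgraph (hence still has unbounded clique-width), remains $(4P_1,\overline{3P_1+P_2})$-free, and is such that neither $H_n$ nor $\overline{H_n}$ has a clique cut-set. For the two atom verifications I expect to use the usual template: partition most of $V(H_n)$ into a bounded number of independent sets, observe that any clique meets each of them in at most one vertex, and show that deleting any clique leaves enough inter-part edges to keep $H_n$ connected; then the analogous argument in $\overline{H_n}$, where those independent sets become cliques.

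The step I expect to be the main obstacle is keeping $(\overline{3P_1+P_2})$-freeness under the modification. Since $\overline{3P_1+P_2}=K_5-e$ is a dense five-vertex graph, this condition is very fragile: $H_n$ must satisfy that for every non-edge $uv$ the common non-neighbourhood of $u$ and $v$ induces a graph of independence number at most two (equivalently, its complement has no induced $3P_1+P_2$), and deleting or adding vertices to kill clique cut-sets tends to create new violations. Destroying the clique cut-sets of $H_n$ and of $\overline{H_n}$ requires adding connectivity, which pushes in the opposite direction, so the delicate point is to find a modification that is simultaneously ``connected enough'' in both $H_n$ and $\overline{H_n}$ and ``sparse enough'' to keep $H_n$ free of $4P_1$ and of $\overline{3P_1+P_2}$; verifying this will require a case analysis organised by which layer of the construction the vertices of a hypothetical forbidden subgraph (or of a hypothetical clique cut-set) lie in. With such an $H_n$ in hand, the lemma follows from Facts~\ref{fact:del-vert}--\ref{fact:bip}.
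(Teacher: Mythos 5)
Your reduction is sound (one family $H_n$ with $H_n$ being $(4P_1,\overline{3P_1+P_2})$-free, unbounded clique-width, and both $H_n$ and $\overline{H_n}$ atoms suffices, via Fact~\ref{fact:comp}), and you are right that neither Lemma~\ref{lem:no-false-twin} nor Lemma~\ref{lem:no-comp-P1or2P1} applies to either pair. But beyond that the proposal is a plan, not a proof: you never name a concrete construction, never verify $(4P_1,\overline{3P_1+P_2})$-freeness, and never carry out either atom verification --- and these verifications are the entire content of the lemma. You defer them to an unspecified ``gadget'' modification whose feasibility you yourself flag as the main obstacle, so the argument has a genuine gap exactly where the work is required.

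Moreover, the premise motivating that deferred work --- that the known extremal family will have clique cut-sets and must be modified --- is not borne out. The paper simply takes the construction of~\cite{DGP14}: the $n\times n$ grid whose three colour classes $V_k=\{v_{i,j}: i+j\equiv k \bmod 3\}$ are each turned into a clique. This graph $G_n$ is $(4P_1,\overline{3P_1+P_2})$-free with unbounded clique-width (Lemma~\ref{lem:generalunbounded} plus Fact~\ref{fact:comp}), and \emph{both} $G_n$ and $\overline{G_n}$ are already atoms: in $G_n$ every vertex has at most two neighbours in each other colour class, so a clique cut-set meets each $V_k$ in at most two vertices and (for $n\geq 7$) some vertex of $V_0\setminus X$ retains neighbours in $V_1\setminus X$ and $V_2\setminus X$, while the cliques $V_k$ keep everything connected; in $\overline{G_n}$ the $V_k$ are independent, so a clique cut-set has at most one vertex per class, and each vertex has at most two non-neighbours in each other class, so again no disconnection is possible. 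So no gadget is needed, and the delicate freeness-preservation issue you anticipate never arises --- but in your write-up nothing replaces it. A small further slip: your restatement of $\overline{3P_1+P_2}$-freeness is wrong; since $\overline{3P_1+P_2}$ is two non-adjacent vertices completely joined to a triangle, the correct condition is that for every non-edge $uv$ the common \emph{neighbourhood} of $u$ and $v$ is triangle-free, not a bound on the independence number of the common non-neighbourhood.
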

\begin{proof}
We use the construction of~\cite{DGP14} for proving that the class of $(4P_1,\overline{3P_1+P_2})$-free graphs has unbounded clique-width.
Let $n \geq 7$ and consider an $n \times n$ grid~$H_n$ and for $i,j \in \{0,\ldots,n-1\}$, let~$v_{i,j}$ be the vertex of~$H_n$ with $x$-coordinate~$i$ and $y$-coordinate~$j$.
For $k \in \{0,1,2\}$, let $V_k=\{v_{i,j} \; | \; i+j\equiv k \bmod 3\}$ (see also \figurename~\ref{fig:4P1-co3P1P2-free} for a depiction of this $3$-colouring).
Apply a complementation to each~$V_k$.
Let~$G_n$ be the resulting graph.
The resulting graph~$G_n$ is $(4P_1,\overline{3P_1+P_2})$-free and the family of graphs~$G_n$ has unbounded clique-width~\cite{DGP14} (the first of these statements can also be seen by inspection and the latter follows from combining Lemma~\ref{lem:generalunbounded} and Fact~\ref{fact:comp}.
By Fact~\ref{fact:comp}, it follows that the family of graphs~$\overline{G_n}$ also has unbounded clique-width.

\begin{figure}[h]
\begin{center}
\begin{tikzpicture}[scale=0.7,yscale=-1]
\path
      (0,4) node[](x0y0) {0}
      (1,4) node[](x1y0) {1}
      (2,4) node[](x2y0) {2}
      (3,4) node[](x3y0) {0}
      (4,4) node[](x4y0) {1}
      (5,4) node[](x5y0) {2}
      (6,4) node[](x6y0) {0}

      (0,3) node[](x0y1) {1}
      (1,3) node[](x1y1) {2}
      (2,3) node[](x2y1) {0}
      (3,3) node[](x3y1) {1}
      (4,3) node[](x4y1) {2}
      (5,3) node[](x5y1) {0}
      (6,3) node[](x6y1) {1}

      (0,2) node[](x0y2) {2}
      (1,2) node[](x1y2) {0}
      (2,2) node[](x2y2) {1}
      (3,2) node[](x3y2) {2}
      (4,2) node[](x4y2) {0}
      (5,2) node[](x5y2) {1}
      (6,2) node[](x6y2) {2}

      (0,1) node[](x0y3) {0}
      (1,1) node[](x1y3) {1}
      (2,1) node[](x2y3) {2}
      (3,1) node[](x3y3) {0}
      (4,1) node[](x4y3) {1}
      (5,1) node[](x5y3) {2}
      (6,1) node[](x6y3) {0}
                           
      (0,0) node[](x0y4) {1}
      (1,0) node[](x1y4) {2}
      (2,0) node[](x2y4) {0}
      (3,0) node[](x3y4) {1}
      (4,0) node[](x4y4) {2}
      (5,0) node[](x5y4) {0}
      (6,0) node[](x6y4) {1}

      (0,-1) node[](x0y5) {2}
      (1,-1) node[](x1y5) {0}
      (2,-1) node[](x2y5) {1}
      (3,-1) node[](x3y5) {2}
      (4,-1) node[](x4y5) {0}
      (5,-1) node[](x5y5) {1}
      (6,-1) node[](x6y5) {2}

      (0,-2) node[](x0y6) {0}
      (1,-2) node[](x1y6) {1}
      (2,-2) node[](x2y6) {2}
      (3,-2) node[](x3y6) {0}
      (4,-2) node[](x4y6) {1}
      (5,-2) node[](x5y6) {2}
      (6,-2) node[](x6y6) {0}
;

\draw (x0y0) -- (x1y0) -- (x2y0) -- (x3y0) -- (x4y0) -- (x5y0) -- (x6y0) ;
\draw (x0y1) -- (x1y1) -- (x2y1) -- (x3y1) -- (x4y1) -- (x5y1) -- (x6y1) ;
\draw (x0y2) -- (x1y2) -- (x2y2) -- (x3y2) -- (x4y2) -- (x5y2) -- (x6y2) ;
\draw (x0y3) -- (x1y3) -- (x2y3) -- (x3y3) -- (x4y3) -- (x5y3) -- (x6y3) ;
\draw (x0y4) -- (x1y4) -- (x2y4) -- (x3y4) -- (x4y4) -- (x5y4) -- (x6y4) ;
\draw (x0y5) -- (x1y5) -- (x2y5) -- (x3y5) -- (x4y5) -- (x5y5) -- (x6y5) ;
\draw (x0y6) -- (x1y6) -- (x2y6) -- (x3y6) -- (x4y6) -- (x5y6) -- (x6y6) ;

\draw (x0y0) -- (x0y1) -- (x0y2) -- (x0y3) -- (x0y4) -- (x0y5) -- (x0y6) ;
\draw (x1y0) -- (x1y1) -- (x1y2) -- (x1y3) -- (x1y4) -- (x1y5) -- (x1y6) ;
\draw (x2y0) -- (x2y1) -- (x2y2) -- (x2y3) -- (x2y4) -- (x2y5) -- (x2y6) ;
\draw (x3y0) -- (x3y1) -- (x3y2) -- (x3y3) -- (x3y4) -- (x3y5) -- (x3y6) ;
\draw (x4y0) -- (x4y1) -- (x4y2) -- (x4y3) -- (x4y4) -- (x4y5) -- (x4y6) ;
\draw (x5y0) -- (x5y1) -- (x5y2) -- (x5y3) -- (x5y4) -- (x5y5) -- (x5y6) ;
\draw (x6y0) -- (x6y1) -- (x6y2) -- (x6y3) -- (x6y4) -- (x6y5) -- (x6y6) ;

\node[draw=black, rectangle, align=left, inner sep=4pt] (label) at (10,1) {\\ For $i\in\{0,1,2\}$ the\\ $i$-vertices form a clique};
 
 \foreach \x in {0,1,2,3,4,5,6}{
    \node  at (\x,4.7)  {\tiny{\x}};
          }
    \node  at (-0.7,-2)  {\tiny{6}};
    \node  at (-0.7,-1)  {\tiny{5}};
    \node  at (-0.7,0)  {\tiny{4}};
    \node  at (-0.7,1)  {\tiny{3}};
    \node  at (-0.7,2)  {\tiny{2}};
    \node  at (-0.7,3)  {\tiny{1}};
    \node  at (-0.7,4)  {\tiny{0}};

\end{tikzpicture}
\caption{The graph~$G_n$ from the proof of Lemma~\ref{lem:K4-3P1P2} ($n=7$ shown).}
\label{fig:4P1-co3P1P2-free}
\end{center}
\end{figure}
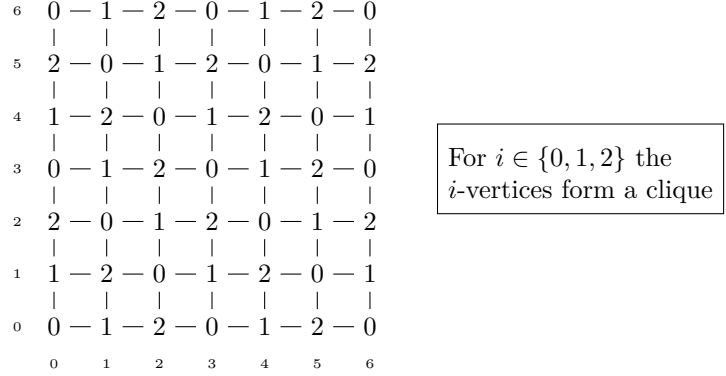

It remains to show that~$G_n$ and~$\overline{G_n}$ are atoms.
Suppose, for contradiction, that~$G_n$ has a clique cut-set~$X$.
If $X \subseteq V_i$ for some $i \in \{0,1,2\}$, then all vertices of $G_n \setminus V_i$ are in the same component of $G_n \setminus X$.
Since every vertex in~$V_i$ has at least one neighbour outside of~$V_i$, it follows that every vertex of $G_n \setminus X$ is in the same component of $G_n \setminus X$ in this case, a contradiction.
We may therefore assume that~$X$ contains vertices in at least two sets~$V_i$.
By construction, each vertex in a set~$V_i$ has at most two neighbours in each~$V_j$ for $j \in \{0,1,2\} \setminus \{i\}$.
Therefore~$X$ has at most two vertices in each~$V_i$.
Since $n \geq 7$, there must be at least~15 vertices in~$V_0$ that have neighbours in both~$V_1$ and~$V_2$ (see also \figurename~\ref{fig:4P1-co3P1P2-free}).
Since very vertex in $V_1 \cup V_2$ has at most two neighbours in~$V_0$, there must be a vertex in $V_0\setminus X$ that has a neighbour in both $V_1 \setminus X$ and $V_2 \setminus X$.
Since each set~$V_i$ is a clique, it follows that $G_n \setminus X$ is connected.
This contradiction implies that~$G_n$ is indeed an atom.
Now suppose, for contradiction, that~$\overline{G_n}$ has a clique cut-set~$X$.
Since $V_0$, $V_1$ and~$V_2$ are independent sets in~$\overline{G_n}$, $X$ contains at most one vertex of any~$V_i$.
Since every vertex of~$V_i$ has at most two non-neighbours in each~$V_j$ for $j \in \{0,1,2\} \setminus \{i\}$, it follows that $\overline{G_n} \setminus X$ must be connected.
This contradiction implies that~$\overline{G_n}$ is indeed an atom.
\end{proof}

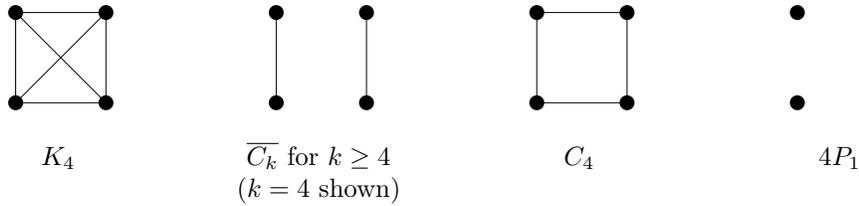
\begin{figure}[h]
\begin{center}
\begin{tabular}{cccc}
\scalebox{0.6}{
\begin{minipage}{0.3\textwidth}
\begin{center}
\begin{tikzpicture}[every node/.style={circle,fill, minimum size=0.07cm}]
\node at (0,0) {};
\node at (0,2) {};
\node at (2,0) {};
\node at (2,2) {};
\draw (0,0) -- (0,2) -- (2,2) -- (2,0) -- (0,0) -- (2,2);
\draw (0,2) -- (2,0);
\end{tikzpicture}
\end{center}
\end{minipage}}
&
\scalebox{0.6}{
\begin{minipage}{0.3\textwidth}
\begin{center}
\begin{tikzpicture}[every node/.style={circle,fill, minimum size=0.07cm}]
\node at (0,0) {};
\node at (0,2) {};
\node at (2,0) {};
\node at (2,2) {};
\draw (0,0) -- (0,2);
\draw (2,0) -- (2,2);
\end{tikzpicture}
\end{center}
\end{minipage}}
&
\scalebox{0.6}{
\begin{minipage}{0.3\textwidth}
\begin{center}
\begin{tikzpicture}[every node/.style={circle,fill, minimum size=0.07cm}]
\node at (0,0) {};
\node at (0,2) {};
\node at (2,0) {};
\node at (2,2) {};
\draw (0,0) -- (0,2) -- (2,2) -- (2,0) -- (0,0);
\end{tikzpicture}
\end{center}
\end{minipage}}
&
\scalebox{0.6}{
\begin{minipage}{0.3\textwidth}
\begin{center}
\begin{tikzpicture}[every node/.style={circle,fill, minimum size=0.07cm}]
\node at (0,0) {};
\node at (0,2) {};
\node at (2,0) {};
\node at (2,2) {};
\end{tikzpicture}
\end{center}
\end{minipage}}\\
\\
$K_4$ & $\overline{C_k}$ for $k \geq 4$ & $C_4$& $4P_1$\\
&($k=4$ shown)&&
\end{tabular}
\end{center}
\caption{\label{fig:K4-2P2}The forbidden induced subgraphs for the classes of $K_4$-free co-chordal graphs and $(C_4,4P_1)$-free graphs mentioned in Lemma~\ref{lem:K4-2P2}.}
\end{figure}

\begin{lemma}\label{lem:K4-2P2}
The class of $K_4$-free co-chordal atoms and the class of $(C_4,4P_1)$-free atoms have unbounded clique-width (see \figurename~\ref{fig:K4-2P2} for illustrations of the forbidden induced subgraphs).
\end{lemma}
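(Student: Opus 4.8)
The plan is to exhibit, in the style of the preceding lemmas, a single family of graphs whose members and whose complements witness both statements simultaneously. Concretely, I would start from a graph $\Gamma$ of unbounded clique-width taken from the literature (a wall, subdivided wall, or grid, as used in Lemmas~\ref{lem:K4-3P1P2}--\ref{lem:bip-2P12P2-2P1P4-4P1P2-3P2}) and add edges inside a bounded number of carefully chosen vertex classes so that the resulting graph $G_n$ becomes a union of three cliques while its complement $\overline{G_n}$ becomes chordal; one then checks by inspection (or cites the source) that this modification preserves unboundedness of clique-width, using Facts~\ref{fact:comp} and~\ref{fact:bip} together with Lemma~\ref{lem:generalunbounded} or Lemma~\ref{lem:walls}. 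Since $G_n$ is a union of three cliques it is $K_4$-free, and since $\overline{G_n}$ is chordal it is in particular $C_4$-free, so $\overline{G_n}$ is $(C_4,4P_1)$-free; thus $G_n$ is a $K_4$-free co-chordal graph, $\overline{G_n}$ is a $(C_4,4P_1)$-free graph, and by Fact~\ref{fact:comp} both families have unbounded clique-width.

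The substance is then three verifications. (i) \emph{Co-chordality of $G_n$}: equivalently, $G_n$ must contain no induced $\overline{C_k}$ for $k\ge 4$, i.e.\ it must be $2P_2$-free, $C_5$-free, $\overline{C_6}$-free, and so on. I expect this to be the main obstacle, since an infinite family of obstructions has to be ruled out at once; the intended argument is that the three-clique structure forces any putative induced $\overline{C_k}$ with $k$ large to place three vertices in one clique, hence to contain a triangle (which $\overline{C_k}$ for large $k$ cannot afford), while the small cases $k=4,5$ are killed directly by the way the three cliques are glued along $\Gamma$. (ii) \emph{$G_n$ is an atom}: a clique cut-set $X$ meets each of the three cliques of $G_n$ in at most one vertex (each clique is an independent set of $\Gamma$), and then a non-neighbour count shows that the surviving vertices of one clique still dominate the rest, so $G_n\setminus X$ is connected. (iii) \emph{$\overline{G_n}$ is an atom}: here a clique cut-set $X$ is an independent set of $G_n$, hence meets each of the three cliques of $G_n$ in at most one vertex, so $|X|\le 3$; since each vertex of $\overline{G_n}$ has only few non-neighbours inside each clique, removing at most three vertices cannot disconnect it, by the counting used in Lemmas~\ref{lem:diamond-5P1} and~\ref{lem:K4-3P1P2}.

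Two remarks on fallbacks. If the natural construction gives a $G_n$ (or $\overline{G_n}$) that is not an atom, one patches it by deleting a constant number of vertices (Fact~\ref{fact:del-vert}) or by attaching a suitable bounded gadget; note, however, that the generic shortcuts are \emph{not} available here, since $4P_1$ contains false twins (ruling out Lemma~\ref{lem:no-false-twin}) and $C_4$ contains a pair of non-adjacent vertices complete to the remainder (ruling out Lemma~\ref{lem:no-comp-P1or2P1}), so atomicity must be argued by hand on both sides. Alternatively, the two statements could be proved by two unrelated constructions---one explicitly $(C_4,4P_1)$-free, one explicitly $K_4$-free co-chordal---but the complement-pair route above is the most economical and matches the rest of Section~\ref{s-unbounded}.
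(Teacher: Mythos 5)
Your plan contains a contradiction at its core. If the vertex set of $G_n$ is covered by three cliques and $G_n$ is $K_4$-free, then each of those cliques has at most three vertices, so $|V(G_n)|\leq 9$; no such family can have unbounded clique-width. You have swapped the roles of cliques and independent sets: a partition into three cliques gives $4P_1$-freeness of $G_n$ (equivalently $K_4$-freeness of $\overline{G_n}$), not $K_4$-freeness of $G_n$. Even after correcting this, the wall/grid-based construction you have in mind (colour classes of a wall turned into cliques, as in Lemma~\ref{lem:K4-3P1P2}) is not co-chordal: two adjacent vertices inside one clique class together with two vertices of another class lying far away in the wall induce a $2P_2=\overline{C_4}$. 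The co-chordality check, which you yourself flag as the main obstacle and leave undone, is precisely where a wall-type sparse incidence structure fails; what is needed is a nested (threshold/``staircase'') adjacency between the classes, and the paper takes such a $K_4$-free co-chordal family ready-made from Brandst\"adt et al.\ rather than building it from a wall.

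The second genuine gap is atomicity on the $(C_4,4P_1)$-free side. For the construction that actually works, the chordal complement $\overline{G_n}$ is \emph{not} an atom (the set of vertices $v_{i,j}$ with $i,j\geq 1$ is a clique cut-set), and your counting argument ``each vertex has only few non-neighbours inside each clique'' does not apply, because the cross-adjacencies are nested rather than sparse. The paper must construct a modified graph $J_n$ (delete a row and a column of $\overline{G_{n+1}}$ and add one new edge) and then re-verify both $(C_4,4P_1)$-freeness and atomicity by hand; your fallback of ``delete a constant number of vertices or attach a bounded gadget'' does not explain how to destroy the clique cut-set without creating a $C_4$ or a $4P_1$, which is the actual content of the paper's proof here. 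A smaller inaccuracy: on the co-chordal side the generic shortcut \emph{is} available, since neither $K_4$ nor $\overline{C_r}$ for $r\geq 4$ contains a pair of false twins, so Lemma~\ref{lem:no-false-twin} settles atomicity there (this is exactly what the paper does); your claim that atomicity must be argued by hand on both sides is only correct for the $(C_4,4P_1)$-free class, where $4P_1$ has false twins and $C_4$ has two non-adjacent vertices complete to the rest.
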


\begin{proof}
In~\cite[Theorem~11]{BELL06}, Brandst\"adt et al. constructed a family of graphs~$G_n$ that are $K_4$-free co-chordal and have unbounded clique-width.
The construction of~$G_n$ for $n \geq 3$ is as follows.
Let the vertex set of~$G_n$ be $\{v_{i,j} \; | \; i,j \in \{0,\ldots,n\}, (i,j) \neq (0,0)\}$.
For $i,j,k \in \{1,\ldots,n\}$, add an edge between~$v_{i,j}$ and~$v_{k,0}$ if $k \geq i$ and add an edge between~$v_{i,j}$ and~$v_{0,k}$ if $k \geq j$.
For each $i,j \in \{1,\ldots,n\}$, add an edge between~$v_{i,0}$ and~$v_{0,j}$.
As shown in the proof of~\cite[Theorem~11]{BELL06}, $G_n$ is a $K_4$-free co-chordal graph and the family of such graphs has unbounded clique-width (the former property can also be seen by inspection and the latter follows from Lemma~\ref{lem:generalunbounded}).
Therefore the class of $K_4$-free co-chordal graphs had unbounded clique-width.
We observe that neither~$K_4$ nor~$\overline{C_r}$ for any $r\geq 4$ contains a pair of false twins.
Therefore, by Lemma~\ref{lem:no-false-twin}, the class of $K_4$-free co-chordal atoms has unbounded clique-width.

We now prove that the class of $(C_4,4P_1)$-free atoms has unbounded clique-width.
Observe that the family of graphs of the form~$\overline{G_n}$ is $4P_1$-free and chordal, and by Fact~\ref{fact:comp}, it has unbounded clique-width.
Now~$\overline{G_n}$ is not an atom, since the set of vertices $\{v_{i,j} \; | \; i,j \in \{1,\ldots,n\}\}$ is a clique cut-set in~$\overline{G_n}$.
We construct a graph~$J_n$ from~$\overline{G_{n+1}}$ as follows  (see also \figurename~\ref{fig:L18}).
Delete the vertices in the set $\{v_{1,i},v_{i,1}\;|\; i \in \{1,\ldots,n+1\}$ and add the edge~$v_{0,n+1}v_{n+1,0}$.
Let~$J_n$ be the resulting graph.
Now~$J_n$ contains~$\overline{G_{n-1}}$ as an induced subgraph, so the family of graphs of the form~$J_n$ has unbounded clique-width.
We claim that~$J_n$ is $(C_4,4P_1)$-free.
Note that $J_n \setminus \{v_{0,n+1}\}$ and $J_n \setminus \{v_{n+1,0}\}$ are induced subgraphs of~$\overline{G_{n+1}}$, which is $(C_4,4P_1)$-free.
Therefore we only need to verify that there is no induced~$4P_1$ or~$C_4$ in~$J_n$ that contains both~$v_{0,n+1}$ and~$v_{n+1,0}$.
Since~$v_{0,n+1}$ is adjacent to~$v_{n+1,0}$, there cannot be an induced~$4P_1$ in~$J_n$ that contains both these vertices.
Now $N(v_{0,n+1})=\{v_{n+1,0}\} \cup \{v_{0,1},\ldots,v_{0,n}\}$ and $N(v_{n+1,0})=\{v_{0,n+1}\} \cup \{v_{1,0},\ldots,v_{n,0}\}$.
Since no vertex in $\{v_{0,1},\ldots,v_{0,n}\}$ has a neighbour in $\{v_{1,0},\ldots,v_{n,0}\}$ in~$J_n$, it follows that~$J_n$ is indeed $C_4$-free.

\begin{figure}[h]
\begin{center}
\includegraphics[scale=0.55,page=3]{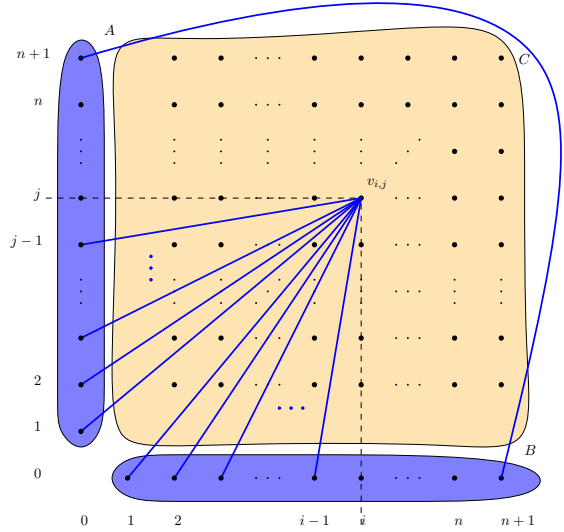}
\end{center}
\caption{The graph~$J_n$ from the proof of Lemma~\ref{lem:K4-2P2}.
The sets~$A$, $B$ and~$C$ are cliques.
For clarity, the edges between $A\cup B$ and~$C$ are depicted for only one vertex in~$C$.}
\label{fig:L18}
\end{figure}

It remains to show that~$J_n$ is an atom.
Suppose, for contradiction, that~$X$ is a clique cut-set of~$J_n$.
First, suppose that $v_{0,n+1}$ is in~$X$. 
Then $v_{1,0}\notin X$ and $v_{n+1,n+1}\notin X$ as~$v_{0,n+1}$ is non-adjacent to these vertices.
As~$v_{1,0}$ and~$v_{n+1,n+1}$ are adjacent and every vertex in $J_n\setminus\{v_{0,n+1}\}$ is adjacent to at least one of these vertices, we find that~$X$ is not a clique cut-set. 
We may therefore assume that $v_{0,n+1} \notin X$ and by symmetry, that~$v_{n+1,0} \notin X$.
We partition the vertices~$v_{i,j}$ in~$J_n$ into three sets~$A$, $B$ and~$C$, if $j=0$, $i=0$ or $i,j \neq 0$, respectively, and note that each of these sets is a clique.
Since~$A$ and~$B$ are cliques and the vertices~$v_{0,n+1}$ and~$v_{n+1,0}$ are adjacent, it follows that all vertices in $(A\cup B)\setminus X$ are in the same component of~$J_n \setminus X$.
Note that every vertex from~$C$ has at least one neighbour in both~$A$ and~$B$.
However, $X$ cannot contain vertices from both~$A$ and~$B$ since $A\setminus\{v_{0,n+1}\}$ and $B\setminus\{v_{n+1,0}\}$ are anti-complete.
Therefore $J_n \setminus X$ is connected.
This contradiction implies that~$J_n$ is an atom.
\end{proof}

\begin{figure}[h]
\begin{center}
\begin{tabular}{cccc}
\scalebox{0.6}{
\begin{minipage}{0.3\textwidth}
\begin{center}
\begin{tikzpicture}[every node/.style={circle,fill, minimum size=0.07cm}]
\node at (0,0) {};
\node at (0,2) {};
\node at (2,0) {};
\node at (2,2) {};
\node at (1,1) {};
\draw (2,2) -- (2,0) -- (0,0) -- (0,2);
\draw (0,0) -- (1,1);
\draw (0,2) -- (1,1);
\draw (2,0) -- (1,1);
\draw (2,2) -- (1,1);
\end{tikzpicture}
\end{center}
\end{minipage}}
&
\scalebox{0.6}{
\begin{minipage}{0.3\textwidth}
\begin{center}
\begin{tikzpicture}[every node/.style={circle,fill, minimum size=0.07cm}]
\node at (0,0) {};
\node at (0,2) {};
\node at (2,0) {};
\node at (2,2) {};
\node at (1,1) {};
\draw (0,0) -- (0,2)  (2,2) -- (2,0);
\end{tikzpicture}
\end{center}
\end{minipage}}
&
\scalebox{0.6}{
\begin{minipage}{0.3\textwidth}
\begin{center}
\begin{tikzpicture}[every node/.style={circle,fill, minimum size=0.07cm}]
\node at (0,0) {};
\node at (0,2) {};
\node at (2,0) {};
\node at (2,2) {};
\node at (1,1) {};
\draw (2,2) -- (2,0) -- (0,0) -- (0,2);
\end{tikzpicture}
\end{center}
\end{minipage}}
&
\scalebox{0.6}{
\begin{minipage}{0.3\textwidth}
\begin{center}
\begin{tikzpicture}[every node/.style={circle,fill, minimum size=0.07cm}]
\node at (0,0) {};
\node at (0,2) {};
\node at (2,0) {};
\node at (2,2) {};
\node at (1,1) {};
\draw (0,0) -- (0,2) -- (2,2) -- (2,0) -- (0,0);
\draw (0,0) -- (1,1);
\draw (0,2) -- (1,1);
\draw (2,0) -- (1,1);
\draw (2,2) -- (1,1);
\end{tikzpicture}
\end{center}
\end{minipage}}\\
\\
$\overline{P_1+P_4}$ & $P_1+\nobreak 2P_2$ & $P_1+\nobreak P_4$ & $\overline{P_1+2P_2}$\\
\end{tabular}
\end{center}
\caption{\label{fig:gem-P12P2}The forbidden induced subgraphs for the classes of $(\overline{P_1+P_4},P_1+\nobreak 2P_2)$-free graphs and $(P_1+\nobreak P_4,\allowbreak \overline{P_1+2P_2})$-free graphs mentioned in Lemma~\ref{lem:gem-P12P2}.}
\end{figure}
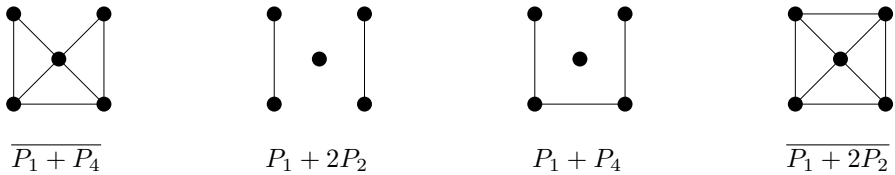

\newpage
\begin{lemma}\label{lem:gem-P12P2}
The class of $(\overline{P_1+P_4},P_1+\nobreak 2P_2)$-free atoms and the class of $(P_1+\nobreak P_4,\allowbreak \overline{P_1+2P_2})$-free atoms have unbounded clique-width (see \figurename~\ref{fig:gem-P12P2} for illustrations of the forbidden induced subgraphs).
\end{lemma}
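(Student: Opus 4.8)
The plan is to treat the two classes in turn. For the class of $(\overline{P_1+P_4},P_1+\nobreak 2P_2)$-free atoms, that is, the class of $(\gem,P_1+\nobreak 2P_2)$-free atoms, I would invoke Theorem~\ref{thm:classification2}, case~\ref{thm:classification2:unbdd:gem}, which states that the class of $(\gem,P_1+\nobreak 2P_2)$-free graphs has unbounded clique-width, and then apply Lemma~\ref{lem:no-false-twin}. For this it suffices to note that neither forbidden graph contains a pair of false twins: in the gem the three pairs of non-adjacent vertices have pairwise distinct neighbourhoods, and in $P_1+\nobreak 2P_2$ the five vertices have pairwise distinct neighbourhoods. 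Lemma~\ref{lem:no-false-twin} then immediately yields that the class of $(\gem,P_1+\nobreak 2P_2)$-free atoms has unbounded clique-width.

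For the class of $(P_1+\nobreak P_4,\overline{P_1+2P_2})$-free atoms, Fact~\ref{fact:comp} together with the previous paragraph gives that the class of $(P_1+\nobreak P_4,\overline{P_1+2P_2})$-free graphs has unbounded clique-width, but here lifting this to atoms is genuinely harder. The graph $\overline{P_1+2P_2}$ contains a dominating vertex and also a pair of false twins, so neither Lemma~\ref{lem:no-false-twin} nor Lemma~\ref{lem:no-comp-P1or2P1} applies, and the two generic ways of forcing a graph into being an atom both fail: adding a pair of non-adjacent vertices complete to the rest of the graph creates an induced $\overline{P_1+2P_2}$ whenever the graph has an induced $C_4$, and this is the case for any unbounded-clique-width graph in the class, since a $C_4$-free graph here would, after complementation, be $(\gem,2P_2)$-free, hence $(\gem,P_5)$-free and therefore of bounded clique-width by Theorem~\ref{thm:classification2}, case~\ref{thm:classification2:bdd:P_1+P_4}; moreover adding a false twin is problematic because $\overline{P_1+2P_2}$ is itself obtained from an induced $\diamondgraph$ by adding a false twin of one of its two degree-$3$ vertices. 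Instead, I would take the explicit family $G_n$ of $(\gem,P_1+\nobreak 2P_2)$-free graphs of unbounded clique-width underlying Theorem~\ref{thm:classification2}, case~\ref{thm:classification2:unbdd:gem}, pass to the complements $\overline{G_n}$, which are $(P_1+\nobreak P_4,\overline{P_1+2P_2})$-free and of unbounded clique-width by Fact~\ref{fact:comp}, and show that these graphs are atoms, possibly after deleting or adding a constant number of vertices as in the proofs of Lemmas~\ref{lem:diamond-P2+P4} and~\ref{lem:K4-2P2}. In the coloured-wall or coloured-grid style constructions typically used for bigenic classes of this kind, a bounded number of colour classes are turned into cliques, so in $\overline{G_n}$ those classes become independent sets; a clique cut-set of $\overline{G_n}$ then meets each such class in at most one vertex, and one reaches a contradiction from the high connectivity among the remaining vertices, exactly as in the atom-checking arguments of Lemmas~\ref{lem:diamond-5P1} and~\ref{lem:K4-3P1P2}.

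I expect the second class to be the main obstacle. Because neither Lemma~\ref{lem:no-false-twin} nor Lemma~\ref{lem:no-comp-P1or2P1} is available, the atom property must be established by hand for whichever concrete construction is used, and if a small, construction-specific modification is needed to achieve it, each modification has to be re-checked against both $P_1+\nobreak P_4$ and $\overline{P_1+2P_2}$. Ruling out induced copies of $\overline{P_1+2P_2}$ that use a newly introduced or retained vertex is where the bulk of the case analysis will lie.
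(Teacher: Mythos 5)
Your handling of the first class is exactly the paper's: the $(\gem,P_1+2P_2)$-free graphs of unbounded clique-width (the construction $q(G)$ of~\cite{BDJP18} built from a grid) combined with the observation that neither $\gem$ nor $P_1+2P_2$ has a pair of false twins, and then Lemma~\ref{lem:no-false-twin}. For the second class your overall route is also the paper's route -- complement the explicit construction, invoke Fact~\ref{fact:comp} for unboundedness and freeness, and verify by hand that the complements are atoms -- but the part you leave as a sketch is the actual content of the lemma, and your guess at what the construction looks like is off. The graphs $q(G)$ are not coloured walls or grids with a bounded number of colour classes turned into cliques; $q(G)$ consists of two complete multipartite graphs, on $A^G=\bigcup A^G_i$ and $B^G=\bigcup B^G_i$, joined by a perfect matching. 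Consequently in $\overline{q(G)}$ the sets $A^G$ and $B^G$ each induce a \emph{disjoint union of cliques} (the parts $A^G_i$, $B^G_j$), and $A^G$ is joined to $B^G$ by all edges except the matching, so a clique cut-set $X$ must be contained in $A^G_i\cup B^G_j$ for a single $i$ and a single $j$; picking $k\neq i$ and $\ell\neq j$, every vertex of $B^G$ has a neighbour in the clique $A^G_k$ and every vertex of $A^G$ has a neighbour in the clique $B^G_\ell$, so $\overline{q(G)}\setminus X$ is connected. In particular no vertex additions or deletions are needed, so your anticipated bulk of work (re-checking $(P_1+P_4)$- and $\overline{P_1+2P_2}$-freeness after a modification) does not arise; the whole atom check is a few lines once the right construction is in hand. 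Your diagnosis that Lemmas~\ref{lem:no-false-twin} and~\ref{lem:no-comp-P1or2P1} are unavailable for the complemented pair (dominating vertex and false twins in $\overline{P_1+2P_2}$) is correct and is precisely why the paper verifies atom-ness directly.
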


\begin{proof}
We use the construction from~\cite{BDJP18}, which was used to show that $(\overline{P_1+P_4},P_1+\nobreak 2P_2)$-free graphs have unbounded clique-width.
We copy this construction below.
Let $t \geq 2$ and let~$G$ be the $t \times t$ square grid.
Let $v^G_1,\ldots,v^G_n$ be the vertices of~$G$ and let $e^G_1,\ldots,e^G_m$ be the edges of~$G$.
We construct a graph~$q(G)$ from~$G$ as follows (see also \figurename~\ref{fig:L19}):
\begin{enumerate}
\item Create a complete multi-partite graph with partition $(A^G_1,\ldots,A^G_n)$, where $|A^G_i|=d_G(v^G_i)$ for $i \in \{1,\ldots,n\}$ and let $A^G=\bigcup A^G_i$.
\item Create a complete multi-partite graph with partition $(B^G_1,\ldots,B^G_m)$, where $|B^G_i|=2$ for $i \in \{1,\ldots,m\}$ and let $B^G=\bigcup B^G_i$.
\item Take the disjoint union of the two graphs above, then for each edge $e^G_i=v^G_{i_1}v^G_{i_2}$ in~$G$ in turn, add an edge from one vertex of~$B^G_i$ to a vertex of~$A^G_{i_1}$ and an edge from the other vertex of~$B^G_i$ to a vertex of~$A^G_{i_2}$.
Do this in such a way that the edges added between~$A^G$ and~$B^G$ form a perfect matching.
\end{enumerate}

\begin{figure}[h]
\begin{center}
\includegraphics[scale=0.55,page=4]{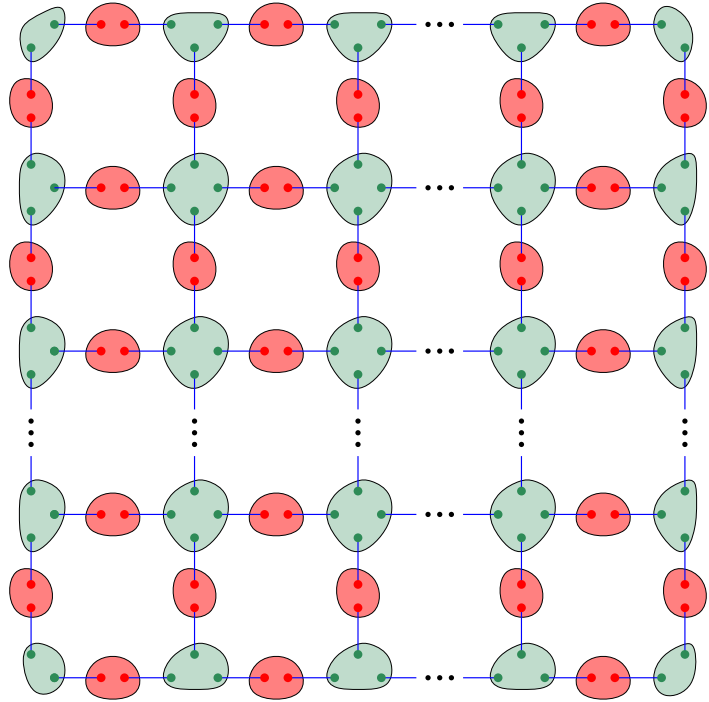}
\end{center}
\caption{The graph~$q[G]$ in Lemma~\ref{lem:gem-P12P2}.
Edges between partition classes~$A^G_i$ and edges between partition classes~$B^G_i$ are not shown.}
\label{fig:L19}
\end{figure}

In~\cite{BDJP18} it was shown that the graph~$q(G)$ is $(\overline{P_1+P_4},P_1+\nobreak 2P_2)$-free and that the clique-width of such graphs is unbounded.
Therefore the class of $(\overline{P_1+P_4},P_1+\nobreak 2P_2)$-free graphs has unbounded clique-width.
We observe that neither~$\overline{P_1+P_4}$ nor~$P_1+\nobreak 2P_2$ contains a pair of false twins.
Therefore, by Lemma~\ref{lem:no-false-twin}, it follows that the class of $(\overline{P_1+P_4},P_1+\nobreak 2P_2)$-free atoms has unbounded clique-width.

We now prove that the class of $(P_1+\nobreak P_4,\allowbreak \overline{P_1+2P_2})$-free atoms has unbounded clique-width.
By Fact~\ref{fact:comp}, the class of $(P_1+\nobreak P_4,\allowbreak \overline{P_1+2P_2})$-free graphs of the form~$\overline{q(G)}$ also has unbounded clique-width.
It therefore suffices to show that~$\overline{q(G)}$ is an atom.
Suppose, for contradiction, that~$X$ is a clique cut-set in~$\overline{q(G)}$.
In~$\overline{q(G)}$, $A^G$ induces a disjoint union of cliques of the form~$A^G_i$, and~$B^G$ induces a disjoint union of cliques of the form~$B^G_i$.
Therefore $X \in A^G_i \cup B^G_j$ for some~$i,j$.
Let~$k \neq i$ and $\ell \neq j$.
Then in~$\overline{q(G)}$ every vertex in~$B^G$ has a neighbour in~$A^G_k$ and every vertex in~$A^G$ has a neighbour in~$B^G_\ell$.
Since~$A_k^G$ and~$B_\ell^G$ are cliques, it follows that $\overline{q(G)} \setminus X$ is connected, a contradiction.
Therefore~$\overline{q(G)}$ is indeed an atom.
\end{proof}

\begin{figure}[h]
\begin{center}
\begin{tabular}{cc}
\scalebox{0.6}{
\begin{minipage}{0.4\textwidth}
\begin{center}
\begin{tikzpicture}[every node/.style={circle,fill, minimum size=0.07cm}]
\node at (0,0) {};
\node at (0,2) {};
\node at (2,0) {};
\node at (2,2) {};
\draw (2,0) -- (0,0);
\end{tikzpicture}
\end{center}
\end{minipage}}
&
\scalebox{0.6}{
\begin{minipage}{0.4\textwidth}
\begin{center}
\begin{tikzpicture}[every node/.style={circle,fill, minimum size=0.07cm}]
\node at (0,1.73205080757) {};
\node at (0,-1.73205080757) {};
\node at (1,0) {};
\node at (3,0) {};
\node at (-1,0) {};
\node at (-3,0) {};
\draw (-3,0) -- (-1,0) -- (1,0) -- (3,0);
\draw (0,1.73205080757) -- (1,0) -- (0,-1.73205080757);
\draw (0,1.73205080757) -- (3,0) -- (0,-1.73205080757);
\draw (-1,0) -- (0,-1.73205080757);
\draw (0,1.73205080757) -- (-3,0) -- (0,-1.73205080757);
\end{tikzpicture}
\end{center}
\end{minipage}}
\\
\\
$2P_1+\nobreak P_2$ & $\overline{P_6}$
\end{tabular}
\end{center}
\caption{\label{fig:codiamond-coP6}The forbidden induced subgraphs for the class of $(2P_1+\nobreak P_2,\overline{P_6})$-free graphs mentioned in Lemma~\ref{lem:codiamond-coP6}.}
\end{figure}

\begin{lemma}\label{lem:codiamond-coP6}
The class of $(2P_1+\nobreak P_2,\overline{P_6})$-free atoms has unbounded clique-width (see \figurename~\ref{fig:codiamond-coP6} for illustrations of the forbidden induced subgraphs).
\end{lemma}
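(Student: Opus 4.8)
The plan is to reduce to the already-known unboundedness result for the corresponding class of unrestricted graphs and then lift it to atoms by means of Lemma~\ref{lem:no-comp-P1or2P1}. First I would invoke Theorem~\ref{thm:classification2}, item~\ref{thm:classification2:known-unbdd}(\ref{thm:classification2:unbdd:2P_1+P_2}) with $H_1=\diamondgraph$ and $H_2=P_6$ (both hold, since $\diamondgraph \si \diamondgraph$ and $P_6 \si P_6$), to conclude that the class of $(\diamondgraph, P_6)$-free graphs has unbounded clique-width. Complementing both forbidden graphs in the pair — which is an allowed operation under the equivalence relation of Section~\ref{s-soa}, and also follows from applying Fact~\ref{fact:comp} to the whole graph once — then shows that the class of $(\overline{\diamondgraph}, \overline{P_6})$-free graphs, i.e.\ the class of $(2P_1+\nobreak P_2, \overline{P_6})$-free graphs, has unbounded clique-width.

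It then remains to verify that Lemma~\ref{lem:no-comp-P1or2P1} is applicable with ${\cal H}=\{2P_1+\nobreak P_2, \overline{P_6}\}$. By the observation immediately preceding that lemma, it suffices to check that neither $\overline{2P_1+\nobreak P_2}=\diamondgraph$ nor $\overline{\overline{P_6}}=P_6$ has a connected component isomorphic to $P_1$ or to $P_2$; this is immediate since $\diamondgraph$ is connected on four vertices and $P_6$ is connected on six vertices. Equivalently, one checks directly that $2P_1+\nobreak P_2$ has no dominating vertex (its maximum degree is $1$) and no pair of non-adjacent vertices complete to the rest of the graph, and that $\overline{P_6}$ has no dominating vertex (it has six vertices but maximum degree $4$) and no pair of non-adjacent vertices complete to the rest of the graph (the only vertices of degree $4$ in $\overline{P_6}$ are the two endpoints of the underlying $P_6$, and these two vertices are adjacent in $\overline{P_6}$). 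Hence Lemma~\ref{lem:no-comp-P1or2P1} gives that the class of $(2P_1+\nobreak P_2, \overline{P_6})$-free atoms has bounded clique-width if and only if the class of $(2P_1+\nobreak P_2, \overline{P_6})$-free graphs does, and combining this with the first paragraph finishes the proof.

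I do not expect a genuine obstacle here: the argument is a routine combination of the bigenic classification, complementation, and the atom-lifting lemma, and the only point demanding care is the small finite verification that the hypotheses of Lemma~\ref{lem:no-comp-P1or2P1} are met for both forbidden graphs. If a self-contained proof avoiding a citation of Theorem~\ref{thm:classification2} were preferred, one could instead fix a concrete family of $(\diamondgraph, P_6)$-free graphs of unbounded clique-width (for example a wall-based construction in the style of the proof of Lemma~\ref{lem:diamond-P2+P4}), pass to complements, and then imitate the proof of Lemma~\ref{lem:no-comp-P1or2P1} by adjoining two non-adjacent vertices complete to the remainder of each graph and checking directly that the resulting graphs are $(2P_1+\nobreak P_2, \overline{P_6})$-free atoms; but this just unpacks the tools already available to us.
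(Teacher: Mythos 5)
Your proposal is correct and follows essentially the same route as the paper: cite Theorem~\ref{thm:classification2}.\ref{thm:classification2:unbdd:2P_1+P_2} (the paper applies it directly to the equivalent class $(2P_1+P_2,\overline{P_6})$, while you spell out the complementation step) and then lift to atoms via Lemma~\ref{lem:no-comp-P1or2P1} after checking that neither $2P_1+P_2$ nor $\overline{P_6}$ has a dominating vertex or a non-adjacent pair complete to the rest. Your finite verifications of these hypotheses are accurate, so no gap remains.
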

\begin{proof}
By Theorem~\ref{thm:classification2}.\ref{thm:classification2:unbdd:2P_1+P_2}, the class of $(2P_1+\nobreak P_2,\overline{P_6})$-free graphs has unbounded clique-width.
We observe that neither~$2P_1+\nobreak P_2$ nor~$\overline{P_6}$ has a dominating vertex or a pair of non-adjacent vertices that are complete to the remainder of the graph.
Therefore, by Lemma~\ref{lem:no-comp-P1or2P1}, it follows that the class of $(2P_1+\nobreak P_2,\overline{P_6})$-free atoms has unbounded clique-width.
\end{proof}

\begin{figure}[h]
\begin{center}
\begin{tabular}{cc}
\scalebox{0.6}{
\begin{minipage}{0.3\textwidth}
\begin{center}
\begin{tikzpicture}[every node/.style={circle,fill, minimum size=0.07cm}]
\node at (0,0) {};
\node at (0,2) {};
\node at (2,0) {};
\node at (2,2) {};
\node at (1,1) {};
\draw (2,2) -- (2,0) -- (0,0) -- (0,2);
\draw (0,0) -- (1,1);
\draw (0,2) -- (1,1);
\draw (2,0) -- (1,1);
\draw (2,2) -- (1,1);
\end{tikzpicture}
\end{center}
\end{minipage}}
&
\scalebox{0.6}{
\begin{minipage}{0.4\textwidth}
\begin{center}
\begin{tikzpicture}[every node/.style={circle,fill, minimum size=0.07cm}]
\node at (0:1) {};
\node at (60:1) {};
\node at (120:1) {};
\node at (180:1) {};
\node at (240:1) {};
\node at (300:1) {};
\draw (120:1) -- (180:1) -- (240:1) -- (300:1) -- (0:1) -- (60:1);
\end{tikzpicture}
\end{center}
\end{minipage}}
\\
\\
$\overline{P_1+P_4}$ & $P_6$
\end{tabular}
\end{center}
\caption{\label{fig:gem-P6}The forbidden induced subgraphs for the class of $(\overline{P_1+P_4},P_6)$-free graphs mentioned in Lemma~\ref{lem:gem-P6}.}
\end{figure}
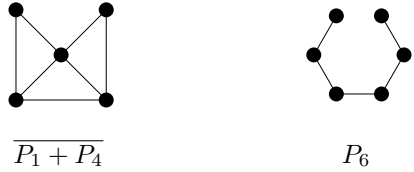

\begin{lemma}\label{lem:gem-P6}
The class of $(\overline{P_1+P_4},P_6)$-free atoms has unbounded clique-width (see \figurename~\ref{fig:gem-P6} for illustrations of the forbidden induced subgraphs).
\end{lemma}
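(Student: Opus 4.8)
The plan is to avoid constructing a bespoke family of atoms and instead to (a) cite the already-known unboundedness of clique-width for the \emph{whole} class of $(\gem,P_6)$-free graphs, and then (b) transfer this to atoms by the black-box Lemma~\ref{lem:no-false-twin}. For step (a), I would first record the structural observation that $\gem=\overline{P_1+P_4}$ contains $\diamondgraph=\overline{2P_1+P_2}$ as an induced subgraph: write $\gem$ as a dominating vertex~$u$ together with an induced $P_4$ on $\{a,b,c,d\}$ whose edges are $ca$, $ad$ and $db$; then $\{u,a,c,d\}$ induces $K_4$ minus the edge $cd$, i.e.\ a $\diamondgraph$. Hence $\gem\si\diamondgraph$, and applying Theorem~\ref{thm:classification2}.\ref{thm:classification2:unbdd:2P_1+P_2} with $H_1=\gem$ and $H_2=P_6$ shows that the class of $(\gem,P_6)$-free graphs has unbounded clique-width (this simply inherits unboundedness from the known case $(\diamondgraph,P_6)$).

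For step (b), I would verify that the hypothesis of Lemma~\ref{lem:no-false-twin} holds for ${\cal H}=\{\gem,P_6\}$, i.e.\ that neither graph has a pair of false twins. For $P_6=p_1p_2\cdots p_6$ the six neighbourhoods $\{p_2\},\{p_1,p_3\},\{p_2,p_4\},\{p_3,p_5\},\{p_4,p_6\},\{p_5\}$ are pairwise distinct, so there is no pair of false twins. For $\gem$, using the notation above, the neighbourhoods $N(u)=\{a,b,c,d\}$, $N(a)=\{u,c,d\}$, $N(b)=\{u,d\}$, $N(c)=\{u,a\}$ and $N(d)=\{u,a,b\}$ are pairwise distinct (note that $b$ and $c$ have the same degree but different neighbourhoods), so again there is no pair of false twins. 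Lemma~\ref{lem:no-false-twin} then yields that the class of $(\gem,P_6)$-free atoms has bounded clique-width if and only if the class of $(\gem,P_6)$-free graphs does; by step (a) the latter has unbounded clique-width, hence so does the former, which is the claim.

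The point most in need of care, rather than a genuine obstacle, is the choice of the reduction tool: Lemma~\ref{lem:no-comp-P1or2P1} is \emph{not} available here, because $\gem$ has a dominating vertex, so the atom reduction must go through the false-twin version, Lemma~\ref{lem:no-false-twin}; fortunately the elementary neighbourhood checks above confirm that its hypothesis is met. Everything else is routine bookkeeping with the already-established classification of Theorem~\ref{thm:classification2}.
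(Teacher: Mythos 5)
Your proposal is correct and follows essentially the same route as the paper: unboundedness for $(\overline{P_1+P_4},P_6)$-free graphs via Theorem~\ref{thm:classification2}.\ref{thm:classification2:unbdd:2P_1+P_2} (since $\gem\si\diamondgraph$), then transfer to atoms by Lemma~\ref{lem:no-false-twin} after checking that neither $\gem$ nor $P_6$ has a pair of false twins. Your explicit neighbourhood checks and the remark that Lemma~\ref{lem:no-comp-P1or2P1} is unavailable because $\gem$ has a dominating vertex are exactly the (implicit) considerations in the paper's argument.
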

\begin{proof}
By Theorem~\ref{thm:classification2}.\ref{thm:classification2:unbdd:2P_1+P_2}, the class of $(\overline{P_1+P_4},P_6)$-free graphs has unbounded clique-width.
We observe that neither~$\overline{P_1+P_4}$ nor~$P_6$ contains a pair of false twins.
Therefore, by Lemma~\ref{lem:no-false-twin}, it follows that the class of $(\overline{P_1+P_4},P_6)$-free atoms has unbounded clique-width.
\end{proof}

\begin{figure}[h]
\begin{center}
\begin{tabular}{cccc}
\scalebox{0.6}{
\begin{minipage}{0.3\textwidth}
\begin{center}
\begin{tikzpicture}[every node/.style={circle,fill, minimum size=0.07cm}]
\node at (0,0) {};
\node at (0,2) {};
\node at (2,0) {};
\node at (2,2) {};
\node at (1,1) {};
\draw (0,0) -- (0,2) -- (2,2) -- (2,0) -- cycle;
\draw (2,2) -- (1,1)--(0,0);
\draw (0,2) -- (1,1);
\end{tikzpicture}
\end{center}
\end{minipage}}
&
\scalebox{0.6}{
\begin{minipage}{0.3\textwidth}
\begin{center}
\begin{tikzpicture}[every node/.style={circle,fill, minimum size=0.07cm}]
\node at (0,0) {};
\node at (0,2) {};
\node at (2,0) {};
\node at (2,2) {};
\node at (1,1) {};
\draw (0,0) -- (0,2) -- (2,2) -- (2,0) -- (0,0);
\draw (0,0) -- (1,1);
\draw (0,2) -- (1,1);
\draw (2,0) -- (1,1);
\draw (2,2) -- (1,1);
\end{tikzpicture}
\end{center}
\end{minipage}}
&
\scalebox{0.6}{
\begin{minipage}{0.3\textwidth}
\begin{center}
\begin{tikzpicture}[every node/.style={circle,fill, minimum size=0.07cm}]
\node at (0,0) {};
\node at (0,2) {};
\node at (2,0) {};
\node at (2,2) {};
\node at (1,1) {};
\draw (0,0) -- (0,2)  (2,2) -- (2,0);
\end{tikzpicture}
\end{center}
\end{minipage}}
&
\scalebox{0.6}{
\begin{minipage}{0.4\textwidth}
\begin{center}
\begin{tikzpicture}[every node/.style={circle,fill, minimum size=0.07cm}]
\node at (0:1) {};
\node at (60:1) {};
\node at (120:1) {};
\node at (180:1) {};
\node at (240:1) {};
\node at (300:1) {};
\draw (120:1) -- (180:1) -- (240:1) -- (300:1) -- (0:1) -- (60:1);
\end{tikzpicture}
\end{center}
\end{minipage}}
\\
\\
$\overline{P_2+P_3}$ & $\overline{P_1+2P_2}$ & $P_1+\nobreak 2P_2$ & $P_6$
\end{tabular}
\end{center}
\caption{\label{fig:coP2P3-coP12P2-P12P2-P6-forb}The forbidden induced subgraphs for the class of $(\overline{P_2+P_3},\overline{P_1+2P_2},P_1+\nobreak 2P_2,P_6)$-free graphs mentioned in Lemma~\ref{lem:coP2P3-coP12P2-P12P2-P6}.}
\end{figure}
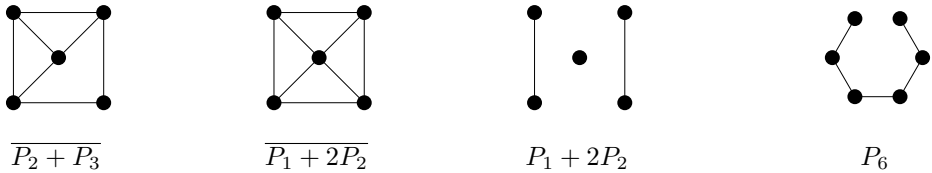

\begin{lemma}\label{lem:coP2P3-coP12P2-P12P2-P6}
The class of $(\overline{P_2+P_3},\overline{P_1+2P_2},P_1+\nobreak 2P_2,P_6)$-free atoms has unbounded clique-width (see \figurename~\ref{fig:coP2P3-coP12P2-P12P2-P6-forb} for illustrations of the forbidden induced subgraphs).
\end{lemma}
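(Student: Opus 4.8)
The plan is to exhibit an explicit family $\{G_n\}_{n\ge 1}$ of $(\overline{P_2+P_3},\overline{P_1+2P_2},P_1+2P_2,P_6)$-free \emph{atoms} whose clique-width tends to infinity; by Facts~\ref{fact:del-vert}--\ref{fact:bip} it is enough to obtain such a family from a class of unbounded clique-width by boundedly many vertex deletions and (bipartite) complementations. As in the earlier lemmas of this section, a natural starting point is a (possibly $k$-subdivided) wall $W_n$ on which one turns a bounded colouring into cliques by complementations, or a suitable variant of the graph $q(G)$ from Lemma~\ref{lem:gem-P12P2}; the construction has to be chosen with care, since $q(G)$ itself contains an induced $\overline{P_1+2P_2}$, so a modification is needed that still has a bounded clique-cover but avoids all four forbidden graphs. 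Unboundedness of clique-width of the resulting $\{G_n\}$ would then follow either from Lemma~\ref{lem:walls} together with Facts~\ref{fact:comp} and~\ref{fact:bip}, or directly from Lemma~\ref{lem:generalunbounded} applied to an appropriate row/column partition of $V(G_n)$.

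The bulk of the work is checking that $G_n$ avoids the four forbidden subgraphs. Two of these are easy for a construction with a bounded clique-cover: any induced path uses at most two vertices from each clique, so if $V(G_n)$ is covered by few cliques then $G_n$ is automatically $P_6$-free; and $\overline{P_1+2P_2}$-freeness is equivalent to $G_n[N(x)]$ being $C_4$-free for every vertex $x$, which follows from a short local computation on the partition classes (e.g.\ if every neighbourhood decomposes into a disjoint union of two cliques then it is $C_4$-free). For $\overline{P_2+P_3}$-freeness one has to show that for every pair of non-adjacent vertices $a,b$ the graph $G_n[N(a)\cap N(b)]$ is complete multipartite, which I would do by proving that such a common neighbourhood in $G_n$ is always very restricted (a clique, an independent set, or a complete bipartite graph between two partition classes). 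The main obstacle, and where I expect to spend most of the effort, is $P_1+2P_2$-freeness: this is a \emph{global} condition (every induced $2P_2$ must dominate $V(G_n)$), and it is exactly what rules out the usual ``spread-out'' unbounded-clique-width gadgets, so the construction must be engineered so that whenever $\{u,v\}$ and $\{x,y\}$ induce a $2P_2$, no vertex of $G_n$ is anti-complete to $\{u,v,x,y\}$; concretely I would classify the induced $2P_2$'s by which partition classes their four vertices lie in and, in each case, show every remaining vertex has a neighbour among them (in a two-clique cover with $u$ and $v$ in the two cliques this is immediate).

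Finally one shows that each $G_n$ is an atom. Here neither Lemma~\ref{lem:no-false-twin} nor Lemma~\ref{lem:no-comp-P1or2P1} can be invoked, since $\overline{P_2+P_3}$ contains a pair of false twins and $\overline{P_1+2P_2}$ has a dominating vertex (and a pair of non-adjacent vertices complete to the rest), so the atom property must be verified directly, in the style of Lemmas~\ref{lem:diamond-5P1}--\ref{lem:4P_1-gem}: assume a clique cut-set $X$, use that the cliques and independent sets of $G_n$ each meet $X$ in few vertices, and then produce a single vertex outside $X$ that is adjacent to a vertex of every component of $G_n\setminus X$, a contradiction. Combining the unboundedness of clique-width of $\{G_n\}$ with the fact that every $G_n$ is a $(\overline{P_2+P_3},\overline{P_1+2P_2},P_1+2P_2,P_6)$-free atom then yields the statement.
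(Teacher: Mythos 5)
Your proposal never pins down a construction, and that is precisely where the whole content of this lemma lies. What you give is a list of desiderata (unbounded clique-width obtained from walls or from a variant of $q(G)$ by bounded modifications, plus freeness of the four graphs, plus the atom property) together with a promise that a suitable modification exists; but the statement to be proved is exactly the assertion that such a family exists, so without an explicit family nothing has been established. The paper's proof rests on a specific gadget: take a $1$-subdivided wall with original vertices $A$ and subdivision vertices $B$, apply a complementation to $A$ (so $A$ becomes a clique while $B$ stays independent), and add one extra vertex $x$ complete to $B$; unboundedness of clique-width follows from Lemma~\ref{lem:walls} together with Facts~\ref{fact:del-vert} and~\ref{fact:comp}. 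The verification then uses features of this particular graph that your generic plan does not supply: $H_n[A\cup B]$ is a split graph, hence $(2P_2,\overline{2P_2})$-free, so every induced $2P_2$ or $C_4$ must pass through $x$; no two vertices of $B$ have the same two neighbours in $A$, which is what kills $\overline{P_2+P_3}$ and $\overline{P_1+2P_2}$; and since $x$ is complete to $B$, an induced $P_1+2P_2$ or $P_6$ through $x$ would need three vertices inside the clique $A$, impossible because these graphs are $K_3$-free.

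Moreover, the guiding heuristic you propose points in the wrong direction. A ``bounded clique-cover'' does not automatically give $P_6$-freeness (with three cliques an induced $P_6$ is perfectly possible; only a cover by two cliques forces $P_5$-freeness), and if you do restrict to a two-clique (co-bipartite) cover, then $P_1+2P_2$- and $P_6$-freeness indeed become trivial, but $\overline{P_2+P_3}$ and $\overline{P_1+2P_2}$ are themselves co-bipartite, so avoiding them becomes the hard part, and you offer no candidate family; the paper's $H_n$ is emphatically not of this shape, since $B$ is a large independent set. Your correct observations --- the reformulations of $\overline{P_1+2P_2}$-freeness via $C_4$-free neighbourhoods and of $\overline{P_2+P_3}$-freeness via common neighbourhoods of non-adjacent pairs, and the fact that Lemmas~\ref{lem:no-false-twin} and~\ref{lem:no-comp-P1or2P1} cannot be invoked because $\overline{P_2+P_3}$ has false twins and $\overline{P_1+2P_2}$ has a dominating vertex, so the atom property must be checked directly --- are useful sanity checks, but without a concrete $G_n$ there is nothing to check them against, so the argument has a genuine gap at its core.
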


\begin{proof}
Consider a $1$-subdivision of a wall of height $n \geq 2$.
Let~$A$ be the set of original vertices of the wall and let~$B$ be the set of vertices introduced by the subdivision.
Apply a complementation to~$A$ and add a vertex~$x$ complete to~$B$.
Let~$H_n$ be the resulting graph (see \figurename~\ref{fig:coP2P3-coP12P2-P12P2-P6}).
By Lemma~\ref{lem:walls}, combined with Facts~\ref{fact:del-vert} and~\ref{fact:comp}, the family of such graphs has unbounded clique-width.
Note that~$x$ is complete to the independent set~$B$ and anti-complete to the clique~$A$.
Every vertex in~$B$ has exactly two neighbours in~$A$ and every vertex in~$A$ has either two or three neighbours in~$B$.
Furthermore, no two vertices in~$B$ have the same pair of neighbours in~$A$.

\begin{figure}[h]
\begin{center}
\usetikzlibrary{calc}
\begin{tikzpicture}[scale=0.7, type1/.style={circle, draw, fill, inner sep=2pt,minimum size=0.5mm}, type2/.style={circle,thick, draw=black, line width=1pt, inner sep=2pt,minimum size=0.5mm},rotate=180   ]

\node  (aa-11) at (-1*2,2)  {$A$};
\node  (aa-12) at (-1*2,4)  {$A$};
\node (bm-11) at (-1*2+1,2) {$B$};
\node  (bm-12) at (-1*2+1,4) {$B$};
  
   \foreach \y in {0,1,2,3}{
     \foreach \x in {0,1,2}{
        \node (a\x\y)   at (\x*4,\y*2)    {$A$};
        \node (aa\x\y)   at (\x*4+2,\y*2)    {$A$};
        \node (b\x\y)  at (\x*4+1,\y*2)  {$B$};
        \node  (bb\x\y)  at (\x*4+3,\y*2)  {$B$};
        \draw (a\x\y)--(b\x\y)--(aa\x\y)--(bb\x\y);
         }
    \node (a3\y) at (3*4,\y*2) {$A$};
    
    \foreach \x in {0,1,2}{
            \pgfmathtruncatemacro{\j}{\x+1}
       \draw (bb\x\y)--(a\j\y);
    }
 }

 \foreach \x in {0,1,2,3}{
    \foreach \y in {0,2}{
        \node  (bm\x\y) at (\x*4,\y*2+1) {$B$};
        \pgfmathtruncatemacro{\j}{\y+1}
        \draw (a\x\y)--(bm\x\y) -- (a\x\j) ;
        }
     
     \node  (bm\x2) at (\x*4-2,1*2+1) {$B$};
     \pgfmathtruncatemacro{\j}{\x-1}
     \draw (aa\j1)--(bm\x2)--(aa\j2);
}
 
\draw (aa-11)--(bm-11)--(a01);
\draw (aa-12)--(bm-12)--(a02);

\node[circle, inner sep=1pt] (x) at (5,-2) {$x$};

\foreach \x in {-1,1.6,3.6,3.8,4,4.2,4.5,5.1 ,5.8,6.4,9}{
    \draw (\x+1,-1)--(x);
    }

\foreach \x in {0,1,2}{
    \draw ($ (bb\x0) !.3! (x) $)--(x);
    \draw ($ (b\x0) !.3! (x) $)--(x);
}

\node[draw=black, rectangle, align=left, inner sep=4pt, anchor=west] (label) at (17.5,3) {
    $A$ is a clique\\
    $x$ is complete to $B$
    };

;

\end{tikzpicture}
\caption{The graph~$H_n$ from the proof of Lemma~\ref{lem:coP2P3-coP12P2-P12P2-P6} ($n=3$ shown).
Vertices are denoted~$A$ or~$B$ if they are in the corresponding set.}
\label{fig:coP2P3-coP12P2-P12P2-P6}
\end{center}
\end{figure}

We prove that~$H_n$ is an atom.
Suppose, for contradiction, that~$X$ is a clique cut-set of~$H_n$.
If $x \in X$, then~$X$ can contain at most one other vertex (which must be in the independent set~$B$).
Since every vertex in~$B$ has neighbours in the clique~$A$, it follows that $H_n\setminus X$ is connected.
We may therefore assume that $x \notin X$.
Since~$B$ is an independent set, $|B \cap X| \leq 1$, so every vertex of~$A$ has a neighbour in $B \setminus X$.
Since~$x$ is complete to~$B$, it follows that every vertex of $B \setminus X$ is in the same component of~$B \setminus X$ as~$x$, and so $H_n \setminus X$ is connected, a contradiction.
We conclude that~$H_n$ is indeed an atom.

It remains to show that~$H_n$ is $(\overline{P_2+P_3},\overline{P_1+2P_2},P_1+\nobreak 2P_2,P_6)$-free.
Note that $H_n[A \cup B]$ is a split graph, so it is $(2P_2,\overline{2P_2})$-free.
Therefore every induced~$2P_2$ or~$\overline{2P_2}$ in~$H_n$ contains the vertex~$x$.
Suppose, for contradiction, that~$H_n$ contains an induced~$\overline{P_2+P_3}$ or~$\overline{P_1+2P_2}$, say on vertex set~$Y$.
Since~$\overline{P_2+P_3}$ and~$\overline{P_1+2P_2}$ each contain an induced~$\overline{2P_2}$, it follows that $x \in Y$.
Since~$x$ has two neighbours and one non-neighbour in this~$\overline{2P_2}$, this~$\overline{2P_2}$ consists of the vertex~$x$, two vertices in~$B$ and one vertex in~$A$.
Now~$Y$ contains one more vertex~$y$, which is adjacent to either three or four of the remaining vertices of~$Y$.
Now~$y$ cannot be in~$B$, since~$B$ is an independent set and there are two vertices in $(B \cap Y) \setminus \{y\}$, so $y \in A$.
Therefore $A \cap Y$ contains two vertices with two common neighbours in~$B$, contradicting the fact that no two vertices of~$B$ have the same two neighbours in~$A$.
We conclude that~$H_n$ is indeed $(\overline{P_2+P_3},\overline{P_1+2P_2})$-free.
Now suppose, for contradiction, that~$H_n$ contains an induced~$P_1+\nobreak 2P_2$ or an induced~$P_6$, say on vertex set~$Y$.
Since~$P_1+\nobreak 2P_2$ and~$P_6$ each contain an induced~$2P_2$, we find that $x\in Y$.
Every vertex in~$P_1+\nobreak 2P_2$ and~$P_6$ has at least three non-neighbours.
Since~$x$ is complete to~$B$, it follows that $|A \cap Y| \geq 3$.
But~$A$ is a clique and so $H_n[A \cap Y]$ contains a~$K_3$, which is a contradiction, since~$P_1+\nobreak 2P_2$ and~$P_6$ are $K_3$-free.
We conclude that~$H_n$ is $(P_1+\nobreak 2P_2, P_6)$-free. 
Hence, $H_n$ is $(\overline{P_2+P_3},\overline{P_1+2P_2},P_1+\nobreak 2P_2,P_6)$-free.
\end{proof}

\begin{figure}[h]
\begin{center}
\begin{tabular}{cc}
\scalebox{0.6}{
\begin{minipage}{0.45\textwidth}
\begin{center}
\begin{tikzpicture}[every node/.style={circle,fill, minimum size=0.07cm}]
\node at (0,0) {};
\node at (0,2) {};
\node at (2,0) {};
\node at (2,2) {};
\draw (0,0) -- (0,2);
\draw (2,0) -- (2,2);
\end{tikzpicture}
\end{center}
\end{minipage}}
&
\scalebox{0.6}{
\begin{minipage}{0.45\textwidth}
\begin{center}
\begin{tikzpicture}[every node/.style={circle,fill, minimum size=0.07cm}]
\node at (0,1.73205080757) {};
\node at (0,-1.73205080757) {};
\node at (1,0) {};
\node at (3,0) {};
\node at (-1,0) {};
\node at (-3,0) {};
\draw (-3,0) -- (-1,0) -- (1,0) -- (3,0);
\draw (0,1.73205080757) -- (1,0) -- (0,-1.73205080757);
\draw (0,1.73205080757) -- (3,0) -- (0,-1.73205080757);
\draw (0,1.73205080757) -- (-1,0) -- (0,-1.73205080757);
\draw (0,1.73205080757) -- (-3,0) -- (0,-1.73205080757);
\end{tikzpicture}
\end{center}
\end{minipage}}
\\
\\
$2P_2$ & $\overline{P_2+P_4}$
\end{tabular}
\end{center}
\caption{\label{fig:2P2-coP2P4}The forbidden induced subgraphs for the class of $(2P_2,\overline{P_2+P_4})$-free graphs mentioned in Lemma~\ref{lem:2P2-coP2P4}.}
\end{figure}
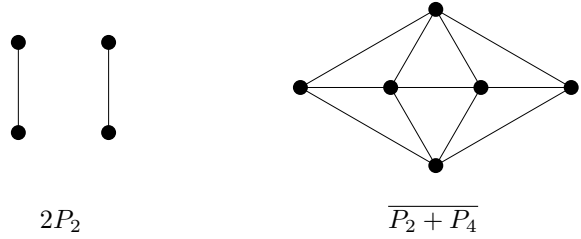

\newpage
\begin{lemma}\label{lem:2P2-coP2P4}
The class of $(2P_2,\overline{P_2+P_4})$-free atoms has unbounded clique-width (see \figurename~\ref{fig:2P2-coP2P4} for illustrations of the forbidden induced subgraphs).
\end{lemma}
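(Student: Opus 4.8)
The plan is to follow the same blueprint as the other unbounded‑clique‑width results in this section: exhibit an explicit family $\{G_n\}_{n\geq 1}$ of graphs such that (a) $\{G_n\}$ has unbounded clique‑width, (b) every $G_n$ is $(2P_2,\overline{P_2+P_4})$-free, and (c) every $G_n$ is an atom. Unlike in Lemmas~\ref{lem:codiamond-coP6} and~\ref{lem:gem-P6}, the shortcut via Lemmas~\ref{lem:no-false-twin} and~\ref{lem:no-comp-P1or2P1} is unavailable here: the graph $\overline{P_2+P_4}$ contains a pair of false twins (the two vertices coming from the ``$P_2$''), and, for the same pair, two non-adjacent vertices that are complete to the remaining four vertices, so neither lemma can be invoked. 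Consequently the atoms must be built by hand.

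For the starting point I would take a wall (or square grid) $W_n$ and turn it into a graph of unbounded clique‑width by a bounded number of (subgraph/bipartite) complementations, in the spirit of the constructions in Lemmas~\ref{lem:diamond-P2+P4}, \ref{lem:bip-2P3} and~\ref{lem:coP2P3-coP12P2-P12P2-P6}, and then modify it so that the resulting $G_n$ lies in our class and is an atom. Since $\overline{2P_2}=C_4$, it is convenient to describe and verify things in the complement: $G_n$ is $(2P_2,\overline{P_2+P_4})$-free if and only if $\overline{G_n}$ is $(C_4,P_2+P_4)$-free, and by Facts~\ref{fact:del-vert}--\ref{fact:bip} it suffices to control $\overline{G_n}$. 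A convenient skeleton is a split-type core $(K,I)$ of unbounded clique‑width (a split graph obtained from the incidence structure of $W_n$, so $K$ is a clique and $I$ an independent set) with a small $C_4$-gadget attached so as to make the graph non-split. Recall that, by Lemma~\ref{l-e3}, a $2P_2$-free graph with no induced $C_4$ or $C_5$ is split and hence has only complete atoms, so $G_n$ \emph{must} contain an induced $C_4$ or $C_5$. A natural attempt is to make all four gadget vertices complete to $K$ and anti-complete to $I$ (this keeps the graph $(2P_2,\overline{P_2+P_4})$-free), and then to wire nested families of vertices of $I$ to the non-adjacent pairs of the $C_4$ in order to destroy the clique cut-sets of the form $N[i]$; getting this wiring right is the delicate part.

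The verification would then go as follows. Unbounded clique‑width is immediate: $G_n$ contains the core (a known unbounded‑clique‑width family) as an induced subgraph, so by Fact~\ref{fact:del-vert}, together with Facts~\ref{fact:comp} and~\ref{fact:bip} if complementations are used, $\{G_n\}$ has unbounded clique‑width. For $2P_2$-freeness and $\overline{P_2+P_4}$-freeness one runs a short case analysis over the partition of $V(G_n)$ induced by neighbourhoods in the $C_4$-gadget, exactly as in Lemmas~\ref{l-tripletwithc5} and~\ref{l-c5free}, using repeatedly that $\overline{P_2+P_4}$ contains a triangle and that a set inducing $\overline{P_2+P_4}$ consists of an induced $P_4$ together with two non-adjacent common neighbours of that $P_4$ — so it suffices to check that, for every non-adjacent pair $\{x,y\}$, the common neighbourhood $N(x)\cap N(y)$ induces no $P_4$ (in the construction above these common neighbourhoods are cliques or near-cliques and hence $P_4$-free). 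Finally, being an atom is shown in the usual way: assume a clique cut-set $X$, use the clique structure of $G_n$ (the gadget contributes only a bounded clique, $K$ is one clique, and each vertex of $I$ together with a subset of its neighbourhood is a clique) to severely restrict $X$, and then argue that $K$ together with the gadget keeps all remaining vertices in one component, a contradiction.

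The step I expect to be the genuine obstacle is the design of the attachment. The tension is sharp: to have unbounded clique‑width together with $2P_2$-freeness the ``core'' must be essentially a non-cograph split graph (a $P_4$-free split graph is a cograph, hence of bounded clique‑width); to be an atom it may not be split (Lemma~\ref{l-e3}), so an induced $C_4$ or $C_5$ is forced and at least one gadget vertex must reach deep into the independent side; but $\overline{P_2+P_4}$-freeness forbids any two non-adjacent vertices — in particular the two ends of a diagonal of the $C_4$ — from being complete to a set carrying an induced $P_4$, while non-cograph split graphs are full of induced $P_4$'s. Threading this needle is exactly where most of the proof lives — the naive attempts (``split graph plus a bounded gadget'') either stay split, or create a $2P_2$, or create a $\overline{P_2+P_4}$ — and I would expect the final argument to rely on a somewhat ad hoc gadget together with the neighbourhood-partition case analysis sketched above.
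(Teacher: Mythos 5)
Your proposal correctly identifies the preliminaries (Lemmas~\ref{lem:no-false-twin} and~\ref{lem:no-comp-P1or2P1} are unavailable because the two ``$P_2$''-vertices of $\overline{P_2+P_4}$ are non-adjacent false twins complete to the rest; an atom in this class must contain an induced $C_4$ or $C_5$ by Lemma~\ref{l-e3}; and $\overline{P_2+P_4}$-freeness amounts to saying that $N(x)\cap N(y)$ is $P_4$-free for every non-adjacent pair $x,y$). But the lemma asserts the existence of a concrete family of $(2P_2,\overline{P_2+P_4})$-free atoms of unbounded clique-width, and your proposal never produces one: the ``split core plus $C_4$-gadget, then wire nested families of $I$-vertices to the diagonals'' is explicitly left as a design problem, and you yourself flag it as the genuine obstacle. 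That unresolved step is the entire content of the proof. Moreover, it is not a routine gap to close along the route you sketch: with a wall-incidence split core, the $I$-vertices attached to a common non-adjacent gadget pair must have pairwise comparable neighbourhoods in $K$ (two $I$-vertices $i,i'$ with incomparable $K$-neighbourhoods, both adjacent to a diagonal $\{g_1,g_3\}$, give an induced $P_4$ of the form $i\,k\,k'\,i'$ inside $N(g_1)\cap N(g_3)$, hence a $\overline{P_2+P_4}$), while the incomparability of those neighbourhoods is precisely what makes the core have unbounded clique-width; with only two diagonals available this tension is not obviously resolvable.

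For comparison, the paper resolves it without any four-vertex gadget: $G_n$ has parts $A$ (a clique), $B$ and $C$ (independent), with $A$ complete to $B$, and the adjacencies of each $C$-vertex $v_{i,j}$ to $A$ and to $B$ given by two \emph{nested} (chain) orders, one in the index $i$ and one in $j$; so $G_n[A\cup C]$ is split and $G_n[B\cup C]$ is a bipartite chain graph, which is what makes the freeness checks go through, while the grid-like interaction of the two orders yields unbounded clique-width via Lemma~\ref{lem:generalunbounded} together with Facts~\ref{fact:del-vert}--\ref{fact:bip}. The atom property comes from a single extra vertex $v_{0,n+1}$ complete to $C$: the two non-adjacent vertices $v_{0,n},v_{0,n+1}$ are both complete to $C$, which prevents any clique cut-set from separating the graph. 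Until your proposal specifies an explicit family with verified freeness, atomicity and unbounded clique-width, it does not constitute a proof of the lemma.
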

\begin{proof}
Let~$n \geq 2$ and construct the graph~$G_n$ as follows (see also \figurename~\ref{fig:L20}).
Let the vertex set of~$G_n$ be $\{v_{i,j} \; | \; i,j \in \{0,\ldots,n\}, (i,j) \neq (0,0)\} \cup \{v_{0,n+1}\}$.
For $i,j,k \in \{1,\ldots,n\}$, add an edge between $v_{i,j}$ and~$v_{k,0}$ if $k \geq i$, add an edge between~$v_{i,j}$ and~$v_{0,k}$ if $k \geq j$, and add an edge between~$v_{i,j}$ and~$v_{0,n+1}$.
Let $A = \{v_{i,0} \; | \; i \in \{1,\ldots,n\}\}$, $B = \{v_{0,j} \; | \; j \in \{1,\ldots,n+1\}\}$, and $C = \{v_{i,j} \; | \; i,j \in \{1,\ldots,n\}\}$.
Apply a bipartite complementation between~$A$ and~$B$ and apply a complementation to~$A$.
By Lemma~\ref{lem:generalunbounded} combined with Facts~\ref{fact:del-vert}, \ref{fact:comp} and~\ref{fact:bip}, the family of graphs~$G_n$ has unbounded clique-width.

\begin{figure}[h]
\begin{center}
\includegraphics[scale=0.55,page=5]{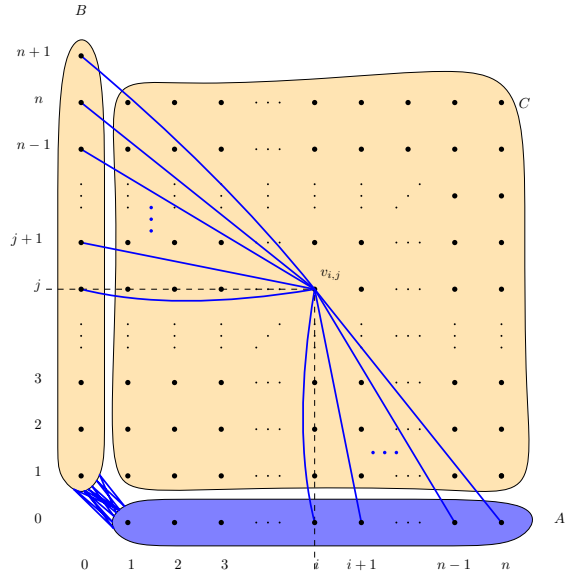}
\end{center}
\caption{The graph~$G_n$ constructed in the proof of Lemma~\ref{lem:2P2-coP2P4}.
The set~$A$ is a clique.
For clarity, the edges between $A\cup B$ and~$C$ are depicted for only one vertex in~$C$.}
\label{fig:L20}
\end{figure}

We claim that~$G_n$ is an atom.
Suppose, for contradiction, that~$X$ is a clique cut-set of~$G_n$.
Since~$v_{0,n}$ and~$v_{0,n+1}$ are non-adjacent, but complete to~$C$, it follows that every vertex of~$C$ is in the same component of~$G_n \setminus X$.
Since~$C$ is independent, at most one vertex of~$C$ is in~$X$.
Since every vertex in $A \cup B$ has at least two neighbours in~$C$, it follows that every vertex of $G_n \setminus X$ is in the same component of~$G_n \setminus X$, a contradiction.
Therefore~$G_n$ is indeed an atom.

Now suppose, for contradiction, that~$G_n$ contains an induced subgraph isomorphic to~$2P_2$, say on vertex set~$Y$.
Since~$G_n[A \cup C]$ is a split graph, it is $2P_2$-free, so~$Y$ must contain at least one vertex of~$B$.
Since~$G_n[B \cup C]$ is a bipartite chain graph, it is $2P_2$-free, so~$Y$ contains at least one vertex of~$A$.
Since~$A$ is complete to~$B$, it follows that $Y$ contains exactly one vertex of~$A$ and exactly one vertex of~$B$, and these two vertices are adjacent.
Therefore~$Y \cap C$ contains two adjacent vertices, contradicting the fact that~$C$ is independent.
We conclude that~$G_n$ is $2P_2$-free.

Next suppose, for contradiction, that~$G_n$ contains an induced subgraph isomorphic to~$\overline{P_2+P_4}$, say on vertex set~$Y$.
Since~$\overline{P_2+P_4}$ is $3P_1$-free, and~$B$ and~$C$ are independent, it follows that $|B \cap Y| \leq 2$ and $|C \cap Y| \leq 2$.
Therefore $|A \cap Y| \geq 6-2-2=2$.
Now $G_n[A \cup B]$ and $G_n[A \cup C]$ are split graphs and therefore~$C_4$-free.
Since~$\overline{P_2+P_4}$ contains an induced~$\overline{2P_2} = C_4$, it follows that $|C \cap Y| \geq 1$ and $|B \cap Y| \geq 1$.
Since~$A$ is a clique that is complete to~$B$ and~$\overline{P_2+P_4}$ is $K_4$-free, it follows that $|A\cap Y|\leq 2$.     
Therefore $|A \cap Y| = |B \cap Y| = |C \cap Y|=2$.
Since~$B$ and~$C$ are cliques in~$\overline{G_n}$, we observe that all vertices in~$B\cap Y$ are in the same component of~$\overline{G_n}[Y]$, and all  vertices in~$C\cap Y$ are in the same component of~$\overline{G_n}[Y]$.
Furthermore, since in~$\overline{G_n}$ the set~$A$ is independent and anti-complete to~$B$, the vertices in $(A\cup C) \cap Y$ form the $P_4$-component of~$\overline{G_n}[Y]$, and vertices in $B \cap Y$ form the $P_2$-component of~$\overline{G_n}[Y]$.
Therefore $\overline{G_n}[(A \cup C) \cap Y]$ is isomorphic to~$P_4$, which means that in~$\overline{G_n}$ there must be two vertices in the clique~$C$ that have private neighbours in the independent set~$A$.
By the construction of~$\overline{G_n}$, the vertices in~$C$ can be linearly ordered according to their neighbourhood in~$A$, a contradiction.
Therefore~$G_n$ is indeed $\overline{P_2+P_4}$-free.
\end{proof}

\begin{figure}[h]
\begin{center}
\begin{tabular}{ccc}
\scalebox{0.6}{
\begin{minipage}{0.3\textwidth}
\begin{center}
\begin{tikzpicture}[every node/.style={circle,fill, minimum size=0.07cm}]
\node at (0,0) {};
\node at (0,2) {};
\node at (2,0) {};
\node at (2,2) {};
\draw (0,0) -- (0,2);
\draw (2,0) -- (2,2);
\end{tikzpicture}
\end{center}
\end{minipage}}
&
\scalebox{0.6}{
\begin{minipage}{0.3\textwidth}
\begin{center}
\begin{tikzpicture}[every node/.style={circle,fill, minimum size=0.07cm}]
\node at (18:1) {};
\node at (90:1) {};
\node at (162:1) {};
\node at (234:1) {};
\node at (306:1) {};
\draw (234:1) -- (306:1) -- (18:1) -- (162:1) -- (234:1);
\draw (18:1) -- (90:1) -- (162:1);
\end{tikzpicture}
\end{center}
\end{minipage}}
&
\scalebox{0.6}{
\begin{minipage}{0.3\textwidth}
\begin{center}
\begin{tikzpicture}[every node/.style={circle,fill, minimum size=0.07cm}]
\node at (0:1) {};
\node at (60:1) {};
\node at (120:1) {};
\node at (180:1) {};
\node at (240:1) {};
\node at (300:1) {};
\draw (120:1) -- (180:1) -- (240:1) -- (300:1) -- (0:1) -- (60:1) -- (120:1);
\draw (0:1) -- (120:1) -- (240:1) -- (0:1);
\draw (60:1) -- (180:1) -- (300:1) -- (60:1);
\end{tikzpicture}
\end{center}
\end{minipage}} \\
\\
$2P_2$ & $\overline{P_5}$ & $\overline{3P_2}$
\end{tabular}
\end{center}
\caption{\label{fig:2P_2-coP5-co3P2}The forbidden induced subgraphs for the class of $(2P_2,\overline{P_5},\overline{3P_2})$-free graphs mentioned in Lemma~\ref{lem:2P_2-coP5-co3P2}.}
\end{figure}
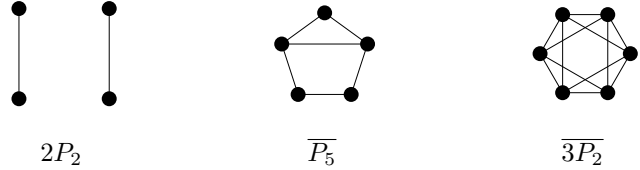

\begin{lemma}\label{lem:2P_2-coP5-co3P2}
The class of $(2P_2,\overline{P_5},\overline{3P_2})$-free atoms has unbounded clique-width (see \figurename~\ref{fig:2P_2-coP5-co3P2} for illustrations of the forbidden induced subgraphs).
\end{lemma}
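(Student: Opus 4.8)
The plan is to reduce to split graphs. Split graphs form a subclass of the $(2P_2,C_4,C_5)$-free graphs, and both $\overline{P_5}$ (the house) and $\overline{3P_2}=K_{2,2,2}$ (the octahedron) contain an induced~$C_4$; hence every split graph is $(2P_2,\overline{P_5},\overline{3P_2})$-free. Split graphs have unbounded clique-width~\cite{MR99}, and since complete graphs have clique-width at most~$2$, there is an infinite family $G_1,G_2,\ldots$ of non-complete split graphs of unbounded clique-width. The two general-purpose lemmas of Section~\ref{s-unbounded} do not apply directly here: $\overline{3P_2}$ contains a pair of false twins (so Lemma~\ref{lem:no-false-twin} fails) and contains a pair of non-adjacent vertices complete to the rest of the graph (so Lemma~\ref{lem:no-comp-P1or2P1} fails). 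Instead I would reuse the explicit construction from the proof of Lemma~\ref{lem:no-comp-P1or2P1} and re-verify $(2P_2,\overline{P_5},\overline{3P_2})$-freeness by hand, exploiting the fact that split graphs are $C_4$-free.

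Concretely, for each~$n$ let~$G_n'$ be obtained from~$G_n$ by adding two new non-adjacent vertices~$x$ and~$x'$, each made complete to $V(G_n)$. Exactly as in the proof of Lemma~\ref{lem:no-comp-P1or2P1}, $G_n'$ is an atom: if~$X$ were a clique cut-set, then since $x\not\sim x'$ one of~$x,x'$ (say~$x$) lies outside~$X$; since~$G_n$ is not complete, $X$ omits some $y\in V(G_n)$, and then~$x$ is adjacent to all of $V(G_n)\setminus X$ while~$x'$ (if it is not in~$X$) is adjacent to~$y$, so $G_n'\setminus X$ is connected, a contradiction. Moreover $G_n = G_n'\setminus\{x,x'\}$, so by Fact~\ref{fact:del-vert} the family $\{G_n'\}$ has unbounded clique-width.

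The only slightly delicate step is to check that each~$G_n'$ is $(2P_2,\overline{P_5},\overline{3P_2})$-free. Let $H\in\{2P_2,\overline{P_5},\overline{3P_2}\}$ and suppose $G_n'$ contains an induced copy of~$H$. If the copy avoids both~$x$ and~$x'$ it lies inside the split graph~$G_n$, which is $(2P_2,C_4,C_5)$-free and hence $H$-free (again using that $\overline{P_5}$ and $\overline{3P_2}$ contain an induced~$C_4$). If it contains exactly one of~$x,x'$, that vertex is adjacent to all other vertices of the copy, i.e.\ a dominating vertex of~$H$; but none of $2P_2,\overline{P_5},\overline{3P_2}$ has a dominating vertex. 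Finally, if it contains both~$x$ and~$x'$, these form a pair of non-adjacent vertices complete to $V(H)\setminus\{x,x'\}$; the graph~$2P_2$ has no such pair, the house~$\overline{P_5}$ has no such pair, and in $\overline{3P_2}=K_{2,2,2}$ the only such pairs are the three ``partner'' pairs, whose removal leaves an induced~$C_4$—which would then lie in the $C_4$-free graph~$G_n$, a contradiction. Hence $G_n'$ is a $(2P_2,\overline{P_5},\overline{3P_2})$-free atom, and the family $\{G_n'\}$ witnesses unbounded clique-width.

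I expect the main obstacle to be precisely the case analysis for the octahedron $\overline{3P_2}$: unlike $2P_2$ and the house, it contains twin-like non-adjacent vertices, so the off-the-shelf lemmas of Section~\ref{s-unbounded} cannot be quoted, and one must instead observe that a would-be octahedron through $\{x,x'\}$ forces a forbidden induced~$C_4$ in the split base graph.
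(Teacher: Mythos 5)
Your proposal is correct and takes essentially the same approach as the paper: the paper's graph~$H_k$ is precisely a split graph (a wall with one side of its bipartition complemented) to which two non-adjacent vertices complete to everything else are added, and its atom argument matches yours. The only cosmetic differences are that the paper fixes this wall-derived split family (deriving unbounded clique-width from Lemma~\ref{lem:walls} and Facts~\ref{fact:del-vert} and~\ref{fact:comp} rather than citing split graphs directly) and verifies freeness by observing that $H_k\setminus\{x\}$ and $H_k\setminus\{y\}$ are split and that $x,y$ are false twins while $2P_2$ and $\overline{P_5}$ have none, in place of your three-case analysis on how many of the two added vertices the forbidden copy uses.
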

\begin{proof}
Consider a wall of height $k \geq 2$ and note that it is a bipartite graph, say with parts~$A$ and~$B$.
Apply a complementation to~$A$ and add two vertices~$x$ and~$y$ that are complete to $A \cup B$.
Let~$H_k$ be the resulting graph (see \figurename~\ref{fig:2P2_coP5}).
By Lemma~\ref{lem:walls}, combined with Facts~\ref{fact:del-vert} and~\ref{fact:comp}, the class of such graphs has unbounded clique-width.

\begin{figure}[h]
\begin{center}
\begin{tikzpicture}[scale=0.7, type1/.style={circle, draw, fill, inner sep=2pt,minimum size=0.5mm}, type2/.style={circle,thick, draw=black, line width=1pt, inner sep=2pt,minimum size=0.5mm},rotate=180   ]

\node  (aa-11) at (-1*2,2)  {$A$};
\node  (aa-12) at (-1*2,4)  {$B$};
  
   \foreach \y in {0,2}{
     \foreach \x in {0,1,2}{
        \node (a\x\y)   at (\x*4,\y*2)    {$A$};
        \node (aa\x\y)   at (\x*4+2,\y*2)    {$B$};
        \draw (a\x\y)--(aa\x\y);
         }
    \node (a3\y) at (3*4,\y*2) {$A$};
    
    \foreach \x in {0,1,2}{
            \pgfmathtruncatemacro{\j}{\x+1}
       \draw (aa\x\y)--(a\j\y);
    }
 }

   \foreach \y in {1,3}{
     \foreach \x in {0,1,2}{
        \node (a\x\y)   at (\x*4,\y*2)    {$B$};
        \node (aa\x\y)   at (\x*4+2,\y*2)    {$A$};
        \draw (a\x\y)--(aa\x\y);
         }
    \node (a3\y) at (3*4,\y*2) {$B$};
    
    \foreach \x in {0,1,2}{
            \pgfmathtruncatemacro{\j}{\x+1}
       \draw (aa\x\y)--(a\j\y);
    }
 }

 \foreach \x in {0,1,2,3}{
    \foreach \y in {0,2}{
        \pgfmathtruncatemacro{\j}{\y+1}
        \draw (a\x\y)-- (a\x\j) ;
        }
     
     \pgfmathtruncatemacro{\j}{\x-1}
     \draw (aa\j1)--(aa\j2);
}
 
\draw (aa-11)--(a01);
\draw (aa-12)--(a02);

\node[circle, inner sep=1pt] (x) at (6.5,-2) {$x$};
\node[circle, inner sep=1pt] (y) at (4.5,-2) {$y$};

\foreach \x in {-4,-3,-2,-1,0,1,2,4,5,6,7,8,9,10,11,12,13,14}{
    \draw ($ (\x,0) !.3! (x) $)--(x);
    \draw ($ (\x,0) !.3! (y) $)--(y);
}

\node[draw=black, rectangle, align=left, inner sep=4pt, anchor=west] (label) at (18,3) {
    $A$ is a clique\\
    $x$ and $y$ are complete\\
     to $A\cup B$
    };

;

\end{tikzpicture}
 \caption{The graph~$H_k$ from the proof of Lemma~\ref{lem:2P_2-coP5-co3P2} ($k=3$ shown).
Vertices are denoted~$A$ or~$B$ if they are in the corresponding set.}
\label{fig:2P2_coP5}
\end{center}
\end{figure}
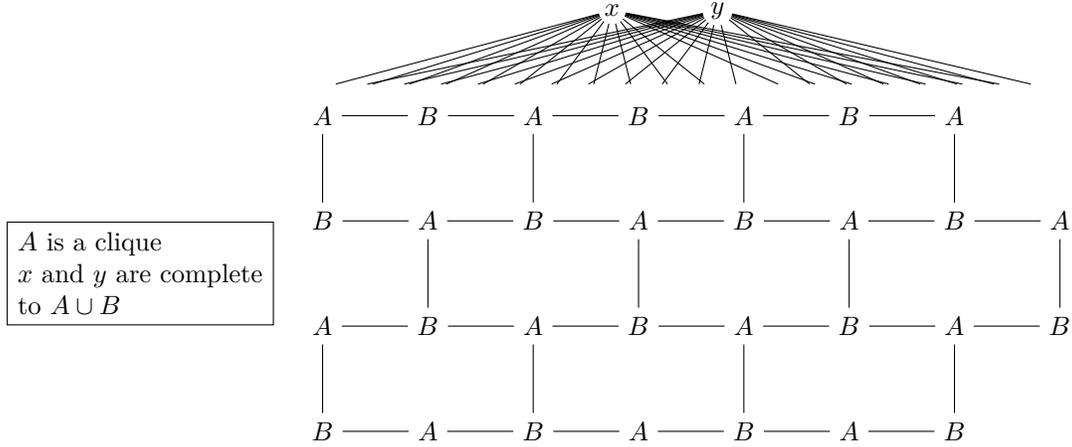

We claim that~$H_k$ is an atom.
Suppose, for contradiction, that~$X$ is a clique cut-set of~$H_k$.
Since~$x$ is non-adjacent to~$y$, at most one of~$x$ and~$y$ is in~$X$.
Without loss of generality, we may assume that $y \notin X$.
Now~$y$ is adjacent to every vertex of $A \cup B$, so $H_k[\{y\} \cup (A \cup B) \setminus X]$ is connected.
Since $A \cup B$ is not a clique, there must be at least one vertex in $(A \cup B) \setminus X$.
Therefore, if $x \notin X$ then~$x$ is in the same component of $H_k \setminus X$ as~$y$ is.
It follows that $H_k \setminus X$ is connected, a contradiction.
Therefore~$H_k$ is indeed an atom.

It remains to show that~$H_k$ is $(2P_2,\overline{P_5},\overline{3P_2})$-free.
First, note that $H_k \setminus \{x\}$ and $H_k \setminus \{y\}$ are split graphs, so they are $(2P_2,\overline{2P_2})$-free, and therefore $(2P_2,\overline{P_5},\allowbreak \overline{P_1+2P_2})$-free and note that this also implies that~$H_k$ is $\overline{3P_2}$-free.
Therefore, if~$H_k$ contains an induced~$2P_2$ or~$\overline{P_5}$, then this induced copy must contain both~$x$ and~$y$.
Since~$x$ and~$y$ are false twins in~$H_k$, but~$2P_2$ and~$\overline{P_5}$ do not contain two vertices that are false twins, it follows that~$H_k$ is $(2P_2,\overline{P_5})$-free.
This completes the proof.
\end{proof}

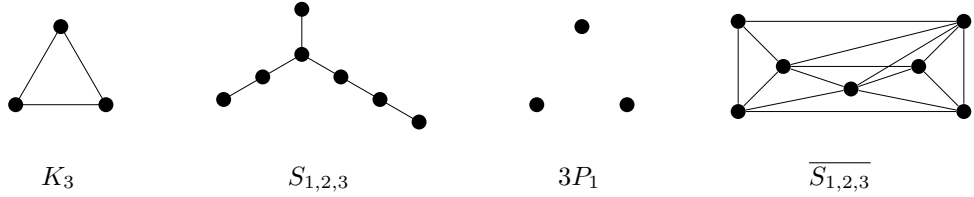
\begin{figure}[h]
\begin{center}
\begin{tabular}{cccc}
\scalebox{0.6}{
\begin{minipage}{0.3\textwidth}
\begin{center}
\begin{tikzpicture}[every node/.style={circle,fill, minimum size=0.07cm}]
\node at (0,0) {};
\node at (1,1.73205080757) {};
\node at (2,0) {};
\draw (0,0) -- (1,1.73205080757) -- (2,0) -- (0,0);
\end{tikzpicture}
\end{center}
\end{minipage}}
&
\scalebox{0.6}{
\begin{minipage}{0.3\textwidth}
\begin{center}
\begin{tikzpicture}[every node/.style={circle,fill, minimum size=0.07cm}]
\node at (0,0) {};
\node at (90:1) {};
\node at (210:1) {};
\node at (210:2) {};
\node at (330:1) {};
\node at (330:2) {};
\node at (330:3) {};
\draw (210:2) -- (210:1) -- (0,0) -- (330:1) -- (330:2) -- (330:3);
\draw (0,0) -- (90:1);
\end{tikzpicture}
\end{center}
\end{minipage}}
&
\scalebox{0.6}{
\begin{minipage}{0.3\textwidth}
\begin{center}
\begin{tikzpicture}[every node/.style={circle,fill, minimum size=0.07cm}]
\node at (0,0) {};
\node at (1,1.73205080757) {};
\node at (2,0) {};
\end{tikzpicture}
\end{center}
\end{minipage}}
&
\scalebox{0.6}{
\begin{minipage}{0.3\textwidth}
\begin{center}
\begin{tikzpicture}[every node/.style={circle,fill, minimum size=0.07cm}]
\node(a) at (-2,2) {};
\node(b) at (-2,0) {};
\node(c) at (3,0) {};
\node(d) at (2,1) {};
\node(e) at (3,2) {};
\node(f) at (-1,1) {};
\node(g) at (0.5,0.5) {};
\draw (a) -- (e) (a) -- (b) (a) -- (f);
\draw (b) -- (c) (b) -- (f) (b) -- (g);
\draw (c) -- (d) (c) -- (e) (c) -- (g);
\draw (d) -- (e) -- (f) -- (g) -- (d) -- (f) (e) -- (g);
\end{tikzpicture}
\end{center}
\end{minipage}}
\\
\\
$K_3$ & $S_{1,2,3}$ & $3P_1$ & $\overline{S_{1,2,3}}$
\end{tabular}
\end{center}
\caption{\label{fig:K3S123-coK3coS123-equiv}The forbidden induced subgraphs for the classes of $(K_3,S_{1,2,3})$-free graphs and $(3P_1,\overline{S_{1,2,3}})$-free graphs mentioned in Lemma~\ref{lem:K3S123-coK3coS123-equiv}.}
\end{figure}

It is not known whether the clique-width of $(K_3,S_{1,2,3})$-free graphs is bounded or unbounded.
Recall that this case is equivalent to the open case for $(3P_1,\overline{S_{1,2,3}})$-free graphs; see also Open Problem~\ref{oprob:twographs}.
As a final result of this section, we observe that the boundedness for these cases also matches their atom counterparts.

\newpage
\begin{lemma}\label{lem:K3S123-coK3coS123-equiv}
The class of $(K_3,S_{1,2,3})$-free atoms has bounded clique-width if and only if the class of $(K_3,S_{1,2,3})$-free graphs has bounded clique-width if and only if the class of $(3P_1,\overline{S_{1,2,3}})$-free atoms has bounded clique-width (see \figurename~\ref{fig:K3S123-coK3coS123-equiv} for illustrations of the forbidden induced subgraphs).
\end{lemma}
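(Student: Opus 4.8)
The plan is to chain together three equivalences: two obtained from the general reduction lemmas of this section and one from the equivalence relation on bigenic classes recalled in Section~\ref{s-soa}. Note first that the trivial direction (atoms form a subclass, so boundedness of the whole class always implies boundedness of the atoms) will be subsumed, since both reduction lemmas give ``if and only if'' statements.

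First I would treat the $(K_3,S_{1,2,3})$-free side using Lemma~\ref{lem:no-false-twin}. The only thing to verify is that neither $K_3$ nor $S_{1,2,3}$ contains a pair of false twins. For $K_3$ this is immediate, since false twins must be non-adjacent and $K_3$ has no non-adjacent pair. For $S_{1,2,3}$, observe that it is a tree, so any two vertices with the same neighbourhood would have to be leaves attached to a common vertex; but the three leaves of $S_{1,2,3}$ hang off three pairwise distinct vertices (the centre, the internal vertex of the leg of length~$2$, and the second internal vertex of the leg of length~$3$), and no two non-leaf vertices of a tree share a neighbourhood. Hence Lemma~\ref{lem:no-false-twin} applies, and the class of $(K_3,S_{1,2,3})$-free atoms has bounded clique-width if and only if the class of $(K_3,S_{1,2,3})$-free graphs does.

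Next I would treat the $(3P_1,\overline{S_{1,2,3}})$-free side using Lemma~\ref{lem:no-comp-P1or2P1}. By the observation preceding that lemma, it suffices to check that for each $H\in\{3P_1,\overline{S_{1,2,3}}\}$, the graph $\overline{H}$ has no component isomorphic to $P_1$ or $P_2$; here $\overline{3P_1}=K_3$ and $\overline{\,\overline{S_{1,2,3}}\,}=S_{1,2,3}$, and both are connected graphs on at least three vertices, so neither has a $P_1$ or $P_2$ component. Equivalently, neither $3P_1$ nor $\overline{S_{1,2,3}}$ has a dominating vertex or a pair of non-adjacent vertices complete to the remainder of the graph. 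Thus Lemma~\ref{lem:no-comp-P1or2P1} applies, and the class of $(3P_1,\overline{S_{1,2,3}})$-free atoms has bounded clique-width if and only if the class of $(3P_1,\overline{S_{1,2,3}})$-free graphs does.

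Finally, since $\overline{K_3}=3P_1$, a graph is $(K_3,S_{1,2,3})$-free precisely when its complement is $(3P_1,\overline{S_{1,2,3}})$-free, so the two classes of graphs are equivalent in the sense of Section~\ref{s-soa} and one has bounded clique-width if and only if the other does. Stringing the three equivalences together yields the statement. I do not expect a genuine obstacle: the whole argument reduces to checking the (elementary) structural hypotheses of Lemmas~\ref{lem:no-false-twin} and~\ref{lem:no-comp-P1or2P1}, and the only point needing a moment's care is the false-twin check for $S_{1,2,3}$, which relies on the three leg lengths $1,2,3$ being distinct so that its leaves are attached to distinct vertices.
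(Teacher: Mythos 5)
Your proposal is correct and follows essentially the same route as the paper: apply Lemma~\ref{lem:no-false-twin} on the $(K_3,S_{1,2,3})$ side, complementation (Fact~\ref{fact:comp}, equivalently the equivalence relation of Section~\ref{s-soa}) to pass between the two graph classes, and Lemma~\ref{lem:no-comp-P1or2P1} on the $(3P_1,\overline{S_{1,2,3}})$ side. The only difference is that you spell out the verification of the false-twin and dominating-vertex/non-adjacent-pair hypotheses, which the paper states as immediate observations.
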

\begin{proof}
We first observe that neither~$K_3$ nor~$S_{1,2,3}$ contains a pair of false twins.
Therefore, by Lemma~\ref{lem:no-false-twin}, the class of $(K_3,S_{1,2,3})$-free atoms has bounded clique-width if and only if the class of $(K_3,S_{1,2,3})$-free graphs has bounded clique-width.
By Fact~\ref{fact:comp}, the class of $(K_3,S_{1,2,3})$-free graphs has bounded clique-width if and only if the class of $(3P_1,\overline{S_{1,2,3}})$-free graphs has bounded clique-width.

We observe that neither~$3P_1$ nor~$\overline{S_{1,2,3}}$ has a dominating vertex or a pair of non-adjacent vertices that are complete to the remainder of the graph.
Therefore, by Lemma~\ref{lem:no-comp-P1or2P1}, the class of $(3P_1,\overline{S_{1,2,3}})$-free graphs has bounded clique-width if and only if the class of $(3P_1,\overline{S_{1,2,3}})$-free atoms has bounded clique-width.
\end{proof}

\section{The Proof of Theorem~\ref{thm:classification2-atoms}}\label{s-main}

Recall the definition of equivalent bigenic classes given at the start of Section~\ref{s-soa}.
To make Theorem~\ref{thm:classification2-atoms} easier to compare to Theorem~\ref{thm:classification2}, in this section we will use the following reformulation of it, where we group classes together if they are equivalent, and we will prove this reformulated version of the theorem instead (it is easy to verify that Theorems~\ref{thm:classification2-atoms} and~\ref{bthm:classification2-atoms} cover the same graph classes).

\newpage
\begin{theorem}\label{bthm:classification2-atoms}
Let~${\cal G}$ be a class of graphs defined by two forbidden induced subgraphs.
\begin{enumerate}
\item \label{bthm:classification2-atoms:easy-bdd} The class of atoms in~${\cal G}$ has bounded clique-width if it is equivalent to a class of $(H_1,H_2)$-free graphs such that one of the following holds:
\begin{enumerate}[(i)]
\renewcommand{\theenumii}{(\roman{enumii})}
\renewcommand{\labelenumii}{(\roman{enumii})}
\item \label{bthm:classification2-atoms:bdd:P4} $H_1$ or $H_2 \ssi P_4$
\item \label{bthm:classification2-atoms:bdd:ramsey} $H_1=K_s$ and $H_2=tP_1$ for some $s,t\geq 1$
\item \label{bthm:classification2-atoms:bdd:P_1+P_3} $H_1 \ssi \paw$ and $H_2 \ssi K_{1,3}+\nobreak 3P_1,\; K_{1,3}+\nobreak P_2,\;\allowbreak P_1+\nobreak P_2+\nobreak P_3,\;\allowbreak P_1+\nobreak P_5,\;\allowbreak P_1+\nobreak S_{1,1,2},\;\allowbreak P_2+\nobreak P_4,\;\allowbreak P_6,\; \allowbreak S_{1,1,3}$ or~$S_{1,2,2}$
\item \label{bthm:classification2-atoms:bdd:2P_1+P_2} $H_1 \ssi \diamondgraph$ and $H_2\ssi P_1+\nobreak 2P_2,\; 3P_1+\nobreak P_2$ or~$P_2+\nobreak P_3$
\item \label{bthm:classification2-atoms:bdd:P_1+P_4} $H_1 \ssi \gem$ and $H_2 \ssi P_1+\nobreak P_4$ or~$P_5$
\item \label{bthm:classification2-atoms:bdd:K_13} $H_1\ssi K_3+\nobreak P_1$ and $H_2 \ssi K_{1,3}$, or
\item \label{bthm:classification2-atoms:bdd:2P1_P3} $H_1\ssi \overline{2P_1+\nobreak P_3}$ and $H_2\ssi 2P_1+\nobreak P_3$.
\end{enumerate}
\item \label{bthm:classification2-atoms:hard-bdd} The class of atoms in~${\cal G}$ has bounded clique-width if~${\cal G}$ is a subclass of the class of:
\begin{enumerate}[(i)]
\renewcommand{\theenumii}{(\roman{enumii})}
\renewcommand{\labelenumii}{(\roman{enumii})}
\item \label{bthm:classification2-atoms:bdd2:P6} $(P_6,\overline{2P_2})$-free graphs or
\item \label{bthm:classification2-atoms:bdd2:2P2} $(2P_2,\overline{P_2+P_3})$-free graphs.
\end{enumerate}
\item \label{bthm:classification2-atoms:unchanged-unbdd} The class of atoms in~${\cal G}$ has unbounded clique-width if it is equivalent to a class of $(H_1,H_2)$-free graphs such that one of the following holds:
\begin{enumerate}[(i)]
\renewcommand{\theenumii}{(\roman{enumii})}
\renewcommand{\labelenumii}{(\roman{enumii})}
\item \label{bthm:classification2-atoms:unbdd:not-in-S} $H_1\not\in {\cal S}$ and $H_2 \not \in {\cal S}$
\item \label{bthm:classification2-atoms:unbdd:not-in-co-S} $H_1\notin \overline{{\cal S}}$ and $H_2 \not \in \overline{{\cal S}}$
\item \label{bthm:classification2-atoms:unbdd:K_13} $H_1 \si K_3+\nobreak P_1$ and $H_2 \si 4P_1$ or~$2P_2$
\item \label{bthm:classification2-atoms:unbdd:2P_1+P_2} $H_1 \si \diamondgraph$ and $H_2 \si K_{1,3},\; 5P_1$ or~$P_2+\nobreak P_4$
\item \label{bthm:classification2-atoms:unbdd:3P_1} $H_1 \si K_3$ and $H_2 \si 2P_1+\nobreak 2P_2,\; 2P_1+\nobreak P_4,\; 4P_1+\nobreak P_2,\; 3P_2$ or~$2P_3$
\item \label{bthm:classification2-atoms:unbdd:4P_1} $H_1 \si K_4$ and $H_2 \si P_1 +\nobreak P_4,\; 3P_1+\nobreak P_2$ or~$2P_2$, or
\item \label{bthm:classification2-atoms:unbdd:gem} $H_1 \si \gem$ and $H_2 \si P_1+\nobreak 2P_2$.
\end{enumerate}
\item The class of atoms in~${\cal G}$ has unbounded clique-width if it contains the class of $(H_1,H_2)$-free graphs such that one of the following holds:
\begin{enumerate}[(i)]
\renewcommand{\theenumii}{(\roman{enumii})}
\renewcommand{\labelenumii}{(\roman{enumii})}
\item \label{bthm:classification2-atoms:unbdd2:diamond} $H_1 \si \diamondgraph$ and $H_2 \si P_1+\nobreak P_6$
\item \label{bthm:classification2-atoms:unbdd2:co-diamond} $H_1 \si 2P_1+\nobreak P_2$ and $H_2 \si \overline{P_6}$
\item \label{bthm:classification2-atoms:unbdd2:gem} $H_1 \si \gem$ and $H_2 \si P_6$
\item \label{bthm:classification2-atoms:unbdd2:P12P2} $H_1 \si P_1+\nobreak 2P_2$ or~$P_6$ and $H_2 \si \overline{P_1+2P_2}$ or~$\overline{P_2+P_3}$, or
\item \label{bthm:classification2-atoms:unbdd2:2P2}$H_1 \si 2P_2$ and $H_2 \si \overline{P_2+P_4},\; \overline{3P_2}$ or~$\overline{P_5}$.
\end{enumerate}
\end{enumerate}
\end{theorem}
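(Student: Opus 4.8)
The plan is to assemble Theorem~\ref{bthm:classification2-atoms} from results already in hand, going part by part; no new graph construction is needed. Part~\ref{bthm:classification2-atoms:easy-bdd} requires nothing beyond Theorem~\ref{thm:classification2}.\ref{thm:classification2:known-bdd}: if ${\cal G}$ is equivalent to one of the listed $(H_1,H_2)$-free classes then ${\cal G}$ itself has bounded clique-width, and since every atom in ${\cal G}$ lies in ${\cal G}$, the class of atoms in ${\cal G}$ has bounded clique-width too. For Part~\ref{bthm:classification2-atoms:hard-bdd}, case~\ref{bthm:classification2-atoms:bdd2:2P2} is exactly Theorem~\ref{thm:triplet}; and for case~\ref{bthm:classification2-atoms:bdd2:P6} we use that $\overline{2P_2}=C_4$, so ${\cal G}$ is a subclass of the class of $(C_4,P_6)$-free graphs, whose atoms have bounded clique-width~\cite{GHP18}, and the atoms of ${\cal G}$ form a subset of those.

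For Parts~\ref{bthm:classification2-atoms:unchanged-unbdd} and~4 the organising principle is an elementary monotonicity: being an atom is a property of a graph, not of a class, so if ${\cal C}\subseteq{\cal C}'$ and the atoms of ${\cal C}$ have unbounded clique-width, then so do the atoms of ${\cal C}'$. Each lemma of Section~\ref{s-unbounded} establishes unbounded clique-width of the atoms of a class obtained from a bigenic class by adding finitely many further forbidden subgraphs (the statements phrased via bipartite, co-bipartite, chordal or co-chordal atoms are of this kind too, since e.g.\ bipartite graphs are $(K_3,C_5,C_7,\dots)$-free), so each propagates by monotonicity to every bigenic superclass obtained by dropping the extra forbidden subgraphs. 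The proof then goes through the items one at a time, matching each sub-case to the appropriate lemma: in Part~\ref{bthm:classification2-atoms:unchanged-unbdd}, items~\ref{bthm:classification2-atoms:unbdd:not-in-S} and~\ref{bthm:classification2-atoms:unbdd:not-in-co-S} to Lemma~\ref{lem:need-S-co-S}; item~\ref{bthm:classification2-atoms:unbdd:K_13} to Lemma~\ref{lem:C4-K13-K4-diamond}; item~\ref{bthm:classification2-atoms:unbdd:2P_1+P_2} to Lemmas~\ref{lem:C4-K13-K4-diamond}, \ref{lem:diamond-5P1} and~\ref{lem:diamond-P2+P4}; item~\ref{bthm:classification2-atoms:unbdd:3P_1} to Lemma~\ref{lem:bip-2P12P2-2P1P4-4P1P2-3P2} (for $2P_1+2P_2,\,2P_1+P_4,\,4P_1+P_2,\,3P_2$) and Lemma~\ref{lem:bip-2P3} (for $2P_3$); item~\ref{bthm:classification2-atoms:unbdd:4P_1} to Lemmas~\ref{lem:4P_1-gem}, \ref{lem:K4-3P1P2} and~\ref{lem:K4-2P2} (the last via $K_4$-free co-chordal graphs, using $\overline{C_4}=2P_2$); and item~\ref{bthm:classification2-atoms:unbdd:gem} to Lemma~\ref{lem:gem-P12P2}. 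In Part~4, item~\ref{bthm:classification2-atoms:unbdd2:diamond} goes to Lemma~\ref{lem:diamond-P2+P4}, item~\ref{bthm:classification2-atoms:unbdd2:co-diamond} to Lemma~\ref{lem:codiamond-coP6}, item~\ref{bthm:classification2-atoms:unbdd2:gem} to Lemma~\ref{lem:gem-P6}, item~\ref{bthm:classification2-atoms:unbdd2:P12P2} to Lemma~\ref{lem:coP2P3-coP12P2-P12P2-P6}, and item~\ref{bthm:classification2-atoms:unbdd2:2P2} to Lemma~\ref{lem:2P2-coP2P4} (for $\overline{P_2+P_4}$) together with Lemma~\ref{lem:2P_2-coP5-co3P2} (for $\overline{3P_2}$ and $\overline{P_5}$).

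The one subtle point is the equivalence relation in Part~\ref{bthm:classification2-atoms:unchanged-unbdd}, since clique-width boundedness of \emph{atoms}, unlike that of whole classes, is not preserved under complementation. The equivalence relation is generated by complementing both forbidden graphs and by the substitution $3P_1\leftrightarrow P_1+P_3$. The substitution is harmless here: every induced $P_1+P_3$ contains an induced $3P_1$, so the class of $3P_1$-free graphs is contained in the class of $(P_1+P_3)$-free graphs, and monotonicity turns unbounded clique-width of $(3P_1,H)$-free atoms into unbounded clique-width of $(P_1+P_3,H)$-free atoms. Complementation is precisely why the lemmas of Section~\ref{s-unbounded} come in complementary pairs: for each class appearing in the list above, the companion statement of the same lemma (obtained via Fact~\ref{fact:comp} together with Lemma~\ref{lem:no-false-twin} or Lemma~\ref{lem:no-comp-P1or2P1}, or proved directly there) handles its complement, so every member of the equivalence class of ${\cal G}$ is covered. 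What remains is purely clerical: checking that no pair $(H_1,H_2)$ is covered by both a bounded and an unbounded case, and recording the eighteen still-uncovered pairs as Open Problem~\ref{o-atoms}. I expect this bookkeeping --- tracking which lemma, in which orientation, settles each of the many sub-cases --- to be the only real obstacle; there is no further mathematical content.
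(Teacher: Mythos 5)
Your proposal is correct and follows essentially the same route as the paper's own proof: Part~1 from Theorem~\ref{thm:classification2}, Part~2 from the $(C_4,P_6)$-free atoms result of~\cite{GHP18} and Theorem~\ref{thm:triplet}, and Parts~3 and~4 by matching each sub-case to exactly the same lemmas of Section~\ref{s-unbounded} (with the companion complementary statements covering the equivalence relation), together with the obvious monotonicity of atom containment. Your extra remarks on the $3P_1\leftrightarrow P_1+P_3$ substitution and on the clerical consistency check go slightly beyond what the paper writes, but add nothing that changes the argument.
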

\begin{proof}
We start by considering the bounded cases.
Theorem~\ref{bthm:classification2-atoms}.\ref{bthm:classification2-atoms:easy-bdd} follows immediately from Theorem~\ref{thm:classification2}.\ref{thm:classification2:known-bdd}.
Theorem~\ref{bthm:classification2-atoms}.\ref{bthm:classification2-atoms:bdd2:P6} follows from the fact that $(P_6,\overline{2P_2})$-free atoms have bounded clique-width~\cite{GHP18}.
Theorem~\ref{bthm:classification2-atoms}.\ref{bthm:classification2-atoms:bdd2:2P2} follows from the fact that $(2P_2,\overline{P_2+P_3})$-free atoms have bounded clique-width (Theorem~\ref{thm:triplet}).
Next, we consider the unbounded cases.
Theorem~\ref{bthm:classification2-atoms}.\ref{bthm:classification2-atoms:unbdd:not-in-S} and Theorem~\ref{bthm:classification2-atoms}.\ref{bthm:classification2-atoms:unbdd:not-in-co-S} follow from Lemma~\ref{lem:need-S-co-S}.
Theorem~\ref{bthm:classification2-atoms}.\ref{bthm:classification2-atoms:unbdd:K_13} follows from Lemma~\ref{lem:C4-K13-K4-diamond}.
Theorem~\ref{bthm:classification2-atoms}.\ref{bthm:classification2-atoms:unbdd:2P_1+P_2} follows from Lemmas~\ref{lem:C4-K13-K4-diamond}, \ref{lem:diamond-5P1} and~\ref{lem:diamond-P2+P4}.
Theorem~\ref{bthm:classification2-atoms}.\ref{bthm:classification2-atoms:unbdd:3P_1} follows from Lemmas~\ref{lem:bip-2P12P2-2P1P4-4P1P2-3P2} and~\ref{lem:bip-2P3}.
Theorem~\ref{bthm:classification2-atoms}.\ref{bthm:classification2-atoms:unbdd:4P_1} follows from Lemma~\ref{lem:4P_1-gem}, \ref{lem:K4-3P1P2} and~\ref{lem:K4-2P2}.
Theorem~\ref{bthm:classification2-atoms}.\ref{bthm:classification2-atoms:unbdd:gem} follows from Lemma~\ref{lem:gem-P12P2}.
Theorem~\ref{bthm:classification2-atoms}.\ref{bthm:classification2-atoms:unbdd2:diamond} follows from Lemma~\ref{lem:diamond-P2+P4}.
Theorem~\ref{bthm:classification2-atoms}.\ref{bthm:classification2-atoms:unbdd2:co-diamond} follows from Lemma~\ref{lem:codiamond-coP6}.
Theorem~\ref{bthm:classification2-atoms}.\ref{bthm:classification2-atoms:unbdd2:gem} follows from Lemma~\ref{lem:gem-P6}.
Theorem~\ref{bthm:classification2-atoms}.\ref{bthm:classification2-atoms:unbdd2:P12P2} follows from Lemma~\ref{lem:coP2P3-coP12P2-P12P2-P6}.
Theorem~\ref{bthm:classification2-atoms}.\ref{bthm:classification2-atoms:unbdd2:2P2} follows from Lemmas~\ref{lem:2P2-coP2P4} and~\ref{lem:2P_2-coP5-co3P2}.
\end{proof}

In the open problem below, the cases marked with a~$^*$ are those for which even the boundedness of clique-width of the whole class of $(H_1,H_2)$-free graphs is unknown (see also Open Problem~\ref{oprob:twographs} in Section~\ref{s-soa}).

\begin{oproblem}\label{o-atoms}
Does the class of $(H_1,H_2)$-free atoms have bounded clique-width if
\begin{enumerate}[*(i)]
\renewcommand{\theenumi}{(\roman{enumi})}
\renewcommand{\labelenumi}{(\roman{enumi})}
\item \label{openprob:diamond}$H_1 = \diamondgraph$ and $H_2=P_6$
\item \label{openprob:C_4} $H_1=C_4$ and $H_2 \in \{P_1+\nobreak 2P_2, P_2+\nobreak P_4, 3P_2\}$
\item \label{openprob:oP_12P_2} $H_1=\overline{P_1+2P_2}$ and $H_2 \in \{2P_2, P_2+\nobreak P_3, P_5\}$ 
\item \label{openprob:oP_2P_3} $H_1=\overline{P_2+P_3}$ and $H_2 \in \{P_2+\nobreak P_3, P_5\}$        
\renewcommand{\labelenumi}{*(\roman{enumi})}
\item \label{openprob:K_3} $H_1=K_3$ and $H_2 \in \{P_1+\nobreak S_{1,1,3},\allowbreak S_{1,2,3}\}$
\item $H_1=3P_1$ and $H_2 = \overline{P_1+\nobreak S_{1,1,3}}$
\item $H_1= \diamondgraph$ and $H_2 \in \{P_1+\nobreak P_2+\nobreak P_3,\allowbreak P_1+\nobreak P_5\}$ 
\item $H_1=2P_1+\nobreak P_2$ and $H_2 \in \{\overline{P_1+\nobreak P_2+\nobreak P_3},\allowbreak \overline{P_1+\nobreak P_5}\}$ 
\item $H_1=\gem$ and $H_2=P_2+\nobreak P_3$, or
\item \label{openprob:P_1P_4}$H_1=P_1+\nobreak P_4$ and $H_2=\overline{P_2+\nobreak P_3}$.
\end{enumerate}
\end{oproblem}

Olariu~\cite{Olariu88} proved that every connected $\overline{P_1+P_3}$-free graph is either $K_3$-free or complete multi-partite.
Since complete multi-partite graphs and their complements have bounded clique-width, when looking at boundedness of clique-width of a hereditary class, forbidding~$\overline{P_1+P_3}$ as an induced subgraph is equivalent to forbidding~$K_3$ and forbidding~$P_1+\nobreak P_3$ is equivalent to forbidding~$3P_1$.
Thus, when studying boundedness of clique-width we may assume that we never explicitly forbid $\overline{P_1+P_3}$ or~$P_1+\nobreak P_3$.
Furthermore, by Lemma~\ref{lem:K3S123-coK3coS123-equiv}, the class of $(K_3,S_{1,2,3})$-free atoms has bounded clique-width if and only if the class of $(3P_1,\overline{S_{1,2,3}})$-free atoms has bounded clique-width, so we may assume $\{H_1,H_2\} \neq \{3P_1,\overline{S_{1,2,3}}\}$.
We now state the following theorem.

\begin{theorem}\label{thm:all-open-cases-listed}
Let~$H_1$ and~$H_2$ be graphs (which are not isomorphic to~$\overline{P_1+P_3}$ or~$P_1+\nobreak P_3$) with $\{H_1,H_2\} \neq \{3P_1,\overline{S_{1,2,3}}\}$  and let~${\cal G}$ be the class of $(H_1,H_2)$-free graphs.
Then (un)boundedness of clique-width for atoms in~${\cal G}$ does not follow from Theorem~\ref{bthm:classification2-atoms} if and only if this class is listed in Open Problem~\ref{o-atoms}.
\end{theorem}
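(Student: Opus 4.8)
The statement is an exhaustiveness claim, so its proof is a systematic audit of all bigenic classes against the two classifications available to us: Theorem~\ref{thm:classification2} (together with the five genuinely open cases of Open Problem~\ref{oprob:twographs}) for the whole class, and Theorem~\ref{bthm:classification2-atoms} for the atoms. Throughout I would work with unordered pairs $\{H_1,H_2\}$ up to the equivalence relation of Section~\ref{s-soa}, and use the normalization stated before the theorem: by Olariu's theorem~\cite{Olariu88} we may assume neither $H_i$ is $\overline{P_1+P_3}$ or $P_1+\nobreak P_3$, and by Lemma~\ref{lem:K3S123-coK3coS123-equiv} we may assume $\{H_1,H_2\}\neq\{3P_1,\overline{S_{1,2,3}}\}$. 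The argument splits into three regimes according to the status of the \emph{whole} class of $(H_1,H_2)$-free graphs.

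First, if the whole class of $(H_1,H_2)$-free graphs has bounded clique-width, then so does the subclass of its atoms, and by Theorem~\ref{thm:classification2}.\ref{thm:classification2:known-bdd} every such pair is equivalent to one appearing in Theorem~\ref{bthm:classification2-atoms}.\ref{bthm:classification2-atoms:easy-bdd}; hence these pairs are covered and never appear in Open Problem~\ref{o-atoms}. Second, if even boundedness of clique-width of the whole class is unknown, then by the discussion preceding Open Problem~\ref{oprob:twographs} the pair is equivalent to one of $(K_3,P_1+\nobreak S_{1,1,3})$, $(K_3,S_{1,2,3})$, $(\diamondgraph,P_1+\nobreak P_2+\nobreak P_3)$, $(\diamondgraph,P_1+\nobreak P_5)$, $(\gem,P_2+\nobreak P_3)$; for each I would trace its entire equivalence class and record the distinct atom cases it produces, using that the reductions of Lemmas~\ref{lem:no-false-twin} and~\ref{lem:no-comp-P1or2P1} are \emph{inapplicable} to these pairs and their complements (for instance $K_3$ has a dominating vertex and $P_1+\nobreak S_{1,1,3}$ has a pair of false twins, so neither reduction fires), so that each such pair and its complement survive as two separate open atom cases — except that $(K_3,S_{1,2,3})$ and $(3P_1,\overline{S_{1,2,3}})$ collapse by Lemma~\ref{lem:K3S123-coK3coS123-equiv} and the latter is excluded by normalization. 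This reproduces exactly the nine starred entries of Open Problem~\ref{o-atoms}.

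The main regime is that in which the whole class of $(H_1,H_2)$-free graphs has unbounded clique-width. By Theorem~\ref{thm:classification2}.\ref{thm:classification2:known-unbdd} such a pair lies above, in the induced-subgraph order and up to equivalence, one of the finitely many minimal unbounded pairs listed there. I would process these minimal pairs one by one and decide whether unboundedness survives the passage to atoms: for the overwhelming majority it does, via the constructions of Section~\ref{s-unbounded} (Lemmas~\ref{lem:need-S-co-S} through~\ref{lem:2P_2-coP5-co3P2}), which is precisely what part~\ref{bthm:classification2-atoms:unchanged-unbdd} and part~(4) of Theorem~\ref{bthm:classification2-atoms} record. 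A pair for which \emph{none} of those unboundedness lemmas applies must then either be a subclass of $(P_6,\overline{2P_2})$-free graphs or of $(2P_2,\overline{P_2+P_3})$-free graphs — in which case its atoms are bounded by Theorem~\ref{bthm:classification2-atoms}.\ref{bthm:classification2-atoms:hard-bdd}, which is where, e.g., $(C_4,2P_2)$ is settled, since $2P_2$-freeness forces $P_6$-freeness and $C_4=\overline{2P_2}$ — or else be one of the remaining nine non-starred entries of Open Problem~\ref{o-atoms}: $(\diamondgraph,P_6)$; $(C_4,P_1+\nobreak 2P_2)$, $(C_4,P_2+\nobreak P_4)$, $(C_4,3P_2)$; $(\overline{P_1+2P_2},2P_2)$, $(\overline{P_1+2P_2},P_2+\nobreak P_3)$, $(\overline{P_1+2P_2},P_5)$; $(\overline{P_2+P_3},P_2+\nobreak P_3)$, $(\overline{P_2+P_3},P_5)$. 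For each of these I would check directly that it lies below no maximal bounded-atom pair from parts~\ref{bthm:classification2-atoms:easy-bdd}--\ref{bthm:classification2-atoms:hard-bdd} of Theorem~\ref{bthm:classification2-atoms} and above no minimal unbounded-atom pair from part~\ref{bthm:classification2-atoms:unchanged-unbdd} or part~(4) (for example $(C_4,P_1+\nobreak 2P_2)$: since $(P_1+\nobreak 2P_2)$-free graphs contain $P_6$ it is not a subclass of $(P_6,\overline{2P_2})$; it is not a subclass of $(2P_2,\overline{P_2+P_3})$ since $(P_1+\nobreak 2P_2)$-freeness does not imply $2P_2$-freeness; and every unboundedness lemma forbidding something equivalent to $C_4$ also forbids a graph not contained in $P_1+\nobreak 2P_2$), so that it is genuinely unresolved.

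I expect the crux of the proof to be purely this bookkeeping: establishing without gaps or overlaps the two containments ``not covered by Theorem~\ref{bthm:classification2-atoms} $\Rightarrow$ listed in Open Problem~\ref{o-atoms}'' and ``listed $\Rightarrow$ not covered''. The second containment is the more delicate one, because each clause of Theorem~\ref{bthm:classification2-atoms} secretly encodes many pairs through the equivalence relation, and one must be certain that none of them dominates or is dominated by any of the eighteen listed pairs. The safe way to organize this is exactly as in the proof of Theorem~\ref{thm:classification2} in~\cite{DJP19}: enumerate the (finitely many) minimal and maximal relevant pairs in the induced-subgraph poset and verify the required incomparabilities one at a time; the infinite families $K_s$ and $tP_1$ cause no difficulty, since once $s$ or $t$ exceeds the order of the other forbidden graph the behaviour stabilizes.
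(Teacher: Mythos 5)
Your proposal is correct and follows essentially the same route as the paper: a finite audit of bigenic classes against Theorem~\ref{thm:classification2} (plus Open Problem~\ref{oprob:twographs}) and Theorem~\ref{bthm:classification2-atoms}, with the starred entries coming from the five open whole-class cases and their complements (minus the $(3P_1,\overline{S_{1,2,3}})$ exclusion), and the non-starred entries from extensions of the only two whole-class-unbounded configurations, $(\overline{2P_2},2P_2)$ and $(\overline{2P_1+P_2},P_6)$, whose unboundedness is not inherited by atoms. The paper simply carries out explicitly, clause by clause, the narrowing of such extensions that you defer to ``bookkeeping'', and it checks non-coverage of the starred cases directly against Theorem~\ref{bthm:classification2-atoms}.\ref{bthm:classification2-atoms:hard-bdd} rather than via the inapplicability of Lemmas~\ref{lem:no-false-twin} and~\ref{lem:no-comp-P1or2P1}; your final list of eighteen cases agrees with Open Problem~\ref{o-atoms}.
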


\begin{proof}
First, note that Theorem~\ref{bthm:classification2-atoms} does not specify the (un)boundedness of clique-width for atoms in any of the classes listed in Open Problem~\ref{o-atoms}.

Consider the classes listed in Open Problem~\ref{oprob:twographs}.
For all bigenic classes~${\cal G}$ for which the (un)boundedness of clique-width of general graphs is not listed in Theorem~\ref{thm:classification2}, an equivalent class is listed in Open Problem~\ref{oprob:twographs} (see~\cite{DJP19} and~\cite{DP16}).
Since the results in Theorem~\ref{bthm:classification2-atoms}.\ref{bthm:classification2-atoms:hard-bdd} do not solve these cases when restricted to atoms, these classes (and their complements, apart from the $H_1=3P_1$, $H_2=\overline{S_{1,2,3}}$) appear in Open Problem~\ref{o-atoms}.\ref{openprob:K_3}-\ref{openprob:P_1P_4}.
The only other classes we need to consider are those for which Theorem~\ref{thm:classification2}.\ref{thm:classification2:known-unbdd} states that the class~${\cal G}$ has unbounded clique-width, but the class of atoms in~${\cal G}$ might not have unbounded clique-width.

There are two classes listed in Theorem~\ref{thm:classification2}.\ref{thm:classification2:known-unbdd} that are not listed in Theorem~\ref{bthm:classification2-atoms}.\ref{bthm:classification2-atoms:unchanged-unbdd}, namely the class of $(\overline{2P_2},2P_2)$-free graphs and the class of $(\overline{2P_1+P_2},P_6)$-free graphs.
The class of $(\overline{2P_2},2P_2)$-free graphs is only equivalent to itself.
The class of $(\overline{2P_1+P_2},P_6)$-free graphs equivalent to only one other class, namely the class of $(2P_1+\nobreak P_2,\overline{P_6})$-free graphs.
However, the class of $(2P_1+\nobreak P_2,\overline{P_6})$-free atoms has unbounded clique-width by Theorem~\ref{bthm:classification2-atoms}.\ref{bthm:classification2-atoms:unbdd2:co-diamond}.
We therefore only need to consider the class of $(\overline{2P_2},2P_2)$-free graphs and the class of $(2P_1+\nobreak P_2,\overline{P_6})$-free, together with any bigenic classes~${\cal G'}$ that are extensions of these classes such that Theorem~\ref{bthm:classification2-atoms} does not specify that the atoms of~${\cal G'}$ have unbounded clique-width.

We start by considering extensions of the classes of $(\overline{2P_1+P_2},P_6)$-free graphs.
Consider graphs $H_1$, $H_2$ with $\overline{2P_1+\nobreak P_2} \ssi H_1$ and $P_6 \ssi H_2$ such that the class of $(H_1,H_2)$-free atoms has bounded clique-width, but Theorem~\ref{bthm:classification2-atoms} does not state that $(H_1,H_2)$-free atoms have unbounded clique-width.
By Theorem~\ref{bthm:classification2-atoms}.\ref{bthm:classification2-atoms:unbdd:not-in-co-S}, it follows that $\overline{H_1} \in {\cal S}$.
By Theorem~\ref{bthm:classification2-atoms}.\ref{bthm:classification2-atoms:unbdd:K_13}, it follows that~$\overline{H_1}$ is $K_{1,3}$-free, so it is a linear forest.
By Theorem~\ref{bthm:classification2-atoms}.\ref{bthm:classification2-atoms:unbdd:4P_1}, it follows that~$\overline{H_1}$ is $4P_1$-free.
By Theorem~\ref{bthm:classification2-atoms}.\ref{bthm:classification2-atoms:unbdd2:gem}, $\overline{H_1}$ must be $(P_1+\nobreak P_4)$-free.
By Theorem~\ref{bthm:classification2-atoms}.\ref{bthm:classification2-atoms:unbdd2:P12P2}, $\overline{H_1}$ must be $(P_1+\nobreak 2P_2,P_2+\nobreak P_3)$-free.
The $1$-vertex extensions of~$2P_1+\nobreak P_2$ in~${\cal S}$ are $3P_1+\nobreak P_2$, $P_1+\nobreak 2P_2, 2P_1 +\nobreak P_3, P_1+\nobreak P_4, P_2+\nobreak P_3$ and~$S_{1,1,2}$, none of which are $(K_{1,3},4P_1,P_1+\nobreak P_4,P_1+\nobreak 2P_2,P_2+\nobreak P_3)$-free.
We conclude that $\overline{H_1}=2P_1+\nobreak P_2$.
By Theorem~\ref{bthm:classification2-atoms}.\ref{bthm:classification2-atoms:unbdd:not-in-S}, it follows that $H_2 \in {\cal S}$.
By Theorem~\ref{bthm:classification2-atoms}.\ref{bthm:classification2-atoms:unbdd:2P_1+P_2}, it follows that~$H_2$ is $(K_{1,3},P_2+\nobreak P_4)$-free.
By Theorem~\ref{bthm:classification2-atoms}.\ref{bthm:classification2-atoms:unbdd2:diamond}, it follows that~$H_2$ is $(P_1+\nobreak P_6)$-free.
The $1$-vertex extensions of~$P_6$ that are in~${\cal S}$ are $P_1+\nobreak P_6$, $P_7$, $S_{1,1,4}$ and~$S_{1,2,3}$, none of which are $(K_{1,3},P_2+\nobreak P_4, P_1+\nobreak P_6)$-free.
We conclude that $H_2=P_6$.
Therefore, we do not need to consider any extensions of $(\overline{2P_1+\nobreak P_2},P_6)$-free graphs, apart from the class of $(\overline{2P_1+\nobreak P_2},P_6)$-free graphs itself, and this is listed in Open Problem~\ref{o-atoms}.\ref{openprob:diamond}.

Now consider graphs~$H_1$, $H_2$ with $2P_2 \ssi H_1, \overline{H_2}$ such that the class of $(H_1,H_2)$-free atoms has bounded clique-width, but Theorem~\ref{bthm:classification2-atoms} does not state that $(H_1,H_2)$-free atoms have unbounded clique-width.
By Theorem~\ref{bthm:classification2-atoms}.\ref{bthm:classification2-atoms:unbdd:not-in-S} and Theorem~\ref{bthm:classification2-atoms:unbdd:not-in-co-S}, respectively, $H_1$ and~$\overline{H_2}$ must both be in~${\cal S}$.
By Theorem~\ref{bthm:classification2-atoms}.\ref{bthm:classification2-atoms:unbdd:K_13}, it follows that~$H_1$ and~$\overline{H_2}$ are $K_{1,3}$-free, so they are both linear forests.
By Theorem~\ref{bthm:classification2-atoms}.\ref{bthm:classification2-atoms:unbdd:4P_1}, $H_1$ and~$\overline{H_2}$ are $4P_1$-free, and because they are bipartite, this means they each contain at most six vertices.
Since~$H_1$ and~$\overline{H_2}$ are linear forests on at most six vertices containing an induced~$2P_2$, it follows that $H_1,\overline{H_2} \in \{2P_2, P_1+\nobreak 2P_2, P_2+\nobreak P_3, P_5, 2P_1+\nobreak 2P_2, 3P_2, P_1+\nobreak P_2 + \nobreak P_3, P_2+\nobreak P_4, P_1+\nobreak P_5, 2P_3, P_6\}$.
Since~$H_1$ and~$\overline{H_2}$ are $4P_1$-free, it follows that $H_1,\overline{H_2} \in \{2P_2, P_1+\nobreak 2P_2, P_2+\nobreak P_3, P_5, 3P_2,\allowbreak P_2+\nobreak P_4, P_6\}$.
By Theorem~\ref{bthm:classification2-atoms}.\ref{bthm:classification2-atoms:unbdd2:2P2}, $\overline{H_2}$ is $(P_2+\nobreak P_4,3P_2,P_5)$-free, and so $\overline{H_2} \in \{2P_2,P_1+\nobreak 2P_2,P_2+\nobreak P_3\}$.
Now if $\overline{H_2}=2P_2$, then by Theorem~\ref{bthm:classification2-atoms}.\ref{bthm:classification2-atoms:bdd2:P6}, we may assume that~$H_1$ is not an induced subgraph of~$P_6$, so $H_1 \in \{P_1+\nobreak 2P_2, 3P_2, P_2+\nobreak P_4\}$ and these cases are listed in Open Problem~\ref{o-atoms}.\ref{openprob:C_4}.
Otherwise, $\overline{H_2} \in \{P_1+\nobreak 2P_2,P_2+\nobreak P_3\}$.
In this case by Theorem~\ref{bthm:classification2-atoms}.\ref{bthm:classification2-atoms:unbdd2:P12P2} $H_1$ is $(P_1+\nobreak 2P_2,P_6)$-free, so $H_1 \in \{2P_2, P_2+\nobreak P_3, P_5\}$.
If $\overline{H_2}=P_1+\nobreak 2P_2$ then $H_1 \in \{2P_2, P_2+\nobreak P_3, P_5\}$ and these cases are listed in Open Problem~\ref{o-atoms}.\ref{openprob:oP_12P_2}.
If $\overline{H_2}=P_2+\nobreak P_3$ then by Theorem~\ref{bthm:classification2-atoms}.\ref{bthm:classification2-atoms:bdd2:2P2}, $H_1$ is not an induced subgraph of~$2P_2$, so $H_1 \in \{P_2+\nobreak P_3, P_5\}$ and these cases are listed in Open Problem~\ref{o-atoms}.\ref{openprob:oP_2P_3}.
\end{proof}

\section{Conclusions}\label{s-con}
Motivated by algorithmic applications, we determined a new class of $(H_1,H_2)$-free graphs of unbounded clique-width whose atoms have {\it bounded} clique-width, namely when $(H_1,H_2)= (2P_2,\overline{P_2+P_3})$
(in fact, our proof for $(2P_2,\overline{P_2+P_3})$-free atoms also works for {\it linear} clique-width).
We also identified a number of classes of $(H_1,H_2)$-free graphs of unbounded clique-width whose atoms still have {\it unbounded} clique-width.
In particular, our results show that boundedness of clique-width of $(H_1,H_2)$-free atoms does not necessarily imply boundedness of clique-width of $(\overline{H_1},\overline{H_2})$-free atoms.
For example, $(C_4,P_5)$-free atoms have bounded clique-width~\cite{GHP18}, but we proved that $(\overline{C_4},\overline{P_5})$-free atoms have unbounded clique-width (Lemma~\ref{lem:2P_2-coP5-co3P2}).

We also presented a summary theorem (Theorem~\ref{thm:classification2-atoms}), from which we deduced a list of {\bf 18} remaining cases of pairs $(H_1,H_2)$ for which we do not know whether the clique-width of $(H_1,H_2)$-free atoms is bounded; see also Open Problem~\ref{o-atoms} and Theorem~\ref{thm:all-open-cases-listed}.
In particular, we ask whether boundedness of clique-width of $(2P_2,\overline{P_2+P_3})$-free atoms can be extended to $(P_5,\overline{P_2+P_3})$-free atoms.
Is boundedness of clique-width the underlying reason why {\sc Colouring} is polynomial-time solvable on $(P_5,\overline{P_2+P_3})$-free graphs~\cite{ML17}?
Brandst\"adt and Ho\`ang~\cite{BH07} showed that $(P_5,\overline{P_2+P_3})$-free atoms with no dominating vertices and no vertex pairs~$\{x,y\}$ with $N(x)\subseteq N(y)$ are either isomorphic to some specific graph~$G^*$ or all their induced~$C_5$s are dominating.
Recently, Huang and Karthick~\cite{HK20} proved a more refined decomposition.
However, it is not clear how to use these results to prove boundedness of clique-width of $(P_5,\overline{P_2+P_3})$-free atoms, and additional insights are needed.

\bibliographystyle{plainurl}
\bibliography{mybib-no-url}



\end{document}